%
%
%
%
%
%
\RequirePackage{fix-cm}
\documentclass[smallextended]{svjour3}       
\smartqed  

\raggedbottom
 \usepackage{mathptmx}      
%

\usepackage[font={small},labelfont={small,bf}]{caption}
\usepackage{latexsym}
\usepackage{forest}
\usepackage{graphicx}
\usepackage{braket}
\usepackage{multirow}
\usepackage{todonotes}
\usepackage[noadjust]{cite}
\usepackage{booktabs}          
\usepackage{prftree}
\usepackage{fancybox,framed}
\usepackage[linesnumbered,ruled,vlined]{algorithm2e}

\usepackage{amsthm}
\usepackage{mathpartir}

\usepackage{mdframed}

\usepackage{amssymb}
\usepackage{apptools}
\usepackage{mathtools}
\usepackage{chngcntr}
\usepackage{enumitem}
\usepackage{makecell}
\usepackage{dblfloatfix}

\usepackage[cal = txupr]{mathalpha}

\usepackage[colorlinks = true,
            linkcolor = blue,
            urlcolor  = blue,
            citecolor = blue]{hyperref}

\usepackage{pifont}
\newcommand{\cmark}{\ding{51}}%
\newcommand{\xmark}{\ding{55}}%

\usepackage[mathlines,displaymath,pagewise]{lineno}

\newtheorem{thm}{Theorem}
\newtheorem{defi}[thm]{Definition}
\newtheorem{lem}[thm]{Lemma}
\newtheorem{col}[thm]{Corollary}
\newtheorem{rem}[thm]{Remark}

\theoremstyle{theorem}
\newtheorem{ques}{Question}


%
\DeclareMathOperator{\SelectNC}{SelectNC}

\DeclareMathOperator{\TRes}{TRes}

\DeclareMathOperator{\Max}{Max}
\DeclareMathOperator{\Dep}{dep}
\DeclareMathOperator{\Var}{var}
\DeclareMathOperator{\ComT}{ComT}
\DeclareMathOperator{\Sep}{Sep}
\DeclareMathOperator{\Factor}{Factor}
\DeclareMathOperator{\T-Res}{T-Res}
\DeclareMathOperator{\TTrans}{T-Trans}
\DeclareMathOperator{\Trans}{Trans}

\DeclareMathOperator{\mgu}{mgu}

\DeclareMathOperator{\Simplify}{Smp}
\DeclareMathOperator{\Pick}{Pick}

%
%
\begin{document}

\title{Saturation-based Boolean conjunctive query answering and rewriting for the guarded quantification fragments 
}

\titlerunning{Saturation-based methods for querying the guarded quantification fragments}        

\author{Sen Zheng         \and
        Renate A.~Schmidt 
}


\institute{
Sen Zheng \and Renate A.~Schmidt \at
Department of Computer Science \\ 
University of Manchester, United Kingdom 
}

\date{Received: date / Accepted: date}

\maketitle
\begin{abstract}
Query answering is an important problem in AI, database and knowledge representation. In this paper, we develop saturation-based Boolean conjunctive query answering and rewriting procedures for the guarded, the loosely guarded and the clique-guarded fragments. Our query answering procedure improves existing resolution-based decision procedures for the guarded and the loosely guarded fragments and this procedure solves Boolean conjunctive query answering problems for the guarded, the loosely guarded and the clique-guarded fragments. Based on this query answering procedure, we also introduce a novel saturation-based query rewriting procedure for these guarded fragments. Unlike mainstream query answering and rewriting methods, our procedures derive a compact and reusable saturation, namely a closure of formulas, to handle the challenge of querying for distributed datasets. This paper lays the theoretical foundations for the first automated deduction decision procedures for Boolean conjunctive query answering and the first saturation-based Boolean conjunctive query rewriting in the guarded, the loosely guarded and the clique-guarded fragments.


\keywords{Saturation-based decision procedure \and Saturation-based query rewriting \and Boolean conjunctive query \and Unskolemisation \and Guarded fragment \and Loosely guarded fragment \and Clique-guarded fragment}
\end{abstract}

\section{Introduction}
\label{sec:intro}

The problem of answering \emph{conjunctive queries} \cite{AHV95,U89} over logical constraints is at the heart of knowledge representation and database research. This problem can be reduced to that of \emph{Boolean conjunctive query} (\emph{\textsf{BCQ}}) \emph{answering} by instantiating free variables in conjunctive queries with facts from databases. Problems in many fields of computer science such as constraint satisfaction problems \cite{FV93,KV00}, homomorphism problems~\cite{CMP77} and query evaluation and containment problems~\cite{CMP77} can be recast as Boolean conjunctive query answering problems \cite{V00}. Our interest in this paper is to develop practical methods and inference systems that can provide the basis for the following problems:
\begin{itemize}
	\item answering \textsf{BCQ}s for the \emph{guarded}, the \emph{loosely guarded} and the \emph{clique-guarded fragments}, and
	\item saturation-based rewriting of \textsf{BCQ}s for these guarded fragments.
\end{itemize}

The \emph{guarded fragment} (\emph{\textsf{GF}}) and the \emph{loosely guarded fragment} (\emph{\textsf{LGF}}) are introduced in \cite{vB97,ANvB98} as generalised \emph{modal fragments} of \emph{first-order logic} (\emph{\textsf{FOL}}). In a \emph{guarded formula} the free variables of quantified formulas are `guarded' by an atom. Strictly extended from \textsf{GF}, the loosely guarded fragment \textsf{LGF}, which is also known as the \emph{pairwise guarded fragment} \cite{vB97,AMdNdR99}, pairwise `guards' the free variables of quantified formulas using a conjunction of atoms. This conjunction is called a \emph{loose guard} where the variables form a `clique'. Further \textsf{LGF} has been extended to the \emph{clique-guarded fragment} (\emph{\textsf{CGF}})~\cite{G99b}, in which the `cliques' are extended with branches. In \cite{Ho02,M07} \textsf{CGF} is called the \emph{packed fragment}. A common characteristic of \textsf{GF}, \textsf{LGF} and \textsf{CGF} is that the free variables of quantified formulas need to be guarded; therefore we collectively refer to these fragments as the \emph{guarded quantification fragments}. These fragments are decidable and have well-behaved computational properties \cite{vB97,ANvB98,M07,G99a,G99b,Ho02,HM02,DaL15}. \textbf{Figure~\ref{fig:gf}} shows the relationship between the guarded quantification fragments, (negated) \textsf{BCQ}s targeted in this paper and \textsf{FOL}.

\begin{figure}[t]
\center
\includegraphics[width=.85\textwidth]{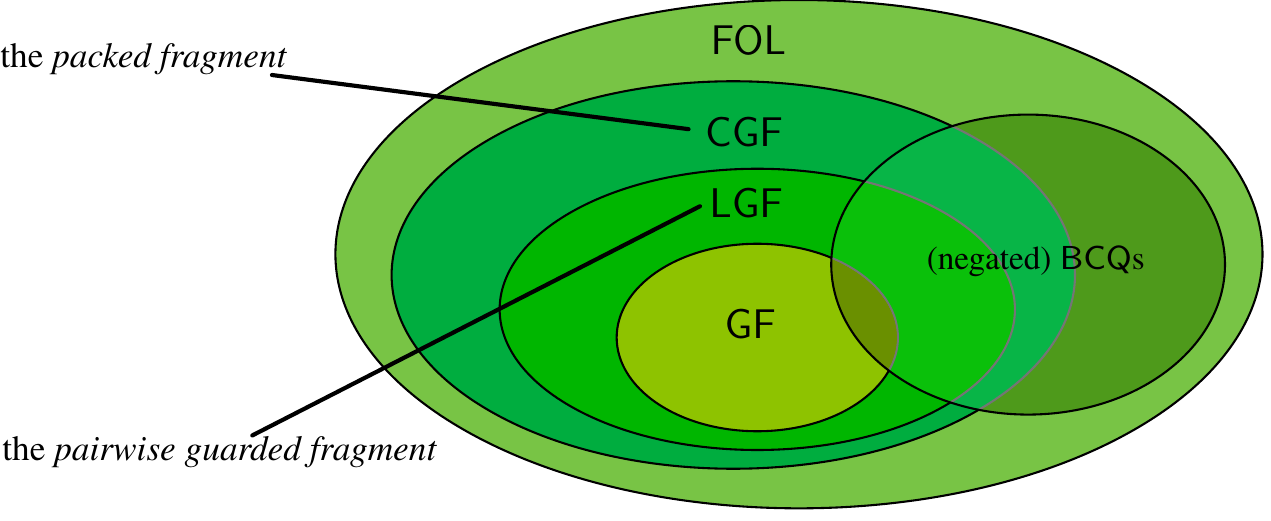}
\caption{The relationship of the guarded quantification fragments, (negated) \textsf{BCQ}s and first-order logic}
\label{fig:gf}
\end{figure} 


The computational complexity of the \textsf{BCQ} answering problem for \textsf{GF} is \textsc{2ExpTime}-complete~\cite{BGO14}. For \textsf{LGF} and \textsf{CGF} the complexity of the \textsf{BCQ} answering problem is also \textsc{2ExpTime}-complete, as in both cases the problem is reducible to the satisfiability checking problem of the \emph{clique-guarded negation fragment} \cite{BtCS15}.\footnote{This paper does not consider the clique-guarded negation fragment.} \textbf{Figure~\ref{fig:result}} lists important known properties of the guarded quantification fragments where \cmark \ and~\xmark \ respectively denote positive and negative results. In the Satisfiability checking column of \textbf{Figure~\ref{fig:result}}, we assume that the fragments have a \emph{fixed signature}.


\begin{figure}[t]
\center
\normalsize
\begin{tabular}{ |c|c|c|c|c|c|c| }
\hline
& \thead{Decidability} & {\thead{Satisfiability \\ checking}} & {\thead{Tree-like \\ model}} & \thead{Finite \\ model} & \thead{\textsf{BCQ} \\ answering} & \thead{\textsf{FO} rewritable \\ (with \textsf{BCQ}s)}\\
\hline

\thead{\textsf{GF}} & {\thead{\cmark \ \cite{vB97,ANvB98}}} & {\thead{\textsc{ExpTime} \\ \cite{G99a}}} & {\thead{\cmark \ \cite{G99a}}} & {\thead{\cmark \ \cite{G99a}}} & \thead{\textsc{2ExpTime} \\ \cite{BGO14}} & \multirow{5.5}{*}{\thead{\xmark \ \cite{BBLP18,BBGP21}}}\\

\cline{1-6}
\thead{\textsf{LGF}} & {\thead{\cmark \ \cite{vB97}}} & {\thead{\textsc{ExpTime} \\  \cite{G99a}}} & {\thead{\cmark \ \cite{G99b}}} & {\thead{\cmark \ \cite{Ho02}}} & \multirow{3.5}{*}{\thead{\thead{\textsc{2ExpTime} \\  \cite{BtCS15}}}} & \\
\cline{1-5}
\thead{\textsf{CGF}} & {\thead{\cmark \ \cite{G99a,M07}}} & {\thead{\textsc{ExpTime} \\  \cite{G99a,M07}}} & {\thead{\cmark \ \cite{G99b}}} & {\thead{\cmark \ \cite{M07,Ho02}}} &  & \\
\hline
\end{tabular}
\caption{Known properties of the guarded quantification fragments}
\label{fig:result}
\end{figure}

Resolution-based procedures have been devised for deciding satisfiability in \textsf{GF} in \cite{dNdR03,GdN99} and for \textsf{LGF} in \cite{dNdR03,GdN99,ZS20a}. Tableau-based procedures have been devised for deciding satisfiability in \textsf{GF} \cite{H02} and \textsf{CGF}~\cite{HT01}. However, querying poses a major problem, since neither \textsf{BCQ} nor its negation belongs to the guarded quantification fragments~(see \textbf{Figure~\ref{fig:gf}}). Indeed, so far it appears that there has been no effort to extend these methods to solving the \textsf{BCQ} answering problems for any of the guarded quantification fragments, even if the aforementioned complexity results mean that in theory, these querying problems are decidable. Introducing new techniques, this paper develops decision procedures to answer  \textsf{BCQ}s for all the guarded quantification fragments. Our initial work for solving the \textsf{BCQ} answering problem for Horn \textsf{LGF} was published in~\cite{ZS20a} and for \textsf{GF} was published in~\cite{ZS20b}. 

\textbf{Figure~\ref{fig:q_ans}} illustrates the idea of our query answering procedure. Given a set~$\Sigma$~of rules, a dataset $D$ and a \textsf{BCQ} $q$, checking whether $\Sigma \cup D \models q$ is equivalent to checking unsatisfiability of $\{\lnot q\} \cup \Sigma \cup D$. To decide $\{\lnot q\} \cup \Sigma \cup D$, we transform it into a clausal form. In particular,~$\Sigma$ and $D$ are mapped to \emph{loosely guarded clauses} and $\lnot q$ to \emph{query clauses}. To perform the saturation process we develop a novel \emph{top-variable inference system}. This system ensures termination when we perform resolution inferences on loosely guarded clauses and query clauses.


\begin{figure}[b]
\center
\includegraphics[width=.9\textwidth]{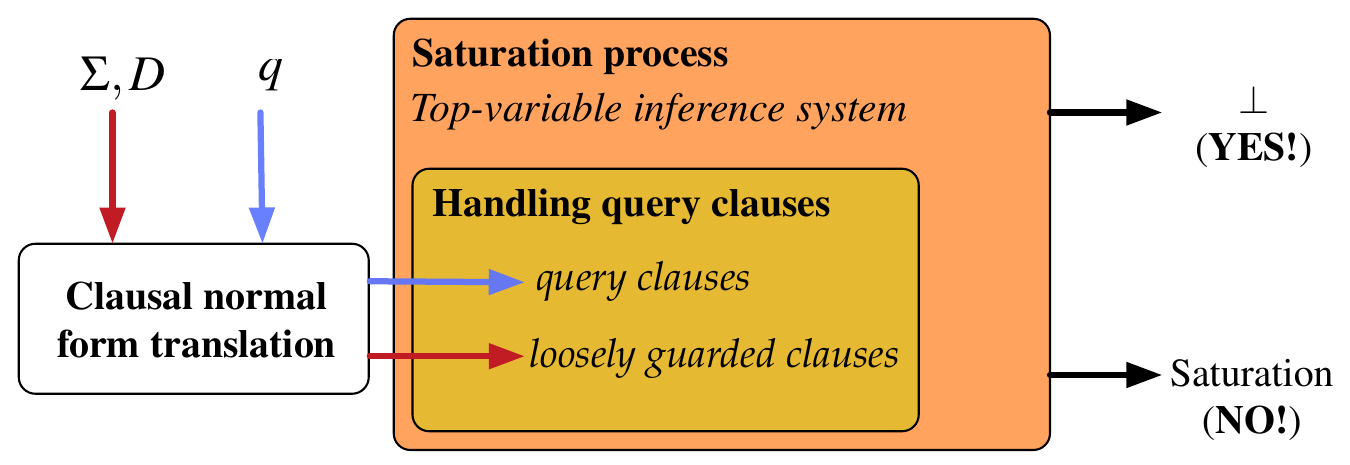}
\caption{Saturation-based \textsf{BCQ} answering processing of a set of guarded quantification formulas $\Sigma$, a dataset $D$ and a \textsf{BCQ} $q$}
\label{fig:q_ans}
\end{figure}


Conventional \textsf{BCQ} rewriting tasks aim to reduce a \emph{\textsf{BCQ} entailment problem} to a \emph{model checking problem}: one first compiles a \textsf{BCQ}~$q$ and a set~$\Sigma$ of formulas into a~(function-free) first-order formula $\Sigma_q$, and then applies a model checking algorithm to~$\Sigma_q$ over datasets~\cite{CGL07,EOS12,GMSGH13}. If this reduction is possible, then $q$ and~$\Sigma$ are called \emph{first-order \textnormal{(}\textsf{FO}\textnormal{)} rewritable}. Counter-examples in~\cite{BBLP18,BBGP21} imply that this property does not hold for \textsf{BCQ}s for any of the guarded quantification fragments. To address this problem, we introduce a new setting of \emph{saturation-based query rewriting}. This rewriting reduces the query answering problem $\Sigma \cup D \models q$ to the entailment problem $D \models \Sigma_q$, where $\Sigma_q$ is a first-order formula. Our query rewriting method pre-saturates the clausal form of $\{\lnot q\} \cup \Sigma$ and does it in such a way that this pre-saturation can be restored to a first-order formula $\Sigma_q$. Using our method, any dataset $D$ can be tested against the pre-saturation, but it is also possible to use other reasoning methods such as the chase algorithm~\cite{ABU79,MMS79} to solve the entailment checking of $D \models \Sigma_q$. \textbf{Figure~\ref{fig:rew}} outlines our saturation-based query rewriting procedure, which applies the saturation process to the rules and the query but not the dataset, and back-translates the saturation to a first-order formula $\Sigma_q$.

\begin{figure}[t]
\center
\includegraphics[width=.9\textwidth]{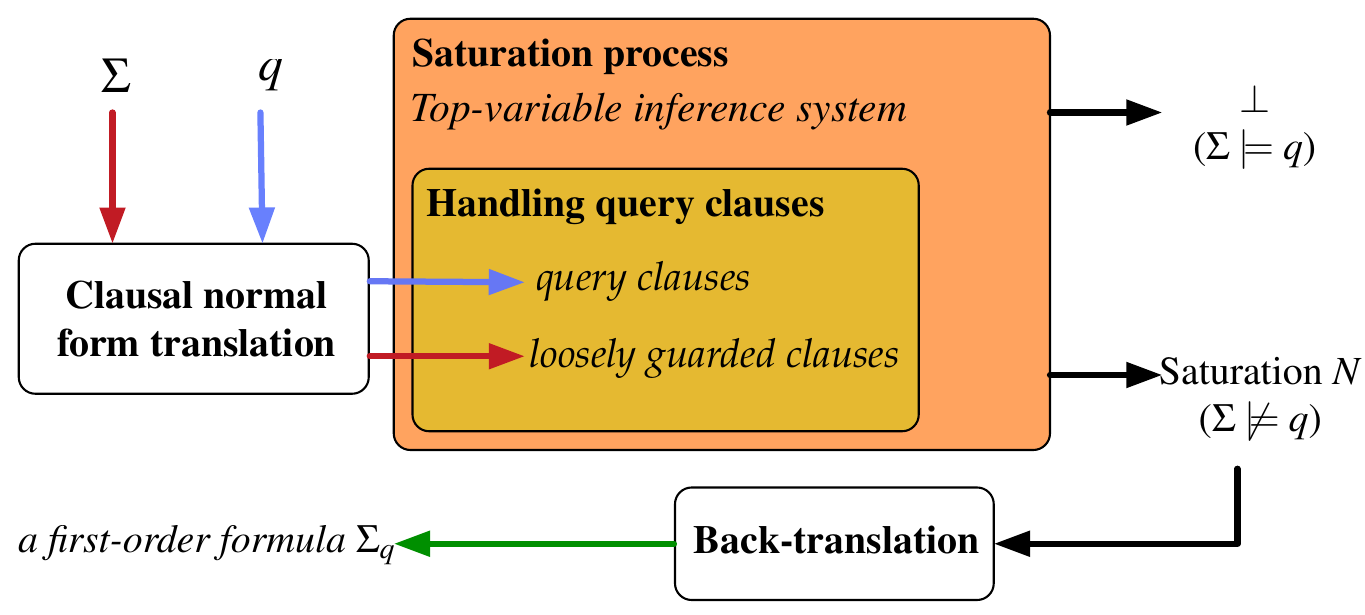}
\caption{Saturation-based \textsf{BCQ} rewriting processing of a set of guarded quantification formulas $\Sigma$ and a \textsf{BCQ} $q$}
\label{fig:rew}
\end{figure}

This result is of independent interest to automated reasoning, as back-translating a clausal set that includes inferred conclusions, to a first-order formula typically fails, as in general this problem is undecidable \cite{GSS08}. Using results established in \cite{E96} that a clausal set can be back-translated into a first-order formula if the clausal set satisfies certain properties, we devise a query rewriting procedure that ensures a successful back-translation. To distinguish our query rewriting setting from the traditional ones, we refer to our approach as \emph{saturation-based query rewriting}. 

To provide a basis for implementation, our query answering and rewriting approaches are devised in line with the \emph{resolution framework} of \cite{BG01}, which provides the basis for powerful saturation-based theorem provers such as E~\cite{S13}, SPASS~\cite{WDFKSW09}, Vampire~\cite{RV01b} and Zipperposition \cite{C15} and a lot of work in automated reasoning~\cite{WTRB20,EP20,SBT20,SM16,RV19}.

In a nutshell, the contributions of this paper are:
\begin{itemize}[noitemsep]
\item Inference systems for deciding \textsf{BCQ} answering for \textsf{GF}, \textsf{LGF} and \textsf{CGF}, dedicated to provide the basis for practical decision procedures.
\item A novel saturation-based \textsf{BCQ} rewriting approach for \textsf{GF}, \textsf{LGF} and \textsf{CGF}.
\item Improvements on existing resolution-based decision approaches for \textsf{GF} and \textsf{LGF}, and the first resolution-based approach for deciding \textsf{CGF}.
\item Novel saturation-based resolution inference systems, namely a \emph{partial selection-based resolution system} and a \emph{top-variable resolution system}, with formal soundness and refutational completeness proofs for first-order clausal logic.
\item Our procedures are applicable to answer and rewrite \textsf{BCQ}s for real-world ontological languages such as guarded, loosely guarded, and clique-guarded Datalog$^\pm$ and the description logic $\mathcal{ALCHOI}$.
\item Novel aspects of our approach include but are not limited to: the \emph{separation rules}, the \emph{partial selection-based} and \emph{top-variable resolution rules}, the \emph{clausification processes} and the \emph{back-translation procedure}. These techniques may allow decision and querying problems for other fragments to be solved in the future.
\end{itemize}

The remainder of this paper is organised as follows. \textbf{Section \ref{sec:pre}} formally defines basic notions of first-order logic, the guarded quantification fragments and the research questions. \textbf{Section~\ref{sec:trans}} reduces the \textsf{BCQ} answering problem for the targeted guarded fragments to an unsatisfiability checking problem of loosely guarded clauses and query clauses. \textbf{Section \ref{sec:tinf}} presents the partial selection-based resolution system and the top-variable resolution system. \textbf{Section~\ref{sec:lgc}} then proves that the top-variable system decides satisfiability of the class of loosely guarded clauses. \textbf{Section \ref{sec:qc}} tackles query clauses by introducing the separation rules and formula renaming. Combining the results from the previous sections, \textbf{Section~\ref{sec:qans}} devises a \textsf{BCQ} answering procedure for the guarded quantification fragments. \textbf{Section \ref{sec:qrew}} develops a saturation-based \textsf{BCQ} rewriting procedure for these guarded fragments. \textbf{Sections~\ref{sec:relate}} and \textbf{\ref{sec:conclu}} discuss related work and conclude the paper, respectively.



\section{Basic notions, guarded fragments and the querying problems of interest}
\label{sec:pre}
\subsection*{\textbf{Basic notions}}
\label{sec:notion}
Let $\mathtt{C}$, $\mathtt{F}$ and $\mathtt{P}$ be countably infinite sets that are pairwise disjoint. The elements in~$\mathtt{C}$,~$\mathtt{F}$ and $\mathtt{P}$ are \emph{constant symbols} (or \emph{constants}), \emph{function symbols} and \emph{predicate symbols}. A predicate symbol of arity zero is a \emph{propositional variable}. We refer the triple $(\mathtt{C}, \mathtt{F}, \mathtt{P})$ as a \emph{signature}. A \emph{term} is either a constant, or a variable, or it has the form of $f(t_1, \ldots, t_n)$ if i) $f$ is a function symbol of arity $n$ and ii) $t_1, \ldots, t_n$ are terms. A \emph{compound term} is a term that is neither a constant nor a variable. An \emph{atom} is an expression $P(t_1, \ldots ,t_n)$, where $P$ is a $n$-ary predicate symbol distinct from $\approx$ and $t_1, \ldots , t_n$ are terms. A \emph{literal} is an atom $A$ or a negated atom~$\lnot A$. Given two terms~(or atoms) $E_1 = A(\ldots, t, \ldots)$ and $E_2 = B(\ldots, s, \ldots)$, we say $t$ \emph{pairs} $s$ if the argument position of $t$ in $E_1$ is the same as that of $s$ in $E_2$. If a signature allows the special predicate symbols $\approx$ and $\not  \approx$, then the setting is \emph{first-order logic with equality}. We use \emph{infix notation} for positive and negative equational atoms: $s \approx t$ and $s \not \approx t$. 

In a quantified first-order formula $\forall x F$ or $\exists x F$, $x$ is the \emph{quantified variable} and~$F$~is \emph{the scope of} the quantified variable $x$. An occurrence of a variable $x$ in a first-order formula $F$ is a \emph{free variable} of $F$ if and only if $x$ is not within the scope of quantified variables. A \emph{sentence} (or \emph{closed formula}) is a first-order formula without free variables. A \emph{first-order clause} (or \emph{clause}) is a multiset of literals, interpreted as a disjunction of literals. A \emph{positive} (\emph{negative}) clause is a clause that contains only positive (negative) literals. An \emph{expression} can be a term, an atom, a literal, or a clause. The set of variables that occur in an expression $E$ is denoted as $\Var(E)$. A variable-free expression is a \emph{ground expression}. A clause is \emph{decomposable} if it can be partitioned into two variable-disjoint subclauses, otherwise, the clause is \emph{indecomposable}.

The \emph{depth of a term} $t$ is denoted $\Dep(t)$ and defined as: i) if $t$ is a variable or a constant, then $\Dep(t) = 0$, ii) if $t$ is a compound term $f(t_1, \ldots, t_n)$, then $\Dep(t) = 1 + max(\{\Dep(t_i) \ | \ 1 \leq i \leq n \})$. The \emph{depth of an expression} $E$ is the depth of the deepest term in $E$, denoted as $\Dep(E)$. The \emph{width of an expression}~$E$ is the number of distinct variables in $E$. If an expression $E$ does not contain any term, then $\Dep(E) = 0$ and the width of $E$ is 0.



A \emph{substitution} of terms for variables is a set $\{x_1 \mapsto t_1, \ldots, x_n \mapsto t_n\}$ where each~$x_i$ is a distinct variable and each $t_i$ is a term, which is not identical to the respective variable $x_i$. We use lower-case Greek letters $\sigma, \theta, \eta$ to denote substitutions. We use~$E\sigma$~to denote the result of the \emph{application of a substitution $\sigma$ to the expression}~$E$. It is also said to be an \emph{instance} of $E$. A \emph{variable renaming} is a substitution $\sigma$ such that $\sigma = \{x_1 \mapsto y_1, \ldots, x_n \mapsto y_n\}$ where $x_1, \ldots, x_n, y_1, \ldots, y_n$ are variables and $\sigma$ is bijective. An expression $E_1$ is a \emph{variant} of an expression $E$ if there exists a variable renaming~$\sigma$ such that $E_1 = E\sigma$. We consider two clauses $C_1$ and $C_2$ to be identical if~$C_1$ is a variant of $C_2$. Given substitutions $\sigma$ and $\theta$, the \emph{composition} $\sigma\theta$ denotes that for each variable~$x$, $x\sigma\theta = (x\sigma)\theta$. A substitution $\sigma$ is a \emph{unifier} of a set $\{E_1, \ldots, E_n\}$ of expressions if and only if $E_1\sigma = \ldots = E_n\sigma$. The set $\{E_1, \ldots, E_n\}$ is said to be \emph{unifiable} if there is a unifier for it. A unifier $\sigma$ of a set $\{E_1, \ldots, E_n\}$ of expressions is a \emph{most general unifier}~(\emph{mgu}) if and only if for each unifier $\theta$ of the set, there is a substitution $\eta$ such that $\sigma = \theta\eta$. A unifier $\sigma$ is a \emph{simultaneous mgu} of two sequences $E_1, \ldots, E_n$ and $E_1^\prime, \ldots, E_n^\prime$ of expressions where $n > 1$,  if $\sigma$ is an mgu for each pair~$E_i$~and $E_i^\prime$. By $\sigma = \mgu(E \doteq E^\prime)$, we mean that $\sigma$ is an mgu of expressions~$E$ and $E^\prime$. By $\sigma = \mgu(E_1 \doteq E_1^\prime, \ldots, E_n \doteq E_n^\prime)$ where $n > 1$, we mean that $\sigma$ is a simultaneous mgu of two sequences $E_1, \ldots, E_n$ and $E_1^\prime, \ldots, E_n^\prime$ of expressions.

We distinguish rules in our paper in two types: i) the rules that are applied to a clausal set, and they are framed using bold lines; ii) the rules that are applies to clauses, namely inference rules, and they are framed using non-bold lines. When we refer to function symbols, we mean non-constant ones. In the rest of the paper, we use the following notational convention:
\begin{align*}
\begin{array}{lll}
\bullet \  x,y,z,u,v,x_1, \ldots \text{for variables} & & 
\bullet \  a, b, c, a_1, \ldots \text{for constant symbols} \\
\bullet \  f, g, h, \ldots \text{for function symbols} & &
\bullet \ P,P_1,A,B, \ldots \text{for predicate symbols} \\
\bullet \ p, p_1, \ldots \ \text{for propositional variables} & &
\bullet \ F, F_1, \ldots \text{for formulas} \\
\bullet \ C, D, Q, C_1, \ldots \text{for clauses} & & \bullet \ s, t, u, \ldots \ \text{for terms} \\
\bullet \ L, L_1, \ldots \text{for literals} & & \bullet \ A, B, G, G_1, \ldots \ \text{for atoms}
\end{array}	
\end{align*}

\subsection*{\textbf{Guarded quantification fragments}}
In the following definitions, constants are allowed but not equality.

\begin{defi}
\label{def:gf}	
The \emph{guarded fragment} (\emph{\textsf{GF}}) is a fragment of first-order logic without function symbols, inductively defined as follows:
\begin{enumerate}
\item $\top$ and $\bot$ belong to \textsf{GF}.
\item If $A$ is an atom, then $A$ belongs to \textsf{GF}.
\item \textsf{GF} is closed under Boolean connectives.
\item Let $F$ be a guarded formula and $G$ an atom. Then $\exists \overline x (G \land F)$ and $\forall \overline x (G \to F)$ belong to \textsf{GF} if all free variables of $F$ occur in $G$. 
\end{enumerate}      
\end{defi}

\begin{defi}
\label{def:lgf}	
The \emph{loosely guarded fragment}~(\emph{\textsf{LGF}}) is a fragment of first-order logic without function symbols, inductively defined as follows: 
\begin{enumerate}[noitemsep]
\item $\top$ and $\bot$ belong to \textsf{LGF}.
\item If $A$ is an atom, then $A$ belongs to \textsf{LGF}.
\item \textsf{LGF} is closed under Boolean connectives.
\item Let $F$ be a loosely guarded formula and $\mathbb{G}$ a conjunction of atoms. Then $\forall \overline x (\mathbb{G} \to F)$ and $\exists \overline x (\mathbb{G} \land F)$ belong to \textsf{LGF} if
\begin{enumerate}[noitemsep]
\item all free variables of $F$ occur in $\mathbb{G}$, and 
\item for each variable $x$ in $\overline x$ and each variable $y$ occurring in $\mathbb{G}$ that is distinct from $x$, $x$ and $y$ co-occur in an atom of $\mathbb{G}$.
\end{enumerate}
\end{enumerate}
\end{defi}

\begin{defi}
\label{def:cgf}
The \emph{clique-guarded fragment} (\emph{\textsf{CGF}}) is a fragment of first-order logic without function symbols, inductively defined as follows:
\begin{enumerate}[noitemsep]
\item $\top$ and $\bot$ belong to \textsf{CGF}.
\item If $A$ is an atom, then $A$ belongs to \textsf{CGF}.
\item \textsf{CGF} is closed under Boolean connectives.
\item Let $F$ be a clique-guarded formula and $\mathbb{G}(\overline x, \overline y)$ a conjunction of atoms. Then $\forall \overline z (\exists \overline x \mathbb{G}(\overline x, \overline y) \to F)$ and $\exists \overline z (\exists \overline x \mathbb{G}(\overline x, \overline y) \land F)$ belong to \textsf{CGF}, if 
\begin{enumerate}[noitemsep]
\item all free variables of $F$ occur in $\overline y$, and
\item each variable in $\overline x$ occurs in only one atom of $\mathbb{G}(\overline x, \overline y)$, and
\item for each variable $z$ in $\overline z$ and each variable $y$ occurring in $\mathbb{G}(\overline x, \overline y)$ that is distinct from $z$, $z$ and $y$ co-occur in an atom of $\exists \overline x \mathbb{G}(\overline x, \overline y)$. 
\end{enumerate}
\end{enumerate}   
\end{defi}

In 4.~of Definitions \ref{def:gf}--\ref{def:cgf}, the atom $G$, the conjunctions of atoms $\mathbb{G}$ and $\exists \overline x (\mathbb{G}(\overline x, \overline y))$ are, respectively, the \emph{guard}, the \emph{loose guard} and the \emph{clique-guard} for~$F$. We say a formula is a \emph{guarded quantification formula} if it belongs to either \textsf{GF}, or \textsf{LGF} and \textsf{CGF}. Definition \ref{def:gf} defines \textsf{GF} in the same way as~\cite[Definition 2.1]{dNdR03} and~\cite[Definition 2.1]{GdN99} modulo equality. Definition~\ref{def:lgf} improves the previous definitions of \textsf{LGF} in~\cite{dNdR03,GdN99}: \cite[Definition 4.1]{dNdR03} misses Condition 4(a) of Definition~\ref{def:lgf}, and Condition~(ii) in the definition of \textsf{LGF} in \cite{GdN99} is amended in Condition~4(b) of Definition~\ref{def:lgf}. Unlike the definitions of \textsf{CGF} in \cite{M07,HT01}, Definition~\ref{def:cgf} is defined in accordance with Definitions~\ref{def:gf}--\ref{def:lgf} and disallows equality symbols.


Among the following formulas, $F_1, F_2, F_4, F_6$ and $F_7$ are guarded formulas, but not the rest. The formula $F_7$ is the standard translation \cite[chapter 2]{BRV01} of the modal formula $P \rightarrow \Diamond \Box P$ and the description logic axiom $P \sqsubseteq \exists R. \forall R. P$. For the relationship between \textsf{GF} and modal logic see \cite[section 7.4]{BRV01}, and for that between \textsf{GF} and description logic see \cite{SCM07}.
\begin{align*}
&F_1 = A(x) \qquad \qquad \qquad F_2 = \forall x \lbrack A(x,y) \rightarrow B(x,y) \rbrack \qquad \qquad \qquad F_3 = \forall x \lbrack A(x) \rbrack \\
&F_4 = \forall x \lbrack A(x,y) \rightarrow \bot \rbrack \qquad \qquad \qquad \qquad \qquad \ \quad F_5 = \forall x \lbrack A(x,y) \rightarrow \exists y(B(y,z)) \rbrack \\ 
&F_6 = \exists x \lbrack A(x,y) \land \forall z (B(x,z) \rightarrow \exists u (R(z,u))) \rbrack\\
&F_7 = \forall x \lbrack P(x) \rightarrow \exists y (R(x,y) \land \forall z (R(y,z) \rightarrow P(z)))) \rbrack
\end{align*}

Extended from \textsf{GF}, \textsf{LGF} allows a restricted form of a conjunction of atoms in the guard positions. For example, $\forall z \lbrack (R(x,z) \land R(z,y)) \rightarrow P(z) \rbrack$ and the first-order translation of a temporal logic formula $A \ \mathtt{until} \ B$:
\begin{align*}
\exists y \lbrack R(x,y) \land B(y) \land \forall z((R(x,z) \land R(z,y))\rightarrow A(z))) \rbrack,
\end{align*}
are loosely guarded formulas, but are not guarded.  Extended from \textsf{LGF}, \textsf{CGF} allows existentially quantified variables in loose guards. In the clique-guarded formula
\begin{align*}
F = \forall x_1x_2
\left\lbrack
\begin{array}{rll}
G(x_1,x_2)  \to & \forall x_3 ( & \\
& (\exists x_4x_5 (A(x_1,x_3,x_4) \land B(x_2,x_3,x_5)) \ ) \to & \\
& (\exists x_6 D(x_1,x_6) \land \top) & )
\end{array}
\right\rbrack,
\end{align*} 
$\exists x_6 D(x_1,x_6)$, $\exists x_4x_5 (A(x_1,x_3,x_4) \land B(x_2,x_3,x_5)) \ \text{and} \ G(x_1,x_2)$ are respectively the clique-guards of $\exists x_6 D(x_1,x_6) \land \top$, 
\begin{align*}
	&\forall x_3 ( \exists x_4x_5 (A(x_1,x_3,x_4) \land B(x_2,x_3,x_5)) \to (\exists x_6 D(x_1,x_6) \land \top)) \ \text{and} \ F.
\end{align*}
The transitivity formula $\forall xyz \lbrack (R(x,y) \land R(y,z)) \rightarrow R(x,z) \rbrack$ is neither a guarded nor a loosely guarded nor a clique-guarded formula.

\subsection*{\textbf{\textsf{BCQ} answering and saturation-based \textsf{BCQ} rewriting problems}}

First, we give the formal definition of \textsf{BCQ}s and unions thereof.
\begin{defi}
A \emph{Boolean conjunctive query} (\emph{\textsf{BCQ}}) is a first-order sentence of the form $\exists \overline x \varphi(\overline x)$, where $\varphi(\overline x)$ is a conjunction of atoms containing only constants and variables as arguments. A \emph{union of \textsf{BCQ}s} is a disjunction of \textsf{BCQ}s. 	
\end{defi}

This paper aims to answer the following question.

\begin{ques}
\label{ques:qans}
Given a set $\Sigma$ of formulas in \textsf{GF}, \textsf{LGF} and \textsf{CGF}, a set $D$ of ground atoms and a union $q$ of \textsf{BCQ}s, can we devise a practical decision procedure to check whether $\Sigma \cup D \models q$?
\end{ques}
In this paper, the above question is reduced to check whether $\Sigma \models q$, since ground atoms $D$ belong to the guarded quantification fragments $\Sigma$. To answer this question, we use a \emph{saturation-based method}, which computes the closure of a given set of formulas under a set of inference rules. 

If we answer Question \ref{ques:qans} positively, then we consider a follow-up question:

\begin{ques}
\label{ques:qrew}
Suppose $\Sigma$ is a set of formulas in \textsf{GF}, \textsf{LGF} and \textsf{CGF}, $D$ is a set of ground atoms and $q$ is a union of \textsf{BCQ}s. Further, suppose $N$ is the saturation obtained by applying the procedure devised for Question 1 to $\{\lnot q\} \cup \Sigma$. Can $N$ be back-translated to a (Skolem-symbol-free) first-order formula~$\Sigma_q$ such that $\Sigma \cup D \models q$ if and only if $D \models \Sigma_q$?
\end{ques}


\section{From formulas to clausal sets}
\label{sec:trans}
%
In this section, we formally define a clausal class to which the considered problems can be reduced, and then define our clausal normal form translation.

\subsection*{\textbf{Loosely guarded clauses and query clauses}}
It is helpful to use the \emph{flatness}, \emph{simpleness}, \emph{compatibility} and \emph{covering} properties to formally define our clausal forms, namely \emph{loosely guarded clauses} and \emph{query clauses}. 

A compound term is \emph{flat} if each argument in it is either a constant or a variable. A literal is \emph{flat} if each argument in it is either a constant or a variable. A clause is \emph{flat} if the literals in it are flat. A clause is \emph{simple} if each argument in it is either a variable or a constant or a flat compound term. A \emph{simple compound-term literal} (\emph{clause}), or plainly a \emph{compound-term literal} (\emph{clause}), is a simple literal (clause) containing at least one flat compound term. For example, $\lnot A(f(x,y))$ is a compound-term literal, but not $\lnot A(f(g(x),y))$ because ofs the presence of the nested compound term $f(g(x),y)$. A clause $C$ is \emph{covering} if each compound term $t$ in it satisfies $\Var(t)=\Var(C)$. Two compound terms $t$ and $s$ are \emph{compatible} if the argument sequences of $t$ and $s$ are identical. A clause $C$ is \emph{compatible} if in $C$, compound terms that are under the same function symbol are compatible. A clause is \emph{strongly compatible} if all compound terms in it are compatible. For example, $A_1(f(x,y)) \lor \lnot A_2(g(x,y)) \lor A_3(y,x)$ is covering and strongly compatible, and $A_1(f(x,y)) \lor \lnot A_2(g(y,x))$ is covering and compatible, but not strongly compatible.

\begin{defi}
\label{def:query}
A \emph{query clause} is a flat negative clause.   
\end{defi}
\begin{defi}
\label{def:lgc}
A \emph{loosely guarded clause} $C$ is a simple, covering and strongly compatible clause, satisfying the following conditions: 
\begin{enumerate}[noitemsep]
\item $C$ is either ground, or
\item $C$ contains a set of negative flat literals $\lnot G_1, \ldots, \lnot G_n$ such that each pair of distinct variable in $C$ co-occurs in at least one literal of $\lnot G_1, \ldots, \lnot G_n$.
\end{enumerate}
\end{defi}
In 2.~of Definition \ref{def:lgc}, $\lnot G_1, \ldots, \lnot G_n$ is called a  \emph{loose guard} of $C$. When a clause contains only one variable, then it is a loosely guarded clause if it is simple, covering, strongly compatible, and it contains a flat negative literal that contains its variable. A loosely guarded clause is a \emph{guarded clause} if its loose guards contain only one literal, which we call a \emph{guard} of this clause. A \emph{Horn guarded clause} is a guarded clause containing at most one positive literal. A clause is (\emph{loosely}) \emph{guarded} if it contains at least one (loose) guard.

Consider the clauses
\begin{align*}
&C_1 = \lnot A_1(x,y) \lor \lnot A_2(y,z) \lor \lnot A_3(z,x),\\
&C_2 = \lnot B_1(x,y,a) \lor \lnot B_2(y,z,b) \lor \lnot B_3(z,x,w), \\
&C_3 = \lnot A_1(x,y) \lor A_2(f(y,x),f(x,y)).
\end{align*}
The clause $C_1$ is a loosely guarded clause; $C_2$ is not as $w$ and $y$ do not co-occur in any negative flat literal; $C_3$ is not a loosely guarded clause either since $f(y,x)$ and $f(x,y)$ are not compatible. A query clause is not necessarily loosely guarded or vice-versa. For example, $C_1$ is a query clause; $\lnot A(x,y) \lor B(f(x,y))$ is a (loosely) guarded clause but not a query clause; and $\lnot A_1(x,y) \lor \lnot A_2(y,z)$ is a query clause, but not (loosely) guarded. 

\begin{figure}[t]
\normalsize 
\center
\scalebox{1}{
\begin{forest}
 [\textsf{LGQ} clauses, name=lgq
 [\textsf{LG} clauses, name=lg
 [\textsf{CGF}
 [\textsf{LGF},name=lgf
 [\textsf{GF},name=gf]]]
 [\thead{\normalsize guarded \\ \normalsize clauses}, name=gc
 [\thead{\normalsize Horn \\ \normalsize guarded clauses}, name=hgc]]
 ]
 [query clauses, name=qc
 [\thead{\normalsize negated \\ \normalsize a union of \textsf{BCQ}s}]
 ]]
 \draw[] (gf) to (gc);
 \draw[] (lgf) to[out=west,in=west] (lg);
\end{forest}
}
 \caption{Relationships between the investigated clausal classes and fragments}
 \label{fig:gfs_gcs}
\end{figure}
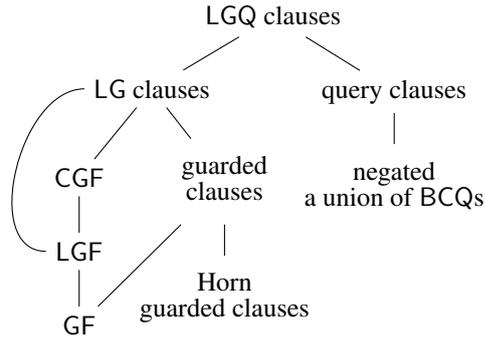

We use \textsf{LG} to denote the class of loosely guarded clauses, and \textsf{LGQ} to denote the class of both query and \textsf{LG} clauses. The class of \textsf{LG} clauses is more expressive than the guarded quantification fragments. For example, $\lnot G(x) \lor A(f(x))$ is an \textsf{LG} clause but it does not belong to the guarded quantification fragments. \textbf{Figure \ref{fig:gfs_gcs}} summarises the relationships between \textsf{BCQ}s, the guarded quantification fragments and the considered clausal classes. In \textbf{Figure \ref{fig:gfs_gcs}}, an upper node is more expressive than the one linked below it.

\subsection*{\textbf{Clausal normal form translation}}
We use the \emph{formula renaming} technique \cite[section 4]{NW01} in our clausification processes. Let $F[F_1(\overline x)]$ denote a first-order formula $F$ in which $F_1$ is a subformula of $F$ and~$\overline x$ are the free variables in $F_1$. Using a predicate symbol $P$, say, not occurring in~$F[F_1(\overline x)]$, \emph{formula renaming with positive literals} transforms $F[F_1(\overline x)]$ to 
\begin{align*}
F[P(\overline x)] \land \forall \overline x (\lnot P(\overline x) \lor F_1(\overline x))
\end{align*}
and \emph{formula renaming with negative literals} transforms $F[F_1(\overline x)]$ to 
\begin{align*}
F[\lnot P(\overline x)] \land \forall \overline x (P(\overline x) \lor F_1(\overline x)),
\end{align*}
where every occurrence of $F_1(\overline x)$ in $F[F_1(\overline x)]$ are replaced by $P(\overline x)$ and $\lnot P(\overline x)$, respectively. In the above \emph{formula renaming with positive literals}, $F[P(\overline x)]$ and $\forall \overline x (\lnot P(\overline x) \lor F_1(\overline x))$ are called the \emph{replacement} of $F[F_1(\overline x)]$ and the \emph{definition} of~$P$, respectively. In the above \emph{formula renaming with negative literals}, $F[\lnot P(\overline x)]$ and $\forall \overline x (P(\overline x) \lor F_1(\overline x))$ are called the \emph{replacement} of~$F[F_1(\overline x)]$ and the \emph{definition} of $P$, respectively. If a formula~$F$ is the definition of a predicate symbol $P$, we say $P$ \emph{defines} $F$. For a comprehensive description of clausification techniques, we refer the reader to~\cite{BEL01,NW01}.

Given a union $q_1 \lor \ldots \lor q_n$ of \textsf{BCQ}s and a set $\Sigma$ guarded quantification formulas, we reduce the entailment checking problem of $\Sigma \models q_1 \lor \ldots \lor q_n$ to the problem of checking unsatisfiability of $\{\lnot q_1 \land \ldots \land \lnot q_n\} \cup \Sigma$. We assume that all free variables in $\Sigma$ are existentially quantified as we are interested in satisfiability checking. We use \textbf{Trans} to denote our clausification process, detailed below.
\begin{enumerate}
\item Negate the union of \textsf{BCQ}s to obtain a set of \emph{query clauses}.
\item Clausify \emph{loosely guarded formulas} following the steps below, illustrated on 
\begin{align*}
F = \exists y \lbrack R(x,y) \land B(y) \land \forall z((R(x,z) \land R(z,y))\rightarrow A(z))) \rbrack.
\end{align*}
\begin{enumerate}
\item Add existential quantifiers to all free variables, equivalently express (double) implications as disjunctions and then perform negation normal form translation. From $F$ we obtain
\begin{align*}
F_1 = \exists xy \lbrack R(x,y) \land B(y) \land \forall z(\lnot R(x,z) \lor \lnot R(z,y) \lor A(z)) \rbrack.
\end{align*}
\item Use \emph{formula renaming with positive literals}~for all universally quantified subformulas in the formula obtained in 2(a). From $F_1$ we obtain  
\begin{align*}
F_2 = 
\left\lbrack
\begin{array}{rll}
\exists xy (& R(x,y) \land B(y) \land P_1(x,y) & ) \land\\
\forall xy ( & \lnot P_1(x,y) \lor \forall z (\lnot R(x,z) \lor \lnot R(z,y) \lor A(z)) & ) 
\end{array}
\right\rbrack,
\end{align*}
where $P_1$ is a fresh predicate symbol. We say that 
\begin{align*}
&\exists xy (R(x,y) \land B(y) \land P_1(x,y)) \ \text{is the \emph{replacement} of $F_1$, and} \\
& \forall xy ( \lnot P_1(x,y) \lor \forall z (\lnot R(x,z) \lor \lnot R(z,y) \lor A(z)) \ \text{is the \emph{definition} of $P_1$.}	
\end{align*}
\item Transform immediate subformulas of the formulas obtained in 2(b) that are connected by conjunctions to prenex normal form and then apply Skolemisation. By introducing Skolem constants $a$ and $b$, from $F_2$ we obtain 
\begin{align*}
F_3 = 
\left\lbrack
\begin{array}{rll}
& R(a,b) \land B(b) \land P_1(a,b) & \land \\
\forall xyz ( & \lnot P_1(x,y) \lor \lnot R(x,z) \lor \lnot R(z,y) \lor A(z) & ) 
\end{array}
\right\rbrack.
\end{align*}
\item Drop universal quantifiers and then perform conjunctive normal form transformation to formulas obtained in 2(c). From $F_3$ we obtain a set of \emph{\textsf{LG} clauses}:
\begin{align*}
R(a,b), \ B(b), \ P_1(a,b) \ \textnormal{and} \  \lnot P_1(x,y) \lor \lnot R(x,z) \lor \lnot R(z,y) \lor A(z).
\end{align*}  
\end{enumerate}
\item Clausify \emph{clique-guarded formula} following the steps below, illustrated on
\begin{align*}
F^\prime = \forall x_1x_2
\left\lbrack
\begin{array}{rll}
G(x_1,x_2)  \to & \forall x_3 ( & \\
& (\exists x_4x_5 (A(x_1,x_3,x_4) \land B(x_2,x_3,x_5)) \ ) \to & \\
& (\exists x_6 D(x_1,x_6) \land \top) & )
\end{array}
\right\rbrack.
\end{align*} 
\begin{itemize}
\item[(a)] Add existential quantification for all free variables and simplify $\top$ and~$\bot$. Unlike 2(a)~we first apply the \emph{miniscoping rule} \cite{NW01} to existential quantified variables in clique-guards, and then perform the negation normal form transformation. From $F^\prime$ we obtain 
\begin{align*}
F_1^\prime = \forall x_1x_2
\left\lbrack
\begin{array}{rll}
G(x_1,x_2) \to  & \forall x_3 ( & \\
& (\exists x_4 A(x_1,x_3,x_4) \land \exists x_5 B(x_2,x_3,x_5) \ ) \to & \\
& (\exists x_6 D(x_1,x_6) \land \top) & )
\end{array}
\right\rbrack.
\end{align*}
Then transform $F_1^\prime$ to negation normal form and drop $\top$, obtaining
\begin{align*}
F_2^\prime = \forall x_1x_2
\left\lbrack
\begin{array}{rll}
\lnot G(x_1,x_2) \ \lor & \forall x_3 ( & \\
& (\forall x_4 (\lnot A(x_1,x_3,x_4)) \lor \forall x_5  (\lnot B(x_2,x_3,x_5)) \ ) & \lor \\
& \exists x_6 D(x_1,x_6) & )
\end{array}
\right\rbrack,
\end{align*}
\item[(b1)] Apply \emph{formula renaming} to all universally quantified subformulas in the formula obtained in 3(a). For universally quantified subformulas in the \emph{clique-guards}, namely $\forall x_4 (\lnot A(x_1,x_3,x_4))$ and $\forall x_5  (\lnot B(x_2,x_3,x_5))$, we apply \emph{formula renaming with negative literals} to them. From $F_2^\prime$ we obtain an intermediate formula
\begin{align*}
F_3^\prime = 
\left\lbrack
\begin{array}{l}
\forall x_1x_3 ( P_1(x_1, x_3) \lor \forall x_4 (\lnot A(x_1,x_3,x_4))) \land \\
\forall x_2x_3 ( P_2(x_2, x_3) \lor \forall x_5 (\lnot B(x_2,x_3,x_5))) \land \\
\forall x_1x_2 ( \lnot G(x_1,x_2) \lor \forall x_3 (\lnot P_1(x_1, x_3) \lor \lnot P_2(x_2, x_3) \lor \exists x_6 D(x_1,x_6)))
\end{array}
\right\rbrack,
\end{align*}
where $P_1$ and $P_2$ are the fresh predicate symbols. 

\item[(b2)] For the remaining universally quantified subformulas in the formula obtained in 3(a) and 3(b1), we apply \emph{formula renaming with positive literals}. From $F_3^\prime$ we eventually obtain
\begin{align*}
F_4^\prime = 
\left\lbrack
\begin{array}{l}
p_1 \land \\
(\lnot p_1 \lor \forall x_1x_2 ( \lnot G(x_1,x_2) \lor P_3(x_1, x_2) )) \land  \\
\forall x_1x_3 ( P_1(x_1, x_3) \lor \forall x_4 (\lnot A(x_1,x_3,x_4))) \land \\
\forall x_2x_3 ( P_2(x_2, x_3) \lor \forall x_5 (\lnot B(x_2,x_3,x_5))) \land \\
\forall x_1x_2 ( \lnot P_3(x_1, x_2) \lor \forall x_3 (\lnot P_1(x_1, x_3) \lor \lnot P_2(x_2, x_3) \lor \exists x_6 D(x_1,x_6)))
\end{array}
\right\rbrack,
\end{align*} 
where $p_1$ and $P_3$ are the fresh predicate symbols. In $F_4^\prime$, $p_1$ is the \emph{replacement} of~$F_2^\prime$ and the remaining four conjuncts respectively \emph{defines} $p_1$, $P_1$, $P_2$ and~$P_3$.
\item[(c)] Transform immediate subformulas of the formulas obtained in 3(b2) that are connected by conjunctions to prenex normal form and then apply Skolemisation. Using a Skolem function symbol $f$, $F_4^\prime$ is transformed into
\begin{align*}
F_5^\prime = 
\left\lbrack
\begin{array}{l}
p_1 \land \\
(\lnot p_1 \lor \forall x_1x_2 ( \lnot G(x_1,x_2) \lor P_3(x_1, x_2) ) ) \land  \\
\forall x_1x_3x_4 ( P_1(x_1, x_3) \lor \lnot A(x_1,x_3,x_4) ) \land \\
\forall x_2x_3x_5 ( P_2(x_2, x_3) \lor \lnot B(x_2,x_3,x_5) ) \land \\
\forall x_1x_2x_3 ( \lnot P_3(x_1, x_2) \lor \lnot P_1(x_1, x_3) \lor \lnot P_2(x_2, x_3) \lor D(x_1,f(x_1,x_2,x_3)) ) \\
\end{array}
\right\rbrack.
\end{align*}
\item[(d)] Transform the formula obtained in 3(c) to conjunctive normal form and then drop universal quantifiers. From $F_5^\prime$ we obtain a set of \emph{\textsf{LG} clauses}:
\begin{align*}
&p_1, \qquad \qquad \qquad \qquad \qquad \ \lnot p_1 \lor \lnot G(x_1,x_2) \lor P_3(x_1, x_2), \\
&P_1(x_1, x_3) \lor \lnot A(x_1,x_3,x_4), \qquad \ \ \ P_2(x_2, x_3) \lor \lnot B(x_2,x_3,x_5), \\
&\lnot P_3(x_1, x_2) \lor \lnot P_1(x_1, x_3) \lor \lnot P_2(x_2, x_3) \lor D(x_1,f(x_1,x_2,x_3)).
\end{align*}
\end{itemize}
\end{enumerate}
To sum up, the \textbf{Trans} process transforms unions of \textsf{BCQ}s to \emph{query clauses}, clausifies guarded formulas to a set of \emph{guarded clauses}, and loosely guarded and clique-guarded formulas to a set of \emph{\textsf{LG} clauses}.  

By i) renaming universally quantified subformulas and ii) applying prenex normal form transformation and then Skolemisation to each conjunctively connected immediate subformulas, the \textbf{Trans} process intentionally introduces Skolem functions of a higher arity. More specifically, i)--ii) ensure that \textsf{LG} clauses have the \emph{covering} and the \emph{strong compatibility properties}. The covering property is essential to guarantee termination in our \textsf{BCQ} answering procedures, and the strong compatibility property makes the back-translation from an \textsf{LG} clausal set to a  first-order formula possible. 

The \textbf{Trans} process provides the most general and crucial clausification steps, but this can be further optimised in implementation. For example, in 3(c) of the \textbf{Trans} process, renaming the top-most formula $\forall x_1x_2 (\lnot G(x_1,x_2) \lor P_1(x_1, x_2))$ is not critical. Another possible optimisation is using \emph{formula renaming} to avoid the exponential blow-up of distributing disjunctions over conjunctions.

\begin{lem}
\label{lem:trans}
i) Applying the \textbf{Trans} process to a (loosely) guarded formula transforms it into a set of (loosely) guarded clauses, and ii) applying the \textbf{Trans} process to a clique-guarded formula transforms it into a set of loosely guarded clauses.
\end{lem}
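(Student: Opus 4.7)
The plan is to prove both (i) and (ii) by structural induction on the input formula, tracking through each step of the \textbf{Trans} process and maintaining a set of invariants on intermediate formulas that guarantee the final clauses satisfy Definition~\ref{def:lgc}. The four properties that must be verified for each output clause are: simpleness, covering, strong compatibility, and the existence of a loose guard (resp.\ a single negative flat literal guard in the guarded case). I would organise the argument by handling the three shape-preserving transformations (NNF/miniscoping, formula renaming, CNF and quantifier dropping) separately from the single shape-altering transformation (prenexing plus Skolemisation), since it is only the latter that introduces compound terms.

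First I would show that Steps~2(a) and~3(a) preserve guardedness / loose guardedness / clique guardedness in the sense of Definitions~\ref{def:gf}--\ref{def:cgf}; this is standard because NNF and miniscoping commute with the quantifier structure that defines these fragments. Next, for Steps~2(b) and~3(b1)--3(b2), I would verify that each application of a formula renaming rule introduces a fresh atom $P(\overline x)$ whose variable list is exactly the free variables of the subformula being renamed, so both the replacement and the definition remain in the respective fragment: in the loosely guarded case the new atom is either a guard or a guarded body, and in the clique guarded case the use of \emph{formula renaming with negative literals} inside clique guards converts $\exists \overline x\,\mathbb{G}(\overline x, \overline y)$ into $\lnot P(\overline y)$ in the replacement, turning a clique guard into an ordinary (loose) guard atom on the variables $\overline y$. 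This is precisely what allows (ii) to collapse into the loosely guarded clausal case.

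The main obstacle, and the step needing the most care, is Steps~2(c) and~3(c): prenexing each top-level conjunct and then Skolemising. Here I need to show that every Skolem term $f(\overline u)$ generated in a clause $C$ satisfies $\Var(f(\overline u)) = \Var(C)$ (covering) and that any two Skolem terms appearing in the same conjunct share an identical argument list (strong compatibility). The key invariant to maintain is: after Step~(b), each conjunct is of the form $\forall \overline y (\Lambda \lor \exists \overline z\, \Phi)$ where $\Lambda$ is a disjunction of literals over a subset of $\overline y$ including either the guard atom $\lnot G(\overline y)$ or a loose guard $\lnot G_1, \ldots, \lnot G_n$ on the variables $\overline y$, and $\Phi$ is quantifier-free with free variables in $\overline y \cup \overline z$. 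Prenexing pushes $\exists \overline z$ outside, Skolemisation replaces each $z_i$ by $f_i(\overline y)$, and since \emph{all} introduced Skolem terms use the identical argument tuple $\overline y$, strong compatibility and covering follow immediately; and the loose guard literals $\lnot G_1, \ldots, \lnot G_n$ survive into every disjunct of the resulting CNF, providing the required loose guard of each clause (simplification removes only pure, unused disjuncts, which does not affect this). Simpleness follows because Skolem terms are applied only to variables, so nesting cannot occur.

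Finally, Step~2(d)/3(d) distributes disjunction over conjunction, and since each top-level conjunct already has its loose guard present in every disjunctive branch, every output clause inherits it. Gathering the three cases: a guarded input formula produces clauses whose loose guard consists of a single literal, hence guarded clauses; a loosely guarded input produces loosely guarded clauses; a clique guarded input, after the negative-renaming of clique guards in Step~3(b1), reduces to the loosely guarded case and therefore produces loosely guarded clauses. This establishes both~(i) and~(ii).
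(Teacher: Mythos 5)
Your proposal is correct and follows essentially the same route as the paper's proof: establish via the renaming steps that every top-level conjunct carries a (loose) guard over exactly its universal variables, observe that per-conjunct prenexing and Skolemisation gives all Skolem terms the identical argument tuple (yielding simpleness, covering and strong compatibility), note that the guard survives into every CNF clause, and reduce the clique guarded case to the loosely guarded one via the negative renaming of the universally quantified parts of clique guards. The only point you gloss over is the degenerate \emph{replacement} conjunct, which after renaming is a closed existentially quantified formula and Skolemises to flat ground clauses --- trivially \textsf{LG} clauses by clause~1 of Definition~\ref{def:lgc} --- and the minor presentational point that renaming a clique guard yields one fresh literal per guard atom, the collection forming a loose guard rather than a single guard atom.
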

\begin{proof}
i): Suppose $F$ is a loosely guarded formula. Suppose $F_2$ is a result of applying 2(a)--2(b) of \textbf{Trans} to $F$, and further suppose $P_1, \ldots, P_n$ are the fresh predicate symbols introduced in 2(b). W.l.o.g.~we say $F_2 = F_{2,1} \land \ldots \land F_{2,n} \land F_{2,r}$ where $F_{2,1}, \ldots, F_{2,n}$ are respectively the \emph{definitions} of $P_1, \ldots, P_n$ and $F_{2,r}$ is the \emph{replacement} of~$F_2$. We prove that \textbf{Trans} clausifies every conjunct of $F_2$ to a set of \textsf{LG} clauses.

 Consider $F_{2,r}$. By 2(b), no universally quantified subformulas occur in $F_{2,r}$, therefore~$F_{2,r}$ is a closed existentially quantified formula. The fact that $F_{2,r}$ contains no compound terms implies that 2(c)--2(d) clausify $F_{2,r}$ to a set of \emph{flat ground clauses}, which are \textsf{LG} clauses. 
 
Consider $F_{2,1}, \ldots, F_{2,n}$. W.l.o.g.~we take~$F_{2,1}$. By 2(b), $F_{2,1}$ can be represented as 
\begin{align*}
\forall \overline x (\lnot P_1(\overline x) \lor \forall \overline y (\lnot G_1(\overline {x_1}, \overline {y_1}) \lor \ldots \lor \lnot G_r(\overline {x_k}, \overline {y_k}) \lor F_{a}))
\end{align*}
where $\forall \overline y (\lnot G_1(\overline {x_1}, \overline {y_1}) \lor \ldots \lor \lnot G_r(\overline {x_k}, \overline {y_k}) \lor F_{a})$ is a loosely guarded formula, $F_{a}$ is a loosely guarded formula where all universal quantified formulas are abstracted (hence $F_{a}$ is a formula containing no universal quantification but may contain existential quantifications), $\overline {x_1}, \ldots, \overline {x_k} \subseteq \overline x$ and $\overline {y_1}, \ldots, \overline {y_k} \subseteq \overline y$. By~2(c), $F_{2,1}$ is converted to
\begin{align*}
\forall \overline x \overline y (\lnot P_1(\overline x) \lor \lnot G_1(\overline {x_1}, \overline {y_1}) \lor \ldots \lor \lnot G_r(\overline {x_k}, \overline {y_k}) \lor F_{a}).
\end{align*} 
If $F_{a}$ contains conjunctions, 2(c)--2(d) clausify $F_{2,1}$ to a set of clauses, otherwise $F_{2,1}$ is clausified to one clause. Suppose $C$ is a clause obtained by applying 2(c)--2(d) to~$F_{2,1}$. We use $C_1$ to denote the subclause $\lnot P_1(\overline x) \lor \lnot G_1(\overline {x_1}, \overline {y_1}) \lor \ldots \lor \lnot G_r(\overline {x_k}, \overline {y_k})$. First, we prove that $C_1$ is a loose guard of $C$. By the fact $\Var(F_{2,1}) = \overline x \overline y$, $\Var(C) = \overline x \overline y$. By 4 of Definition \ref{def:lgf}, $C_1$ is flat and $\Var(C_1) = \overline x \overline y$. By 4(b) of Definition~\ref{def:lgf} and the fact that the free variables of $\forall \overline y (\lnot G_1(\overline {x_1}, \overline {y_1}) \lor \ldots \lor \lnot G_r(\overline {x_k}, \overline {y_k}) \lor F_{a})$ are $\overline x$, each pair of variables in~$\overline x \overline y$ co-occurs in at least one literal of $C_1$. Hence $C_1$ is a \emph{loose guard} of $C$. Next, we prove that $C$ satisfies the other properties of \textsf{LG} clauses. We distinguish two cases of whether $F_{a}$ contains existential quantifications. Suppose $F_{a}$ contains existential quantifications and suppose the existentially quantified variables in $F_{a}$ are Skolemised to Skolem functions $f_1, \ldots, f_k$. W.l.o.g.~suppose $f_1$ and $f_2$ are two Skolem symbols occurring in~$C$. By prenex normal form transformation, all compound terms in $C$~that are under neither~$f_1$ or $f_2$ have the same sequence of arguments~$\overline x \overline y$, therefore~$C$ is \emph{covering} and \emph{strongly compatible}. As no function symbol occurs in $F_{a}$, no term in $C$ is nested, and~$C$ is \emph{simple}. Then, $C$ is an \textsf{LG} clause. Suppose~$F_{a}$ contains no existentially quantified formulas. Immediately~$C$ is \emph{flat}. Since we previously proved that $C_1$ is a loose guard of $C$, $C$ is an \textsf{LG} clause. That \textbf{Trans} converts guarded formulas to a set of guarded clauses is immediate since this is the case that a loose guard contains only one literal.

ii): Now we consider the clique-guarded formula. Unlike the clausification for loosely guarded formulas, the existentially quantified variables in clique-guards, mentioned in Condition 4(b) in the \textsf{CGF} definition, need to be handled. Suppose $F^\prime$ is a clique-guarded formula, and w.l.o.g.~suppose $F_2^\prime$ is a result of applying 3(a) to $F^\prime$. Further, suppose $F_3^\prime$ is the result of applying 3(b1) to $F_2^\prime$. Using the fresh predicate symbols $P_{3,1}, \ldots, P_{3,n}$, we say $F_3^\prime = F_{3,1}^\prime \land \ldots \land F_{3,n}^\prime \land F_{3,r}^\prime$ where $F_{3,1}^\prime, \ldots, F_{3,n}^\prime$ are respectively the \emph{definitions} of $P_{3,1}, \ldots, P_{3,n}$ and $F_{3,r}^\prime$ is the \emph{replacement} of $F_3^\prime$. Assume that $F_4^\prime$ is the result of applying 3(b2) to $F_{3,r}^\prime$. Using fresh predicate symbols $P_{4,1}, \ldots, P_{4,m}$, we say $F_4^\prime = F_{3,1}^\prime \land \ldots \land F_{3,n}^\prime \land F_{4,1}^\prime \land \ldots \land F_{4,m}^\prime \land F_{4,r}^\prime$ where $F_{4,1}^\prime, \ldots, F_{4,m}^\prime$ are respectively the \emph{definitions} of $P_{4,1}, \ldots, P_{4,m}$ and $F_{4,r}^\prime$ is the \emph{replacement} of $F_4^\prime$. We prove that by \textbf{Trans} every conjunct of $F_4^\prime$ is clausified as a set of \textsf{LG} clauses.

%
%
%
%
%
%

Consider applying 3(b1) to $F_2^\prime$, deriving $F_3^\prime$, viz., $F_{3,1}^\prime \land \ldots \land F_{3,n}^\prime \land F_{3,r}^\prime$. Suppose~$F_{2,1}^\prime$ is a subformula in $F_2^\prime$ that contains universally quantified subformulas occurring in clique-guards. W.l.o.g.~we assume that $F_{3,1}^\prime \land \ldots \land F_{3,n}^\prime \land F_{3,r}^\prime$ is obtained by applying 3(b1) to $F_{2,1}^\prime$ and w.l.o.g.~we present $F_{2,1}^\prime$ as
\begin{align*}
\forall \overline z (\forall \overline {x_1} \lnot G_1 (\overline {x_1}, \overline {y_1}) \lor \ldots \lor \forall \overline {x_k} \lnot G_t(\overline {x_k}, \overline {y_k}) \lor F_{a}^\prime)	
\end{align*}
where $\overline {x_1}, \ldots, \overline {x_k}$ respectively only occur in $\lnot G_1 (\overline {x_1}, \overline {y_1}), \ldots, \lnot G_t(\overline {x_k}, \overline {y_k})$ and $F_{a}^\prime$ is a clique-guarded formula. W.l.o.g.~we use $P_{3,1}, \ldots, P_{3_t}$ such that $t \leq n$ to apply 3(b1) to~$F_{2,1}^\prime$, obtaining
\begin{align*}
& \forall \overline z (\lnot P_{3,1}(\overline {y_1}) \lor \ldots \lor \lnot P_{3,r}(\overline {y_k}) \lor F^{\prime\prime}_{a}) \land	\\
& \forall \overline {y_1} (P_{3,1}(\overline {y_1}) \lor \forall \overline {x_1} \lnot G_1(\overline {x_1}, \overline {y_1})) \land \ldots \land \forall \overline {y_k} (P_{3_t}(\overline {y_k}) \lor \forall \overline {x_n} \lnot G_t(\overline {x_k}, \overline {y_k}))
\end{align*}
where $F^{\prime\prime}_{a}$ is a clique-guarded formula and no universal quantification occurs in its clique-guards (since 3(b1) abstracts universal quantified formulas in clique-guards). The subformula $\forall \overline z (\lnot P_{3,1}(\overline {y_1}) \lor \ldots \lor \lnot P_{3,r}(\overline {y_k}) \lor F^{\prime\prime}_{a})$ is the \emph{replacement} of $F_{2,1}^\prime$. This replacement represents a conjunct in $F_{3,r}^\prime$ and we consider $F_{3,r}^\prime$ in the next paragraph. The subformulas
\begin{align*}
& \forall \overline {y_1} (P_{3,1}(\overline {y_1}) \lor \forall \overline {x_1} \lnot G_1(\overline {x_1}, \overline {y_1})), \ \ldots, \  \forall \overline {y_k} (P_{3_t}(\overline {y_k}) \lor \forall \overline {x_n} \lnot G_t(\overline {x_k}, \overline {y_k})).	
\end{align*}
are the \emph{definitions} of $P_{3,1}, \ldots, P_{3_t}$ such that $t \leq n$, respectively. By 3(c)--3(d) these definitions are clausified to \emph{flat \textsf{LG} clauses consisting of two literals}. Hence, 3(c)--3(d) clausify $F_{3,1}^\prime \land \ldots \land F_{3,n}^\prime$ to a set of \textsf{LG} clauses. 

Next consider $F_{3,r}^\prime$. Since $F_{3,r}^\prime$ contains no quantification in its clique-guard, by the definitions of \textsf{LGF} and \textsf{CGF}, $F_{3,r}^\prime$ is a loosely guarded formula. Suppose applying~3(b2) to $F_{3,r}^\prime$ derives $F_4^\prime = F_{4,1}^\prime \land \ldots \land F_{4,m}^\prime \land F_{4,r}^\prime$. W.l.o.g. we discuss $F_{4,1}^\prime$. The fact that no universal quantification occurs in clique-guards of~$F_{3,r}^\prime$ implies that $F_{4,1}^\prime$ can be presented as
\begin{align*}
\forall \overline x (\lnot P_{4,1}(\overline x) \lor \forall \overline y (\lnot G_1(\overline {x_1}, \overline {y_1}) \lor \ldots \lor \lnot G_l(\overline {x_k}, \overline {y_k}) \lor F_{a}^{\prime\prime\prime})
\end{align*}
where $\forall \overline y (\lnot G_1(\overline {x_1}, \overline {y_1}) \lor \ldots \lor \lnot G_l(\overline {x_k}, \overline {y_k}) \lor F_{a}^{\prime\prime\prime})$ is a loosely guarded formula, $F_{a}^{\prime\prime\prime}$ is a loosely guarded formula where all universal quantified formulas are abstracted (hence it is a formula containing no universal quantification but may contain existential quantifications) and $\overline {x_1}, \ldots, \overline {x_k} \subseteq \overline x$ and $\overline {y_1}, \ldots, \overline {y_k} \subseteq \overline y$. Note that~$F_{a}^{\prime\prime}$ is obtained by abstracting \emph{universally quantified subformulas in clique-guards} in $F_{2,1}^\prime$, and $F_{a}^{\prime\prime\prime}$ is obtained by abstracting \emph{all universally quantified formulas} in $F_{3,r}^\prime$. By the result established in applying 2(c)--2(d) of \textbf{Trans} to $F_{2,1}, \ldots, F_{2,n}$, 3(c)--3(d) of \textbf{Trans} clausify $F_{4,1}^\prime$ to an \textsf{LG} clause or a set of \textsf{LG} clauses if $F_{a}^{\prime\prime\prime}$ contains conjunctions. Finally consider~$F_{4,r}^\prime$. By the result established in applying 2(b) of \textbf{Trans} to $F_{2,r}$, 3(c)--3(d) clausify $F_{4,r}^\prime$ to a set of \emph{flat ground clauses}, viz., \textsf{LG} clauses.
\end{proof}


\begin{thm}
\label{thm:trans}
The \textbf{Trans} process reduces the problem of \textsf{BCQ} answering for \textsf{GF}, \textsf{LGF} and \textsf{CGF} to that of deciding satisfiability of a set of \textsf{LGQ} clauses.   
\end{thm}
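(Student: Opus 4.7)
The plan is to derive Theorem \ref{thm:trans} by composing two separate analyses, one for the negated query and one for the guarded quantification formulas, and then verifying that the combined process preserves unsatisfiability. First I would recast the \textsf{BCQ} answering problem $\Sigma \models q_1 \lor \ldots \lor q_n$ as the unsatisfiability problem of $\{\lnot q_1 \land \ldots \land \lnot q_n\} \cup \Sigma$, which is precisely the form that \textbf{Trans} is designed to process according to its informal description at the start of this section and the simplification following Question \ref{ques:qans}.

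Next I would analyse what Step 1 of \textbf{Trans} does to $\lnot q_1 \land \ldots \land \lnot q_n$. Each $q_i$ has shape $\exists \overline x\, \varphi_i(\overline x)$ with $\varphi_i$ a conjunction of atoms whose arguments are only variables and constants. Its negation $\forall \overline x\, \lnot \varphi_i(\overline x)$ transforms under standard clausification (negation normal form, dropping universal quantifiers, distributing the resulting disjunctions trivially) into a single clause of the form $\lnot A_1 \lor \ldots \lor \lnot A_k$ in which every atom $A_j$ is an atom from $\varphi_i$. Because \textsf{BCQ}s admit only variables and constants as arguments, every $A_j$ is flat, so the resulting clause is a flat negative clause, that is, a query clause by Definition \ref{def:query}.

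For the $\Sigma$ component, Lemma \ref{lem:trans} already does the heavy lifting: Steps 2 and 3 of \textbf{Trans} respectively convert guarded, loosely guarded and clique guarded formulas into sets of \textsf{LG} clauses, with covering and strong compatibility ensured by the careful deployment of the formula renaming rules before prenex transformation and Skolemisation. I would then observe that all constituent transformations of \textbf{Trans} are satisfiability-preserving: negation normal form, prenex normal form, conjunctive normal form and the dropping of universal quantifiers are equivalence-preserving; the formula renaming with positive and negative literals rules are satisfiability-preserving as cited via \cite{NW01}; and Skolemisation is the textbook satisfiability-preserving step. Composing the query-clause output of Step 1 with the \textsf{LG}-clause output guaranteed by Lemma \ref{lem:trans} therefore yields a set of \textsf{LGQ} clauses equisatisfiable with $\{\lnot q_1 \land \ldots \land \lnot q_n\} \cup \Sigma$, which is exactly the reduction the theorem asserts.

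The only real obstacle I anticipate is bookkeeping rather than mathematical depth: one must be explicit that Step 1 performs only the negation-plus-clausification on the queries and does not invoke formula renaming or Skolemisation, since otherwise the resulting clauses could pick up fresh predicate symbols or Skolem terms that would violate flatness and hence fall outside Definition \ref{def:query}. Once this point is pinned down, the theorem is essentially a corollary of Lemma \ref{lem:trans} together with the standard satisfiability preservation of the clausification steps, and I would therefore present it with a short combined argument rather than a fresh proof.
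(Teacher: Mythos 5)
Your proposal is correct and follows essentially the same route as the paper: reduce entailment to unsatisfiability of $\{\lnot q_1, \ldots, \lnot q_n\} \cup \Sigma$, observe that the negated \textsf{BCQ}s clausify to flat negative clauses (query clauses by definition), and invoke Lemma~\ref{lem:trans} for the $\Sigma$ component. The paper's own proof is just a terser version of the same argument (it additionally notes that the dataset $D$ can be absorbed into $\Sigma$ since ground atoms are themselves guarded formulas), so no further comparison is needed.
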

\begin{proof}
Suppose $q_1 \lor \ldots \lor q_n$ is a union of \textsf{BCQ}s, $\Sigma$ is a set of guarded quantification formulas and $D$ is a set of ground atoms. Since ground atoms $D$ belong to \textsf{GF}, \textsf{LGF} and \textsf{CGF}, it suffices to reduce checking entailment of $\Sigma \models q_1 \lor \ldots \lor q_n$ to checking unsatisfiability of $\{\lnot q_1, \ldots, \lnot q_n\} \cup \Sigma$. By the definition of \textsf{BCQ}, $\{\lnot q_1, \ldots, \lnot q_n\}$ is a set of query clauses. By Lemma~\ref{lem:trans}, $\Sigma$ is clausified to a set of \textsf{LG} clauses.
\end{proof}

\section{Top-variable inference system}
\label{sec:tinf}
In this section, we present three systems: a basic \emph{selection-based resolution} system, a \emph{partial selection-based resolution} system and a \emph{top-variable resolution} system.




\subsection*{\textbf{Basic notions in the saturation-based resolution framework}}

In our systems, admissible orderings and selection functions are the two main parameters to refine and guide the inference process. The following notions are standard in the resolution framework of \cite{BG01}. 

 Let $\succ$ be a strict ordering, called a \emph{precedence}, on the symbols in $\mathtt{C}$, $\mathtt{F}$ and $\mathtt{P}$. An ordering~$\succ$ on expressions is \emph{liftable} if $E_1 \succ E_2$ implies $E_1\sigma \succ E_2\sigma$ for any expressions $E_1$, $E_2$ and substitution $\sigma$. An ordering $\succ$ on literals is \emph{admissible}, if the following conditions are satisfied.
 \begin{itemize}
 \item It is \emph{liftable}, \emph{well-founded} and \emph{total} on ground literals,
 \item $\lnot{A} \succ A$ for all ground atoms $A$,
 \item if $B \succ A$, then $B \succ \lnot A$ for all ground atoms $A$ and $B$.
 \end{itemize}
 
Let $\succ$ be an ordering and $C$ a ground clause. A literal $L$ in $C$ is \emph{(strictly) maximal with respect to the ground clause} $C$ if and only if for all~$L^\prime$ in $C$, $L \succeq L^\prime$~($L \succ L^\prime$). A non-ground literal $L$ is \emph{(strictly) maximal with respect to a clause}~$C$ if and only if there exist some ground substitutions $\sigma$ such that $L\sigma$ is~(strictly) maximal with respect to~$C\sigma$, that is, for all $L^\prime$ in $C$, $L\sigma \succeq L^\prime\sigma$ ($L\sigma \succ L^\prime\sigma$). A \emph{selection function} maps a clause $C$ to a multiset of negative literals in $C$. The literals in the range of selection functions are said to be \emph{selected}. An \emph{eligible literal with respect to a clause} is either a (strictly) maximal literal or a selected literal.
 
A ground clause $C$ is \emph{redundant with respect to} a ground clausal set $N$ if there exist $C_1, \ldots, C_n$ in $N$ such that $C_1, \ldots, C_n \models C$ and $C \succ C_i$ for each $i$ with $1 \leq i \leq~n$. Let $N$ be a clausal set. Then a ground clause $C$ is \emph{redundant with respect to}~$N$ if there exists ground instances $C_1\sigma, \ldots, C_n\sigma$ of clauses $C_1, \ldots, C_n$ in $N$ such that $C_1\sigma, \ldots, C_n\sigma \models C$ and $C \succ C_i\sigma$ for each $i$ with $1 \leq i \leq n$. A non-ground clause $C$ is \emph{redundant with respect to} $N$ if every ground instance of $C$ is redundant with respect to~$N$. Let $C$ and $C_1, \ldots, C_n$ be premises and $D$ a conclusion in an inference~\textbf{I}. Then \emph{the inference \textbf{I} is redundant with respect to $N$} if there exist clauses $D_1, \ldots, D_k$ in $N$ that are smaller than~$C$ such that $C_1, \ldots, C_n,D_1, \ldots, D_k \models D$. A \emph{non-ground inference \textbf{I} is redundant with respect to $N$} if every ground instance of \textbf{I} is redundant in the ground instances of the clauses of $N$. A clausal set $N$ is \emph{saturated up to redundancy with respect to an inference system} \textbf{R} if all inferences in~\textbf{R} with non-redundant premises in $N$ are redundant with respect to $N$. 

\subsection*{\textbf{The \textbf{S-Res} system}}
In this section, we fine a \emph{selection-based resolution} system, referred to as the \textbf{S-Res} system. This is a standard instance of the resolution framework in \cite{BG01}.

The \textbf{S-Res} system consists of two types of rules: the \textbf{Deduce} and \textbf{Delete} rules. New conclusions are derived using the \textbf{Deduce} rule.

\begin{mdframed}[linewidth=2pt]
\begin{displaymath}
\prftree[l]{\textbf{Deduce}: \quad }
  {N}
  {N \cup \{C\}}
\end{displaymath}
if $C$ is a conclusion of applying resolution or positive factoring rules to $N$.
\end{mdframed}

To ensure decidability, we minimally need the following \textbf{Delete} rule.

\begin{mdframed}[linewidth=2pt]
\begin{displaymath}
\prftree[l]{\textbf{Delete}: \ }
  {N \cup \{C\}}
  {N}
\end{displaymath}
if $C$ is a tautology, or $N$ contains a variant of $C$.
\end{mdframed}

The \textbf{Factor} rule is the \emph{positive factoring rule}, defined by:
\begin{mdframed}
\begin{displaymath}
\prftree[l]{\textbf{Factor}: \quad}
  {C \lor A_1 \lor A_2}
  {(C \lor A_1)\sigma}
\end{displaymath}
if the following conditions are satisfied.
\begin{enumerate}
\item Nothing is selected in $C \lor A_1 \lor A_2$.
\item $A_1\sigma$ is $\succ$-maximal with respect to $C\sigma$.
\item $\sigma = \mgu(A_1 \doteq A_2)$	
\end{enumerate}
\end{mdframed}

The \textbf{S-Res} rule is the \emph{selection-based resolution rule} defined by
\begin{mdframed}
\begin{align*}
 \prftree[l]{\textbf{S-Res}: }
    {B_1 \lor D_1, \ \ldots, \ B_n \lor D_n}
    {}
    {\lnot A_1 \lor \ldots \lor \lnot A_n \lor D}
    {(D_1 \lor \ldots \lor D_n \lor D)\sigma}
\end{align*}
if the following conditions are satisfied.
\begin{enumerate}
\item[1.] No literal is selected in $D_1, \ldots, D_n, D$ and $B_1\sigma, \ldots, B_n\sigma$ are strictly $\succ$-maximal with respect to $D_1\sigma, \ldots, D_n\sigma$, respectively. 
\item[2a.] If $n = 1$, then i) either $\lnot A_1$ is selected, or nothing is selected in $\lnot A_1 \lor D$ and $\lnot A_1\sigma$ is $\succ$-maximal with respect to $D\sigma$, and ii) $\sigma = \mgu(A_1 \doteq B_1)$ or
\item[2b.] if $n > 1$, then $\lnot A_1, \ldots, \lnot A_n$ are selected and $\sigma = \mgu(A_1 \doteq B_1, \ldots, A_n \doteq B_n)$.
\item[3.] All premises are variable disjoint.
\end{enumerate} 
\end{mdframed}

In the \textbf{S-Res} rule, the right-most premise is the \emph{main premise} and the others are the \emph{side premises}. Unlike the standard \emph{hyperresolution rule} \cite{R65a} (like the \emph{hyperresolution rule} in \cite{W19}), the \textbf{S-Res} rule does not require the side premises to be positive and all negative literals in the main premise to be selected, e.g., $D$ in the main premise is not nessarily positive. Standard hyperresolution is only applied when the selection function selects all negative literals in the premises of the \textbf{S-Res} rule. The \emph{binary resolution rule}~\cite{BG01} is an instance of the \textbf{S-Res} rule whenever it only has one selected literal in the main premise. 

The \textbf{S-Res} system is defined in the spirit of the resolution framework of~\cite{BG01}, therefore, more sophisticated simplification and redundant elimination techniques, such as forward and backward subsumption elimination and condensation in~\cite[section 4.3]{BG01}, can be freely added to the system.

\begin{thm}
\label{thm:sres}
The \textbf{S-Res} system is sound and refutationally complete for general first-order clausal logic.
\end{thm}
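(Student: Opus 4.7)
The plan is to establish soundness and refutational completeness by recognising the \textbf{S-Res} system as a standard instance of the ordered resolution with selection framework of Bachmair and Ganzinger~\cite{BG01}, with the $n > 1$ case of the \textbf{S-Res} rule treated as a hyperresolution-style simultaneous inference. For soundness, I would verify each rule individually. The \textbf{Delete} rule clearly preserves satisfiability since tautologies are vacuously entailed and variants are logically equivalent. The \textbf{Factor} rule is standard positive factoring. For the \textbf{S-Res} rule I would argue semantically: given any model $M$ of the premises and any ground instantiation $\theta$ of the conclusion's variables, the condition $\sigma = \mgu(A_1 \doteq B_1, \ldots, A_n \doteq B_n)$ forces $A_i\sigma\theta = B_i\sigma\theta$ for every $i$. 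If every $B_i\sigma\theta$ holds under $M$, the main premise forces $D\sigma\theta$; otherwise some $B_i\sigma\theta$ fails and the corresponding side premise forces $D_i\sigma\theta$, so the conclusion is satisfied either way.

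For refutational completeness, the binary case ($n = 1$) of the \textbf{S-Res} rule is exactly binary ordered resolution with selection, for which completeness is the principal theorem of~\cite{BG01}, obtained via the standard Herbrand candidate-model construction over a well-founded clause ordering lifted from the admissible literal ordering $\succ$. For the many-premise case ($n > 1$), I would reduce to the binary case by showing that every ground instance of an $n$-premise \textbf{S-Res} inference is simulated by a sequence of $n$ binary ordered-resolution-with-selection inferences on the same ground premises: first resolve $\lnot A_1\sigma$ against $B_1\sigma$, then $\lnot A_2\sigma$ against $B_2\sigma$, and so on. Each step meets the binary rule's side conditions since the side premises are variable disjoint, nothing is selected in any $D_i$, each $B_i\sigma$ is strictly $\succ$-maximal in its side premise, and the remaining $\lnot A_j\sigma$ of the evolving intermediate main premise are exactly the literals still selected.

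The main obstacle is making this stepwise simulation legitimate within BG01's framework, since their selection function is fixed globally and need not select the remaining $\lnot A_j\sigma$ on the intermediate ground clauses produced during the simulation. I would circumvent this by working at the ground level inside the model-construction proof itself: treat each multi-premise ground \textbf{S-Res} instance as the atomic object of interest and adapt the productive/non-productive classification of ground clauses in the candidate-interpretation construction, so that productive side-premise instances on $B_i\sigma$ combine coherently with a non-productive main premise whose selected $\lnot A_j\sigma$ literals force a counter-example unless the hyperresolvent already lies in the saturated set. The argument then parallels~\cite{BG01} and shows that any set saturated up to redundancy under \textbf{S-Res} and not containing the empty clause admits a Herbrand model, yielding refutational completeness. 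Finally, the lifting lemma of~\cite{BG01} transfers these ground results to the non-ground \textbf{S-Res} rule as formulated.
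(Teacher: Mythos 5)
Your proposal is correct and takes essentially the same route as the paper, whose entire proof is a one-line appeal to the fact that the \textbf{S-Res} rules follow the principles of the resolution framework of Bachmair and Ganzinger. Your elaboration — in particular, recognising that the $n>1$ case cannot be naively simulated by binary steps under a fixed selection function and must instead be absorbed directly into the candidate-model construction — is a faithful and accurate unpacking of what that citation implicitly relies on.
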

\begin{proof}
By the fact that the \textbf{S-Res} system strictly follows the principles of the resolution framework in \cite{BG01}.
\end{proof}

\subsection*{\textbf{The \textbf{P-Res} system}}
Next, we describe a new \emph{partial selection-based resolution inference system}, denoted as \textbf{P-Res}. This system is built on the top of the \textbf{S-Res} system, but the \textbf{S-Res} rule is replaced by the following \emph{partial selection-based resolution rule}.
\begin{mdframed}
\begin{align*}
 \prftree[l]{\textbf{P-Res}: }
    {B_1 \lor D_1, \ \ldots, \ B_m \lor D_m, \ \ldots, \ B_n \lor D_n}
    {\lnot A_1 \lor \ldots \lor \lnot A_{m} \lor \ldots \lor \lnot A_n \lor D}
    {(D_1 \lor \ldots \lor D_m \lor \lnot A_{m+1} \lor \ldots \lor \lnot A_{n} \lor D)\sigma}
\end{align*}
if the following conditions are satisfied.
\begin{enumerate}
\item[1.] No literal is selected in $D_1, \ldots, D_n,D$ and $B_1\sigma, \ldots, B_n\sigma$ are strictly $\succ$-maximal with respect to $D_1\sigma, \ldots, D_n\sigma$, respectively. 
\item[2a.] If $n = 1$, then i) either $\lnot A_1$ is selected, or nothing is selected in $\lnot A_1 \lor D$ and $\lnot A_1\sigma$ is $\succ$-maximal with respect to $D\sigma$, and ii) $\sigma = \mgu(A_1 \doteq B_1)$ or
\item[2b.] there must exist an mgu $\sigma^\prime$ such that $\sigma^\prime = \mgu(A_1 \doteq B_1, \ldots, A_n \doteq B_n)$, then the mgu used to perform the inference is $\sigma = \mgu(A_1 \doteq B_1, \ldots, A_m \doteq B_m)$ where $1 \leq m \leq n$.
\item[3.] All premises are variable disjoint.
\end{enumerate} 
\end{mdframed} 

The \textbf{P-Res} rule is \emph{not} a selection-based resolution rule where a sub-multiset of the negative literals in the main premise is selected. The literals $\lnot A_1, \ldots, \lnot A_m$ are resolved \emph{not} because they are selected, but because the application of the \textbf{S-Res} rule makes the inference on a sub-multiset of the \textbf{S-Res} side premises and the \textbf{S-Res} main premise possible. Condition 2b.~stipulates the existence of an mgu between $A_1, \ldots, A_n$ and $B_1, \ldots, B_n$ as a pre-requisite for the application of the \textbf{P-Res} rule. This means that whenever the \textbf{S-Res} rule applies to
\begin{align*}
C_1 = B_1 \lor D_1, \ \ldots,\  C_n = B_n \lor D_n \ \text{and}  \ C = \lnot A_1 \lor \ldots \lor \lnot A_m \lor \ldots \lor \lnot A_n \lor D
\end{align*}
with $\lnot A_1, \ldots, \lnot A_n$ selected, one can apply the \textbf{P-Res} rule with $m$ of the side premises where $1 \leq m \leq n$. We say that $\lnot A_1, \ldots, \lnot A_m$ are the \emph{\textbf{P-Res} eligible literals} with respect to an \textbf{S-Res} inference.

Unlike the \textbf{S-Res} rule, Condition 2b.~in the \textbf{P-Res} rule includes the case of~$n=1$, meaning that the pre-requisites for Conditions~2a.~and 2b.~are not exclusive. Though when~$n = 1$, using either Condition~2a.~or 2b.~to the main premise derives the same conclusion, the mechanism is different: Condition~2a.~considers the situation when the \textbf{P-Res} rule is reduced to a \emph{binary \textbf{S-Res} rule}, but Condition~2b.~considers the partial inferences when the main premise contains only one \textbf{P-Res} eligible literal. Both mechanisms are useful in practice: for example, Condition~2a.~is used when a main premise contains only one negative literal, but when a main premise contains multiple negative literals, Condition~2b.~allows us to decide that among all these negative literals, the one we want to resolve, to derive a partial resolvent. This partial resolvent can have properties that the resolvent, when we resolve all the negative literals, does not have.


Although the \textbf{S-Res} rule has the advantage of \emph{avoiding intermediate resolvents} that are derived by binary resolution rules, the \textbf{S-Res} resolvents can be difficult to tame as the rule is performed on a macro level. The \textbf{P-Res} rule, on the other hand, amends the \textbf{S-Res} rule by allowing one to resolve \emph{any} non-empty and non-strict sub-multiset of the \textbf{S-Res} side premises with the \textbf{S-Res} main premise. This means that the \textbf{P-Res} rule provides new flexibility to capture the \textbf{S-Res} resolvents and thus generalises the \textbf{S-Res} rule. This flexibility is important to tame (and decide) the clausal class we consider.

Next, we show soundness and refutational completeness of the \textbf{P-Res} system. A \textbf{P-Res} inference with the main premise $C$ and a sub-multiset of the side premises $C_1, \ldots, C_n$ makes the \textbf{S-Res} inference on $C$ and $C_1, \ldots, C_n$ redundant. We first consider the ground case.
\begin{lem}
\label{lem:pres_gnd}
Suppose $N$ is a clausal set and $C_1, \ldots, C_n, C$ are ground clauses occurring in~$N$. Suppose~\textbf{I} is an \textbf{S-Res} inference with $C_1, \ldots, C_n$ the side premises and $C$ the main premise. Further suppose $R_p$ is the \textbf{P-Res} resolvent of applying the \textbf{P-Res} rule to a sub-multiset of $C_1, \ldots, C_n$ and~$C$. Then, \textbf{I} is redundant with respect to $N \cup \{R_p\}$.	
\end{lem}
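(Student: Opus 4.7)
The plan is to verify redundancy by exhibiting the \textbf{P-Res} resolvent itself as the single auxiliary clause demanded by the definition. Write the \textbf{S-Res} conclusion as $R_s = D_1 \lor \ldots \lor D_n \lor D$; since everything is ground the unifier of the \textbf{S-Res} step is trivial and $B_i = A_i$ for every $i$. After relabelling, I assume the \textbf{P-Res} step uses the side premises $C_1, \ldots, C_m$ with $1 \leq m \leq n$, so that $R_p = D_1 \lor \ldots \lor D_m \lor \lnot A_{m+1} \lor \ldots \lor \lnot A_n \lor D$. Two things then need to be shown: (i) $\{C_1, \ldots, C_n, R_p\} \models R_s$, and (ii) $R_p \prec C$.

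Part (i) is a direct iterated ground binary resolution argument. Starting from $R_p$, resolving with $C_{m+1} = A_{m+1} \lor D_{m+1}$ on the complementary pair $\lnot A_{m+1}$ and $A_{m+1}$ replaces $\lnot A_{m+1}$ in $R_p$ with $D_{m+1}$. Iterating with $C_{m+2}, \ldots, C_n$ successively eliminates each remaining $\lnot A_i$ in favour of $D_i$ and yields $R_s$, giving the required entailment. Part (ii) is the genuine ordering step, and the place where I expect the most care to be needed. It combines two ingredients: the \textbf{S-Res} side-premise condition that $B_i$ is strictly $\succ$-maximal in $C_i = B_i \lor D_i$, so that in the ground case every literal of $D_i$ is $\succ$-smaller than $B_i = A_i$, together with the admissibility clause $\lnot A_i \succ A_i$, which forces every literal of $D_i$ also to be $\succ$-smaller than $\lnot A_i$. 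The clause $R_p$ is obtained from $C$ by deleting $\lnot A_1, \ldots, \lnot A_m$ and inserting the multiset of literals of $D_1 \lor \ldots \lor D_m$, each element of which is strictly smaller than a removed literal, so the standard multiset extension of $\succ$ to clauses yields $R_p \prec C$.

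Putting (i) and (ii) together, $R_p \in N \cup \{R_p\}$ is a single auxiliary clause strictly smaller than the main premise $C$ satisfying $C_1, \ldots, C_n, R_p \models R_s$, which matches the redundancy criterion for inferences and so shows that \textbf{I} is redundant with respect to $N \cup \{R_p\}$. The main obstacle is cleanly handling the ordering in (ii): one must pass from a literal-level comparison (established through admissibility) to the multiset-level comparison between clauses, but no new idea beyond the standard Bachmair--Ganzinger redundancy machinery is required.
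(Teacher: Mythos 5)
Your proposal is correct and matches the paper's proof in structure: both exhibit $R_p$ as the single auxiliary clause, establish $C \succ R_p$ from the strict maximality of $A_i$ in $C_i$ together with admissibility, and verify $C_1,\ldots,C_n,R_p \models R$. The only cosmetic difference is that the paper proves the entailment by a direct semantic argument (assuming a countermodel and deriving a contradiction) whereas you appeal to soundness of iterated ground binary resolution; your treatment of the ordering step, passing through $\lnot A_i \succ A_i$ to the multiset comparison, is in fact slightly more explicit than the paper's.
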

\begin{proof}
Suppose $R$ is the resolvent of \textbf{I} and $\succ$ is the applied admissible ordering. By the notion of redundant inferences for ground clauses, we prove that $C \succ R_p$ and $C_1, \ldots, C_n, R_p \models R$. W.l.o.g.~suppose 
\begin{align*}
C_1 = A_1 \lor D_1, \ldots,  C_n = A_n \lor D_n \ \text{and} \ C = \lnot A_1 \lor \ldots \lor \lnot A_m \lor \ldots \lor \lnot A_n \lor D	
\end{align*}
where $1 \leq m \leq n$. Further suppose a \textbf{P-Res} inference is performed on $C$ and $C_1, \ldots, C_m$. By the definitions of the  \textbf{S-Res} and \textbf{P-Res} rules,
\begin{align*}
R = D_1 \lor \ldots \lor D_n \lor D	\ \text{and} \ 
R_p = \lnot A_{m+1} \lor \ldots \lor \lnot A_{n} \lor D_1 \lor \ldots \lor D_m \lor D.
\end{align*}
By Condition 1.~of the \textbf{S-Res} and \textbf{P-Res} rules, $A_1 \succ D_1, \ldots, A_m \succ D_m$, hence $C \succ R_p$. Next, we prove $C_1, \ldots, C_n, R_p \models R$ by contradiction. Let $I$ be an arbitrary interpretation satisfying that
\begin{align}
& I \models A_1 \lor D_1, \ldots, A_n \lor D_n, \lnot A_{m+1} \lor \ldots \lor \lnot A_{n} \lor D_1 \lor \ldots \lor D_m \lor D, \\
& \text{but} \ I \not \models D_1 \lor \ldots \lor D_n \lor D.
\end{align}
(2) implies $I \not \models D_1, \ldots, I \not \models D_n$, therefore, considering (1) we get that
\begin{align}
I \models A_1, \ldots, A_n, \lnot A_{m+1} \lor \ldots \lor \lnot A_{n} \lor D_1 \lor \ldots \lor D_m \lor D.
\end{align}
(3) implies that $I \models D_1 \lor \ldots \lor D_m \lor D$. As $D_1 \lor \ldots \lor D_m \lor D$ is a subclause of $D_1 \lor \ldots \lor D_n \lor D$, $I \models D_1 \lor \ldots \lor D_n \lor D$, which refutes (2). Then, $C_1, \ldots, C_n, R_p \models R$. By the facts that $C \succ R_p$ and $C_1, \ldots, C_n, R_p \models R$, \textbf{I} is redundant with respect to $N \cup\{R_p\}$.
\end{proof}

Lemma \ref{lem:pres_gnd} shows that given an \textbf{S-Res} inference \textbf{I} on ground clauses of a clausal set~$N$, computing a \textbf{P-Res} resolvent $R_p$ with respect to \textbf{I} makes \textbf{I} redundant with respect to $N \cup \{R_p\}$. Similar justifications can be found in \cite[\normalfont{pages 53--54}]{BG01} and \cite[\normalfont{page 28}]{BG97} described as `partial replacement strategy'. 

Next, we generalise Lemma \ref{lem:pres_gnd} to non-ground inferences.

\begin{lem}
\label{lem:pres_gen}
Suppose $N$ is a clausal set and $C_1, \ldots, C_n, C$ are general clauses occurring in~$N$. Suppose~\textbf{I} is an \textbf{S-Res} inference where $C_1, \ldots, C_n$ are the side premises and $C$ is the main premise. Further suppose $R_p$ is the \textbf{P-Res} resolvent of applying the \textbf{P-Res} rule to a sub-multiset of $C_1, \ldots, C_n$ and~$C$. Then, every ground instance of \textbf{I} is redundant with respect to the ground instances of the clauses in $N \cup \{R_p\}$.	
\end{lem}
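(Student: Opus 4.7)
The plan is to reduce to the ground case established in Lemma \ref{lem:pres_gnd} by a standard lifting argument. Fix notation as in the inference rules: write $C_i = B_i \lor D_i$ for $i = 1, \ldots, n$ and $C = \lnot A_1 \lor \ldots \lor \lnot A_n \lor D$; let $\sigma'$ be the simultaneous mgu used by the \textbf{S-Res} inference $\textbf{I}$, and without loss of generality let the \textbf{P-Res} inference use the side premises $C_1, \ldots, C_m$ with $1 \leq m \leq n$, so that its mgu is $\sigma = \mgu(A_1 \doteq B_1, \ldots, A_m \doteq B_m)$ and its conclusion is $R_p = (D_1 \lor \ldots \lor D_m \lor \lnot A_{m+1} \lor \ldots \lor \lnot A_n \lor D)\sigma$.

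The first key step is to relate the two mgus. Since $\sigma'$ unifies each pair $(A_i, B_i)$ for $i \leq m$ and $\sigma$ is a most general unifier of this subset of pairs, there exists a substitution $\eta$ such that $\sigma' = \sigma\eta$ on the variables of $C, C_1, \ldots, C_m$. Consequently, for any grounding substitution $\tau$ of the variables remaining in $R\sigma'$, the expression $R_p\eta\tau$ is a well-defined ground instance of $R_p$, and applying the ground version of the \textbf{P-Res} rule to $C_1\sigma'\tau, \ldots, C_m\sigma'\tau$ together with $C\sigma'\tau$ yields exactly $R_p\eta\tau$.

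Next, pick an arbitrary ground instance of $\textbf{I}$. By assumption of variable-disjointness (Condition 3 of the rules), it is determined by a single grounding $\tau$, giving ground premises $C_1\sigma'\tau, \ldots, C_n\sigma'\tau$ and $C\sigma'\tau$ with conclusion $R\sigma'\tau$. This ground inference is a ground \textbf{S-Res} step in the sense of Lemma \ref{lem:pres_gnd}, and the ground \textbf{P-Res} resolvent on the subset $C_1\sigma'\tau, \ldots, C_m\sigma'\tau$ together with $C\sigma'\tau$ is $R_p\eta\tau$ by the previous step. Lemma \ref{lem:pres_gnd} therefore implies that this ground instance of $\textbf{I}$ is redundant with respect to the set $\{C_1\sigma'\tau, \ldots, C_n\sigma'\tau, C\sigma'\tau, R_p\eta\tau\}$, each element of which is a ground instance of a clause in $N \cup \{R_p\}$. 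Since $\tau$ was arbitrary, every ground instance of $\textbf{I}$ is redundant with respect to the ground instances of the clauses in $N \cup \{R_p\}$, which is the claim.

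The main obstacle to watch for is the mgu bookkeeping: one must verify cleanly that when the \textbf{P-Res} rule is applied to only the subset $C_1, \ldots, C_m$, the associated mgu $\sigma$ factors $\sigma'$ via some $\eta$, and that grounding commutes with the relevant unification steps so that $R_p\eta\tau$ genuinely arises as the ground \textbf{P-Res} resolvent. Once this is in place, the ordering condition needed for redundancy (namely that $C\sigma'\tau$ strictly exceeds $R_p\eta\tau$) is inherited directly from the ground case through liftability of $\succ$, and the entailment $C_1\sigma'\tau, \ldots, C_n\sigma'\tau, R_p\eta\tau \models R\sigma'\tau$ is just the ground semantic argument used in the proof of Lemma \ref{lem:pres_gnd}.
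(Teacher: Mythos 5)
Your proof is correct and follows essentially the same route as the paper: pick an arbitrary ground instance of \textbf{I}, observe that the corresponding ground \textbf{P-Res} step on the chosen subset of side premises yields a ground instance of $R_p$, and invoke Lemma \ref{lem:pres_gnd}. Your explicit factorisation $\sigma' = \sigma\eta$ and the identification of the ground \textbf{P-Res} resolvent as $R_p\eta\tau$ is in fact more careful than the paper's proof, which writes the ground resolvent directly as $R_p\sigma$ and leaves this mgu bookkeeping implicit.
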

\begin{proof}

Suppose $R$ is the \textbf{S-Res} resolvent in \textbf{I}. W.l.o.g.~suppose $C_1, \ldots, C_m$ are side premises of applying the \textbf{P-Res} rule to $C$ and $C_1, \ldots, C_m$ and $R_p$ is the resolvent, where $1 \leq m \leq n$. Suppose $\sigma$ is a ground substitution satisfying that applying the \textbf{S-Res} rule to $C_1\sigma, \ldots, C_n\sigma$ as the side premises and $C\sigma$ as the main premise derives~$R\sigma$. We use $I_{gnd}$ to denote this ground \textbf{S-Res} inference. Since the \textbf{P-Res} rule only requires a sub-multiset of the \textbf{S-Res} side premises, the \textbf{P-Res} rule is applicable to $C_1\sigma, \ldots, C_m\sigma$ as the side premises and $C\sigma$ as the main premise, deriving $R_p\sigma$. By Lemma \ref{lem:pres_gnd}, $I_{gnd}$ is redundant with respect to the ground instances $C_1\sigma, \ldots, C_n\sigma, R_p\sigma$ of the clauses in $N \cup \{R_p\}$. Hence, every ground \textbf{S-Res} inference is redundant with respect to the ground instances of the clauses in $N \cup \{R_p\}$.
\end{proof}

The main result of this section is then as follows.
\begin{thm}
\label{thm:pres}
The \textbf{P-Res} system is sound and refutationally complete for general first-order clausal logic.
\end{thm}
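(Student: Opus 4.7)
The plan is to handle the two halves separately: soundness by a direct semantic check of the \textbf{P-Res} rule, and refutational completeness by reducing to Theorem~\ref{thm:sres} via Lemma~\ref{lem:pres_gen}. Since the \textbf{Factor}, \textbf{Deduce} and \textbf{Delete} rules are unchanged from the \textbf{S-Res} system, only the \textbf{P-Res} rule needs fresh justification for soundness.

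For soundness, fix any premises $B_1 \lor D_1, \ldots, B_m \lor D_m$ and $\lnot A_1 \lor \ldots \lor \lnot A_m \lor \ldots \lor \lnot A_n \lor D$ of a \textbf{P-Res} inference with $\sigma = \mgu(A_1 \doteq B_1, \ldots, A_m \doteq B_m)$, and let $I$ be an arbitrary model of all these premises (under any ground substitution $\tau$ extending $\sigma$). If $I \models (\lnot A_i)\sigma\tau$ for some $i \leq m$, then the resolvent literal $(\lnot A_{m+1} \lor \ldots \lor \lnot A_n \lor D)\sigma\tau$ or some $D_j\sigma\tau$ is already seen to be true via the main premise; otherwise $I \models A_i\sigma\tau = B_i\sigma\tau$ for every $i \leq m$, and then each side premise forces $I \models D_i\sigma\tau$, so the conclusion again holds. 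Hence every \textbf{P-Res} conclusion is a logical consequence of its premises.

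For refutational completeness, I would proceed by simulation. Let $N$ be unsatisfiable and let $N_\infty$ be the limit of a fair \textbf{P-Res} derivation from $N$ (saturated up to redundancy with respect to \textbf{P-Res}). The goal is to show that $N_\infty$ is in fact saturated up to redundancy with respect to the \textbf{S-Res} system as well; then Theorem~\ref{thm:sres} immediately yields $\bot \in N_\infty$ since $N$ is unsatisfiable. The only rule to worry about is \textbf{S-Res}, so consider an arbitrary \textbf{S-Res} inference $\mathbf{I}$ whose premises $C_1, \ldots, C_n, C$ lie in $N_\infty$. By Condition~2b of the \textbf{P-Res} rule, the hypothesis that the simultaneous mgu $\sigma^\prime$ exists is exactly the pre-requisite for \textbf{P-Res}, so $\mathbf{I}$ has a companion \textbf{P-Res} inference $\mathbf{I}_p$ (pick, say, $m = n$, which coincides with the \textbf{S-Res} conclusion, or any $m$ yielding a useful conclusion). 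By fairness this \textbf{P-Res} inference is eventually performed or made redundant, so its conclusion $R_p$ is present in (or redundant with respect to) $N_\infty$. Lemma~\ref{lem:pres_gen} then tells us that every ground instance of $\mathbf{I}$ is redundant with respect to the ground instances of $N_\infty$, so $\mathbf{I}$ itself is redundant in $N_\infty$. Hence $N_\infty$ is \textbf{S-Res}-saturated up to redundancy, and Theorem~\ref{thm:sres} delivers $\bot$.

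The main obstacle I expect is the fairness bookkeeping: I need to be careful that taking $m = n$ in the \textbf{P-Res} rule yields a conclusion that literally coincides with the \textbf{S-Res} resolvent (so no additional step is required), and that along a fair derivation each such companion inference is either scheduled or already made redundant by earlier deletions. A secondary delicate point is ensuring that the admissible-ordering and selection-function conditions transfer: since \textbf{P-Res} uses the same ordering and selection as \textbf{S-Res} and the \textbf{P-Res} side premises inherit the same maximality of $B_i\sigma$ (Condition~1), this transfer is smooth, but it should be verified explicitly. Apart from these points, everything else is a routine application of Lemma~\ref{lem:pres_gen} and the completeness of the \textbf{S-Res} system.
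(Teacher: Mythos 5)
Your proposal follows essentially the same route as the paper: soundness is routine, and refutational completeness is obtained by showing that Lemma~\ref{lem:pres_gen} (via Lemma~\ref{lem:pres_gnd}) renders every \textbf{S-Res} inference redundant once the corresponding \textbf{P-Res} conclusion is present or redundant, so that a \textbf{P-Res}-saturated set is \textbf{S-Res}-saturated and Theorem~\ref{thm:sres} applies; the paper's own proof is just a terser citation of exactly these ingredients. One small slip in your soundness paragraph: the two cases have their justifications swapped --- when $I \models \lnot A_i\sigma\tau$ for some $i \leq m$ it is the \emph{side} premise $B_i \lor D_i$ that forces $D_i\sigma\tau$ to hold, whereas when all $A_i\sigma\tau$ ($i \leq m$) are true it is the \emph{main} premise that forces one of $\lnot A_{m+1}\sigma\tau, \ldots, \lnot A_n\sigma\tau, D\sigma\tau$; the argument is correct once these roles are restored.
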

\begin{proof}
By Lemma \ref{lem:pres_gnd} and Theorem \ref{thm:sres}, the \textbf{P-Res} system is sound and complete for ground clauses. By the fact that the \textbf{Factor} rule is the positive factoring rule in the resolution framework of \cite{BG01} and Lemma \ref{lem:pres_gen}, the \textbf{P-Res} system is sound and refutational complete for general first-order clauses.
\end{proof}

\subsection*{\textbf{The \textbf{T-Res} system}}
Finally, we present the \emph{top-variable resolution inference system}, referred to as the \textbf{T-Res} system. As a special case of the \textbf{P-Res} system, the \textbf{T-Res} system uses the customised \emph{admissible orderings}, \emph{selection functions} and a specific version of the \textbf{P-Res} rule, i.e., the \emph{top-variable resolution rule \textbf{T-Res}}, particularly devised for deciding satisfiability of the \textsf{LGQ} clausal class.

First, we give the \emph{top-variable resolution rule \textbf{T-Res}}. Suppose in an \textbf{S-Res} inference with $C_1 = B_1 \lor D_1, \ldots, C_n = B_n \lor D_n$ the side premises and $C = \lnot A_1 \lor \ldots \lor \lnot A_n \lor~D$ the main premise with $\lnot A_1, \ldots, \lnot A_n$ selected. The \emph{top-variable technique} is applied to this inference by the following steps.
\begin{enumerate}
\item Without producing or adding the resolvent, compute an mgu $\sigma^\prime$ for $C_1, \ldots, C_n$ and~$C$ such that $\sigma^\prime = \mgu(A_1 \doteq B_1, \ldots, A_n \doteq B_n)$.
\item Compute the \emph{variable ordering} $>_v$ and $=_v$ over the variables of $\lnot A_1 \lor \ldots \lor \lnot A_n$. By definition, $x >_v y$ and $x =_v y$ with respect to $\sigma^\prime$, if $\Dep(x\sigma^\prime) > \Dep(y\sigma^\prime)$ and $\Dep(x\sigma^\prime) = \Dep(y\sigma^\prime)$, respectively.
\item Based on $>_v$ and $=_v$, the maximal variables in $\lnot A_1 \lor \ldots \lor \lnot A_n$ are the \emph{top variables}. The sub-multiset $\lnot A_1, \ldots, \lnot A_m$ of $\lnot A_1, \ldots, \lnot A_n$ ($1 \leq m \leq n$) are the \emph{top-variable literals} if each literal in $\lnot A_1, \ldots, \lnot A_m$ contains at least one top variable, and $\lnot A_1 \lor \ldots \lor \lnot A_m$ is the \emph{top-variable subclause} of $C$.
\end{enumerate}
The \emph{top-variable resolution rule} is defined by
\begin{mdframed}
\begin{align*}
 \prftree[l]{\textbf{T-Res}: }
    {B_1 \lor D_1, \ \ldots, \ B_m \lor D_m, \ \ldots, \ B_n \lor D_n}
    {\lnot A_1 \lor \ldots \lor \lnot A_{m} \lor \ldots \lor \lnot A_n \lor D}
    {(D_1 \lor \ldots \lor D_m \lor \lnot A_{m+1} \lor \ldots \lor \lnot A_{n} \lor D)\sigma}
\end{align*}
if the following conditions are satisfied.
\begin{enumerate}
\item[1.] No literal is selected in $D_1, \ldots, D_n,D$ and $B_1\sigma, \ldots, B_n\sigma$ are strictly $\succ$-maximal with respect to $D_1\sigma, \ldots, D_n\sigma$, respectively. 
\item[2a.] If $n = 1$, then i) either $\lnot A_1$ is selected, or nothing is selected in $\lnot A_1 \lor D$ and $\lnot A_1\sigma$ is $\succ$-maximal with respect to $D\sigma$, and ii) $\sigma = \mgu(A_1 \doteq B_1)$ or
\item[2b.] there must exist an mgu $\sigma^\prime$ such that $\sigma^\prime = \mgu(A_1 \doteq B_1, \ldots, A_n \doteq B_n)$, then $\lnot A_1, \ldots, \lnot A_m$ are the \emph{top-variable literals} of $\lnot A_1 \lor \ldots \lor \lnot A_{m} \lor \ldots \lor \lnot A_n \lor D$ and $\sigma = \mgu(A_1 \doteq B_1, \ldots, A_m \doteq B_m)$ where $1 \leq m \leq n$.
\item[3.] All premises are variable disjoint.
\end{enumerate} 
\end{mdframed}
\emph{Top variables}, \emph{top-variable literals} and \emph{top-variable subclauses} are only in effect with respect to an \textbf{S-Res} inference, since the \textbf{T-Res} rule is a very specific application of the \textbf{P-Res} rule, built on the top of the \textbf{S-Res} rule. Suppose \textbf{I} is an \textbf{S-Res} inference with $C_1, \ldots, C_n$ the side premises and $C$ the main premise. As shown in the previous section, the \textbf{P-Res} rule allows one to perform an inference on $C$ and \emph{any} sub-multiset of $C_1, \ldots, C_n$. Suppose \textbf{I}$^\prime$ is a \textbf{P-Res} inference based on \textbf{I}. Then, in the computation of~\textbf{I}$^\prime$, the \textbf{T-Res} rule further specifies the sub-multiset $N$ of $C_1, \ldots, C_n$ by the top-variable technique. Let~\textbf{I}$^{\prime\prime}$ be a \textbf{T-Res} inference based on \textbf{I}$^{\prime}$ in which $C$ is the main premise and the side premises are clauses in $N$. To ensure that the clauses in $N$ are the \textbf{P-Res} side premises in~\textbf{I}$^{\prime\prime}$, we use the complementary literals of the eligible literals of $N$ to restrict the inference and name these literals the \emph{top-variable literals}. Therefore, although the \textbf{T-Res} rule identifies the top-variable literals as per \textbf{S-Res} inference, the top-variable literals are not determined by a dynamic selection function, but by the presence of \textbf{S-Res} side premises. This top-variable technique provides the basis for our decision procedures discussed later. Since a \textbf{T-Res} inference is based on the existence of an \textbf{S-Res} inference, the mgu for the \textbf{T-Res} inference is ensured to exist, hence the top-variable literals in \textbf{T-Res} inferences can always be identified. To distinguish the mgus of the \textbf{T-Res} and the \textbf{S-Res} rules, we use $\sigma$ and $\sigma^\prime$ to denote them, respectively.

Now we provide the customised \emph{admissible orderings} and \emph{selection functions}. As admissible orderings, we choose to use any \emph{lexicographic path ordering} $\succ_{lpo}$ with a precedence in which function symbols are larger than constants, which are larger than predicate symbols. This is a requirement also for any admissible ordering with the same precedence restriction. For selection functions, we require the selection function $\SelectNC$ to select one of the negative compound-term literals in \textsf{LGQ} clauses containing negatively occurring compound-term literals.

Algorithm \ref{algorithm:refine} details how the admissible ordering $\succ_{lpo}$, the selection function $\SelectNC$ and the \textbf{T-Res} rule are applied to \textsf{LGQ} clauses. The algorithm contains the following functions:
\begin{itemize}
\item $\Max(C)$ returns a (\emph{strictly}) \emph{$\succ_{lpo}$-maximal literal} with respect to the clause $C$.
\item $\SelectNC(C)$ returns one of the \emph{negative compound-term literals} in the clause $C$.
\item $\TRes(N, C)$ performs a \textbf{T-Res} inference with clauses in $N$ the side premises and~$C$ the main premise, returning
\begin{enumerate}
\item either \emph{all negative literals} of the clause $C$, or 
\item the \emph{top-variable literals} of the clause $C$ (with respect to this \textbf{T-Res} inference).
\end{enumerate}
\end{itemize}

\begin{algorithm}[t]
\normalsize
  \DontPrintSemicolon
 \KwIn{An \textsf{LGQ} clausal set $N$ and a clause $C$ in $N$}
 \KwOut{The eligible or the top-variable literals in $C$}
 \If{$C$ is ground}{
  \Return $\Max(C)$
 }
 \ElseIf{$C$ has negatively occurring compound-term literals}{
  \Return $\SelectNC(C)$
 }
  \ElseIf{$C$ has positively occurring compound-term literals}{
  \Return $\Max(C)$
 }
 \lElse{
  \Return $\TRes(N, C)$
  {\tcp*[f]{$C$ is a non-ground flat clause}}
}
\caption{Find the eligible or the top-variable literals for \textsf{LGQ} clauses}
\label{algorithm:refine}
\end{algorithm}

\begin{algorithm}[b]
\normalsize
  \DontPrintSemicolon
 \KwIn{An \textsf{LGQ} clausal set $N$ and a non-ground flat clause $C$ in $N$}
 \KwOut{The eligible or the top-variable literals in $C$}
  \SetKwFunction{FMain}{$\TRes$}
  \SetKwProg{Fn}{Function}{:}{}
   \Fn{\FMain{$N, C$}}{
 Select all negative literals in $C$
 
 Find some clauses $C_1, \ldots, C_n$ in $N$ so that an \textbf{S-Res} inference is possible when $C$ is the main premise and $C_1, \ldots, C_n$ are the side premises
 
 \If{$C_1, \ldots, C_n$ exist}{\Return $\ComT(C_1, \ldots, C_n, C)$
 }
 \lElse{
 \Return all negative literals in $C$
}
}
\caption{The $\TRes$ function}
\label{algorithm:top}
\end{algorithm} 

Algorithm \ref{algorithm:top} specifies the $\TRes(N,C)$ function, describing the application of the \textbf{T-Res} rule to a non-ground flat \textsf{LGQ} clause $C$ as the main premise and $C_1, \ldots, C_n$ occurring in $N$ as the side premises. In Algorithm \ref{algorithm:top}, the $\ComT(C_1, \ldots, C_n, C)$ function finds the top-variable literals in $C$ with respect to the \textbf{S-Res} inference when $C_1, \ldots, C_n$ are the side premises and $C$ is the main premise. Algorithm \ref{algorithm:top} first tries to perform an \textbf{S-Res} inference on $C_1, \ldots, C_n$ and $C$, and if it is possible, the \textbf{S-Res} inference is immediately replaced by a \textbf{T-Res} inference. In the algorithm Lines 2--3 check whether the \textbf{S-Res} rule applies to $C_1, \ldots, C_n$ as the side premises and $C$ as the main premise with all negative literals selected. If so, Line 5 uses the $\ComT(C_1, \ldots, C_n, C)$ function to compute the \emph{top-variable literals} in $C$ with respect to this \textbf{S-Res} inference, ensuring that the \textbf{T-Res} rule is applicable to $C$ and \emph{the sub-multiset of $C_1, \ldots, C_n$ mapping to the top-variable literals in $C$}. Otherwise, Line 6 returns \emph{all negative literals} of $C$, meaning that no \textbf{S-Res} inference, hence no \textbf{T-Res} inference, is possible for $C_1, \ldots, C_n$ and $C$. Though the \textbf{T-Res} rule does not require one to select all negative literals in the \textbf{S-Res} main premise, the $\TRes$ function requires it because it is essential for deciding satisfiability of the \textsf{LGQ} clausal class.

The following sample derivation shows how the \textbf{T-Res} system decides an unsatisfiable set of \emph{\textsf{LG} clauses}. Consider an unsatisfiable set $N$ of \textsf{LG} clauses $C_1, \ldots, C_9$:
\begin{align*}
C_1 = \ & \lnot A_1(x,y) \lor \lnot A_2(y,z) \lor \lnot A_3(z,x) \lor B(x,y,b), \\
C_2 = \ & A_3(x,f(x)) \lor \lnot G_3(x), \qquad \qquad  \ \ \quad C_3 = A_2(f(x),f(x)) \lor \lnot G_2(x), \\
C_4 = \ & A_1(f(x),x) \lor D(g(x)) \lor \lnot G_1(x), \qquad \qquad \quad \qquad C_5 = \lnot B(x,y,b), \\
C_6 = \ & \lnot D(x), \ \ \quad C_7 = G_1(f(a)), \ \ \qquad C_8 = G_3(f(a)), \qquad C_9 = G_2(a).
\end{align*}
Suppose the precedence on which $\succ_{lpo}$ is based is $f > g > a > b > B > A_1 > A_2 > A_3 > D > G_1 > G_2 > G_3$. By $\boxed{L}$ or $L^\ast$ we mean that $L$ is selected or $L$ is a (strictly) maximal literal, respectively. In the \textbf{T-Res} system, $C_1, \ldots, C_9$ are presented as:
\begin{align*}
C_1 = \ & \boxed{\lnot A_1(x,y)} \lor \boxed{\lnot A_2(y,z)} \lor \boxed{\lnot A_3(z,x)} \lor B(x,y,b), \\
C_2 = \ & A_3(x,f(x))^\ast \lor \lnot G_3(x), \qquad \qquad \quad \ \ \ C_3 = A_2(f(x),f(x))^\ast \lor \lnot G_2(x), \\
C_4 = \ & A_1(f(x),x)^\ast \lor D(g(x)) \lor \lnot G_1(x), \qquad \qquad \qquad \quad C_5 = \boxed{\lnot B(x,y,b)}, \\
C_6 = \ & \boxed{\lnot D(x)}, \ \ \quad C_7 = G_1(f(a))^\ast, \ \ \quad C_8 = G_3(f(a))^\ast, \quad \quad C_9 = G_2(a)^\ast.
\end{align*}
One can use any clause to start a derivation, w.l.o.g.~we begin with $C_1$. For each newly derived clause, Algorithm \ref{algorithm:refine} is applied to determine the eligible or the top-variable literals of the clause. 
\begin{enumerate}
\item By Algorithm \ref{algorithm:refine} and the fact that $C_1$ is a non-ground flat \textsf{LG} clause, the $\TRes$ function is applied to $C_1$ and clauses in $N$. In Algorithm \ref{algorithm:top}, all negative literals in~$C_1$ are temporarily selected to check if the \textbf{S-Res} rule is applicable to $C_1$.  
\item As an \textbf{S-Res} inference step is applicable to $C_2, C_3, C_4$ as the side premises and $C_1$ as the main premise, the $\ComT(C_2, C_3, C_4, C_1)$ function computes an mgu 
\begin{align*}
\sigma^\prime = \{x \mapsto f(f(x^\prime)), y \mapsto f(x^\prime), z \mapsto f(x^\prime)\}	
\end{align*}
for variables of $C_1$. Hence $x$ is the only \emph{top variable} in $C_1$ and therefore $\lnot A_1(x,y)$ and $\lnot A_3(z,x)$ are the \emph{top-variable literals}. This means that based on the \textbf{S-Res} inference on $C$ and $C_2, C_3, C_4$, we intend to perform a special \textbf{P-Res} inference, viz., a \textbf{T-Res} inference, with $C$ the main premise and $C_2$ and $C_4$ the side premises.
\item The \textbf{T-Res} rule is applied to $C_2$ and $C_4$ as the side premises and $C_1$ as the main premise with an mgu $\sigma = \{x \mapsto f(x^\prime), y \mapsto x^\prime, z \mapsto x^\prime \}$, deriving 
\begin{align*}
C_{10} = \lnot A_2(x,x) \lor B(f(x),x,b)^\ast \lor D(g(x)) \lor \lnot G_1(x) \lor \lnot G_3(x),
\end{align*} 
with $x^\prime$ renamed as $x$. No resolution step can be performed on $C_3$ and $C_{10}$ for the lack of complementary eligible literals, nonetheless a resolution inference step can be performed between $C_{5}$ and $C_{10}$. 
\item By Algorithm \ref{algorithm:top}, the \textbf{S-Res} rule is applicable to $C_{5}$ as the main premise and $C_{10}$ as the side premise. Since $C_{5}$ contains only one negative literal, the literal is the \emph{top-variable literal} in $C_{5}$. Then applying the \textbf{T-Res} rule to $C_{10}$ and $C_{5}$ derives 
\begin{align*}
C_{11} = \lnot A_2(x,x) \lor D(g(x))^\ast \lor \lnot G_1(x) \lor \lnot G_3(x).
\end{align*}  
\item By Algorithm \ref{algorithm:top}, the \textbf{T-Res} rule is applicable to $C_{11}$ as the side premise and $C_{6}$ as the main premise with $\lnot D(x)$ the top-variable literal, deriving 
\begin{align*}
C_{12} = \boxed{\lnot A_2(x,x)} \lor \boxed{\lnot G_1(x)} \lor \boxed{\lnot G_3(x)}.
\end{align*} 
\item Due to the presence of $C_3, C_7, C_8$ and $C_{12}$ satisfy conditions of the $\TRes$ function, the $\ComT(C_3, C_7, C_8, C_{12})$ function finds that $x$ is the only top variable in $C_{12}$ with an mgu $\sigma^\prime = \{x \mapsto f(a)\}$. Hence all negative literals in $C_{12}$ are the \emph{top-variable literals}. Applying the \textbf{T-Res} rule to $C_3, C_7, C_8$ as the side premises and $C_{12}$ as the main premise derives $C_{13} = \boxed{\lnot G_2(a)}$.
\item Applying the \textbf{T-Res} rule to $C_9$ and $C_{13}$ derives $\bot$.
\end{enumerate}

Recall that by the \emph{term depth} of a clause, we mean the depth of the deepest term in that clause. As shown by the above example, the \textbf{T-Res} rule \emph{avoids term depth increase} in resolvents of \textsf{LGQ} clauses. Suppose the $\ComT(C_1, \ldots, C_n, C)$ function takes \textsf{LGQ} clauses $C_1, \ldots, C_n$ and~$C$ as input and $C$ is a non-ground flat \textsf{LGQ} clause. In the application of the \emph{top-variable technique} to $C_1, \ldots, C_n$ and~$C$, Step~1.~first computes an \textbf{S-Res} mgu of $C_1, \ldots, C_n$ and~$C$, and Steps 2.--3.~then find the variable~$x$ in $C$ that is unified to be the deepest term $x\sigma^\prime$ in~$C\sigma^\prime$ as the top variable. As  $x\sigma^\prime$ may become a nested term in the \textbf{S-Res} resolvent, the \textbf{T-Res} rule computes a partial resolvent, by only resolving the top-variable literals of~$C$, to avoid this potential term depth increase caused by $x\sigma^\prime$. In the previous example, if an \textbf{S-Res} inference is computed on~$C_1$ as the main premise and $C_2, C_3, C_4$ as the side premises, a nested compound term $f(f(x))$ will occur in the \textbf{S-Res} resolvent.

Now we give the main result of this section.
\begin{thm}
\label{thm:tres}
The \textbf{T-Res} system is sound and refutationally complete for general first-order clausal logic.
\end{thm}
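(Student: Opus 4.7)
The plan is to observe that the \textbf{T-Res} system is essentially a specialised instance of the \textbf{P-Res} system in which the choice of side-premise subset is fixed by the top-variable technique, so that soundness and refutational completeness can be lifted from Theorem \ref{thm:pres} rather than reproved from scratch.

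First I would verify that every \textbf{T-Res} inference is a legitimate \textbf{P-Res} inference. Conditions 1 and 3 of the \textbf{T-Res} rule coincide verbatim with those of the \textbf{P-Res} rule, and Condition 2a is identical. For Condition 2b, the \textbf{T-Res} rule fixes the subset $\lnot A_1, \ldots, \lnot A_m$ to be the top-variable literals with respect to a pre-computed \textbf{S-Res} mgu $\sigma^\prime$. Because $\sigma^\prime$ already unifies every $A_i$ with $B_i$ by hypothesis, its restriction $\sigma = \mgu(A_1 \doteq B_1, \ldots, A_m \doteq B_m)$ to the top-variable positions is guaranteed to exist, which is exactly what the \textbf{P-Res} precondition in Condition 2b demands. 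The non-emptiness of the top-variable subset follows from the well-foundedness of $>_v$ over the finite set $\Var(\lnot A_1 \lor \ldots \lor \lnot A_n)$: maximal variables always exist, hence at least one top-variable literal exists.

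Soundness then follows directly: since each \textbf{T-Res} inference is a \textbf{P-Res} inference and the \textbf{Factor}, \textbf{Deduce} and \textbf{Delete} rules are inherited unchanged, every \textbf{T-Res} derivation is a \textbf{P-Res} derivation, and Theorem \ref{thm:pres} gives soundness.

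For refutational completeness, I would argue via redundancy that \textbf{T-Res} saturation implies \textbf{S-Res} saturation up to redundancy. Given any \textbf{S-Res} inference $\mathbf{I}$ with main premise $C$ and side premises $C_1, \ldots, C_n$ in a clausal set $N$, the corresponding \textbf{T-Res} inference produces a resolvent $R_t$ using the top-variable subset as \textbf{P-Res} side premises. By Lemma \ref{lem:pres_gen}, every ground instance of $\mathbf{I}$ is redundant with respect to the ground instances of the clauses in $N \cup \{R_t\}$. Thus any set saturated under \textbf{T-Res} is also saturated up to redundancy under \textbf{S-Res}, and refutational completeness of \textbf{T-Res} follows from Theorem \ref{thm:sres}. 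The main point to check is that the top-variable restriction never inadvertently blocks a necessary inference: this is where the non-emptiness of the top-variable subset, together with the fact that a \textbf{T-Res} step is triggered precisely when an underlying \textbf{S-Res} step is applicable, closes the argument.
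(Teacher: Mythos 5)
Your proposal is correct and follows essentially the same route as the paper, which proves this theorem in one line by observing that the \textbf{T-Res} system is a special case of the \textbf{P-Res} system and invoking Theorem \ref{thm:pres}. Your additional verification that Conditions 1, 2a, 2b and 3 of \textbf{T-Res} instantiate those of \textbf{P-Res}, and that the top-variable subset is always non-empty, simply makes explicit what the paper leaves implicit.
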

\begin{proof}
By Theorem \ref{thm:pres} and since \textbf{T-Res} is a special case of the \textbf{P-Res} system.
\end{proof}

The definitions in the resolution framework of \cite{BG01} and most resolution-based decision procedures \cite{FATZ93} stipulate that eligibility, in particular (strict) maximality, of literals is determined on the instantiated premises with the mgus, i.e., \emph{a-posteriori eligibility} is used. Instead, \emph{a-priori eligibility} determines eligibility, in particular (strict) maximality, of literals on the non-instantiated premises. A-posteriori eligibility is more general and stronger than a-priori eligibility. However, a-priori eligibility is possible is more efficient, due to the overhead of pre-computing unifications.

The \textbf{T-Res} system uses a-posteriori eligibility, however, thanks to the \emph{covering} and \emph{strong compatibility properties} of the \textsf{LGQ} clausal class, one can use \emph{a-priori eligibility}. This is briefly mentioned in deciding satisfiability of guarded clauses with equality in \cite{GdN99}. We now formally prove this claim.


\begin{lem}
\label{lem:com_large}
Let a covering clause $C$ contain a compound-term literal $L_1$ and a non-compound-term literal $L_2$. Then $L_1 \succ_{lpo} L_2$.
\end{lem}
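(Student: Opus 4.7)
The plan is to exploit the covering property together with the special precedence underlying $\succ_{lpo}$, where function symbols sit above constants, which sit above predicate symbols. These two ingredients will let me show first that a single flat compound term occurring in $L_1$ already dominates every argument of $L_2$, and then propagate this dominance up to atoms and finally to literals.

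First I will fix a flat compound term $t = f(t_1,\ldots,t_n)$ inside $L_1$, which is guaranteed to exist because $L_1$ is a compound-term literal; by flatness each $t_i$ is a constant or a variable. The covering property of $C$ gives $\Var(t) = \Var(C) \supseteq \Var(L_2)$, so in particular every variable occurring in $L_2$ already appears as some $t_i$.

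Next I will show $t \succ_{lpo} s$ for every argument $s$ of the atom $A_2$ of $L_2$, by case analysis on $s$. If $s$ is a variable, then by the previous paragraph $s$ coincides with some $t_i$ and is therefore a proper subterm of $t$, so the subterm property of $\succ_{lpo}$ delivers the claim. If $s$ is a constant $c$, then $f \succ c$ in the stipulated precedence and $c$ has no arguments, so the ``different head'' clause of the LPO definition yields $t \succ_{lpo} c$ immediately.

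To finish I will lift the comparison to atoms and then to literals. Writing $A_2 = Q(s_1,\ldots,s_m)$, the precedence $f \succ Q$ (functions are above predicate symbols) together with $t \succ_{lpo} s_j$ for all $j$ gives $t \succ_{lpo} A_2$ by the same LPO clause. Since $t$ occurs as a subterm of the atom $A_1$ of $L_1$, the subterm property yields $A_1 \succeq_{lpo} t$, and hence $A_1 \succ_{lpo} A_2$. The standard multiset encoding of literals used in the resolution framework of \cite{BG01} then promotes this strict atom comparison into a strict literal comparison regardless of the polarities of $L_1$ and $L_2$, so $L_1 \succ_{lpo} L_2$. The only subtlety, and really the main thing to get right, is making sure the precedence is oriented exactly as stipulated by the \textbf{T-Res} system (functions on top, predicates at the bottom); once that and covering are invoked, the argument is a routine unwinding of the LPO definition.
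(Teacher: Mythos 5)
Your proof is correct and takes essentially the same route as the paper's: covering gives $\Var(L_2) \subseteq \Var(t)$ for a compound term $t$ of $L_1$, and the precedence placing function symbols above constants, which are above predicate symbols, forces the compound-term literal to dominate --- you are simply unwinding the LPO clauses that the paper's two-line argument leaves implicit. (The flatness of $t$ that you invoke is not among the lemma's hypotheses, but nothing in your argument actually needs it, since any variable of $t$ is a proper subterm of $t$ whether or not $t$ is flat.)
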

\begin{proof}
We distinguish two cases: i) Suppose $L_1$ contains a ground compound term. By the covering property, $C$ is ground. Then $L_1 \succ_{lpo} L_2$ as $L_1$ contains at least one function symbol but $L_2$ does not. 

ii) Suppose $L_1$ contains a non-ground compound term $t$. By the covering property, $\Var(t) = \Var(L_1) = \Var(C)$. By the facts that $\Var(L_2) \subseteq \Var(L_1)$ and $L_1$ contains at least one function symbol but $L_2$ does not, $L_1 \succ_{lpo} L_2$.	
\end{proof}

By the \emph{covering} and the \emph{strong compatibility properties} of \textsf{LGQ} clauses, a literal identified as eligible by a-posteriori eligibility is the same as the one identified by a-priori eligibility. This is formally stated as:

\begin{lem}
\label{lem:apri}
When applying the refinement of the \textbf{T-Res} system to an \textsf{LGQ} clause $C$, if a literal $L$ is (strictly) $\succeq_{lpo}$-maximal with respect to $C$, then $L\sigma$ is (strictly) $\succeq_{lpo}$-maximal with respect to $C\sigma$, for any substitution $\sigma$.
\end{lem}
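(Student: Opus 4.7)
The plan is to do case analysis on the structure of $C$, following the branches of Algorithm \ref{algorithm:refine}, and to combine Lemma \ref{lem:com_large}, the covering and strong compatibility properties of \textsf{LGQ} clauses, and the liftability of any admissible ordering to show that maximality transfers through $\sigma$. The guiding intuition is that for \textsf{LGQ} clauses the $\succeq_{lpo}$-maximum literal can be pinned down from the top-level syntactic shape of the clause, so the conclusion reduces to the liftability property $E_1 \succ_{lpo} E_2 \Rightarrow E_1\sigma \succ_{lpo} E_2\sigma$ applied literal-by-literal.

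The ground case is immediate since $C\sigma = C$. The substantive case is when $C$ is non-ground and contains at least one compound-term literal: by Lemma \ref{lem:com_large}, $L$ must itself be a compound-term literal. By strong compatibility, all compound terms in $C$ share a common argument list, and by covering, this list is precisely $\Var(C)$, so $L$ contains a flat compound term $t = f(x_1,\ldots,x_m)$ with $\{x_1,\ldots,x_m\} = \Var(C)$. For every other literal $L'$ in $C$, I would split on whether $L'$ contains a compound term. If it does, the $\succeq_{lpo}$-comparison of $L$ and $L'$ is settled at the top by the precedence on predicate and function symbols, so it survives $\sigma$ by liftability. If $L'$ is flat, then covering gives $\Var(L') \subseteq \Var(C) = \{x_1,\ldots,x_m\}$, so each argument of $L'\sigma$ is either a constant or some $x_i\sigma$; both are proper subterms of $t\sigma = f(x_1\sigma,\ldots,x_m\sigma)$ occurring inside $L\sigma$, and the subterm case of $\succ_{lpo}$ together with the ordering rules on literals yields $L\sigma \succ_{lpo} L'\sigma$.

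The remaining case is when $C$ is non-ground and flat, where any syntactic $\succeq_{lpo}$-comparison between $L$ and $L'$ survives substitution directly by liftability. The main obstacle I anticipate is the flat sub-case of the compound-term case: a priori, $\sigma$ could make a previously flat literal $L'$ compound in a way that dwarfs $L\sigma$'s compound term. Covering is exactly what rules this out, because any new compound structure in $L'\sigma$ must come from substituting variables of $\Var(L') \subseteq \Var(C)$, whose $\sigma$-images already sit as proper subterms of $t\sigma$ inside $L\sigma$. Once this is established, the strict version of the statement follows along the same lines, completing the argument.
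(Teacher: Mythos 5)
Your proof follows the same route as the paper's: the ground case is trivial, and in the compound-term case Lemma \ref{lem:com_large} forces $L$ to be a compound-term literal, after which covering handles flat competitors and strong compatibility handles compound-term competitors; you merely spell out the subterm argument that the paper leaves implicit. The only superfluous piece is your non-ground flat case — the paper observes that Algorithm \ref{algorithm:refine} only invokes maximality checking in Lines 1--2 and 5--6, so that case never arises (and your liftability argument there would in fact be delicate, since maximality of a non-ground literal is defined existentially over ground instances).
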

\begin{proof}
In Algorithm~\ref{algorithm:refine}, the maximality checking is done in either Lines 1--2 or 5--6. 

For the case in Lines 1--2 the claim trivially holds as $C$ is ground. Lines 5--6 mean that $C$ contains compound-term literals. By Lemma \ref{lem:com_large}, $L$ is a compound-term literal. Suppose $L^\prime$ is a literal in $C$ distinct from~$L$. First, suppose $L^\prime$ is not a compound-term literal. By the covering property, $L \succeq_{lpo} L^\prime$ implies $L\sigma \succeq_{lpo} L^\prime\sigma$ for any substitution~$\sigma$. Next, suppose $L^\prime$ is a compound-term literal. By the fact that $C$ is strongly compatible, $L \succeq_{lpo} L^\prime$ implies $L\sigma \succeq_{lpo} L^\prime\sigma$ for any substitution $\sigma$. Thus, $L\sigma$ is (strictly) maximal with respect to $C\sigma$.
\end{proof}

Lemma \ref{lem:apri} is generalisable to any \emph{covering} and \emph{strongly compatible} clause, as it is these properties that make \emph{a-priori eligibility} determination possible. From now on we assume the use of \emph{a-priori eligibility} to determine (strictly) maximal literals in the \textbf{T-Res} system. This also streamlines the discussions and simplifies proofs.

\section{Deciding satisfiability of the \textsf{LG} clausal class}
\label{sec:lgc}
Having shown in the previous section that the \textbf{T-Res} system is sound and refutational complete, now we prove the system decides the \textsf{LG} clausal class. Our goal is to show: given a finite signature, applying the conclusion-deriving \textbf{Deduce} rules in the \textbf{T-Res} system to a set of \textsf{LG} clauses only derives \textsf{LG} clauses that are of bounded depth and width. This claim is achieved by restricting that in an \textsf{LG} clause~$C$, the eligible literals or the top-variable literals
\begin{enumerate}
\item have the same variables set as $C$, and
\item are the deepest literals in $C$.	 
\end{enumerate}


First, we show 1.
\begin{lem}
\label{lem:eligible_covering}
By the \textbf{T-Res} system, the eligible literals or the top-variable literals in an \textsf{LG} clause $C$ have the same variable set as $C$.
\end{lem}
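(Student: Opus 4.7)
The plan is to perform a case analysis following the four branches of Algorithm~\ref{algorithm:refine}. The ground branch (Lines~1--2) is trivial since $\Var(C)=\emptyset$ and every literal of $C$ has empty variable set as well, so the single returned literal vacuously has variable set $\Var(C)$.

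For the two compound-term branches (Lines~3--6) I will exploit the \emph{covering property} built into Definition~\ref{def:lgc}: every compound term $t$ occurring in $C$ satisfies $\Var(t)=\Var(C)$, so any literal of $C$ containing a compound term already has $\Var(C)$ as its variable set. In the $\SelectNC$ branch the returned literal is by construction a negative compound-term literal, so the claim is immediate. In the positive compound-term branch, Lemma~\ref{lem:com_large} forces any $\succ_{lpo}$-maximal literal of $C$ to be a compound-term literal, because in a covering clause compound-term literals dominate all non-compound-term ones; the same reasoning then applies.

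The substantive case is the $\TRes$ branch (Line~7), where $C$ is a non-ground flat \textsf{LG} clause. Here Algorithm~\ref{algorithm:top} first selects all negative literals of $C$, and either returns all of them or, via $\ComT$, the top-variable subset. In either case I will use the fact that, by Condition~2 of Definition~\ref{def:lgc} (together with the single-variable clarification), the loose guard $\lnot G_1, \ldots, \lnot G_n$ of $C$ is a subset of the selected negatives whose combined variable set equals $\Var(C)$. The ``all negatives'' return value therefore trivially satisfies the claim. For the top-variable return value, I fix an arbitrary $x \in \Var(C)$ and a top variable $y$ (which exists because the selected negatives contain the loose guard and hence cover $\Var(C)$, and $C$ is non-ground). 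If $x=y$ then any literal of the loose guard containing $x$ is already a top-variable literal. Otherwise, by the loose guard property $x$ and $y$ co-occur in some $\lnot G_k$; since $\lnot G_k$ contains the top variable $y$ it is itself a top-variable literal, and it contains~$x$. Combined with the obvious inclusion $\Var(\text{top-variable literals}) \subseteq \Var(C)$, the desired equality follows.

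The main obstacle will be the top-variable subcase, because the top-variable literals are dynamically determined by an mgu computed during an \textbf{S-Res} inference, whereas the loose guard is a static structural property of $C$. The bridge is the observation that Algorithm~\ref{algorithm:top} selects \emph{all} negatives of $C$, which guarantees that the loose guard literals are candidates; every loose-guard-induced co-occurrence of an arbitrary variable of $C$ with a top variable is then enough to promote the corresponding loose guard literal to top-variable status.
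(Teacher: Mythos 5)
Your proposal is correct and follows essentially the same route as the paper's proof: a case split along the branches of Algorithm~\ref{algorithm:refine}, using the covering property (via Lemma~\ref{lem:com_large}) for the compound-term branches and the variable co-occurrence condition of the loose guard (Condition~2 of Definition~\ref{def:lgc}) to show every variable of $C$ lies in a literal containing a top variable. Your treatment of the top-variable subcase is somewhat more explicit than the paper's, but the underlying argument is identical.
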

\begin{proof}
Being led by Algorithm \ref{algorithm:refine}, we distinguish three cases:

Lines 1--2: When $C$ is ground the statement trivially holds.

Lines 3--6: Suppose $C$ is a non-ground compound-term \textsf{LG} clause and $L$ is an eligible literal in~$C$. Suppose $L$ is positive. By the $\Max$ function and $\succ_{lpo}$, $L$ is a positive compound-term literal. Next, suppose $L$ is negative. By the $\SelectNC$ function,~$L$~is a negative compound-term literal. In either case, by the covering property of \textsf{LG} clauses, $\Var(L) = \Var(C)$.

Lines 7: Suppose $C$ is a non-ground flat \textsf{LG} clause and $\mathbb{L}$ are the top-variable literals in $C$. Suppose $x$ is a top variable in $C$. By 2.~of Definition \ref{def:lgc} and the definition of top-variable literals, $x$ co-occurs with all other variables of $C$ in $\mathbb{L}$, therefore $\Var(\mathbb{L}) = \Var(C)$.
\end{proof}

For 2, the \textbf{T-Res} system ensures that the deepest literals in \textsf{LG} clauses are eligible. Specifically Lines 3--6 of Algorithm \ref{algorithm:refine} ensure that when an \textsf{LG} clause contains non-ground compound-terms, one of the compound-term literals is eligible. 

Compound-term covering clauses have the following property.

\begin{rem}
\label{re:ground}
Suppose $C$ is a covering clause and contains ground compound terms. Then,~$C$ is ground.	
\end{rem}
\begin{proof}
By the definition of the covering property.	
\end{proof}

Next, we look at the unification for the eligible literals of \textsf{LG} clauses. We first investigate the \emph{pairing} property of compound-term eligible literals. Recall the definition of pairing from \textbf{Section \ref{sec:pre}}: Given two atoms $A(\ldots, s, \ldots)$ and $B(\ldots, t, \ldots)$ with terms $s$ and $t$, we say $s$ \emph{pairs} $t$ if the argument position of $s$ in $A(\ldots, s, \ldots)$ is the same as that of $t$ in $B(\ldots, t, \ldots)$.

\begin{lem}
\label{lem:mat_comp_lit}
Let $A_1$ and $A_2$ be two simple and covering compound-term atoms, and suppose $A_1$ and $A_2$ are unifiable using an mgu $\sigma$. Then, compound terms in $A_1$ pair only compound terms in $A_2$ and vice-versa.  
\end{lem}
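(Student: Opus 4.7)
The plan is to argue by contradiction. Writing $A_1 = P(t_1,\ldots,t_n)$ and $A_2 = P(s_1,\ldots,s_n)$, and exploiting the obvious symmetry between $A_1$ and $A_2$, it suffices to fix a position $i$ where $t_i = f(\overline x)$ is a compound term of $A_1$ and $s_i$ is a non-compound argument of $A_2$, and derive either a function-symbol clash or an occurs-check cycle that contradicts $\sigma$ being a valid unifier. The easy sub-case is $s_i = c$ a constant, which is ruled out at once since $t_i\sigma$ has head $f$ while $s_i\sigma = c$.

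The substantive case is $s_i = y$ a variable, so that $y\sigma = t_i\sigma = f(\overline x\sigma)$. To exploit this, I would use that $A_2$ is itself a compound-term atom, picking a compound argument $s_j = g(\overline z)$ of $A_2$. Covering of $A_2$ gives $\Var(s_j) = \Var(A_2)$, and since $s_i = y$ is an argument of $A_2$, $y \in \Var(s_j)$; flatness of $s_j$ then forces $y$ directly into the argument list $\overline z$. Hence $s_j\sigma = g(\overline z\sigma)$ carries $y\sigma = f(\overline x\sigma)$ as a top-level argument under $g$, so $s_j\sigma$ properly contains $t_i\sigma$, and in particular has depth at least two.

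Unification then demands $t_j\sigma = s_j\sigma$, and I would eliminate each of the three shapes that simpleness permits for $t_j$. A constant is incompatible with depth $\geq 2$. If $t_j = u$ is a variable, I would take that $u$ directly; if $t_j = h(\overline w)$ is a flat compound, then unification forces $h = g$ and $\overline w\sigma = \overline z\sigma$, and I would take $u$ to be the argument of $\overline w$ occupying the same position as $y$ does in $\overline z$. Either way $u\sigma = y\sigma = t_i\sigma$. Simpleness restricts $u$ to a constant (ruled out since $t_i\sigma$ is compound) or to a variable, in which case the covering of $A_1$ forces $u \in \Var(A_1) = \Var(t_i) = \overline x$. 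But then $t_i\sigma = f(\overline x\sigma)$ contains $u\sigma = t_i\sigma$ as one of its direct arguments, an occurs-check cycle that $\sigma$ cannot produce.

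I expect the only mildly subtle step to be Subcase 2c (both $t_j$ and $s_j$ compound), where one has to unify the heads, descend into the argument lists, and track exactly which position inside $\overline w$ witnesses the cycle. Everything else is a direct application of simpleness, flatness, and covering, and the converse direction (compound in $A_2$ forces compound in $A_1$) is obtained by swapping the roles of $A_1$ and $A_2$.
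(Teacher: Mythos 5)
Your proof follows essentially the same route as the paper's: argue by contradiction that a compound term pairs a variable $y$, use the fact that $A_2$ is a compound-term atom to produce a second compound argument $s_j$ containing $y$ (via covering and flatness), and then case-split on the paired argument $t_j$ of $A_1$ to force an occurs-check failure. One small slip: in the subcase where $t_j=u$ is a variable, you claim $u\sigma = y\sigma = t_i\sigma$, but in fact $u\sigma = s_j\sigma = g(\ldots, y\sigma, \ldots)$, which \emph{properly contains} $t_i\sigma$ rather than equalling it; since covering still places $u$ among the arguments $\overline x$ of $t_i$, you get $t_i\sigma \supsetneq u\sigma \supsetneq t_i\sigma$, so the contradiction survives and the argument is sound once that equality is corrected to a proper containment.
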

\begin{proof}
%

We distinguish three cases: i) The statement trivially holds when both $A_1$ and~$A_2$ are ground atoms. 

ii) Suppose one of $A_1$ and $A_2$ is a ground atom and the other one is a non-ground atom. By Remark \ref{re:ground}, the non-ground atom in $A_1$ and $A_2$ contains no ground compound terms. Hence, in this case, a non-ground compound term pairs either a ground compound term or a constant. As unifying a non-ground compound term with a constant is not possible, a non-ground compound term must pair a ground compound term.

iii) Suppose both $A_1$ and $A_2$ are non-ground. W.l.o.g., $A_1$ and $A_2$ are represented as $A_1(t,t^\prime,\ldots)$ and $A_2(u,u^\prime,\ldots)$, respectively. By Remark \ref{re:ground} and the fact that $A_1$ and~$A_2$ are non-ground atoms, if any of $t$, $t^\prime$, $u$ and $u^\prime$ is a compound term, then it is a non-ground compound term.

Suppose $t$ is a compound term. We prove that $u$ is a compound term by contradiction. Then $u$ can be either a constant or a variable. The case that $u$ is a constant prevents the unification of $t\sigma = u\sigma$. Now suppose $u$ is a variable. As $A_2$ is a compound-term literal, w.l.o.g., suppose $u^\prime$ is a compound term in $A_2$. Then $t^\prime$ is not a constant as it prevents the unification of $u^\prime$ and $t^\prime$, therefore, $t^\prime$ is a variable or a compound term. We distinguish the two cases of $t^\prime$: 1) Suppose $t^\prime$ is a variable. By the covering property, w.l.o.g., we use $f(\ldots, x, \ldots)$, $x$, $y$ and $g(\ldots, y, \ldots)$ to represent $t$, $t^\prime$, $u$ and $u^\prime$ respectively. Then $A_1(t,t^\prime,\ldots)$ and $A_2(u,u^\prime,\ldots)$ are represented as $A_1(f(\ldots, x, \ldots), x, \ldots)$ and $A_2(y, g(\ldots, y, \ldots), \ldots)$, respectively. The unification between these two atoms is impossible due to occur-check failure. 

2) Suppose $t^\prime$ is a compound term. By the covering property, w.l.o.g., we use~$f(\overline x)$, $g(\overline x)$, $y$ and $g(\ldots, y, \ldots)$ to represent $t$, $t^\prime$, $u$ and $u^\prime$ respectively. Then $A_1(t,t^\prime,\ldots)$ and $A_2(u,u^\prime,\ldots)$ are represented as $A_1(f(\overline x), g(\overline x), \ldots)$ and $A_2(y, g(\ldots, y, \ldots), \ldots)$, respectively. Then there exists no unifier for these two atoms again due to occur-check failure. The fact that $u$ is neither a constant nor a variable implies that $u$ is a compound term.
\end{proof}

The loose guard in the premise of \textbf{Factor} inferences or the loose guard in the side premise of \textbf{T-Res} inferences act as the loose guard of the conclusion. Formally:

\begin{lem}
\label{lem:guard_tinf} 
Let $A_1$ and $A_2$ be two simple and covering atoms, and suppose $A_1$ and $A_2$ are unifiable using an mgu $\sigma$. Further suppose $\mathbb{G}$ is a set of flat literals satisfying $\Var(A_1) = \Var(\mathbb{G})$. Then, if $A_1$ is a compound-term atom, $\Var(A_1\sigma) = \Var(\mathbb{G}\sigma)$ and all literals in $\mathbb{G}\sigma$ are flat.
\end{lem}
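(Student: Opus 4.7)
The plan is to use Lemma \ref{lem:mat_comp_lit} together with the simpleness and covering properties to show that $\sigma$ maps every variable of $A_1$ to either a variable or a constant; from this both conclusions of the lemma follow immediately.

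First, I would fix an arbitrary $x \in \Var(A_1)$. Since $A_1$ is a compound-term atom and is covering, $A_1$ contains a compound term $t$ with $\Var(t) = \Var(A_1)$, so $x$ appears as an argument of $t$. By simpleness, $t$ is flat: $t = f(s_1, \ldots, s_n)$ where each $s_i$ is a variable or a constant, and $x = s_i$ for some $i$. By Lemma \ref{lem:mat_comp_lit}, $t$ pairs a compound term $t'$ in $A_2$, which is again flat by simpleness, say $t' = f(u_1, \ldots, u_n)$ with each $u_j$ a variable or a constant.

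Second, I would argue that no variable of $A_1$ can be sent by $\sigma$ to a compound term. Using Lemma \ref{lem:mat_comp_lit} a second time, compound-term positions of $A_1$ are paired only with compound-term positions of $A_2$, and vice versa. Decomposing the unification problem $A_1 \doteq A_2$ position by position therefore reduces it to a set of equations purely between variables and constants (the flat compound terms at matching positions decompose once more into such equations). The mgu of a system of equations over variables and constants can only map each variable to a variable or a constant; in particular $x\sigma$ is a variable or a constant.

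Both claims then follow. For flatness, any flat literal $L \in \mathbb{G}$ has arguments that are variables (all lying in $\Var(A_1)$ by the hypothesis $\Var(\mathbb{G}) = \Var(A_1)$) or constants; replacing each such variable by a variable or a constant preserves flatness, so every literal in $\mathbb{G}\sigma$ is flat. For the variable-set identity, I would apply the standard fact $\Var(E\sigma) = \bigcup_{y \in \Var(E)} \Var(y\sigma)$ to both $E = A_1$ and $E = \mathbb{G}$, and use $\Var(A_1) = \Var(\mathbb{G})$ to conclude $\Var(A_1\sigma) = \Var(\mathbb{G}\sigma)$. The only nontrivial step is the second one, namely showing that $\sigma$ cannot introduce compound terms into variable images; this rests squarely on Lemma \ref{lem:mat_comp_lit}, which forbids a variable of $A_1$ from being paired with a compound term of $A_2$ (or vice versa). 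Without this constraint $\sigma$ could map a variable of $A_1$ to a compound subterm of $A_2$ and break flatness of $\mathbb{G}\sigma$; everything else is routine bookkeeping.
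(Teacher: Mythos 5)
Your overall strategy is the same as the paper's: use Lemma \ref{lem:mat_comp_lit} together with simpleness to show that $\sigma$ sends every variable of $A_1$ to a variable or a constant, after which flatness of $\mathbb{G}\sigma$ and the identity $\Var(A_1\sigma)=\Var(\mathbb{G}\sigma)$ are routine. However, there is a gap: Lemma \ref{lem:mat_comp_lit} requires \emph{both} $A_1$ and $A_2$ to be compound-term atoms, whereas the hypothesis of the present lemma only makes $A_1$ a compound-term atom; $A_2$ may be flat (e.g.\ $A_1 = P(f(x),x)$ and $A_2 = P(y,z)$ are unifiable). In that case your intermediate claim that the compound term $t$ of $A_1$ ``pairs a compound term $t'$ in $A_2$'' is false, and the subsequent assertion that position-by-position decomposition yields equations ``purely between variables and constants'' also fails, since a compound term of $A_1$ then pairs a variable of $A_2$ and produces an equation of the form $f(\ldots) \doteq y$.

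The conclusion still holds in that case — when $A_2$ is flat, the variables of $A_1$ are only ever equated with variables or constants of $A_2$, so $\sigma$ restricted to $\Var(A_1)$ still maps into variables and constants — but your proposal does not argue this. The paper's proof splits explicitly into the two cases ($A_2$ flat versus $A_2$ a compound-term atom) and only invokes Lemma \ref{lem:mat_comp_lit} in the second; adding the flat case as a separate (easy) branch would close the gap and make your argument match the paper's.
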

\begin{proof}
Since $\Var(A_1) = \Var(\mathbb{G})$, it is immediate that $\Var(A_1\sigma) = \Var(\mathbb{G}\sigma)$. 

We prove that $\mathbb{G}\sigma$ is a set of flat literals by distinguishing two cases: i) Assume that $A_2$ is flat. This implies that $\sigma$ substitutes variables in $A_1$ with either variables or constants. By the facts that $\mathbb{G}$ is a set of flat literals and $\Var(A_1) = \Var(\mathbb{G})$, all literals in $\mathbb{G}\sigma$ are flat.

ii) Assume that $A_2$ is a compound-term literal. By Lemma \ref{lem:mat_comp_lit}, compound terms in~$A_1$ pair compound terms in $A_2$ and vice-versa. Since $A_1$ and $A_2$ are simple, the mgu~$\sigma$ substitutes variables in $A_1$ with either variables or constants. Since~$\mathbb{G}$ is a set of flat literals and $\Var(A_1) = \Var(\mathbb{G})$, all literals in $\mathbb{G}\sigma$ are flat.
\end{proof}

Lemmas \ref{lem:simple}--\ref{lem:conclusion} below consider non-loose-guard literals in the conclusions of \textsf{LG} clauses. A similar result to Lemma \ref{lem:simple} is Lemma 4.6 in \cite{GdN99}, but a key `covering' condition is not considered. First, we look at the depth of eligible literals. 

\begin{lem}
\label{lem:simple}
Suppose $A_1$ and $A_2$ are two simple and covering atoms, and they are unifiable using an mgu $\sigma$. Then, $A_1\sigma$ is simple.
\end{lem}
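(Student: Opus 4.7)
The plan is to leverage Lemma~\ref{lem:mat_comp_lit} to show that the mgu $\sigma$ never maps a variable to a compound term, from which simpleness of $A_1\sigma$ follows immediately. Recall that $A_1\sigma$ is simple exactly when each top-level argument is a variable, a constant, or a flat compound term. So I need to track two things under $\sigma$: (i) what happens to top-level non-compound arguments, and (ii) whether top-level flat compound terms in $A_1$ remain flat after application of $\sigma$.

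First I would argue that $\sigma$ substitutes every variable by either a variable or a constant. To see this, consider the decomposition of the unification problem $A_1 \doteq A_2$ into pairs of corresponding arguments. By Lemma~\ref{lem:mat_comp_lit}, for each top-level argument position, either both arguments are compound terms or both are non-compound (variables/constants). In the non-compound case, we only equate a variable with a variable or a constant, contributing no function symbols to $\sigma$. In the compound case, both arguments are flat compound terms (by the simpleness of $A_1$ and $A_2$); unification then requires the outer function symbols to agree and recursively decomposes into equations between their sub-arguments, each of which is itself a variable or a constant. So at every level of the unification, variables pair only with variables or constants, and hence the mgu $\sigma$ maps each variable to a variable or a constant.

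With this established, the remainder is routine. For a top-level non-compound argument $t$ of $A_1$: if $t$ is a constant then $t\sigma = t$, and if $t$ is a variable then $t\sigma$ is a variable or a constant by what we just showed. For a top-level compound argument $t = f(a_1, \dots, a_k)$ of $A_1$, simpleness tells us each $a_j$ is a variable or a constant, and the previous paragraph implies each $a_j\sigma$ is again a variable or a constant; hence $t\sigma = f(a_1\sigma, \dots, a_k\sigma)$ is a flat compound term. Thus every top-level argument of $A_1\sigma$ is a variable, a constant, or a flat compound term, so $A_1\sigma$ is simple.

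The main obstacle is really verifying the claim that $\sigma$ introduces no function symbols into variables; this is where both the simpleness and the covering hypotheses are genuinely used, via Lemma~\ref{lem:mat_comp_lit}. Without covering, a variable could pair a top-level compound term in the counterpart atom and force $\sigma$ to bind that variable to a compound term, which would cascade through any other top-level occurrences and break flatness; without simpleness, even a matched pair of compound terms could contain nested structure that the mgu would have to propagate. I would therefore make sure the proof explicitly cites Lemma~\ref{lem:mat_comp_lit} at the top-level argument split and explicitly uses the flatness half of simpleness when descending into matched compound terms.
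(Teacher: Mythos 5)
Your overall strategy is the same as the paper's: use Lemma~\ref{lem:mat_comp_lit} to argue that the unification never forces a variable of $A_1$ below a function symbol, and read off simpleness. However, there is an imprecision in how you invoke that lemma. Its hypothesis requires \emph{both} $A_1$ and $A_2$ to be compound-term atoms, and you apply it unconditionally. In the mixed case where exactly one of the two atoms contains a compound term (e.g.\ $A_1 = A(x,y)$ and $A_2 = A(f(z),z)$, both simple and covering), a variable does pair a compound term, so your key intermediate claim --- ``the mgu $\sigma$ maps each variable to a variable or a constant'' --- is false: here $x\sigma = f(z)$. Your second paragraph then asserts ``if $t$ is a variable then $t\sigma$ is a variable or a constant by what we just showed,'' which fails in this case. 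The conclusion still holds, because the offending binding is to a \emph{flat} compound term (by simpleness of the other atom), so $A_1\sigma$ remains simple; but your stated reasoning does not cover it.

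The paper avoids this by an explicit case split: it first disposes of the cases where either atom is ground or is non-ground and flat (where simpleness of $A_1\sigma$ follows directly from simpleness of the other atom), and only then, with both atoms guaranteed to be compound-term atoms, applies Lemma~\ref{lem:mat_comp_lit} to conclude that $\sigma$ binds variables only to variables or constants. To repair your write-up, either add that case split up front, or weaken your intermediate claim to ``$\sigma$ maps every variable \emph{of the compound-term atom(s)} to a variable or a constant, and any other variable to a variable, a constant, or a flat compound term,'' and check that both alternatives preserve simpleness.
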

\begin{proof}
If either of $A_1$ and $A_2$ is ground, or either of $A_1$ and $A_2$ is non-ground and flat, then immediately $A_1\sigma$ is simple. 

Let both $A_1$ and $A_2$ be compound-term atoms. By Lemma \ref{lem:mat_comp_lit} and since $A_1$ and $A_2$ are simple, the mgu~$\sigma$ substitutes variables with either constants or variables. Then, the fact that $A_1$ is simple implies that $A_1\sigma$ is simple.
\end{proof}
 
Next we study the depth and width of non-eligible literals in conclusions. 
 
\begin{lem}
\label{lem:conclusion}
Let $A_1$ and $A_2$ be two simple atoms satisfying $\Var(A_2) \subseteq \Var(A_1)$.
Then given an arbitrary substitution $\sigma$, these properties hold:
\begin{enumerate}
\item If $A_1\sigma$ is simple, then $A_2\sigma$ is simple. 
\item $\Var(A_2\sigma) \subseteq \Var(A_1\sigma)$.
\end{enumerate}
Further suppose that $t$ and $u$ are, respectively, compound terms occurring in $A_1$ and~$A_2$ satisfying $\Var(t) = \Var(u) = \Var(A_1)$. Then, $\Var(t\sigma) = \Var(u\sigma) = \Var(A_1\sigma)$.
\end{lem}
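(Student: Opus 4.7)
The plan is to prove each of the three claims separately by case analysis on where variables appear in $A_1$ and $A_2$ and on the action of $\sigma$, then to combine the observations to get the final covering-propagation claim.

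For claim 2, the argument is essentially syntactic: any variable $y \in \Var(A_2\sigma)$ must arise from some $z \in \Var(A_2)$ with $y \in \Var(z\sigma)$. Using $\Var(A_2) \subseteq \Var(A_1)$ gives $z \in \Var(A_1)$, so $z\sigma$ is a sub-expression of $A_1\sigma$ and therefore $y \in \Var(A_1\sigma)$. This piece is immediate and should be stated first, since it will be reused in the final claim.

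For claim 1, I would case-split on where each $z \in \Var(A_2)$ occurs. Because $A_2$ is simple, $z$ either appears as a direct argument of $A_2$ or as an argument inside a flat compound term of $A_2$. In the former case, $z\sigma$ appears as a direct argument of $A_2\sigma$; since $z \in \Var(A_1)$, the simpleness of $A_1\sigma$ already forces $z\sigma$ to be a variable, a constant, or a flat compound term, which is exactly what simpleness of $A_2\sigma$ requires at that position. In the latter case, preservation of flatness of the compound term of $A_2\sigma$ containing $z$ requires $z\sigma$ to be a variable or a constant, and this is obtained from $A_1\sigma$'s simpleness as soon as $z$ likewise occurs inside a compound term of $A_1$. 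The main obstacle is precisely this sub-case: on a purely syntactic reading of the hypothesis one can cook up counterexamples of the shape $A_1 = P(x)$, $A_2 = Q(f(x))$, $\sigma = \{x \mapsto g(y)\}$, where $A_1\sigma$ is simple but $A_2\sigma$ is not. To rescue the claim I would invoke the covering context in which the lemma is applied throughout the paper (both atoms sit inside a common covering clause and every variable inside any compound term is forced to appear in every compound term of that clause), so that any $z$ buried in a compound term of $A_2$ is also buried in a compound term of $A_1$, closing the argument.

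For the final claim, the hypotheses $\Var(t) = \Var(u) = \Var(A_1)$ propagate through $\sigma$ directly: one computes $\Var(t\sigma) = \bigcup_{z \in \Var(t)} \Var(z\sigma) = \bigcup_{z \in \Var(A_1)} \Var(z\sigma) = \Var(A_1\sigma)$, and the identical computation for $u$ gives $\Var(u\sigma) = \Var(A_1\sigma)$. This step is routine and relies only on the definition of substitution application together with the established fact that variables in $A_1\sigma$ are exactly the images of variables in $A_1$ under $\sigma$.
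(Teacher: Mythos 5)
Your treatment of parts 2 and 3 coincides with the paper's: both are immediate from $\Var(E\sigma)=\bigcup_{z\in\Var(E)}\Var(z\sigma)$ together with $\Var(A_2)\subseteq\Var(A_1)$ and $\Var(t)=\Var(u)=\Var(A_1)$, and the paper dispatches them in one line exactly as you do.

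On part 1 you have found a genuine defect, not merely a presentational one. Your counterexample $A_1=P(x)$, $A_2=Q(f(x))$, $\sigma=\{x\mapsto g(y)\}$ is valid against the statement as written: $A_1$, $A_2$ and $A_1\sigma=P(g(y))$ are all simple, yet $A_2\sigma=Q(f(g(y)))$ is not. The paper's proof opens with the inference ``since $A_1$ and $A_1\sigma$ are simple, $\sigma$ does not cause term depth increase in $A_1\sigma$,'' and this is precisely the step your example defeats: simpleness of $A_1\sigma$ only constrains $x\sigma$ to be a variable or constant when $x$ occurs \emph{inside a compound term} of $A_1$; a variable occurring only as a top-level argument may be sent to a flat compound term. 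Your diagnosis of the missing hypothesis is also the right one. What is needed is that every variable occurring inside a compound term of $A_2$ also occurs inside a compound term of $A_1$; the covering property supplies this \emph{provided $A_1$ itself contains a compound term} (covering of the ambient clause says nothing if $A_1$ is flat, which is exactly the configuration of your counterexample). In the paper's actual uses of part 1 the gap is harmless: in Lemma \ref{lem:res} the substitution is already known to map variables to variables or constants, so the conclusion is trivial without the lemma, and elsewhere $A_1$ is the eligible literal of a non-ground compound-term \textsf{LG} clause, hence by Lemma \ref{lem:com_large} a covering compound-term literal, which restores the argument. So the lemma should either add the hypothesis that $A_1$ contains a compound term $t$ with $\Var(t)=\Var(A_1)$ (already present in the final clause of the statement), or add the hypothesis that $\sigma$ maps $\Var(A_1)$ to variables and constants; your proposal amounts to the former and is the correct repair.
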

\begin{proof}
By the assumption that $A_1$ and $A_1\sigma$ are simple, $\sigma$ does not cause term depth increase in $A_1\sigma$. By the facts that $\Var(A_2) \subseteq \Var(A_1)$ and $A_2$ is simple, $A_2\sigma$ is simple. 

By the facts that $\Var(A_2) \subseteq \Var(A_1)$ and $\Var(t) = \Var(u) = \Var(A_1)$, it is immediate that $\Var(A_2\sigma) \subseteq \Var(A_1\sigma)$ and $\Var(t\sigma) = \Var(u\sigma) = \Var(A_1\sigma)$, respectively.
\end{proof}

Recall that a \emph{flat compound term} is a compound term containing only variables and constants as arguments. We consider how the \emph{strong compatibility} property holds in the conclusions.
\begin{lem}
\label{lem:com_term}
Let $s$, $s^\prime$, $t$ and $t^\prime$ be flat compound terms. Suppose $s$ and $t$ are compatible with $s^\prime$ and $t^\prime$, respectively. Then, if $s\sigma \doteq t\sigma$ with an arbitrary substitution $\sigma$, the following conditions are satisfied.
\begin{enumerate}
\item $s\sigma$ and $s^\prime\sigma$ are compatible, and $t\sigma$ and $t^\prime\sigma$ are compatible.
\item $s$ and $t$ are compatible, and $s\sigma$ and $t\sigma$ are compatible.
\item $s^\prime\sigma$ and $t^\prime\sigma$ are compatible.
\end{enumerate}
\end{lem}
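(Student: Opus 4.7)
Because $s$ and $s'$ are compatible flat compound terms, I can write them as $s = f_s(\overline{r})$ and $s' = g_s(\overline{r})$ for function symbols $f_s, g_s$ and a common argument list $\overline{r}$ whose entries are variables or constants. Symmetrically, $t = f_t(\overline{u})$ and $t' = g_t(\overline{u})$ with common argument list $\overline{u}$. The hypothesis $s\sigma \doteq t\sigma$ gives the equality $f_s(\overline{r}\sigma) = f_t(\overline{u}\sigma)$, which forces $f_s = f_t$ and, componentwise, $\overline{r}\sigma = \overline{u}\sigma$. This single identity will drive all three conclusions.

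For (1), applying $\sigma$ preserves the shared argument list on each side: $s\sigma = f_s(\overline{r}\sigma)$ and $s'\sigma = g_s(\overline{r}\sigma)$ share the argument list $\overline{r}\sigma$, so they are compatible, and the symmetric argument handles $t\sigma$ and $t'\sigma$. For the second half of (2), the identity $\overline{r}\sigma = \overline{u}\sigma$ is exactly the statement that $s\sigma$ and $t\sigma$ have identical argument lists, so they are compatible. For (3), $s'\sigma = g_s(\overline{r}\sigma) = g_s(\overline{u}\sigma)$ and $t'\sigma = g_t(\overline{u}\sigma)$ again share the argument list $\overline{u}\sigma$, so they too are compatible.

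The main obstacle is the first half of (2): to show $s$ and $t$ themselves are compatible, i.e., that $\overline{r} = \overline{u}$ as sequences prior to applying $\sigma$. I would argue componentwise using flatness: whenever $r_i$ is a constant, $r_i\sigma = r_i$, and $u_i\sigma = r_i$ forces $u_i$ to be either the same constant or a variable collapsed to it; the variable case must then be ruled out by the structural constraints that accompany the uses of this lemma, namely that $s$ and $t$ occur in literals belonging to strongly compatible \textsf{LG} clauses whose shared variable discipline prevents the ``variable-swap'' pathology $s = f(x,y), t = f(y,x), \sigma = \{y \mapsto x\}$. I therefore expect the delicate step to be invoking (or making explicit) the clause-level compatibility in the surrounding context so that each $r_i$ and $u_i$ must already agree before $\sigma$ is applied; once that is in hand, the componentwise match yields $s$ and $t$ compatible and closes (2). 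The remaining pieces (1) and (3) are then routine consequences of the shared argument list.
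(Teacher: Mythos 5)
Your handling of (1), (3), and the second half of (2) is exactly the paper's argument: substitution applied to a shared argument list yields a shared argument list, the syntactic identity $s\sigma = t\sigma$ forces identical argument lists for $s\sigma$ and $t\sigma$, and chaining these gives compatibility of $s'\sigma$ and $t'\sigma$. The paper's proof is three sentences saying precisely this, so on those points you and the paper coincide.

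Where you diverge is instructive: you correctly flag the first half of (2) --- that $s$ and $t$ are compatible \emph{before} $\sigma$ is applied --- as the delicate step, whereas the paper's proof simply never addresses it. Your worry is justified: as a standalone claim about arbitrary flat compound terms and an arbitrary substitution with $s\sigma = t\sigma$, it is false. Take $s = f(x,y)$, $t = f(y,x)$ and $\sigma = \{x \mapsto z, y \mapsto z\}$; then $s\sigma = t\sigma = f(z,z)$, but the argument lists $(x,y)$ and $(y,x)$ are not identical, so $s$ and $t$ are not compatible. Your proposed repair --- importing the strong-compatibility discipline of the surrounding \textsf{LG} clauses to exclude the variable-swap pathology --- would work in the contexts where the lemma is invoked, but it changes the lemma from a self-contained statement about terms into one conditional on clause-level hypotheses. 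Note, however, that this extra work is not actually needed for the paper's development: every later use of the lemma (in Lemmas \ref{lem:rem_lit} and \ref{lem:res}) cites only conclusion (1), conclusion (3), or the post-substitution half of (2). So the clean fix is to drop the pre-substitution claim from item (2) rather than to prove it; your instinct that it cannot be established from the stated hypotheses alone is correct.
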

\begin{proof}
Since $s$ and $t$ are, respectively, compatible with $s^\prime$ and $t^\prime$, $s\sigma$ and $t\sigma$ are compatible with $s^\prime\sigma$ and $t^\prime\sigma$, respectively. Since $s$ and $t$ are unifiable by $\sigma$, $s\sigma$ and $t\sigma$ are compatible. Then, 1.~implies that $s^\prime\sigma$ and $t^\prime\sigma$ are compatible.
\end{proof}

A compound-term \textsf{LG} clause with a compound-term literal removed is still an \textsf{LG} clause. We generalise this claim with applications of substitutions.

\begin{lem}
\label{lem:rem_lit}
Suppose $C = D \lor B$ is an \textsf{LG} clause with $B$ a compound-term literal. Further, suppose $\sigma$ is a substitution that substitutes all variables in $C$ with either constants or variables. Then, $D\sigma$ is an \textsf{LG} clause.
\end{lem}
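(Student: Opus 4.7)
The plan is to verify in turn each clause in Definition~\ref{def:lgc} for $D\sigma$. If $C$ is ground then $D$ is ground, $D\sigma = D$, and the claim is immediate, so I would assume $C$ falls under case~2 of Definition~\ref{def:lgc}, meaning $C$ has a loose guard $\lnot G_1, \ldots, \lnot G_n$ of flat negative literals in which every pair of distinct variables of $C$ co-occur.

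The key preliminary observation is that none of the loose-guard literals can coincide with $B$, because $B$ is a compound-term literal whereas each $\lnot G_i$ is flat. Hence $\lnot G_1, \ldots, \lnot G_n$ all lie in $D$, and since every variable of $C$ occurs in this guard, $\Var(D) = \Var(C)$. This identity is what makes the covering and loose-guard properties survive the removal of~$B$.

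Next I would check the three structural properties. For \emph{simpleness}, $\sigma$ replaces variables by variables or constants and so cannot introduce nesting in any argument position of a simple literal; hence $D\sigma$ is simple. For \emph{strong compatibility}, any two compound terms $t_1, t_2$ in $D\sigma$ arise from compound terms $t_1^\prime, t_2^\prime$ in $D$ whose argument lists were identical (since $C$, hence $D$, is strongly compatible), and applying the same $\sigma$ to identical lists yields identical lists. For \emph{covering}, any compound term $t$ in $D\sigma$ is $t^\prime\sigma$ for some compound term $t^\prime$ in $D$ with $\Var(t^\prime) = \Var(C)$, and combining this with $\Var(D) = \Var(C)$ gives $\Var(t) = \Var(D\sigma)$.

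Finally, for the loose-guard condition: if every variable of $D$ is mapped by $\sigma$ to a constant, then $D\sigma$ is ground and we are in case~1 of Definition~\ref{def:lgc}. Otherwise, I would argue that $\lnot G_1\sigma, \ldots, \lnot G_n\sigma$ serve as a loose guard of $D\sigma$: each $\lnot G_i\sigma$ is a flat negative literal (since $G_i$ is flat and $\sigma$ maps to constants or variables), and any two distinct variables $x, y$ in $D\sigma$ lift to distinct variables $x^\prime, y^\prime \in \Var(D) = \Var(C)$ that co-occur in some $\lnot G_i$, so $x$ and $y$ co-occur in $\lnot G_i\sigma$. The most delicate step is this last one, as one must track how $\sigma$ may collapse or rename variables without destroying pairwise co-occurrence; the argument works because the property is required only for \emph{distinct} variables of the image, which automatically have distinct preimages in $\Var(C)$.
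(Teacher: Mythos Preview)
Your proposal is correct and follows essentially the same approach as the paper's proof: verify simpleness, strong compatibility, covering, and the loose-guard condition for $D\sigma$, using that $\sigma$ maps variables only to variables or constants. Your argument is in fact slightly more explicit than the paper's on the key point that the loose guard of $C$ must lie entirely within $D$ (since $B$ is compound-term while guard literals are flat); the paper simply asserts ``Then $\mathbb{G}$ is a loose guard of $D$'' without spelling this out.
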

\begin{proof}
If $\sigma$ is a ground substitution, the lemma trivially holds. Suppose $\sigma$ is a non-ground substitution. We prove that $D\sigma$ is simple, covering, strongly compatible and contains a loose guard. Since $C$ is an \textsf{LG} clause and $D$ is a subclause of $C$, $D$ is simple. Because $\sigma$ substitutes variables with either constants or variables, $D\sigma$ is simple. Let~$s$ and $t$ be two arbitrary compound terms in $D$. That $C$ is covering implies that $\Var(t) = \Var(C)$, hence $\Var(t) = \Var(D)$, and therefore $\Var(t\sigma) = \Var(D\sigma)$. Then~$D\sigma$ is covering. Since $C$ is strongly compatible, $s$ and $t$ are compatible. By~2.~of Lemma~\ref{lem:com_term}, $s\sigma$ and $t\sigma$ are compatible, hence $D\sigma$ is strongly compatible. Suppose~$\mathbb{G}$~is a set of flat literals that acts as a loose guard of~$C$. Then $\mathbb{G}$ is a loose guard of~$D$. Since~$\sigma$ substitutes variables with either constants or variables and all literals in~$\mathbb{G}$ are flat, all literals in $\mathbb{G}\sigma$ are flat. Since $\Var(\mathbb{G}) = \Var(C)$ and $D$ is a subclause of~$C$, $\Var(\mathbb{G}\sigma) = \Var(D\sigma)$. By the facts that $\sigma$ substitutes variables with either constants or variables and $\mathbb{G}$ is a loose guard of $D$, each pair of variables of $D\sigma$ co-occurs in a literal of~$\mathbb{G}\sigma$. Hence~$\mathbb{G}\sigma$~is a loose guard of $D\sigma$. Therefore, $D\sigma$ is an \textsf{LG} clause.
\end{proof}

We establish properties of applying the \textbf{T-Res} rule to a flat clause and \textsf{LG} clauses.

\begin{lem}
\label{lem:pres_gnd_pairing}
Suppose a \textbf{T-Res} inference happens to \textsf{LG} clauses as the side premises and a non-ground flat clause as the main premise, with Condition 2b.~of the \textbf{T-Res} rule satisfied. Then, the top variables in the main premise pair constants or compound terms in the side premises, and the non-top variables in the main premise pair constants or variables in the side premises.
\end{lem}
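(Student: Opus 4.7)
I would begin by fixing notation: the \textbf{T-Res} inference has main premise $C = \lnot A_1 \lor \cdots \lor \lnot A_n \lor D$ with each $A_i$ flat, actual side premises $C_1 = B_1 \lor D_1, \ldots, C_m = B_m \lor D_m$ which are \textsf{LG} clauses with positive eligible $B_i$, and mgu $\sigma = \mgu(A_1 \doteq B_1, \ldots, A_m \doteq B_m)$; Condition~2b additionally asserts the existence of the \textbf{S-Res} mgu $\sigma' = \mgu(A_1 \doteq B_1, \ldots, A_n \doteq B_n)$, relative to which top variables are computed. By Algorithm~\ref{algorithm:refine} together with the positivity requirement on $B_i$, each $B_i$ is either ground or non-ground containing at least one flat compound term. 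The covering property of $C_i$ then forces every compound term $t$ in $B_i$ to satisfy $\Var(t) = \Var(C_i)$, and strong compatibility forces any two compound terms in $B_i$ to have identical argument lists; these two facts drive the entire argument.

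For the non-top direction, let $x$ be a non-top variable occurring in some $A_i$ for which $\lnot A_i$ is a top-variable literal (so $i \leq m$), and suppose for contradiction that $x$ pairs a flat compound term $t$ in $B_i$, so $x\sigma' = t\sigma'$ and $\Dep(t\sigma') \geq 1$. Since $\lnot A_i$ is a top-variable literal, some top variable $y$ also occurs in $A_i$; I case-split on the argument of $B_i$ that $y$ pairs. If it is another compound term $t'$, strong compatibility forces $t$ and $t'$ to share an argument list, whence $\Dep(y\sigma') = \Dep(t'\sigma') = \Dep(t\sigma') = \Dep(x\sigma')$, contradicting $y >_v x$. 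If it is a constant, $\Dep(y\sigma') = 0 < \Dep(x\sigma')$, again contradicting the maximality of $y$. If it is a variable $y_i$ of $B_i$, covering gives $y_i \in \Var(t)$, so $\Dep(t\sigma') \geq 1 + \Dep(y_i\sigma') = 1 + \Dep(y\sigma')$ and therefore $\Dep(x\sigma') > \Dep(y\sigma')$, once more contradicting $y$ being top. In every subcase $x$ cannot pair a compound term in $B_i$, so $x$ must pair a constant or a variable.

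For the top direction, let $x$ be a top variable occurring in some $A_i$; then $\lnot A_i$ is automatically a top-variable literal, so $i \leq m$. If $B_i$ is ground then each of its argument positions carries a constant or a ground compound term and the claim is immediate. Otherwise $B_i$ is non-ground and, by the observation above, contains a flat compound term $t''$ at some argument position $q$; flatness of $A_i$ forces the argument of $A_i$ at position $q$ to be some variable $z$ that pairs $t''$. Suppose for contradiction that $x$ pairs a variable $y_i$ of $B_i$; then $x\sigma' = y_i\sigma'$, and covering of $C_i$ forces $y_i \in \Var(t'')$, yielding $\Dep(z\sigma') = \Dep(t''\sigma') \geq 1 + \Dep(y_i\sigma') = 1 + \Dep(x\sigma')$, which strictly exceeds the maximal depth. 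This contradicts the maximality of $x$, so $x$ must pair a constant or a compound term.

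The hard part is to respect the scoping of the claim: it concerns only the $m$ \textbf{T-Res} side premises $B_1, \ldots, B_m$ and not the wider list $B_1, \ldots, B_n$ of hypothetical \textbf{S-Res} side premises. The sample derivation in Section~\ref{sec:tinf} already shows that the non-top variables $y$ and $z$ of $C_1$ do pair compound terms inside $C_3$; this is permitted because $\lnot A_2$ is not a top-variable literal and $C_3$ is therefore not a \textbf{T-Res} side premise. Once this scoping is in place, the proof reduces to a clean depth argument that threads the covering property (every variable of $C_i$ appears inside every compound term of $B_i$) through strong compatibility (compound terms of $B_i$ share an argument list and hence equal $\sigma'$-depth).
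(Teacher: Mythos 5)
Your proof is correct and follows essentially the same route as the paper's: both arguments reduce to comparing $\Dep(\cdot\,\sigma')$ of paired arguments under the \textbf{S-Res} mgu, using the covering property to place every variable of a side premise inside its eligible compound terms and strong compatibility to equate depths of co-occurring compound terms, and both derive the contradiction from the maximality defining top variables. The only difference is organizational — you split by the top/non-top claim and case on what the companion variable pairs, whereas the paper fixes one literal containing both a top and a non-top variable and cases on whether the side premise is ground — but the substance is identical.
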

\begin{proof}
Assuming that a-priori eligibility is applied, the \textbf{T-Res} rule is simplified to:
\begin{align*}
 \prftree[l]{\textbf{T-Res}: }
    {B_1 \lor D_1, \ \ldots, \ B_m \lor D_m, \ \ldots, \ B_n \lor D_n}
    {\lnot A_1 \lor \ldots \lor \lnot A_{m} \lor \ldots \lor \lnot A_n \lor D}
    {(D_1 \lor \ldots \lor D_m \lor \lnot A_{m+1} \lor \ldots \lor \lnot A_{n} \lor D)\sigma}
\end{align*}
provided the following conditions are satisfied.
\begin{enumerate}
\item[1.] No literal is selected in $D_1, \ldots, D_n,D$ and $B_1, \ldots, B_n$ are strictly $\succ_{lpo}$-maximal with respect to $D_1, \ldots, D_n$, respectively. 
\item[2a.] If $n = 1$, then i) either $\lnot A_1$ is selected, or nothing is selected in $\lnot A_1 \lor D$ and $\lnot A_1$ is $\succ_{lpo}$-maximal with respect to $D$, and ii) $\sigma = \mgu(A_1 \doteq B_1)$ or
\item[2b.] there must exist an mgu $\sigma^\prime$ such that $\sigma^\prime = \mgu(A_1 \doteq B_1, \ldots, A_n \doteq B_n)$, then $\lnot A_1, \ldots, \lnot A_m$ are the \emph{top-variable literals} of $\lnot A_1 \lor \ldots \lor \lnot A_{m} \lor \ldots \lor \lnot A_n \lor D$ and $\sigma = \mgu(A_1 \doteq B_1, \ldots, A_m \doteq B_m)$ where $1 \leq m \leq n$.
\item[3.] All premises are variable disjoint.
\end{enumerate} 

W.l.o.g.~suppose $\lnot A_t(\ldots, x, \ldots, y, \ldots)$ is a literal in $\lnot A_1, \ldots, \lnot A_m$, $x$ is a top variable and $y$ is a non-top variable (if it exists). Further suppose $\sigma^\prime$ is the \textbf{S-Res} mgu that $\sigma^\prime = \mgu(A_1 \doteq B_1, \ldots, A_n \doteq B_n)$. Suppose $C_t = B_t(\ldots, t_1, \ldots, t_2, \ldots) \lor D_t$ is the side premise such that $A_t(\ldots, x, \ldots, y, \ldots)\sigma^\prime = B_t(\ldots, t_1, \ldots, t_2, \ldots)\sigma^\prime$, and $t_1$ and $t_2$ pair $x$ and~$y$, respectively.

We prove that $t_1$ is either a constant or a compound term and $t_2$ is either a constant or a variable by distinguishing two cases of $C_t$. i) Suppose $C_t$ is ground. Then, immediately $t_1$ is either a constant or a ground compound term. We prove that $t_2$ is a constant by contradiction. Assume that $t_2$ is a ground compound term. The fact that $C_t$ is simple implies $\Dep(t_2) \geq \Dep(t_1)$. Since $t_1$ and $t_2$ are ground, $\Dep(t_2\sigma^\prime) \geq \Dep(t_1\sigma^\prime)$, and $\Dep(y\sigma^\prime) \geq \Dep(x\sigma^\prime)$, which contradicts that $y$ is not a top variable. Therefore,~$t_2$~is a constant.

ii) Suppose $C_t$ is non-ground. Then Lines 3--7 in Algorithm \ref{algorithm:refine} are used to check eligibility in $C_t$. By the fact that the eligible literal in $C_t$ is positive, Lines 5--6 are applied to $C_t$, hence $C_t$ is a non-ground compound-term clause and $B_t(\ldots, t_1, \ldots, t_2, \ldots)$ is the $\succ_{lpo}$-strictly maximal with respect to $C_t$. By Lemma \ref{lem:com_large}, $B_t$ is a compound-term literal. We prove that $t_1$ is a compound term by contradiction. Assume~$t_1$ is either a variable or a constant. Since $B_t$ is a compound-term literal, there exists a compound term in $B_t$. W.l.o.g.,~we suppose $t$ is a compound term in~$B_t$ and suppose $z$ is the variable in $A_t$ that $t$ pairs. The covering property of~$C_t$ implies $\Var(t_1) \subseteq \Var(t)$. The fact that $\Dep(t_1) < \Dep(t)$ implies $\Dep(t_1\sigma^\prime) < \Dep(t\sigma^\prime)$, therefore $\Dep(x\sigma^\prime) < \Dep(z\sigma^\prime)$, which contradicts that $x$ is a top variable. Then, $t_1$ is a compound term. Next, we prove that $t_2$ is either a constant or a variable again by contradiction. Assume $t_2$ is a compound term. Since $C_t$ is covering, $\Var(t_1) = \Var(t_2)$. Since $\Dep(t_1) = \Dep(t_2)$, $\Dep(t_1\sigma^\prime) = \Dep(t_2\sigma^\prime)$, and therefore $\Dep(x\sigma^\prime) = \Dep(y\sigma^\prime)$, which contradicts that $y$ is not a top variable. Hence, $t_2$ is either a variable or a constant.
\end{proof}

Lemma \ref{lem:pres_gnd_pairing} allows us to analyse unification in \textbf{T-Res} inferences, formally stated in the following corollary.
\begin{col}
\label{col:tres_uni}
In an application of the \textbf{T-Res} rule to \textsf{LG} clauses as the side premises and a non-ground flat clause as the main premise, with Condition 2b.~of the \textbf{T-Res} rule satisfied, the following conditions hold.
\begin{enumerate}
\item An mgu $\sigma$ substitutes top variables $x$ with either constants or the compound term pairing $x$ modulo variable renaming and grounding, and substitutes non-top variables with either constants or variables.
\item An mgu $\sigma$ substitutes variables in the eligible literals of the side premises with either constants or variables.
\end{enumerate}
\end{col}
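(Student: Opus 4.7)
The plan is to derive both parts as direct corollaries of Lemma~\ref{lem:pres_gnd_pairing} together with straightforward unification analysis enabled by the covering, simpleness, and strong compatibility properties of \textsf{LG} clauses. Fix notation: in the \textbf{T-Res} inference let the side premises be $C_1,\ldots,C_m$ with eligible literals $B_1,\ldots,B_m$, let the non-ground flat main premise be $C = \lnot A_1 \lor \ldots \lor \lnot A_m \lor \ldots \lor \lnot A_n \lor D$ with top-variable literals $\lnot A_1,\ldots,\lnot A_m$, and let $\sigma = \mgu(A_1 \doteq B_1,\ldots,A_m \doteq B_m)$.

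For Part 1, I would argue on an arbitrary variable $x$ of the main premise by cases. If $x$ is a top variable, Lemma~\ref{lem:pres_gnd_pairing} says that at each occurrence $x$ pairs either a constant in some $B_i$ (so the mgu forces $x\sigma$ to equal that constant) or a compound term $t$ in some $B_i$ (so the mgu forces $x\sigma$ to be $t$ with its variables further instantiated by $\sigma$); Part~2 will show those further assignments are constants or variables, which is precisely ``the compound term pairing $x$ modulo variable renaming and grounding.'' If $x$ is a non-top variable, Lemma~\ref{lem:pres_gnd_pairing} gives that at every occurrence $x$ pairs a constant or a variable; no indirect chain of equations can force $x\sigma$ to be a compound term, because that would, via the depth argument used in the proof of Lemma~\ref{lem:pres_gnd_pairing}, make $x$ a top variable, contradicting the assumption.

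For Part 2, I would take a variable $z$ of some $B_i$ and split on whether $z$ has a top-level occurrence in $B_i$. If yes, then $z$ pairs some argument of $A_i$; since $A_i$ is flat, that argument is either a constant (giving $z\sigma$ constant) or a variable $y$ of the main premise. In the latter case $y$ pairs $z$ at this position, so by the contrapositive of Lemma~\ref{lem:pres_gnd_pairing} $y$ is not a top variable, and by Part~1 $y\sigma$ is a constant or variable, whence $z\sigma = y\sigma$ inherits this. If $z$ has no top-level occurrence in $B_i$, then by the covering property $z$ occurs only inside compound terms of $B_i$; each such compound term pairs a variable $y_j$ of the main premise (a constant cannot unify with a compound term), so the mgu records equations of the form $y_j\sigma = f(\ldots,z,\ldots)\sigma$, leaving $z$ on the right-hand side and therefore unsubstituted, so $z\sigma = z$. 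The main obstacle is this last sub-case: one must check that no indirect chain of substitutions imposed by the other pairings forces $z\sigma$ to become a compound term. Here strong compatibility (any two compound terms of $B_i$ sharing a function symbol have identical argument sequences, by Lemma~\ref{lem:com_term}) and simpleness (no nested function symbols) of $B_i$ are used to ensure that the unification constraints on $z$ induced by all pairings remain mutually consistent and minimal, and hence do not collapse $z\sigma$ to a compound term.
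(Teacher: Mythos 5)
Your proposal is correct and follows essentially the same route as the paper: both parts are read off from the pairing property of Lemma~\ref{lem:pres_gnd_pairing}, using flatness of the main premise for Part~2. You are in fact more thorough than the paper's own two-line proof, which for Part~2 only treats top-level variable arguments of the eligible literals $B_i$ and does not explicitly discuss variables occurring solely inside compound terms or the indirect unification chains that you analyse via simpleness and strong compatibility.
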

\begin{proof}
1: By the pairing property established in Lemma \ref{lem:pres_gnd_pairing}.

2: Suppose $B(\ldots, x, \ldots)$ is an eligible literal in one of the side premises, and suppose $x$ is a variable argument in $B(\ldots, x, \ldots)$. By Lemma \ref{lem:pres_gnd_pairing} and the fact that the main premise is a non-ground flat clause, $x$ pairs either a constant or a variable, therefore~$\sigma$ substitutes~$x$ with either a constant or a variable.
\end{proof}

If a top-variable pairs a constant, the way a \textbf{T-Res} inference is performed is clear.

\begin{lem}
\label{lem:tres_cons}
Suppose a \textbf{T-Res} inference happens to \textsf{LG} clauses as the side premises and a non-ground flat clause as the main premise, with Condition 2b.~of the \textbf{T-Res} rule satisfied. Then, if a top variable $x$ pairs a constant, then i) all negative literals in the main premise are selected and ii) the mgu is a ground substitution instantiating all variables in the eligible literals and the top-variable literals with only constants.
\end{lem}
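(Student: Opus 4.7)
The plan is to establish the two conclusions in sequence. For the first assertion (all negative literals of the main premise are selected), this is a structural consequence of the algorithmic set-up: since the main premise $C$ is a non-ground flat \textsf{LG} clause, Algorithm~\ref{algorithm:refine} dispatches to Line~7, which invokes $\TRes(N, C)$. The first action of $\TRes$ (Algorithm~\ref{algorithm:top}, Line~2) is to select every negative literal in $C$. Hence this claim requires only unpacking the definitions.

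For the second assertion, I would begin by analysing the depths induced by $\sigma$ and the S-Res mgu $\sigma^\prime$. By construction $\sigma$ is an mgu of a subset of the equations unified by $\sigma^\prime$, so there is a substitution $\eta$ with $\sigma^\prime = \sigma\eta$; in particular, $\Dep(v\sigma^\prime) \geq \Dep(v\sigma)$ for every variable $v$. If the top variable $x$ pairs a constant $c$ in some side premise $B_i$ with $i \leq m$, then $x\sigma = c$, so $\Dep(x\sigma^\prime) = \Dep(x\sigma) = 0$. Because $x$ is of maximal depth among the variables of $\lnot A_1 \lor \ldots \lor \lnot A_n$ under $\sigma^\prime$, every such variable $y$ satisfies $\Dep(y\sigma^\prime) = 0$. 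By the definition of top variables, this makes every variable appearing in the selected literals a top variable, so each selected literal containing any variable is a top-variable literal. Consequently $\lnot A_{m+1}, \ldots, \lnot A_n$ are ground, and the variables of the eligible literals of the main premise are exactly those occurring in $\lnot A_1, \ldots, \lnot A_m$.

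Next I would show that each side premise $C_j$ with $j \leq m$ is ground. Suppose for contradiction that $C_j$ is non-ground. By Algorithm~\ref{algorithm:refine}, the only way $C_j$ can furnish a positive strictly $\succ_{lpo}$-maximal eligible literal $B_j$, as required by the T-Res rule, is via Lines~5--6, so $B_j$ is a non-ground compound-term literal. By the covering property, $B_j$ then contains a non-ground flat compound term $f(\overline{z})$ at some argument position. Since the main premise is flat, the paired position in $A_j$ is either a constant (ruled out by unifiability with $f(\overline{z})\sigma$) or a variable $v$. By Lemma~\ref{lem:pres_gnd_pairing}, $v$ must be a top variable, and $v\sigma = f(\overline{z})\sigma$ has depth at least $1$. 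Combined with $\Dep(v\sigma^\prime) \geq \Dep(v\sigma) \geq 1$, this contradicts $\Dep(v\sigma^\prime) = 0$. Hence every $C_j$ with $j \leq m$, and thus every $B_j$, is ground, and so the eligible literals of the side premises contain no variables at all.

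The final step is then immediate: for every variable $v$ occurring in a top-variable literal $\lnot A_j$ ($j \leq m$), $v$ pairs either a constant or a ground compound term in the ground $B_j$, so $v\sigma$ is ground; together with $\Dep(v\sigma) = 0$, this forces $v\sigma$ to be a constant. I expect the main subtlety to lie in the step relating $\sigma$ and $\sigma^\prime$ through $\sigma^\prime = \sigma\eta$ and carefully propagating depth inequalities, since the top-variable definition is phrased in terms of $\sigma^\prime$ whereas the ground-substitution conclusion must be extracted for~$\sigma$.
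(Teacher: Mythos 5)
Your proposal is correct and follows essentially the same route as the paper: from $x$ pairing a constant one gets $\Dep(x\sigma^\prime)=0$, hence every variable of the selected literals has depth $0$ under $\sigma^\prime$, so all variables are top variables and the unifier must instantiate everything with constants. You are in fact more careful than the paper's own (quite terse) argument, explicitly using $\sigma^\prime=\sigma\eta$ to transfer depth bounds from $\sigma^\prime$ to $\sigma$ and explicitly ruling out non-ground side premises — details the paper leaves implicit.
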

\begin{proof}
Suppose $\sigma^\prime$ is the mgu of the \textbf{S-Res} inference that ensures this application of the \textbf{T-Res} rule. By the definition of the top-variable technique, for any non-top variable $y$ in the main premise, $\Dep(x\sigma^\prime) > \Dep(y\sigma^\prime)$. The fact that $x$ pairs a constant indicates that $\Dep(x\sigma^\prime) = 0$, therefore $\Dep(y\sigma^\prime) = 0$. Then, all variables in the main premise are top variables and they pair either constants or variables. By Lemma \ref{lem:pres_gnd_pairing}, these top variables pair constants. Hence, $\sigma^\prime$ is a ground substitution that substitutes all variables with only constants.	
\end{proof}

Next, we formally show that the \textbf{T-Res} rule prevents term depth increase in the \textbf{T-Res} resolvents of a non-ground flat clause and \textsf{LG} clauses.

\begin{lem}
\label{lem:pres_gnd_deep}
In an application of the \textbf{T-Res} rule to \textsf{LG} clauses as the side premises and a non-ground flat clause as the main premise, with Condition 2b.~of the \textbf{T-Res} rule satisfied, the \textbf{T-Res} resolvent is no deeper than at least one of its premises.
\end{lem}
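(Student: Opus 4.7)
The plan is to bound the depth of each component of the \textbf{T-Res} resolvent $(D_1 \lor \ldots \lor D_m \lor \lnot A_{m+1} \lor \ldots \lor \lnot A_n \lor D)\sigma$ separately, and then show the overall depth is bounded by the depth of some side premise. Since the main premise is non-ground and flat, it has depth at most $0$, so I expect the witnessing premise to be one of the $C_j = B_j \lor D_j$ with $j \leq m$.

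First I would split the resolvent into three kinds of components: (i) the instantiated remainders $D_i\sigma$ of the side premises for $i \leq m$; (ii) the instantiated non-top-variable literals $\lnot A_j\sigma$ for $j > m$ coming from the main premise; and (iii) the instantiated residual $D\sigma$ of the main premise. For component (ii), since a non-top-variable literal by definition contains no top variables, all its variables are non-top; Corollary \ref{col:tres_uni}(1) then says $\sigma$ maps them to constants or variables, and since $\lnot A_j$ is flat, $\lnot A_j\sigma$ is flat too, of depth $0$. For component (i), Lemma \ref{lem:eligible_covering} yields $\Var(D_i) \subseteq \Var(B_i)$, while Corollary \ref{col:tres_uni}(2) says $\sigma$ maps variables of $B_i$ to constants or variables. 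Because $D_i$ is simple (as a subclause of the simple \textsf{LG} clause $C_i$), these replacements do not stack inside a compound term, and I would conclude $\Dep(D_i\sigma) \leq \Dep(D_i) \leq \Dep(C_i)$.

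Component (iii) is where the top-variable technique actually does its work, and I expect this to be the main obstacle. The flat residual $D$ may contain a top variable $x$, and by Corollary \ref{col:tres_uni}(1) the substitution $x\sigma$ is either a constant or (modulo variable renaming and grounding) the compound term that $x$ pairs in some $B_j$ with $j \leq m$; the critical point is that the \emph{depth} of $x\sigma$ is bounded by the depth of that compound term, which is at most $\Dep(B_j) \leq \Dep(C_j)$. For non-top variables occurring in $D$, the substitution is again by constants or variables and does not raise the depth. So $\Dep(D\sigma) \leq \max_{j \leq m}\Dep(C_j)$.

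Combining the three bounds, $\Dep(\text{resolvent}) \leq \max\{\Dep(C_1), \ldots, \Dep(C_m)\}$, which is witnessed by at least one of the premises, giving the statement. The argument for (iii) is the subtle one: were $\sigma$ allowed to substitute a top variable by the \emph{full} simultaneous \textbf{S-Res} value $x\sigma'$ (which may be strictly deeper than any pairing compound term in a single $B_j$), the resolvent depth would generally exceed that of every premise. This is precisely why the \textbf{T-Res} rule computes $\sigma$ from the top-variable subset only, and why the pairing analysis in Lemma \ref{lem:pres_gnd_pairing} together with Corollary \ref{col:tres_uni} is indispensable.
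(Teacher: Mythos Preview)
Your proposal is correct and follows the same approach as the paper, which simply cites parts 1 and 2 of Corollary~\ref{col:tres_uni}; you have unpacked that citation by decomposing the resolvent into the three natural components and bounding each one explicitly using the corollary together with Lemma~\ref{lem:eligible_covering}. The extra commentary on why the top-variable restriction on $\sigma$ (as opposed to the full \textbf{S-Res} mgu $\sigma'$) is essential is accurate and adds useful intuition beyond the paper's one-line proof.
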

\begin{proof}
By 1.--2.~of Corollary \ref{col:tres_uni}.	
\end{proof}

Finally, we investigate the applications of the \textbf{Factor} and \textbf{T-Res} rules to \textsf{LG} clauses, starting with the \textbf{Factor} rule.

\begin{lem}
\label{lem:fact}
In the application of the \textbf{Factor} rule in the \textbf{T-Res} system to \textsf{LG} clauses, the factors are \textsf{LG} clauses.
\end{lem}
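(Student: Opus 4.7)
The plan is to proceed by case analysis on whether the premise $C' = C \lor A_1 \lor A_2$ is ground. When $C'$ is ground, the mgu $\sigma$ is empty and $A_1 = A_2$, so the factor $C \lor A_1$ is a ground sub-multiset of $C'$ and therefore satisfies Definition \ref{def:lgc} directly.

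For the non-ground case, I first combine the Factor rule's hypothesis that nothing is selected in $C'$ with Algorithm \ref{algorithm:refine} to place $C'$ in the Lines 5--6 branch, so $C'$ contains positive compound-term literals. By a-priori eligibility (Lemma \ref{lem:apri}), $A_1$ is already maximal at the symbolic level, and Lemma \ref{lem:com_large} then forces $A_1$ to be a positive compound-term atom with $\Var(A_1) = \Var(C')$ by covering. A short occur-check argument, using that every variable of $A_2$ lies in $\Var(A_1) = \Var(C')$, rules out $A_2$ being flat (it would force a variable of $A_2$ to be bound to a compound term in which it already occurs, or a compound term of $A_1$ to be unified with a constant), so both $A_1$ and $A_2$ are compound-term atoms. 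Lemma \ref{lem:mat_comp_lit} then yields that compound terms in $A_1$ pair only compound terms in $A_2$, and simpleness of both atoms forces $\sigma$ to substitute variables only by variables or constants.

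It then remains to verify the four conditions of Definition \ref{def:lgc} for $(C \lor A_1)\sigma$. Simpleness follows from Lemma \ref{lem:simple} on $A_1\sigma$ together with the observation that $\sigma$ introduces no nested terms in $C\sigma$. Covering follows because $\Var(C \lor A_1) = \Var(C')$ (since $A_1$ is covering), so every compound term $t$ in $C \lor A_1$ satisfies $\Var(t\sigma) = \Var((C \lor A_1)\sigma)$. Strong compatibility is preserved because $\sigma$ applies termwise, keeping identical argument lists identical. For the loose guard, I invoke Lemma \ref{lem:guard_tinf}: the loose guard $\mathbb{G}$ of $C'$ lies entirely in $C$ (since $A_1$ and $A_2$ are positive atoms), and the lemma gives that $\mathbb{G}\sigma$ consists of flat literals with $\Var(\mathbb{G}\sigma) = \Var(A_1\sigma) = \Var((C \lor A_1)\sigma)$; the pairwise co-occurrence condition transfers from $\mathbb{G}$ to $\mathbb{G}\sigma$ because any two distinct variables of $(C \lor A_1)\sigma$ lift back through $\sigma$ to distinct variables of $C'$ that already co-occur in some literal of $\mathbb{G}$. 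I expect the main obstacle to be the explicit occur-check argument justifying that $A_2$ must be a compound-term atom, since it has to leverage simultaneously the covering, simpleness, and compatibility properties of the \textsf{LG} class.
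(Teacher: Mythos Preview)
Your proposal is correct and follows essentially the same route as the paper: case split on groundness, use Algorithm~\ref{algorithm:refine} and the ``nothing selected'' hypothesis to land in Lines~5--6, invoke Lemma~\ref{lem:com_large} to make $A_1$ a compound-term atom, run the occur-check/clash argument to force $A_2$ to be compound-term as well, and then use Lemma~\ref{lem:mat_comp_lit} to conclude that $\sigma$ only substitutes variables by variables or constants. The one difference is in the final step: the paper packages the verification of simpleness, covering, strong compatibility, and the loose guard into a single appeal to Lemma~\ref{lem:rem_lit} (removing the compound-term literal $A_2$ and applying such a $\sigma$ preserves the \textsf{LG} property), whereas you unfold that verification by hand using Lemmas~\ref{lem:simple} and~\ref{lem:guard_tinf} plus direct arguments. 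Both are fine; the paper's version is shorter, yours is more self-contained.
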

\begin{proof}
Assuming a-priori eligibility, the \textbf{Factor} rule simplifies to:
\begin{displaymath}
\prftree[l]{\textbf{Factor}: \quad}
  {C \lor A_1 \lor A_2}
  {(C \lor A_1)\sigma}
\end{displaymath}
if the following conditions are satisfied.
\begin{enumerate}
\item Nothing is selected in $C \lor A_1 \lor A_2$.
\item $A_1$ is $\succ_{lpo}$-maximal with respect to $C$.
\item $\sigma = \mgu(A_1 \doteq A_2)$.
\end{enumerate}

Suppose $C^\prime = C \lor A_1 \lor A_2$ and the premise $C^\prime$ is an \textsf{LG} clause. By the definition of the \textbf{Factor} rule, $A_1$ is the eligible literal and it is positive. Since Lines 3--4 and Line 7 in Algorithm \ref{algorithm:refine} select negative literals of \textsf{LG} clauses as the eligible or the top-variable literals, either Lines 1--2 or Lines 5--6 in Algorithm \ref{algorithm:refine} are applicable to~$C^\prime$. We distinguish these cases: 

Suppose $C^\prime$ satisfies Lines 1--2. Then the premise $C^\prime$ is a ground \textsf{LG} clause, and it is immediate that the factor $(C \lor A_1)\sigma$ is a ground \textsf{LG} clause.

Suppose $C^\prime$ satisfies Lines 5--6. Then $C^\prime$ is a non-ground \textsf{LG} clause containing positive compound-term literals, but no negative compound-term literals. By Lemma~\ref{lem:com_large} and the fact that $C^\prime$ is covering, $A_1$ is a compound-term literal. By Remark~\ref{re:ground} and the fact that $C^\prime$ is not ground, $A_1$ is a non-ground compound-term literal. By the covering property of $C^\prime$, $\Var(A_2) \subseteq \Var(A_1)$. We prove that $A_2$ is a compound-term literal by contradiction. Suppose $A_2$ is a flat literal. Because $\Var(A_2) \subseteq \Var(A_1)$ and $A_1$ is a compound-term literal, a compound term $t$ in~$A_1$ pairs either a variable that occurs in~$t$, or a constant. Due to occur-check failure, in neither case $A_1$ and $A_2$ are unifiable, which refutes the fact that $A_1$ and $A_2$ are unifiable. Hence, $A_2$ is a compound-term literal. The fact that $C^\prime$ is covering implies that $\Var(A_2) = \Var(A_1)$. By Lemma~\ref{lem:mat_comp_lit} and the fact that $C^\prime$ is covering, the mgu~$\sigma$ substitutes variables with either variables or constants. By Lemma~\ref{lem:rem_lit} and since $C^\prime$ is a compound-term \textsf{LG} clause, the factor $(C \lor A_1)\sigma$ is an \textsf{LG} clause.
\end{proof}

\begin{lem}
\label{lem:res} 
In the application of the \textbf{T-Res} rule to \textsf{LG} clauses, the resolvents are \textsf{LG} clauses.
\end{lem}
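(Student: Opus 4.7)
The plan is to mirror the structure of the proof of Lemma \ref{lem:fact}, performing a case analysis on which branch of Algorithm \ref{algorithm:refine} determines the eligible or top-variable literals in the main premise of the \textbf{T-Res} inference. Assuming a-priori eligibility, there are three cases: the main premise is ground (Lines 1--2); it is a non-ground compound-term \textsf{LG} clause whose eligible literal is a negative compound-term literal selected by $\SelectNC$ (Lines 3--4); or it is a non-ground flat \textsf{LG} clause whose literals are chosen by the $\TRes$ subroutine (Line 7). In each case I would verify the four defining properties of an \textsf{LG} clause for the resolvent, namely simpleness, covering, strong compatibility, and either groundness or the existence of a loose guard.

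The first three properties can be discharged uniformly using the preceding lemmas. Simpleness of the eligible literals after unification is given by Lemma \ref{lem:simple}, and for the remaining literals by the first part of Lemma \ref{lem:conclusion}. The covering property and the required variable-inclusion facts follow from the second part of Lemma \ref{lem:conclusion} combined with the covering property of the premises. Strong compatibility is transported through the mgu $\sigma$ by Lemma \ref{lem:com_term}, since any two compound terms in the resolvent either originate from a single premise (already compatible) or are residues of terms that were forced to be compatible by the unification.

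The principal work concerns the loose-guard condition. If the main premise is ground, $\sigma$ must be ground, so every side-premise instance is ground and the resolvent is ground. If the main premise is a non-ground compound-term \textsf{LG} clause with $n = 1$, then by Lemma \ref{lem:mat_comp_lit} the mgu $\sigma$ replaces variables only by constants or variables; Lemma \ref{lem:rem_lit} then converts the main premise with $\lnot A_1$ removed into an \textsf{LG} clause, to which the side-premise body can be appended while preserving the loose-guard structure, since the side premise's loose guard also maps, by Lemma \ref{lem:guard_tinf}, to flat literals. For the non-ground flat main premise case, Lemma \ref{lem:tres_cons} handles the sub-case in which some top variable pairs a constant by forcing $\sigma$ to be ground and producing a ground resolvent; otherwise Corollary \ref{col:tres_uni} ensures that top variables are substituted by compound terms coming from the side premises (modulo renaming and grounding), while non-top variables and variables of the side-premise eligible literals go to constants or variables. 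The loose guard of the resolvent is then assembled from (i)~the flat negative literals of the main premise other than $\lnot A_1, \ldots, \lnot A_m$ after applying $\sigma$, and (ii)~the loose guards of the side premises after $\sigma$, which remain flat by Lemma \ref{lem:guard_tinf}.

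The main obstacle lies in this final sub-case: I must verify that every pair of distinct variables of the resolvent co-occurs in some literal of this assembled guard, as required by Definition \ref{def:lgc}. This hinges on the fact that, because only top-variable literals are resolved away, the loose guard of the main premise still pairwise couples all non-top variables after $\sigma$, while the variables newly introduced from the side-premise bodies are pairwise linked through the side premises' own loose guards as transported by $\sigma$. The depth and width control provided by Lemma \ref{lem:pres_gnd_deep} and Corollary \ref{col:tres_uni} keeps $\sigma$ sufficiently well-behaved for this combinatorial check to succeed, after which the resolvent satisfies all conditions of Definition \ref{def:lgc}.
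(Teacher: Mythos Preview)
Your case split and the handling of simpleness, covering, and strong compatibility via Lemmas \ref{lem:simple}, \ref{lem:conclusion} and \ref{lem:com_term} match the paper's argument closely. The Lines 1--2 and Lines 3--4 cases are also essentially as in the paper (in the latter case the paper takes the loose guard of the resolvent from the side premise rather than the main premise, but your route via Lemma \ref{lem:rem_lit} works too, since in this case both premises have the same variable set after~$\sigma$).

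The genuine gap is in the Line 7 case, in the verification of the loose-guard condition. Your proposed guard is assembled from (i) the surviving negative literals $(\lnot A_{m+1}\lor\ldots\lor\lnot A_n)\sigma$ of the main premise and (ii) the instantiated loose guards of the side premises, and you justify it by saying that (i) ``still pairwise couples all non-top variables after $\sigma$'' while (ii) links the ``newly introduced'' variables. This dichotomy is not sound: a non-top variable $y$ of the main premise may co-occur with another non-top variable $z$ \emph{only} in a top-variable literal $\lnot A_j$ with $j\le m$, so after that literal is resolved away, (i) need not couple $y\sigma$ and $z\sigma$. Moreover the variables of $R$ do not separate cleanly into ``non-top images'' and ``side-premise imports'': the arguments of $x\sigma$ for a top variable $x$ contribute variables that do not fit your description, and your sketch does not explain why such a variable is coupled with, say, $y\sigma$.

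What actually makes the argument go through is the key equality $\Var(R)=\Var(x\sigma)$ for any top variable $x$ (this uses Lemma \ref{lem:eligible_covering} and the covering property of the side premises). Once this is established, the instantiated loose guard $\mathbb{G}\sigma$ of a \emph{single} side premise already has $\Var(\mathbb{G}\sigma)=\Var(x\sigma)=\Var(R)$ and inherits the pairwise co-occurrence property, so it alone serves as a loose guard of $R$; part (i) is not needed at all. This is precisely how the paper proceeds. Your assembled guard would in fact work, but only because it contains such a $\mathbb{G}\sigma$---not for the reason you give---so the ``combinatorial check'' you defer is exactly the missing step, and it requires the $\Var(R)=\Var(x\sigma)$ insight to carry out.
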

\begin{proof}
We consider \textbf{T-Res} inferences by distinguishing all possible cases of the main premise. Suppose an \textsf{LG} clause $C = \lnot A_1 \lor D$ is the \textbf{T-Res} main premise. In Algorithm~\ref{algorithm:refine}, $C$ satisfies either Lines 1--4 or Line 7. 

First, we consider the cases where the main premise satisfies either Lines 1--2 or Lines 3--4 in Algorithm \ref{algorithm:refine}. In these cases, the eligible literal in the main premise $C$ is either selected or is maximal with respect to $C$. Then Condition 2a.~of the \textbf{T-Res} rule is applied to the main premise and the \textbf{T-Res} inference is reduced to a binary \textbf{T-Res} inference without using the top-variable technique. W.l.o.g.,~suppose in a \textbf{T-Res} inference, an \textsf{LG} clause $C_1 = B_1 \lor D_1$ is the side premise and the resolvent $R = (D_1 \lor D)\sigma$ where $\sigma$ the mgu of $B_1$ and $A_1$. Further, suppose $C$ satisfies either Lines 1--2 or Lines 3--4 in Algorithm~\ref{algorithm:refine}. Since the eligible literal in~$C_1$ is positive, $C_1$ satisfies either Lines 1--2 or Lines~5--6 in Algorithm \ref{algorithm:refine}.

Suppose $C$ satisfies Lines 1--2. Then $C$ is a ground \textsf{LG} clause. We distinguish the cases of $\lnot A_1$. 

1) Suppose $\lnot A_1$ is a ground flat literal. The fact that no selection function in the \textbf{T-Res} system selects negative ground literals implies that the eligibility of~$\lnot A_1$, because~$\lnot A_1$ is maximal with respect to $C$, therefore $C$ is a flat clause. The facts that $A_1$ and $B_1$ are unifiable and $A_1$ is a flat ground literal imply that $B_1$ is a flat literal. The fact that $B_1$ is strictly $\succ_{lpo}$-maximal with respect to $C_1$ implies that $C_1$ is a flat clause. Since the eligible literal $B_1$ in the flat \textsf{LG} clause $C_1$ is a flat literal, $C_1$ is a ground clause satisfying Lines 1--2 in Algorithm \ref{algorithm:refine}. Since both $C$ and~$C_2$ are flat ground clauses, the resolvent $R$ is a flat ground clause. Hence, $R$ is an \textsf{LG} clause. 

2) Next, suppose $\lnot A_1$ is a ground compound-term literal. By Remark~\ref{re:ground}, $C$ is a ground compound-term \textsf{LG} clause. Since $C_1$ is an \textsf{LG} clause, $B_1$ is either a compound-term literal or a flat literal. Since $B_1$ is maximal with respect to $C_1$, the assumption that $B_1$ is flat implies that $B_1$ is ground, otherwise, negative literals in $C_1$ will be selected. However, if $B_1$ is ground, the unification between $A_1$ and~$B_1$ is impossible due to a clash.
Then, $B_1$ is a compound-term literal. Suppose $B_1$ is ground. By Remark~\ref{re:ground},~$C_1$ is a ground compound-term \textsf{LG} clause. The fact that $C$ and $C_1$ are both ground compound-term \textsf{LG} clauses implies that applying the \textbf{T-Res} rule to $C$ and $C_1$ derives a ground \textsf{LG} clause. Next, suppose $B_1$ is a non-ground compound-term literal. By Lemma \ref{lem:mat_comp_lit} and since $A_1$ and $B_1$ are unifiable by the mgu $\sigma$, the mgu $\sigma$ substitutes the variables in $B_1$ with constants. By Lemma \ref{lem:eligible_covering} and because $B_1$ is the eligible literal in $C_1$, $\sigma$ substitutes all variables in $C_1$ with constants, therefore $C_1\sigma$ is a ground compound-term \textsf{LG} clause. Since $C$ is ground, applying the \textbf{T-Res} rule to $C$ and $C_1$ derives the same resolvent as applying the \textbf{T-Res} rule to $C$ and $C_1\sigma$. The fact that $C$ and $C_1\sigma$ are ground compound-term \textsf{LG} clauses implies that applying the \textbf{T-Res} rule to $C$ and $C_1\sigma$ derives a ground \textsf{LG} clause. Hence, the resolvent $R$ is an \textsf{LG} clause. 

Suppose $C$ satisfies Lines 3--4. Then $C$ contains negative compound-term literals. By Remark~\ref{re:ground} and since $C$ is not ground, the literal $\lnot A_1$ contains non-ground compound terms, and therefore $\lnot A_1$ is selected by the $\SelectNC$ function. We now distinguish the possible cases of $B_1$. 

i) Suppose $B_1$ is a flat literal. Similar to the proof in~2) that $B_1$ cannot be a flat literal, the assumption that $B_1$ is flat implies that $B_1$ is ground. This makes the unification between $A_1$ and~$B_1$ impossible due to a clash. Hence, $B_1$ cannot be flat.

ii) Suppose $B_1$ is a compound-term literal. We distinguish two cases of $B_1$. 

ii)-i) First, consider $B_1$ as a ground compound-term literal. By Lemma \ref{lem:mat_comp_lit} and the fact that $A_1$ and $B_1$ are unifiable, the mgu $\sigma$ substitutes all variables in $A_1$ with constants. By the fact that $A_1$ is a compound-term literal of $C_1$ and the covering property of the \textsf{LG} clauses, $\sigma$ substitutes all variables in $C_1$ with constants, therefore~$C_1\sigma$ is a ground compound-term \textsf{LG} clause. As $C$ is ground, applying the \textbf{T-Res} rule to $C$ and~$C_1$ derives the same resolvent as the one when applying the \textbf{T-Res} rule to $C$ and~$C_1\sigma$. The fact that $C$ and~$C_1\sigma$ are ground compound-term \textsf{LG} clauses implies that applying the \textbf{T-Res} rule to $C$ and $C_1\sigma$ derives a ground \textsf{LG} clause. Hence, the resolvent $R$ is an \textsf{LG} clause. 

ii)-ii) Next, suppose~$B_1$ is a non-ground compound-term literal. By Lemma \ref{lem:mat_comp_lit} and the fact that $A_1$ and $B_1$ are two unifiable simple compound-term literals, the $\sigma$ substitutes the variables in $A_1$ and~$B_1$ with variables or constants. By Lemma \ref{lem:eligible_covering},~$\sigma$~substitutes the variables in $C$ and~$C_1$ with variables or constants. If the mgu $\sigma$ is a ground substitution, then both $C\sigma$ and~$C_1\sigma$ are ground \textsf{LG} clauses, therefore applying the \textbf{T-Res} rule to $C\sigma$ and $C_1\sigma$ derives a ground \textsf{LG} clause. Suppose $\sigma$ is a non-ground substitution. First, we prove that there is a loose guard in the resolvent $R$. Suppose~$\mathbb{G}$ is a set of flat literals that act as a loose guard of $C_1$. By Lemma~\ref{lem:guard_tinf} and because $A_1$ and $B_1$ are covering, simple and unifiable by the mgu $\sigma$, $\Var(A_1\sigma) = \Var(\mathbb{G}\sigma)$. By Lemma~\ref{lem:eligible_covering}, $\Var(A_1\sigma) = \Var(C\sigma)$ and $\Var(B_1\sigma) = \Var(C_1\sigma)$, therefore, $\Var(\mathbb{G}\sigma) = \Var(C_1\sigma) = \Var(C\sigma)$. Then $\Var(\mathbb{G}\sigma) = \Var(R)$. By the variable co-occurrence property of \textsf{LG} clauses and because $\mathbb{G}$ is a loose guard of $C_1$, each pair of variables in~$C_1$ co-occurs in a literal of $\mathbb{G}$. Since $\Var(\mathbb{G}\sigma) = \Var(C_1\sigma) = \Var(R)$ and $\sigma$ substitutes the variables in $C_1$ and $C$ with variables and constants, each pair of variables in~$R$ co-occurs in a literal of $\mathbb{G}\sigma$ and all literals in $\mathbb{G}\sigma$ are flat. Hence, $\mathbb{G}\sigma$ is a loose guard of the resolvent $R$. Next, we prove that $R$ is simple. Suppose $L$ is a literal in either $C$ or $C_1$. By Lemma~\ref{lem:eligible_covering}, either $\Var(L) \subseteq \Var(A_1)$ or $\Var(L) \subseteq \Var(B_1)$. Because $\sigma$ substitutes the variables in either $A_1$ or $B_1$ with either variables or constants, $A_1\sigma$ and~$B_1\sigma$ are simple. By 1.~in Lemma~\ref{lem:conclusion},~$L\sigma$ is simple. Hence, the resolvent $R$ is simple. Next, we prove that $R$ is covering. Because the mgu $\sigma$ substitutes the variables in $C_1$ and $C$ with variables and constants, the compound terms in $R$ come from compound terms in either $C_1$ or $C$. Suppose $t$ is a compound term in either $C$ or~$C_1$. By Remark~\ref{re:ground} and since both $C$ and $C_1$ are non-ground, $t$ is a non-ground compound term literal. By Lemma~\ref{lem:conclusion} and the covering property of \textsf{LG} clauses, either $\Var(t\sigma) = \Var(A_1\sigma)$ or $\Var(t\sigma) = \Var(B_1\sigma)$. The fact that either $\Var(A_1\sigma) = \Var(R)$ or $\Var(B_1\sigma) = \Var(R)$ implies that $\Var(t\sigma) = \Var(R)$, therefore the resolvent $R$ is covering. Finally, we prove that $R$ is strongly compatible. By the fact that $\sigma$ substitutes the variables in $C$ and~$C_1$ with variables and constants, the compound terms in the resolvent $R$ are inherited from compound terms that exist in $C$ or $C_1$. W.l.o.g.~suppose $s$ and $t$ are respectively compound terms in $A_1$ and $B_1$, and $s$ pairs $t$. Further, suppose $s_1$ is a compound term in $C$ that is distinct from $s$, and $t_1$ is a compound term in~$C_1$ that is distinct from~$t$. By~3. of Lemma \ref{lem:com_term} and the fact that $s$ and $t$ are unifiable by the mgu $\sigma$, $s_1\sigma$ is compatible with~$t_1\sigma$. Then all compound terms in the resolvent $R$ are compatible. Hence,~$R$ is strongly compatible. Because $R$ is simple, covering, strongly compatible and~$R$ contains a loose guard, $R$ is an \textsf{LG} clause.

Next, we consider the case when a \textbf{T-Res} main premise satisfies Line 7. This means that the premise is a non-ground flat \textsf{LG} clause. These \textbf{T-Res} inferences happen when the main premise satisfies Condition~2b.~and hence the top-variable technique is applied. Assume that in an \textbf{T-Res} inference, \textsf{LG} clauses $C_1 = B_1 \lor D_1, \ldots, C_n = B_n \lor D_n$ are the side premises, an \textsf{LG} clause $C = \lnot A_1 \lor \ldots \lor \lnot A_{m} \lor \ldots \lor \lnot A_n \lor D$ is the main premise with $\lnot A_1 \lor \ldots \lor \lnot A_{m}$ the top-variable subclause and the resolvent is $R = (D_1 \lor \ldots \lor D_m \lor \lnot A_{m+1} \lor \ldots \lor \lnot A_{n} \lor D)\sigma$, where $\sigma$ is the the mgu such that $\sigma = \mgu(A_1 \doteq B_1, \ldots, A_m \doteq B_m)$ where $1 \leq m \leq n$. Suppose $C$ is a non-ground flat \textsf{LG} clause. By Corollary~\ref{col:tres_uni}, the mgu $\sigma$ substitutes the top variables in $C$ with constants or compound terms, it substitutes non-top variables in $C$ with constants or variables and it substitutes all variables in $C_1, \ldots, C_m$ with constants or variables. We distinguish two possible cases of the mgu $\sigma$: 

1.~Suppose~$\sigma$ substitutes a top variable with a constant. By Lemma \ref{lem:tres_cons}, all variables in the top-variable subclause $\lnot A_1 \lor \ldots \lor \lnot A_{m}$ are substituted with constants. Hence, $B_1, \ldots, B_n$ are flat literals. Since the strictly $\succ_{lpo}$-maximal literal $B_i$ with respect to $C_i$ is flat, $C_i$ is a flat ground clause, for each $i$ such that $1 \leq i \leq n$. By Lemma~\ref{lem:eligible_covering} and since $C$ is an \textsf{LG} clause, $\sigma$ substitutes all variables in $C$ with constants. Applying the \textbf{T-Res} rule to flat ground \textsf{LG} clauses $C_1, \ldots, C_m$ and $C$ derives the same conclusions as applying the \textbf{T-Res} rule to $C_1, \ldots, C_m$ and $C\sigma$. Since applying the \textbf{T-Res} rule to $C_1, \ldots, C_m$ and $C\sigma$ derive a flat ground clause, applying the \textbf{T-Res} rule to $C_1, \ldots, C_m$ and $C$ also derives flat ground clauses. Hence, the resolvent~$R$ is an \textsf{LG} clause.

2.~Next, suppose the mgu $\sigma$ substitutes no top variables with constants. First, we establish intermediate results of unification on top variables. Suppose $x$ is a top variable and $\lnot A_t$ is the literal in $\lnot A_1, \ldots \lnot A_m$ where $x$ occurs. Further, suppose $B_t$ is a literal in the side premises satisfying $B_t\sigma \doteq A_t\sigma$. W.l.o.g.~suppose $C_t$ is a side premise in $C_1, \ldots, C_m$ and $C_t = B_t \lor D_t$. By the assumption that the mgu $\sigma$ substitutes no top variables with constants and $B_t$ pairs the top-variable literal $A_t$, $B_t$ is a compound-term literal. Suppose~$t$ is the compound term in $B_t$ that pairs $x$. The fact that $B_t\sigma \doteq A_t\sigma$ implies that $\Var(B_t\sigma) = \Var(A_t\sigma)$. By the covering property of \textsf{LG} clauses and the fact that $t$ is a compound term, $\Var(t) = \Var(B_t)$, therefore $\Var(t\sigma) = \Var(B_t\sigma)$. The fact that $x$ pairs $t$ implies that $\Var(x\sigma) = \Var(t\sigma)$, therefore $\Var(x\sigma) = \Var(B_t\sigma)$. Since $\Var(B_t\sigma) = \Var(A_t\sigma)$, $\Var(x\sigma) = \Var(A_t\sigma)$. By the variable co-occurrence property of \textsf{LG} clauses, $x$ co-occurs with all other variables in~$C$. Because $x$ is a top-variable, in the literals of $\lnot A_1, \ldots, \lnot A_m$, $x$ co-occurs with all other variables in $C$. Suppose $y$ is a variable in $\lnot A_1, \ldots, \lnot A_m$, and w.l.o.g.~suppose~$x$ and $y$ co-occurs in $A_1$. The fact that $\Var(x\sigma) = \Var(A_t\sigma)$ implies that $\Var(x\sigma) = \Var(A_1\sigma)$, therefore $\Var(y\sigma) \subseteq \Var(x\sigma)$. Hence for each variable $y$ in $\lnot A_1, \ldots, \lnot A_m$, $\Var(y\sigma) \subseteq \Var(x\sigma)$. Then, for each $A_i$ in $A_1, \ldots, A_m$, $\Var(A_i\sigma) = \Var(x\sigma)$. By the covering property of the \textsf{LG} clauses, for each~$B_i$ in $B_1, \ldots, B_m$, $\Var(B_i) = \Var(D_i)$. Since $A_i$ and~$B_i$ are unifiable using the mgu~$\sigma$, $\Var(A_i\sigma) = \Var(B_i\sigma)$ for each~$i$ such that $1 \leq i \leq m$. Then $\Var(x\sigma) = \Var(B_i\sigma)$, and therefore $\Var(x\sigma) = \Var(D_i\sigma)$ for each~$i$ such that $1 \leq i \leq m$. By Lemma \ref{lem:eligible_covering}, $\Var(\lnot A_1 \lor \ldots \lor \lnot A_m) = \Var(C)$. Hence, $\Var(x\sigma) = \Var((\lnot A_{m+1} \lor \ldots \lor \lnot A_{n} \lor D)\sigma)$. Then, $\Var(x\sigma) = \Var(t\sigma) = \Var(R)$.

Following 2.~we also need to prove that the resolvent $R$ contains a loose guard. Suppose~$C_i = B_i \lor D_i$ is a side premise in $C_1, \ldots, C_m$, $t$ is a compound term in $B_i$, $x$ is the top-variable that~$t$ pairs. Further, suppose $\mathbb{G}$ is a set of negative flat literals acting as a loose guard of~$C_i$. By 2.~of Corollary~\ref{col:tres_uni}, all literals in $\mathbb{G}$ are flat. By the definition of \textsf{LG} clauses, $\Var(\mathbb{G}) = \Var(t)$. By the result established in the previous paragraph and as $\Var(\mathbb{G}\sigma) = \Var(t\sigma)$, $\Var(\mathbb{G}\sigma) = \Var(R)$. By the variable co-occurrence property of \textsf{LG} clauses, each pair of variables in $\mathbb{G}\sigma$ co-occurs in a literal of $\mathbb{G}\sigma$, therefore each pair of variables in $\mathbb{G}\sigma$ co-occurs in a literal of $R$. The fact that all literals in~$\mathbb{G}\sigma$~are flat implies that $\mathbb{G}\sigma$ act as a loose guard of the resolvent~$R$. Next, we prove that $R$ is covering. The fact that $C$ is a flat clause implies that all compound terms in~$R$~come from the side premises. Suppose~$C_i = B_i \lor D_i$ is a side premise in $C_1, \ldots, C_m$ and $t$ is a compound term in $B_i$. W.l.o.g.~further suppose $s$ is a compound term in $D_i$. By the covering property of \textsf{LG} clauses, $\Var(s) = \Var(t)$ and $\Var(s\sigma) = \Var(t\sigma)$ with $\sigma$~as the mgu. By the result established in the previous paragraph, $\Var(s\sigma) = \Var(R)$. Then, the resolvent $R$ is covering. Next, we prove that~$R$ is strongly compatible. Again, we consider compound terms in the side premises since all compound terms in $R$ come from the side premises. Suppose $t_1$ and $t_2$ are two flat compound terms in $D_1, \ldots, D_m$. We prove that $R$ is strongly compatible by showing that $t_1\sigma$ and $t_2\sigma$ are compatible. Suppose $C_1 = B_1 \lor D_1$ and $C_2 = B_2 \lor D_2$ are two side premises in $C_1, \ldots, C_m$ and~w.l.o.g.~suppose $t_1$ and $t_2$ occur in $D_1$ and~$D_2$, respectively. By the assumption that the mgu~$\sigma$ substitutes no top variables with constants and the fact that $B_1$ and $B_2$ pair the top-variable literals, $B_1$ and $B_2$ are compound-term literals. W.l.o.g.~suppose $s_1$ and~$s_2$ are two flat compound terms in~$B_1$ and $B_2$, respectively. Further suppose $s_1$ and $s_2$ pair top variables $x_1$ and $x_2$, respectively. By the variable co-occurrence property of \textsf{LG} clauses, $x_1$ and $x_2$ co-occur in at least one literal in $\lnot A_1, \ldots, \lnot A_m$. W.l.o.g.~suppose $\lnot A_3$ is a literal where`$x_1$ and~$x_2$ co-occur. Suppose $C_3 = B_3 \lor D_3$ is a side premise and $A_3\sigma \doteq B_3\sigma$. Further suppose~$u_1$ and $u_2$ are flat compound terms in $B_3$ that pair~$x_1$ and $x_2$, respectively. By the strong compatibility property of \textsf{LG} clauses, $u_1\sigma$ is compatible with~$u_2\sigma$, therefore, $x_1\sigma$ is compatible with~$x_2\sigma$. Since $x_1$ pairs~$s_1$ and $x_2$ pairs $s_2$, $s_1\sigma$ is compatible with~$s_2\sigma$. By the strong compatibility property of \textsf{LG} clauses, $s_1$ and $s_2$ are compatible with $t_1$ and $t_2$, respectively. Hence~$s_1\sigma$ and $s_2\sigma$ are compatible with $t_1\sigma$ and~$t_2\sigma$, respectively. By the fact that $s_1\sigma$ is compatible with $s_2\sigma$, $t_1\sigma$ is compatible with $t_2\sigma$, therefore all compound terms in the resolvent~$R$~are compatible. Then,~$R$~is strongly compatible. Finally, we prove that the resolvent $R$ is a simple clause. By~1.~of Corollary~\ref{col:tres_uni}, the mgu $\sigma$ substitutes the variables in $\lnot A_{m+1} \lor \ldots \lor \lnot A_{n} \lor D$ with either variables, constants or flat compound terms. By~2.~of Corollary~\ref{col:tres_uni}, the mgu~$\sigma$~substitutes the variables in $D_1, \ldots, D_m$ with either variables or constants. Because $\lnot A_{m+1} \lor \ldots \lor \lnot A_{n} \lor D$ is a flat clause and $D_1, \ldots, D_m$ are simple clauses, the resolvent $(D_1 \lor \ldots \lor D_m \lor \lnot A_{m+1} \lor \ldots \lor \lnot A_{n} \lor D)\sigma$ is a simple clause. Then, the resolvent $R$ is an \textsf{LG} clause.
\end{proof}

Lemmas \ref{lem:fact}--\ref{lem:res} prove that applying the \textbf{Factor} and \textbf{T-Res} rules to \textsf{LG} clauses derive \textsf{LG} clauses. The derived \textsf{LG} clauses are of bounded depth as the clauses are simple. We now investigate the width of the derived clauses. Recall that by the \emph{width} of a clause, we mean the number of distinct variables in the clause.

\begin{lem}
\label{lem:bounded_width}
In applications of the \textbf{T-Res} system to \textsf{LG} clauses, the derived \textsf{LG} clause is no wider than at least one of its premises. 
\end{lem}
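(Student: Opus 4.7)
My plan is to split the proof along the two conclusion-deriving rules in the \textbf{T-Res} system (\textbf{Factor} and \textbf{T-Res}), and in each case to exploit the covering property together with the structural restrictions on the mgu that were already worked out in Lemmas~\ref{lem:fact}--\ref{lem:res}. Throughout, the key observation I will lean on is: if a substitution $\sigma$ only maps variables to variables or constants, then $|\Var(E\sigma)| \leq |\Var(E)|$ for any expression $E$, so substitutions of this kind never increase width.

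For the \textbf{Factor} rule, with premise $C' = C \lor A_1 \lor A_2$ and factor $(C \lor A_1)\sigma$, the proof of Lemma~\ref{lem:fact} already records that $\sigma$ substitutes variables only with variables or constants (either trivially in the ground subcase, or via Lemma~\ref{lem:mat_comp_lit} in the compound-term subcase). Hence $|\Var((C \lor A_1)\sigma)| \leq |\Var(C \lor A_1)| \leq |\Var(C')|$, which settles this case.

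For the \textbf{T-Res} rule, I would follow exactly the same case analysis as in Lemma~\ref{lem:res}. When the main premise $C$ satisfies Lines~1--2 or Lines~3--4 of Algorithm~\ref{algorithm:refine} (so the rule degenerates to a binary step governed by Condition~2a), the proof of Lemma~\ref{lem:res} shows that $\sigma$ is either ground or substitutes variables only by variables and constants; since $\Var(R) \subseteq \Var(C)\sigma \cup \Var(C_1)\sigma$, width is bounded by $\max(|\Var(C)|, |\Var(C_1)|)$. When the main premise satisfies Line~7 (so Condition~2b applies and the top-variable technique is invoked), I split on Corollary~\ref{col:tres_uni}: if some top variable is mapped to a constant, Lemma~\ref{lem:tres_cons} makes the whole inference ground, so $R$ has width $0$. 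Otherwise, picking any top variable $x$ and the side premise $C_t = B_t \lor D_t$ in which it is paired by a compound term $t$, the argument inside Lemma~\ref{lem:res} establishes $\Var(x\sigma) = \Var(t\sigma) = \Var(R)$. Since $\sigma$ maps variables to variables or constants (Corollary~\ref{col:tres_uni}), and since covering gives $\Var(t) = \Var(C_t)$, we obtain $|\Var(R)| = |\Var(t\sigma)| \leq |\Var(t)| = |\Var(C_t)|$, so the width of $R$ is bounded by that of one of the side premises.

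The delicate step will be the top-variable subcase: I will have to be explicit that the identity $\Var(x\sigma) = \Var(R)$ — derived in the proof of Lemma~\ref{lem:res} by chasing the mgu through the loose-guard co-occurrence property — really does cover every variable appearing in the resolvent, including those coming from the unresolved flat literals $\lnot A_{m+1}, \ldots, \lnot A_n$ and from the tails $D_1, \ldots, D_m$. Once that identity is in place, the width bound collapses to the width of the chosen side premise~$C_t$, finishing the proof.
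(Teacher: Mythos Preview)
Your approach is correct and is essentially the same as the paper's: both split on \textbf{Factor} versus \textbf{T-Res} and both bound the width of the conclusion by re-using the structural facts already established inside Lemmas~\ref{lem:fact}--\ref{lem:res}. The paper compresses your top-variable argument into the single observation that the loose guard of the resolvent is inherited from a loose guard of one side premise modulo variable renaming and ground instantiation, which is equivalent to your identity $\Var(R)=\Var(t\sigma)$ with $\Var(t)=\Var(C_t)$.

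One small wrinkle in your binary subcase: from $\Var(R)\subseteq\Var(C)\sigma\cup\Var(C_1)\sigma$ you conclude $|\Var(R)|\le\max(|\Var(C)|,|\Var(C_1)|)$, but containment in a union only yields a sum bound, not a max. What actually closes this case is Lemma~\ref{lem:eligible_covering} together with $A_1\sigma=B_1\sigma$, giving $\Var(C\sigma)=\Var(A_1\sigma)=\Var(B_1\sigma)=\Var(C_1\sigma)$, so in fact $\Var(R)\subseteq\Var(C_1\sigma)$ and $|\Var(R)|\le|\Var(C_1)|$. With that one extra line your binary case coincides with the paper's loose-guard argument.
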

\begin{proof}
We distinguish the applications of the \textbf{Factor} rule and the \textbf{T-Res} rule: i) By Lemma \ref{lem:fact}, the conclusions of applying \textbf{Factor}  to \textsf{LG} clauses are \textsf{LG} clauses. The proof in Lemma \ref{lem:fact} shows that the loose guard of the factor is from the loose guard of the premise (modulo variable renaming and ground instantiations). The fact that a loose guard contains all variables of an \textsf{LG} clause implies that the factor of an \textsf{LG} clause is no wider than its premise.

ii) By Lemma \ref{lem:res}, the conclusions of applying \textbf{T-Res} to \textsf{LG} clauses are \textsf{LG} clauses. The proof in Lemma \ref{lem:res} shows that the loose guard of the derived \textsf{LG} clauses is inherited from one of the \textbf{T-Res} side premises (modulo variable renaming and ground instantiation), therefore any derived \textsf{LG} clause is no wider than at least one of its \textbf{T-Res} side premises.
\end{proof}

Finally, we give the main result of this section.

\begin{thm}
\label{thm:lgc}
The \textbf{T-Res} system decides satisfiability of the \textsf{LG} clausal class.
\end{thm}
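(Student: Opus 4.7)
The plan is to combine soundness and refutational completeness of \textbf{T-Res} (Theorem~\ref{thm:tres}) with a termination argument that exploits the closure and boundedness properties of the \textsf{LG} class just established. Given a finite \textsf{LG} clausal set $N$ over a finite signature, the \textbf{T-Res} system either derives $\bot$ (detecting unsatisfiability) or saturates $N$ up to redundancy in finitely many steps (certifying satisfiability). Since soundness and refutational completeness are already in hand, the only missing piece is termination.

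For termination, I would proceed as follows. First, by Lemmas~\ref{lem:fact} and \ref{lem:res}, every clause produced by the \textbf{Deduce} rule with \textsf{LG} premises is again an \textsf{LG} clause, so the class is closed under the inference rules of the \textbf{T-Res} system. Second, I would bound the \emph{term depth} of derived clauses: every \textsf{LG} clause is by definition simple, hence every term appearing in any derived clause has depth at most one; no new function symbols are introduced because Skolemisation has already been performed at the clausification stage. Third, by Lemma~\ref{lem:bounded_width}, the number of variables occurring in any derived clause is bounded by the maximum width over the input set $N$, since widths never increase along an inference chain. Fourth, the predicate, function and constant symbols used in derivations are confined to the finite signature of $N$ (no inference rule in the \textbf{T-Res} system introduces fresh symbols).

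Putting these bounds together, up to variable renaming there are only finitely many simple, covering, strongly compatible clauses of bounded width over a finite signature: the number of literals one can form from a bounded pool of atoms (built from finitely many predicates applied to terms of depth at most one over finitely many function/constant symbols and a bounded supply of variables) is finite, and so is the collection of multisets of such literals of bounded size. Therefore the saturation process can generate at most finitely many pairwise non-variant clauses, and the \textbf{Delete} rule discards all variants. Hence every fair derivation terminates, either with $\bot$ or with a finite set saturated up to redundancy with respect to \textbf{T-Res}. Combined with Theorem~\ref{thm:tres}, this yields a decision procedure for \textsf{LG}.

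The main obstacle is already resolved by the work preceding the statement, namely keeping \textsf{LG}-hood, the covering property, strong compatibility, and the simpleness of conclusions invariant under inferences; once these are available, the counting argument is essentially routine. The only point that deserves explicit care in writing up is ensuring that the literal count per clause is bounded: this follows because the number of distinct simple, covering, strongly compatible literals of width at most $k$ over the fixed signature is finite, and within a clause duplicate literals collapse under the multiset representation, so clause size cannot grow unboundedly either.
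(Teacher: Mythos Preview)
Your argument mirrors the paper's: closure of \textsf{LG} under the \textbf{Deduce} rules (Lemmas~\ref{lem:fact}--\ref{lem:res}), bounded depth via simpleness, bounded width (Lemma~\ref{lem:bounded_width}), and a fixed finite signature together force only finitely many non-variant conclusions, so saturation terminates; soundness and refutational completeness (Theorem~\ref{thm:tres}) then make this a decision procedure. The paper's own proof is in fact even terser than yours and invokes exactly these ingredients.

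One small slip worth fixing: clauses are \emph{multisets} of literals, so duplicate literals do \emph{not} collapse under that representation---your sentence ``duplicate literals collapse under the multiset representation'' has it backwards. To bound clause length you should instead appeal to condensation or subsumption (a clause with repeated literals is redundant with respect to its condensed form, which is smaller in the admissible ordering); both are standard redundancy criteria available in the framework of \cite{BG01}, as the paper notes when introducing the \textbf{S-Res} system. The paper's proof glosses over this point entirely, so your more explicit write-up is an improvement once this detail is corrected.
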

\begin{proof}
By Lemmas~\ref{lem:fact}--\ref{lem:res}, applying the \textbf{T-Res} system to \textsf{LG} clauses derives \textsf{LG} clauses with bounded depth. By Lemma \ref{lem:bounded_width}, the derived \textsf{LG} clauses have bounded width. As no fresh symbols are introduced in the derivation, the \textbf{T-Res} system decides the \textsf{LG} clausal class.
\end{proof}

\section{Handling query clauses}
\label{sec:qc}



\subsection*{\textbf{Basic notions of query clauses}}
\label{sect:query}

Recall that a \emph{query clause} is a negative flat clause. Since there is no restriction on the occurrences of the variables in query clauses, analysing the conclusions of these clauses is non-trivial. To better manipulate and study query clauses, we introduce the notions of \emph{surface literal}, \emph{chained variables} and \emph{isolated variables}.

\begin{defi}
Let $Q$ be a query clause. Then, a literal $L$ is a \emph{surface literal} in $Q$ if there exists no distinct literal $L^\prime$ in $Q$ such that $\Var(L) \subset \Var(L^\prime)$. Let $L_1$ and~$L_2$ be two surface literals in $Q$ such that $\Var(L_1) \not = \Var(L_2)$. Then, $x$ is a \emph{chained variable} in~$Q$ if $x$ belongs to $\Var(L_1) \cap \Var(L_2)$. The other non-chained variables are the \emph{isolated variables} in $Q$.	
\end{defi}

For example, the literals $\lnot A_1(x_1,x_2), \lnot A_2(x_2,x_3), \lnot A_3(x_3, x_4, x_5), \lnot A_4(x_5, x_6)$ in
\begin{align*} 
Q_1 = \ & \lnot A_1(x_1,x_2) \lor \lnot A_2(x_2,x_3) \lor \lnot A_3(x_3, x_4, x_5) \lor \lnot A_4(x_5, x_6) \lor \lnot A_5(x_3, x_4),
\end{align*}
are surface literals, but the literal $\lnot A_5(x_3, x_4)$ is not as $\Var(A_5) \subset \Var(A_3)$. Then, the variables~$x_2, x_3, x_5$ are the chained variables and $x_1, x_4, x_6$ are the isolated variables in~$Q_1$. In
\begin{align*}
Q_2 = \ & \lnot A_1(x_1,x_2,x_3) \lor \lnot A_2(x_3,x_4,x_5) \lor \lnot A_3(x_5, x_6, x_7) \lor \\
& \lnot A_4(x_1, x_7,x_8) \lor \lnot A_5(x_3, x_4, x_9),	
\end{align*}
all literals are surface literals, therefore, the variables $x_1, x_3, x_4, x_5, x_7$ are the chained variables and $x_2, x_6, x_8, x_9$ are the isolated variables in $Q_2$. 

A hypergraph is used to represent a flat clause, formally defined as follows.

\begin{defi}
Suppose $C$ is a flat clause, and $\mathcal{H}(V, E)$ is a hypergraph which consists of a set $V$ of vertices and a set $E$ of hyperedges. Then $\mathcal{H}(V, E)$ is the \emph{hypergraph associated with~$C$} if the set $V$ of vertices consists of all variables in $C$, and the set $E$ of hyperedges contains, for each literal~$L$ in $C$, the set of variables that appear in $L$.  
\end{defi}

We use rectangles and variable symbols to represent the hyperedges and the vertices of the hypergraph associated with a flat clause, respectively. Dotted-line and solid-line rectangles respectively represent positive and negative literals and negation symbols are omitted. \textbf{Figure~\ref{fig:query_clauses}} displays the hypergraphs associated with the query clauses~$Q_1$ and $Q_2$ above.

\begin{defi}
A \emph{chained-only query clause} and an \emph{isolated-only query clause} are respectively query clauses containing only chained and only isolated variables.

\end{defi}
For example, the query clause $\lnot A(x_1,x_2) \lor \lnot A_2(x_2,x_3) \lor \lnot A_3(x_3,x_1)$ is a chained-only query clause and $\lnot A_1(x_1) \lor \lnot A_2(x_1,x_2,x_3)$ is an isolated-only query clause.

\begin{figure}[t]
\center
\includegraphics[width=.7\textwidth]{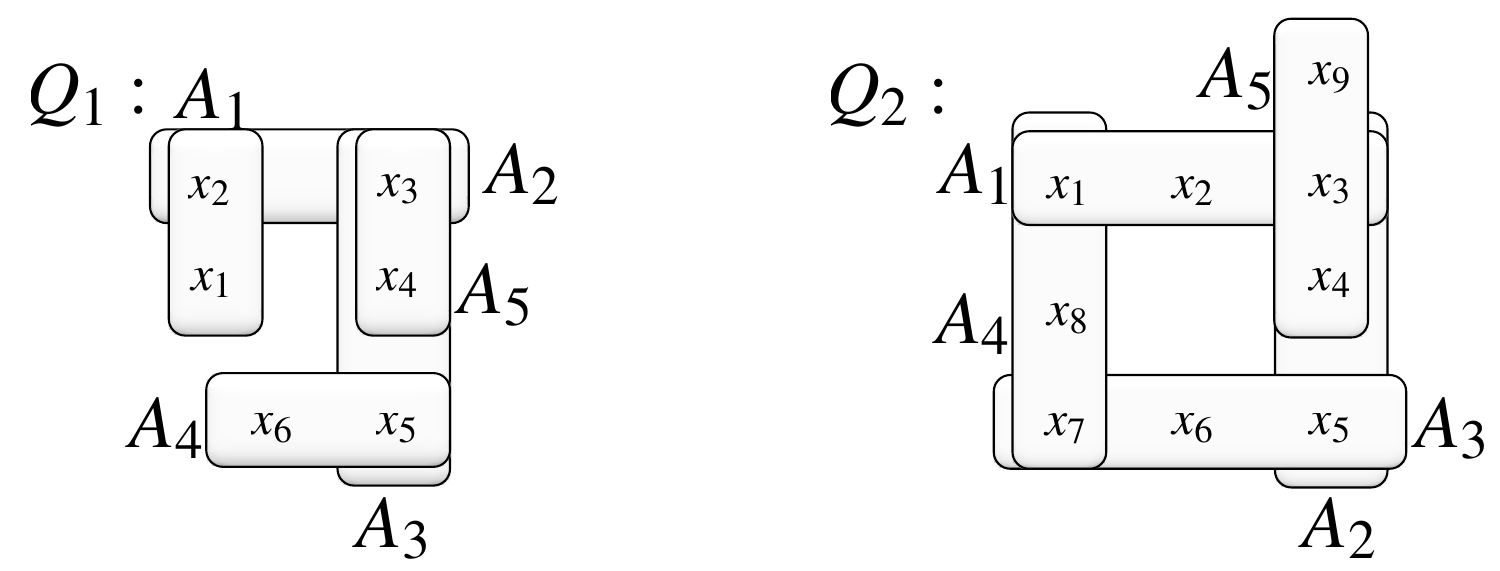}
\caption{Hypergraphs associated with of $Q_1$ and $Q_2$}
\label{fig:query_clauses}
\end{figure} 

\subsection*{\textbf{The separation rules}}
We define the separation rules we need and prove their soundness.

The separation rule \textbf{Sep} replaces a clause $C \lor D$ by two clauses in which the subclauses $C$ and $D$ have been separated by a fresh predicate symbol \cite{SH00}, formally:
\begin{mdframed}[linewidth=2pt]
\begin{displaymath}
 \prftree[l]{\textbf{Sep}: \quad}
   {N \cup \{C \lor D\}}
   {N \cup \{C \lor P(\overline x), \lnot P(\overline x) \lor D\}}
\end{displaymath}
if the following conditions are satisfied.
\begin{enumerate}
\item $C$ and $D$ are non-empty subclauses of $C \lor D$.
\item $\Var(C) \not \subseteq \Var(D)$ and $\Var(D) \not \subseteq \Var(C)$.
\item $\Var(C) \cap \Var(D) = \overline x$.
\item $P$ is a predicate symbol that does not occur in $N \cup \{C \lor D\}$.
\end{enumerate}
\end{mdframed}
The \textbf{Sep} rule is introduced in \cite{SH00} to decide satisfiability of fluted logic, and the rule is referred to as `splitting through new predicate symbols' in~\cite[section 3.5.6]{K06}.

The \textbf{Sep} rule preserves satisfiability equivalence. This proof can be found in Theorem 3 of the technical report version of \cite{SH00}. Formally:

\begin{lem}
\label{lem:sep_sound}
The \textbf{Sep} premise $N \cup \{C \lor D\}$ is satisfiable if and only if the \textbf{Sep} conclusion $N \cup \{C \lor P(\overline x), \lnot P(\overline x) \lor D\}$ is satisfiable.
\end{lem}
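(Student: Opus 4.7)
The plan is to prove the two directions separately. The $\Leftarrow$ direction is routine: given a model $I$ of $N \cup \{C \lor P(\overline x), \lnot P(\overline x) \lor D\}$, I would show that $I$ (viewed just on the signature of $N \cup \{C \lor D\}$) already satisfies $C \lor D$. For this, fix an arbitrary ground assignment $\alpha$ to $\Var(C \lor D)$. Under $\alpha$, either $I, \alpha \models P(\overline x)$, in which case the clause $\lnot P(\overline x) \lor D$ forces $I, \alpha \models D$, or $I, \alpha \not\models P(\overline x)$, in which case $C \lor P(\overline x)$ forces $I, \alpha \models C$. In both cases $I, \alpha \models C \lor D$.

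The $\Rightarrow$ direction is the main work and requires extending a model $I$ of $N \cup \{C \lor D\}$ to interpret the fresh predicate $P$. Let $\overline y = \Var(C) \setminus \overline x$ and $\overline z = \Var(D) \setminus \overline x$; by Conditions~2 and 3 of \textbf{Sep}, these are disjoint and independent of each other. Since $C \lor D$ is implicitly universally closed, $I$ satisfies $\forall \overline x\, \forall \overline y\, \forall \overline z\, (C(\overline x, \overline y) \lor D(\overline x, \overline z))$, which, because $\overline y$ does not appear in $D$ and $\overline z$ does not appear in $C$, is equivalent to $\forall \overline x\, (\forall \overline y\, C(\overline x, \overline y) \lor \forall \overline z\, D(\overline x, \overline z))$. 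The plan is to define, for each tuple $\overline a$ of the appropriate arity,
\[
P^{I}(\overline a) \;=\; \text{true} \quad\text{iff}\quad I \not\models \forall \overline y\, C(\overline a, \overline y).
\]
Call this extended interpretation $I'$. Since $P$ does not occur in $N \cup \{C \lor D\}$, $I'$ still satisfies $N \cup \{C \lor D\}$.

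It then remains to verify that $I'$ satisfies both new clauses. Fix $\overline a$. If $P^{I}(\overline a) = \text{false}$, then by definition $I \models \forall \overline y\, C(\overline a, \overline y)$, so $C(\overline a, \overline y) \lor P(\overline a)$ holds for every $\overline y$ via the $C$-disjunct, and $\lnot P(\overline a) \lor D(\overline a, \overline z)$ holds trivially via the $\lnot P$-disjunct. If $P^{I}(\overline a) = \text{true}$, then $I \not\models \forall \overline y\, C(\overline a, \overline y)$, so by the equivalent form of $C \lor D$ above, $I \models \forall \overline z\, D(\overline a, \overline z)$, giving $\lnot P(\overline a) \lor D(\overline a, \overline z)$ via the $D$-disjunct; and $C(\overline a, \overline y) \lor P(\overline a)$ is immediate via $P$.

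The main obstacle is the step where the independence of $\overline y$ and $\overline z$ is used to push the universal quantifiers inside the disjunction (turning $\forall \overline x \overline y \overline z\,(C \lor D)$ into $\forall \overline x\,(\forall \overline y\, C \lor \forall \overline z\, D)$); this is precisely what Condition~3 of \textbf{Sep} secures, and without it the definition of $P^{I}$ would not correctly partition the $\overline x$-tuples. Once this observation is in place, both directions are short case distinctions.
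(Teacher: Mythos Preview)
Your proof is correct and takes essentially the same approach as the paper: extend a model of the premise by interpreting the fresh predicate $P$ in terms of one of the two subclauses (the paper sets $P(\overline a)$ true iff $I \models \forall \overline z\, D(\overline a,\overline z)$, you use the dual choice via $\neg\forall \overline y\, C(\overline a,\overline y)$), then verify both new clauses by a case split on $P(\overline a)$. Your explicit use of the quantifier-distribution step $\forall \overline y\,\overline z\,(C \lor D) \Leftrightarrow \forall \overline y\, C \lor \forall \overline z\, D$ (licensed by Condition~3) makes the verification slightly cleaner than the paper's contradiction argument, but the underlying idea is identical.
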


The following are separation rules, customised for separating decomposable and indecomposable query clauses. Recall that a clause is \emph{decomposable} if it can be partitioned into two variable-disjoint subclauses, otherwise, the clause is \emph{indecomposable}. 

\begin{mdframed}[linewidth=2pt]
\begin{displaymath}
 \prftree[l]{\textbf{SepDeQ}: \quad}
  {N \cup \{C \lor D\}}
  {N \cup \{C \lor \lnot p_1, \lnot p_2 \lor D, p_1 \lor p_2\}}
\end{displaymath}
if the following conditions are satisfied.
\begin{enumerate}
\item $C \lor D$ is a decomposable query clause.
\item $C$ and $D$ are non-empty subclauses of $C \lor D$.
\item $\Var(C) \cap \Var(D) = \emptyset$.
\item $p_1$ and $p_2$ are propositional variables that do not occur in $N \cup \{C \lor D\}$.
\end{enumerate}
\end{mdframed}
\begin{mdframed}[linewidth=2pt]
\begin{displaymath}
 \prftree[l]{\textbf{SepIndeQ}: \quad}
  {N \cup \{C \lor \lnot A(\overline x, \overline y) \lor D\}}
  {N \cup \{C \lor \lnot A(\overline x, \overline y) \lor P(\overline x), \lnot P(\overline x) \lor D\}}
\end{displaymath}
if the following conditions are satisfied.
\begin{enumerate}
\item $C \lor \lnot A(\overline x, \overline y) \lor D$ is an indecomposable query clause, and $\overline x \not = \emptyset$ and $\overline y \not = \emptyset$.
\item $\lnot A(\overline x, \overline y)$ is a surface literal and $\Var(C) \subseteq \overline x \cup \overline y$.
\item $\overline x$ are chained variables and $\overline x \subseteq \Var(D)$.
\item $\overline y$ are isolated variables and $\overline y \cap \Var(D) = \emptyset$.
\item $P$ is a predicate symbol that does not occur in $N \cup \{C \lor \lnot A(\overline x, \overline y) \lor D\}$.
\end{enumerate}
\end{mdframed}

The \textbf{SepDeQ} rule can be seen as either a form of \emph{formula renaming with positive literals} introduced in \textbf{Section \ref{sec:trans}} or a form of the \emph{splitting rule with propositional symbols}~\cite{RV01a,dN01}. Unlike \emph{splitting} \cite{W01}, the \textbf{SepDeQ} rule does not create a new branch in the derivation, thus no back-tracking is needed. Due to the introduction of the fresh predicate symbols in the \textbf{SepDeQ} conclusions, one cannot use the \emph{subsumption elimination technique} to eliminate the \textbf{SepDeQ} premise by the \textbf{SepDeQ} conclusions, whereas splitting can take the advantage of the subsumption elimination technique as no fresh predicate symbols are needed in the splitting process.

Inspired by the \textbf{Sep} rule, the \textbf{SepDeQ} and \textbf{SepIndeQ} rules are specifically developed for \emph{separating query clauses}. For example, in applications of the \textbf{SepDeQ} and \textbf{SepIndeQ} rules to query clauses, the polarity of the literals using the fresh predicate symbol is assigned in a way such that the \textbf{SepDeQ} and \textbf{SepIndeQ} conclusions are either \emph{query clauses} or \emph{guarded clauses}. The \textbf{Sep} rule is stronger than the \textbf{SepDeQ} and \textbf{SepIndeQ} rules with respect to separating query clauses. Given a query clause
\begin{align*}
Q = \ & \lnot A(z,x_1) \lor \lnot A(x_1,x_2) \lor \lnot A(x_2,x_3) \lor \lnot A(x_3,z) \lor  \\
& \lnot B(z,y_1) \lor \lnot B(y_1,y_2) \lor \lnot B(y_2,y_3) \lor \lnot B(y_3,z),
\end{align*}
the \textbf{Sep} rule separates it into
\begin{align*}
& \lnot A(z,x_1) \lor \lnot A(x_1,x_2) \lor \lnot A(x_2,x_3) \lor \lnot A(x_3,z) \lor P(z),\\
& \lnot B(z,y_1) \lor \lnot B(y_1,y_2) \lor \lnot B(y_2,y_3) \lor \lnot B(y_3,z) \lor \lnot P(z)
\end{align*}
using a fresh predicate symbol $P$. Yet neither \textbf{SepDeQ} nor \textbf{SepIndeQ} is applicable to~$Q$ as $Q$ is an \emph{indecomposable chained-only query clause}. 

Though the \textbf{Sep} rule is stronger and more general than the \textbf{SepDeQ} and \textbf{SepIndeQ} rules, our separation rules provide a \emph{clear view} of how a query clause is separated in a \emph{goal-oriented} way. Consider the \textbf{SepIndeQ} rule. Each application of the \textbf{SepIndeQ} rule removes a surface literal and the subclause it guards, viz., $C \lor \lnot A(\overline x, \overline y)$, from the premise $C \lor \lnot A(\overline x, \overline y) \lor D$. On the other hand, the application of the \textbf{Sep} rule to query clauses is complicated and difficult to analyse. Most importantly, applying the \textbf{Sep} rule to query clauses can derive conclusions that do not belong to the \textsf{LGQ} clausal class, making the conclusions difficult to handle. For example, applying the \textbf{Sep} rule to the above query clause $Q$ guarantees deriving a non-\textsf{LGQ} clause $\lnot A(z,x_1) \lor \lnot A(x_1,x_2) \lor \lnot A(x_2,x_3) \lor \lnot A(x_3,z) \lor P(z)$. 

Now we prove the soundness of the \textbf{SepIndeQ} rule by showing the connection between the rule and the \textbf{Sep} rule, formally stated as:
\begin{lem}
\label{lem:gen_sep}
Suppose $N \cup \{C \lor \lnot A(\overline x, \overline y) \lor D\}$ is a \textbf{SepIndeQ} premise. Then, applying the \textbf{Sep} rule can derive ${N \cup \{C \lor \lnot A(\overline x, \overline y) \lor P(\overline x), \lnot P(\overline x) \lor D\}}$ using a fresh predicate symbol $P$. 
\end{lem}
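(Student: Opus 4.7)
The plan is to exhibit the \textbf{SepIndeQ} transformation as a direct instance of the \textbf{Sep} rule. Take the Sep partition $C^\prime := C \lor \lnot A(\overline x, \overline y)$ and $D^\prime := D$, so that the Sep premise $N \cup \{C^\prime \lor D^\prime\}$ is syntactically identical to the \textbf{SepIndeQ} premise $N \cup \{C \lor \lnot A(\overline x, \overline y) \lor D\}$. After verifying the four side conditions of \textbf{Sep}, the Sep conclusion $N \cup \{C^\prime \lor P(\overline x), \lnot P(\overline x) \lor D^\prime\}$ is literally the claimed \textbf{SepIndeQ} conclusion, establishing the lemma.

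The central computation is the variable intersection. From \textbf{SepIndeQ} condition~2, $\Var(C) \subseteq \overline x \cup \overline y$, and $\Var(\lnot A(\overline x, \overline y)) = \overline x \cup \overline y$, so $\Var(C^\prime) = \overline x \cup \overline y$. Using \textbf{SepIndeQ} conditions~3 ($\overline x \subseteq \Var(D)$) and~4 ($\overline y \cap \Var(D) = \emptyset$), the intersection evaluates to
\[
\Var(C^\prime) \cap \Var(D^\prime) = (\overline x \cup \overline y) \cap \Var(D) = \overline x,
\]
which matches the argument list of the fresh atom $P(\overline x)$ appearing in the Sep conclusion. Non-emptiness of $C^\prime$ (which contains the surface literal $\lnot A(\overline x, \overline y)$) and of $D^\prime$ is immediate from the shape of the premise, and freshness of $P$ is inherited verbatim from \textbf{SepIndeQ} condition~5.

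The most delicate step is the mutual non-containment clause of \textbf{Sep}. The direction $\Var(C^\prime) \not\subseteq \Var(D^\prime)$ is witnessed by any variable in $\overline y$, since $\overline y \subseteq \Var(C^\prime)$ but $\overline y \cap \Var(D^\prime) = \emptyset$. The reverse direction, $\Var(D^\prime) \not\subseteq \Var(C^\prime)$, requires $D$ to contain at least one variable outside $\overline x \cup \overline y$, and this is where I expect the main obstacle to lie: pure indecomposability of the original clause does not by itself rule out the degenerate configuration $\Var(D) \subseteq \overline x$, in which everything still remains connected through the surface literal $\lnot A(\overline x, \overline y)$. My plan is to appeal to the goal-oriented intent behind \textbf{SepIndeQ} — it is designed to strip off a surface literal together with the material $C$ it guards, leaving behind a genuinely distinct subclause $D$ — so that the intended application tacitly carries the non-triviality hypotheses $\overline y \ne \emptyset$ and $\Var(D) \not\subseteq \overline x$. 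Once these are made explicit, all four Sep preconditions discharge and the lemma follows; as a pleasant corollary, soundness of \textbf{SepIndeQ} is then inherited immediately from Lemma~\ref{lem:sep_sound}.
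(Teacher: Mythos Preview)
Your overall strategy matches the paper's: partition the premise as $C' := C \lor \lnot A(\overline x,\overline y)$ and $D' := D$, then verify the four \textbf{Sep} side conditions. The intersection computation and the freshness condition are handled exactly as in the paper.

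However, there is a genuine gap at precisely the point you flag. You correctly identify that $\Var(D) \not\subseteq \Var(C')$ is the delicate direction, but you then propose to \emph{assume} the needed facts ($\overline y \neq \emptyset$ and $\Var(D) \not\subseteq \overline x$) as tacit extra hypotheses justified by ``goal-oriented intent''. This is not necessary, and the paper shows how to derive the non-containment directly from the \textbf{SepIndeQ} preconditions already in force. The key ingredient you are missing is condition~3 of \textbf{SepIndeQ}: the variables $\overline x$ are \emph{chained}. Suppose for contradiction that $\Var(D) \subseteq \Var(C') = \overline x \cup \overline y$. Then $\lnot A(\overline x,\overline y)$ contains every variable of the entire clause. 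Since $\lnot A(\overline x,\overline y)$ is a surface literal, every other surface literal $L$ satisfies $\Var(L) \subseteq \overline x \cup \overline y = \Var(\lnot A)$, and by the definition of surface literal also $\Var(L) \not\subset \Var(\lnot A)$; hence $\Var(L) = \Var(\lnot A)$ for all surface literals $L$. But then no two surface literals have distinct variable sets, so by definition every variable in the clause is isolated --- contradicting that $\overline x$ are chained. The same mechanism handles non-emptiness of $D$: if $D$ were empty, the clause would reduce to $C \lor \lnot A(\overline x,\overline y)$, again forcing all variables to be isolated.

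In short, you located the obstacle correctly but stopped just before the resolution. The missing idea is that the \emph{chained} status of $\overline x$ (not indecomposability alone) is what rules out the degenerate configuration.
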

\begin{proof}
First, we prove that the \textbf{Sep} rule is applicable to $N \cup \{C \lor \lnot A(\overline x, \overline y) \lor D\}$. We distinguish four conditions of the \textbf{Sep} rule.

1)~We prove that both $C \lor \lnot A(\overline x, \overline y)$ and $D$ are non-empty subclauses. The case when $C \lor \lnot A(\overline x, \overline y)$ is empty makes the application of the \textbf{SepIndeQ} rule to $N \cup \{C \lor \lnot A(\overline x, \overline y) \lor D\}$ void. We prove that $D$ is not empty by contradiction. Suppose $D$ is empty. By the fact that $\Var(C) \subseteq \overline x \cup \overline y$, all variables in $C \lor \lnot A(\overline x, \overline y)$ are isolated variables, therefore the \textbf{SepIndeQ} rule is not applicable to $C \lor \lnot A(\overline x, \overline y)$. Hence, $D$ is a non-empty subclause.

2)~We prove that $\Var(C \lor \lnot A(\overline x, \overline y)) \not \subseteq \Var(D)$ and $\Var(D) \not \subseteq \Var(C \lor \lnot A(\overline x, \overline y))$. The fact that $\overline y \cap \Var(D) = \emptyset$ implies $\Var(C \lor \lnot A(\overline x, \overline y)) \not \subseteq \Var(D)$. We prove $\Var(D) \not \subseteq \Var(C \lor \lnot A(\overline x, \overline y))$ by contradiction. Suppose $\Var(D) \subseteq \Var(C \lor \lnot A(\overline x, \overline y))$. As $\Var(C) \subseteq \overline x \cup \overline y$, we also have $\Var(D) \subseteq \overline x \cup \overline y$. Then $\{\overline x \cup \overline y\} = \Var(C \lor \lnot A(\overline x, \overline y) \lor D) = \Var(\lnot A(\overline x, \overline y))$. Hence, $\lnot A(\overline x, \overline y)$ is a surface literal of $C \lor \lnot A(\overline x, \overline y) \lor D$, and therefore for any other surface literals $L$ in $C \lor \lnot A(\overline x, \overline y) \lor D$, $\Var(L) = \Var(\lnot A(\overline x, \overline y))$. Then all variables in $C \lor \lnot A(\overline x, \overline y) \lor D$ are isolated variables, which contradicts that $\overline x$ are the chained variables of $C \lor \lnot A(\overline x, \overline y) \lor D$.

3)~By the result established in 2) and the fact that the chained variables $\overline x$ occur in both subclauses $C \lor \lnot A(\overline x, \overline y)$ and~$D$, $\overline x = \Var(C \lor \lnot A(\overline x, \overline y)) \cap \Var(D)$.

4)~This is the same condition as 5.~of the \textbf{SepIndeQ} rule.

By the results established in 1)--4), applying the \textbf{Sep} rule to $N \cup \{C \lor \lnot A(\overline x, \overline y) \lor D\}$ derives either 
\begin{align*}
	& {N \cup \{C \lor \lnot A(\overline x, \overline y) \lor P(\overline x), \lnot P(\overline x) \lor D\}} \ \text{or} \ {N \cup \{C \lor \lnot A(\overline x, \overline y) \lor \lnot P(\overline x), P(\overline x) \lor D\}}.
\end{align*}
using a fresh predicate symbol $P$.
\end{proof}

The \textbf{SepDeQ} and \textbf{SepIndeQ} rules are sound, formally stated as:
\begin{lem}
\label{lem:qsep_sound}
The \textbf{SepDeQ} and \textbf{SepIndeQ} premises are satisfiable if and only if the \textbf{SepDeQ} and \textbf{SepIndeQ} conclusions are satisfiable, respectively.	
\end{lem}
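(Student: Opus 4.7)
The plan is to handle the two rules separately, treating \textbf{SepIndeQ} first since it reduces directly to results already established. By Lemma \ref{lem:gen_sep}, applying the \textbf{Sep} rule to a \textbf{SepIndeQ} premise yields precisely the \textbf{SepIndeQ} conclusion (with the fresh predicate symbol $P$ of the appropriate polarity). Then by Lemma \ref{lem:sep_sound}, the \textbf{Sep} premise and conclusion are equisatisfiable, which gives exactly the statement for \textbf{SepIndeQ}. So this case is essentially a two-line reduction.

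For \textbf{SepDeQ}, I would argue both directions directly. For the $\Leftarrow$ direction the argument is purely propositional and does not even use the variable-disjointness of $C$ and $D$: any model of the \textbf{SepDeQ} conclusion satisfies $p_1 \lor p_2$, and if $p_1$ holds then $C \lor \lnot p_1$ forces $C$, while if $p_2$ holds then $\lnot p_2 \lor D$ forces $D$; either way $C \lor D$ holds, so the model also satisfies the premise.

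The $\Rightarrow$ direction is where the variable-disjointness hypothesis $\Var(C) \cap \Var(D) = \emptyset$ does real work. Given a model $I$ of $N \cup \{C \lor D\}$, I will prove the key observation that either $I \models \forall\, C$ or $I \models \forall\, D$ (interpreting the clauses as universally closed). Suppose for contradiction that neither held; then there would be ground substitutions $\theta_1, \theta_2$ such that $I \not\models C\theta_1$ and $I \not\models D\theta_2$. Because $\Var(C)$ and $\Var(D)$ are disjoint, the substitutions $\theta_1$ and $\theta_2$ act on disjoint sets of variables, so their union $\theta = \theta_1 \cup \theta_2$ is well-defined and agrees with $\theta_1$ on $\Var(C)$ and with $\theta_2$ on $\Var(D)$. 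Then $I \not\models (C \lor D)\theta$, contradicting $I \models \forall(C \lor D)$. Once this dichotomy is in hand, I extend $I$ to an interpretation $I'$ of the fresh propositional variables: if $I \models \forall\, C$, set $p_1$ true and $p_2$ false; otherwise set $p_1$ false and $p_2$ true. A short case check verifies that $I'$ satisfies $C \lor \lnot p_1$, $\lnot p_2 \lor D$ and $p_1 \lor p_2$.

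The hard part, such as it is, will be stating the dichotomy cleanly and noting explicitly where variable-disjointness is used (it fails for, e.g., $A(x) \lor \lnot A(x)$, which is why the restriction matters). Everything else is routine propositional bookkeeping, and the \textbf{SepIndeQ} case is immediate from Lemmas \ref{lem:gen_sep} and \ref{lem:sep_sound}, so no further technical machinery is needed.
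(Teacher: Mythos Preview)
Your proposal is correct. For \textbf{SepIndeQ} you do exactly what the paper does: invoke Lemma~\ref{lem:gen_sep} to identify the \textbf{SepIndeQ} conclusion with a \textbf{Sep} conclusion, and then cite Lemma~\ref{lem:sep_sound}.

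For \textbf{SepDeQ} you take a more explicit route than the paper. The paper simply observes that \textbf{SepDeQ} is a form of the formula renaming rule (equivalently, splitting through propositional names) and declares soundness immediate on that basis. You instead prove the model-theoretic dichotomy ``$I \models \forall C$ or $I \models \forall D$'' directly from the variable-disjointness condition, and then extend $I$ by hand to the fresh propositional variables. Both arguments are standard; yours is self-contained and makes the role of $\Var(C)\cap\Var(D)=\emptyset$ explicit, while the paper's version is shorter but relies on the reader already accepting soundness of renaming/splitting with names. Either is perfectly acceptable here.
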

\begin{proof}
It is immediate that the statement holds for the \textbf{SepDeQ} rule since the rule performs formula renaming. By Lemma \ref{lem:gen_sep}, applying the \textbf{SepIndeQ} rule or the \textbf{Sep} rule to the same premise derives the same conclusions. Hence, each application of the \textbf{SepIndeQ} rule can be seen as an application of the \textbf{Sep} rule. By Lemma \ref{lem:sep_sound}, the \textbf{SepIndeQ} rule is sound.
\end{proof}

Now we extend the \textbf{T-Res} system with the \textbf{SepDeQ} and \textbf{SepIndeQ} rules. Resolution systems in line with the framework of \cite{BG01} follow the principle that a conclusion is always smaller than the premises. To satisfy this condition, we make the fresh predicate symbols introduced in the applications of the \textbf{SepDeQ} and \textbf{SepIndeQ} rules $\succ_{lpo}$-smaller than the predicate symbols in the \textbf{SepDeQ} and \textbf{SepIndeQ} premises. With this restriction and the fact that the \textbf{SepDeQ} and \textbf{SepIndeQ} rules are \emph{replacement rules}, we regard the \textbf{SepDeQ} and \textbf{SepIndeQ} rules as the \emph{simplification rules} in the \textbf{T-Res} system. We use \textbf{T-Res}$^+$ to denote the \textbf{T-Res} system combined with the \textbf{SepDeQ} and \textbf{SepIndeQ} rules.

When infinitely many fresh predicate symbols are introduced in the saturation process of the \textbf{T-Res}$^+$ system, the system may lose refutational completeness. Hence, the main result of this section is formulated as follows.
\begin{thm}
\label{thm:tresp}
Provided that the \textbf{SepDeQ} and \textbf{SepIndeQ} rules	introduce finitely many fresh predicate symbols, the \textbf{T-Res}$^+$ system is sound and refutationally complete for first-order clausal logic.
\end{thm}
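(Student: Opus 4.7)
The plan is to reduce soundness and refutational completeness of \textbf{T-Res}$^+$ to the corresponding properties of the pure \textbf{T-Res} system (Theorem \ref{thm:tres}) by casting \textbf{SepDeQ} and \textbf{SepIndeQ} as admissible simplification rules within the resolution framework of \cite{BG01}. Soundness is immediate: every \textbf{T-Res} inference preserves satisfiability by Theorem \ref{thm:tres}, and each \textbf{SepDeQ} or \textbf{SepIndeQ} replacement step preserves satisfiability by Lemma \ref{lem:qsep_sound}; hence any derivation ending in $\bot$ witnesses unsatisfiability of the input.

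For refutational completeness I would argue that the two separation rules fit the standard pattern of redundancy-based simplification. By the stipulation immediately preceding the theorem, the fresh predicate symbol introduced in each \textbf{SepDeQ} or \textbf{SepIndeQ} step is $\succ_{lpo}$-smaller than every predicate symbol occurring in the premise; hence every conclusion clause is $\succ_{lpo}$-strictly smaller than the separation premise. Combined with the ``only if'' direction of Lemma \ref{lem:qsep_sound}, which states that the conclusions together logically imply the premise, this shows that the separation premise is redundant with respect to its conclusion set in the sense of \cite{BG01}. Deleting the premise while keeping the conclusions is therefore a valid simplification, and by the general theory of \cite{BG01} such simplifications preserve saturation up to redundancy and hence refutational completeness.

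The remaining concern is fairness. In principle an unrestricted interleaving between \textbf{T-Res} inferences and separation steps could postpone required resolution steps forever by continually manufacturing new separated clauses with new fresh symbols. The hypothesis that only finitely many fresh predicate symbols are ever introduced rules this out: after some finite stage of any derivation no further \textbf{SepDeQ} or \textbf{SepIndeQ} step is applicable, and from that stage onward the computation reduces to a pure \textbf{T-Res} derivation, to which Theorem \ref{thm:tres} applies directly. A fair strategy that, say, saturates under the separation rules between rounds of \textbf{T-Res} inferences will therefore derive $\bot$ whenever the input set is unsatisfiable.

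The main obstacle I anticipate is precisely the fairness step: one must pin down how the two types of rules interleave so that the redundancy-based completeness argument of \cite{BG01} is inherited and not disturbed by the replacement behaviour of separation. The finiteness assumption is the essential lever here; without it an adversarial derivation could exploit unbounded symbol introduction to defeat fairness, which is exactly why the statement of Theorem \ref{thm:tresp} is formulated conditionally.
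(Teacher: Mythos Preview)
Your proposal is correct and follows essentially the same line as the paper's (very terse) proof, which simply cites Theorem~\ref{thm:tres}, Lemma~\ref{lem:qsep_sound}, and the ordering assumption on fresh predicate symbols to justify treating \textbf{SepDeQ} and \textbf{SepIndeQ} as simplification rules within the framework of~\cite{BG01}; your added discussion of fairness and the role of the finiteness hypothesis is a reasonable elaboration of what the paper leaves implicit. One minor quibble: the direction of Lemma~\ref{lem:qsep_sound} you invoke for redundancy (conclusions entail the premise) is the ``if'' direction, not the ``only if'' direction, though your description of what it delivers is correct.
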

\begin{proof}
By Theorem \ref{thm:tres}, Lemma \ref{lem:qsep_sound} and the assumption that the fresh predicate symbols introduced in the applications of the \textbf{SepDeQ} and \textbf{SepIndeQ} rules are $\succ_{lpo}$-smaller than the predicate symbols in the \textbf{SepDeQ} and \textbf{SepIndeQ} premises.
\end{proof}

\subsection*{\textbf{Separating query clauses}}
In this section, we investigate application of the \textbf{SepDeQ} and \textbf{SepIndeQ} rules to query clauses. We start with the \textbf{SepDeQ} rule.

\begin{lem}
\label{lem:sep1_conlusion}
Suppose $Q$ is a decomposable query clause. Then, the \textbf{SepDeQ} rule separates $Q$ into narrower query clauses and narrower guarded clauses.
\end{lem}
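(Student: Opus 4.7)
The plan is to directly inspect the three conclusions of \textbf{SepDeQ} applied to $Q = C \lor D$, namely $C_1 = C \lor \lnot p_1$, $C_2 = \lnot p_2 \lor D$, and $C_3 = p_1 \lor p_2$, and to check for each that it is either a query clause or a guarded clause and that its width is no larger than the width of $Q$ (strictly smaller in the intended use).

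For the syntactic classification, since $Q$ is a flat negative clause by Definition~\ref{def:query}, both subclauses $C$ and $D$ are themselves flat and negative. The fresh symbols $p_1$ and $p_2$ are propositional variables, i.e.~arity-zero predicate symbols, so $\lnot p_1$ and $\lnot p_2$ are flat negative literals. Hence $C_1$ and $C_2$ remain flat negative clauses, i.e.~query clauses. The clause $C_3 = p_1 \lor p_2$ contains no terms at all, so it is ground; by clause~1 of Definition~\ref{def:lgc} every ground clause is loosely guarded, and because a ground clause requires no loose guard at all it is in fact a guarded clause in the sense of the paper.

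For the width claim, Condition~3 of \textbf{SepDeQ} yields the disjoint decomposition $\Var(Q) = \Var(C) \sqcup \Var(D)$. Since propositional symbols carry no arguments, $\Var(C_1) = \Var(C)$, $\Var(C_2) = \Var(D)$ and $\Var(C_3) = \emptyset$, each of which is bounded above by the width of $Q$, with strict inequality whenever the complementary component of the decomposition carries at least one variable. The only delicate point, and the place where I would invest most of the care, is the edge case in which one of $C, D$ happens to be ground: then the query clause built from the non-ground component has the same width as $Q$. I would resolve this either by sharpening the informal reading of ``narrower'' to ``no wider, with $C_3$ strictly narrower'', or by observing, as the later uses of \textbf{SepDeQ} in this paper indicate, that the rule is only intended to be applied to decompositions in which both $C$ and $D$ contain variables — in which case all three conclusions are strictly narrower than $Q$, matching the statement verbatim.
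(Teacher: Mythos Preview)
Your proposal is correct and takes essentially the same approach as the paper: the paper's own proof is a single line (``By the definitions of query clauses and guarded clauses''), and you have simply unpacked that line in detail. Your treatment of the ground-subclause edge case is in fact more careful than the paper itself, which glosses over it; for the later uses (bounding termination of recursive separation) your ``no wider, strictly narrower unless one component is variable-free'' reading is exactly what is needed.
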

\begin{proof}
By the definitions of query clauses and guarded clauses.	
\end{proof}

Next, we consider the \textbf{SepIndeQ} rule.
\begin{rem}
\label{rem:sepindq}
Suppose $Q$ is an indecomposable query clause. Then, the \textbf{SepIndeQ} rule applies to $Q$ if and only if there exists a surface literal in $Q$ containing both isolated variables and chained variables.	
\end{rem}
\begin{proof}
By the definition of the \textbf{SepIndeQ} rule.
\end{proof}
Based on the observation of Remark \ref{rem:sepindq}, we look at how the \textbf{SepIndeQ} rule is applied to indecomposable query clauses.
\begin{lem}
\label{lem:sep2_conlusion}
Suppose $Q$ is an indecomposable query clause, and $Q$ has a surface literal containing both chained variables and isolated variables. Then, \textbf{SepIndeQ} can separate $Q$ into narrower query clauses and narrower Horn guarded clauses.
\end{lem}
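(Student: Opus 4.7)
The plan is to verify that the two conclusion clauses $C_1 := C \lor \lnot A(\overline x, \overline y) \lor P(\overline x)$ and $C_2 := \lnot P(\overline x) \lor D$ belong to the claimed classes, and then to argue strict width decrease. The classification is routine: both conclusions are flat since $Q$ is a query clause and the freshly introduced literals $P(\overline x), \lnot P(\overline x)$ take only variables as arguments; $C_2$ is purely negative so it is a query clause; and $C_1$ contains exactly one positive literal $P(\overline x)$ while $\Var(C_1) = \overline x \cup \overline y = \Var(\lnot A(\overline x, \overline y))$ using $\Var(C) \subseteq \overline x \cup \overline y$, so $\lnot A(\overline x, \overline y)$ is a guard and $C_1$ is Horn guarded.

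For widths, I would use $\overline x \subseteq \Var(D)$ and $\overline y \cap \Var(D) = \emptyset$ to read off $\Var(Q) = \overline y \cup \Var(D)$, $\Var(C_2) = \Var(D)$, and $\Var(C_1) = \overline x \cup \overline y$. The clause $C_2$ is immediately narrower than $Q$, because the hypothesis that the surface literal contains isolated variables forces $\overline y \neq \emptyset$. Narrowness of $C_1$ reduces to the strict inclusion $\Var(D) \supsetneq \overline x$, which is the main obstacle and the only nontrivial step.

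To establish $\Var(D) \supsetneq \overline x$, I plan to combine the chained-variable hypothesis with a surface-literal dominance argument driven by $\lnot A(\overline x, \overline y)$. Picking any $x \in \overline x$, one obtains surface literals $L_1, L_2$ of $Q$ with $\Var(L_1) \neq \Var(L_2)$ and $x$ in both. A first dominance check shows that any surface literal sitting in $C$, or equal to $\lnot A(\overline x, \overline y)$ itself, must carry variable set exactly $\overline x \cup \overline y$, because any $L \in C$ satisfies $\Var(L) \subseteq \overline x \cup \overline y = \Var(\lnot A(\overline x, \overline y))$ and would otherwise be strictly dominated by $\lnot A(\overline x, \overline y)$. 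Since $\Var(L_1) \neq \Var(L_2)$, at most one of them can have this variable set, so at least one, call it $L$, lies in $D$. A second dominance check then rules out $\Var(L) \subseteq \overline x$: if this held, then $\Var(L) \subsetneq \overline x \cup \overline y$ using $\overline y \neq \emptyset$, contradicting that $L$ is a surface literal. Because $\Var(L) \subseteq \Var(D)$, this produces a variable of $\Var(D)$ outside $\overline x$, and together with $\overline x \subseteq \Var(D)$ it yields the strict inclusion $\Var(D) \supsetneq \overline x$, completing the argument.
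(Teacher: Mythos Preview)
Your proposal is correct and follows essentially the same approach as the paper: classify $C_1$ as Horn guarded and $C_2$ as a query clause, then establish strict width decrease, with the only nontrivial step being $\Var(D) \supsetneq \overline x$. The paper dispatches this step in one line by contradiction (``$\Var(D)\subseteq\overline x$ contradicts that $\overline x$ are chained variables''), whereas you unpack it explicitly via the surface-literal dominance argument; your version is more detailed but the underlying idea is identical.
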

\begin{proof}
Suppose $C_1 = C \lor \lnot A(\overline x, \overline y) \lor D$ is an indecomposable query clause, and suppose $\lnot P(\overline x) \lor D$ and $C \lor \lnot A(\overline x, \overline y) \lor P(\overline x)$ are the \textbf{SepIndeQ} conclusions of $C_1$.

First, consider $\lnot P(\overline x) \lor D$. As $D$ is a query clause, $\lnot P(\overline x) \lor D$ is a query clause. By the facts that all variables in $\lnot P(\overline x) \lor D$ occur in $C \lor \lnot A(\overline x, \overline y) \lor D$ and $\lnot P(\overline x) \lor D$ does not contain $\overline y$, $\lnot P(\overline x) \lor D$ is narrower than $C \lor \lnot A(\overline x, \overline y) \lor D$. 

Next consider $C \lor \lnot A(\overline x, \overline y) \lor P(\overline x)$. The fact that $\Var(C) \subseteq \Var(\lnot A(\overline x, \overline y))$ implies $\Var(\lnot A(\overline x, \overline y)) = \Var(C \lor \lnot A(\overline x, \overline y) \lor P(\overline x))$. By the fact that all literals in $C \lor \lnot A(\overline x, \overline y) \lor P(\overline x)$ are flat, $\lnot A(\overline x, \overline y)$ is a guard for $C \lor \lnot A(\overline x, \overline y) \lor P(\overline x)$, therefore $C \lor \lnot A(\overline x, \overline y) \lor P(\overline x)$ is a guarded clause. Because $P(\overline x)$ is the only positive literal in $C \lor \lnot A(\overline x, \overline y) \lor P(\overline x)$, the clause is a Horn guarded clause. We prove that $C \lor \lnot A(\overline x, \overline y) \lor P(\overline x)$ is narrower than $C \lor \lnot A(\overline x, \overline y) \lor D$ by contradiction. Suppose $\Var(C \lor \lnot A(\overline x, \overline y) \lor D) \subseteq \Var(C \lor \lnot A(\overline x, \overline y) \lor P(\overline x))$. The fact that $\Var(D) \cap \overline y = \emptyset$ implies $\Var(D) \subseteq \overline x$, which contradicts that $\overline x$ are chained variables in $C \lor \lnot A(\overline x, \overline y) \lor D$. Hence, $C \lor \lnot A(\overline x, \overline y) \lor P(\overline x)$ is narrower than $C \lor \lnot A(\overline x, \overline y) \lor D$. 
\end{proof}

The \textbf{SepIndeQ} rule is devised to \emph{remove the isolated variables} from a query clause through \emph{separating i) the surface literal containing both the isolated variables and chained variables and ii) the literals guarded by this surface literal} from the query clause. By `a literal $L_1$ is guarded by a literal $L$', we mean that~$L$ acts as a guard of $L_1$, viz., the literal $L$ is a negative flat literal and $\Var(L_1) \subseteq \Var(L)$.

An isolated variable satisfies the following condition:
\begin{rem}
\label{rem:iso}
Suppose $Q$ is a query clause, and $x$ is an isolated variable in $Q$. Further suppose $L_1$ and $L_2$ are $x$-occurring surface literals in $Q$. Then, $\Var(L_1) = \Var(L_2)$.
\end{rem}
\begin{proof}
We prove the claim by contradiction. Suppose $\Var(L_1) \neq \Var(L_2)$.  The facts that $x \in \Var(L_1) \cap \Var(L_2)$ and $L_1$ and $L_2$ are surface literals imply that $x$ is a chained variable, which contradicts the assumption that $x$ is an isolated variable. 
\end{proof}

Lemmas \ref{lem:sep1_conlusion} and \ref{lem:sep2_conlusion} claim that applying the \textbf{SepDeQ} and \textbf{SepIndeQ} rules to a query clause derives new \emph{query clauses}, therefore the separation rules can be recursively applied to query clauses. We use \textbf{Q-Sep} to denote the procedure of recursively applying the \textbf{SepDeQ} and \textbf{SepIndeQ} rules to a query clause. 

Consider an application of the \textbf{Q-Sep} procedure to the query clause
\begin{align*} 
Q_1 = \ & \lnot A_1(\textcolor{red}{x_1},{x_2}) \lor \lnot A_2({x_2},{x_3}) \lor \lnot A_3({x_3}, \textcolor{red}{x_4}, {x_5}) \lor \lnot A_4({x_5}, \textcolor{red}{x_6}) \lor \lnot A_5({x_3}, \textcolor{red}{x_4}).
\end{align*}
Since $Q_1$ is indecomposable and contains surface literals where both isolated variables and chained variables occur, the \textbf{SepIndeQ} rule is applicable to the clause. All literals in $Q_1$ are the surface literals containing both isolated variables and chained variables, except $\lnot A_2(x_2,{x_3})$. To better show how the \textbf{SepIndeQ} rule separates a query clause, we colour the isolated variables \textcolor{red}{red} and the surface literal and the literals guarded by it \textcolor{blue}{blue}. 

\begin{figure}[t]
\center
\includegraphics[width=\textwidth]{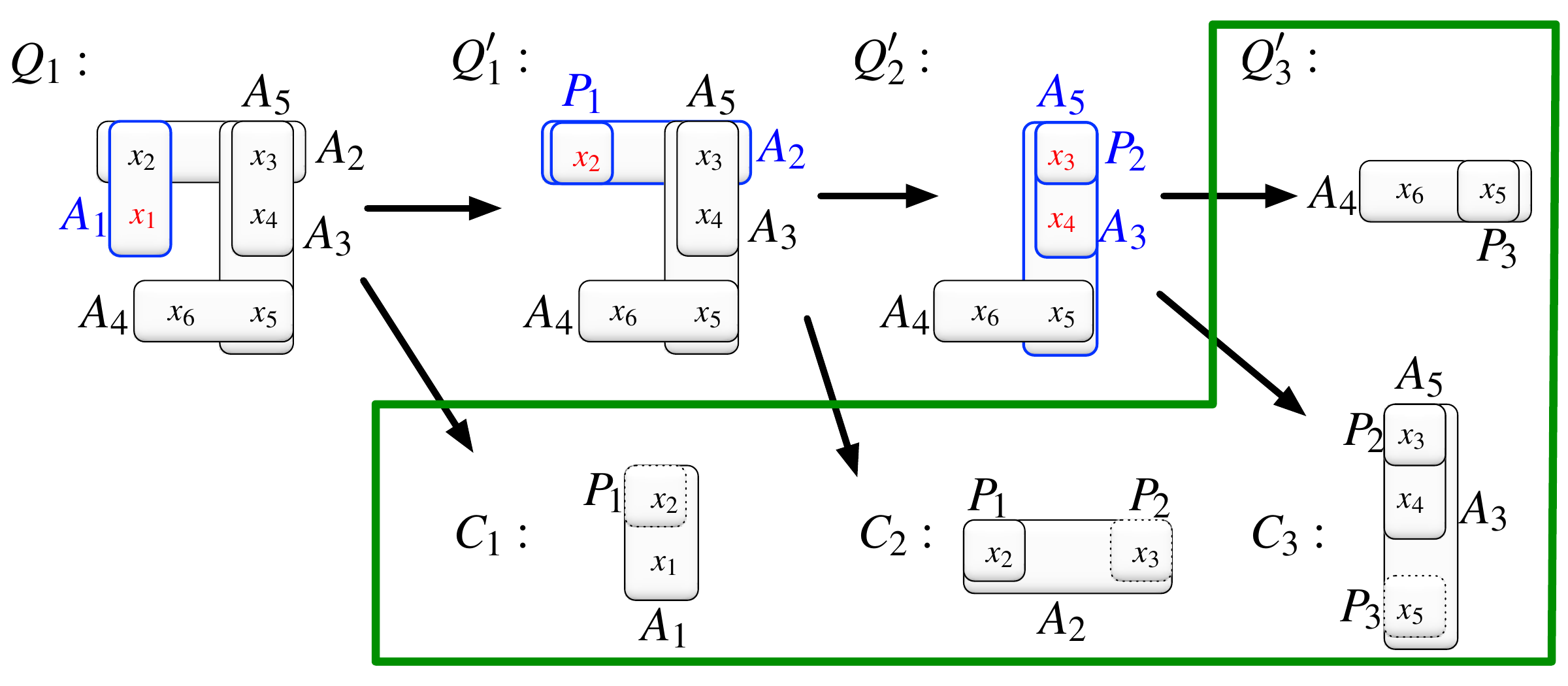}
\caption{The \textbf{Q-Sep} procedure separates $Q_1$ into Horn guarded clauses $C_1, C_2, C_3$ and an indecomposable isolated-only query clause $Q_3^\prime$. The removed isolated variables are \textcolor{red}{red} and the separated surface literal and the literals guarded by it are \textcolor{blue}{blue}.}
\label{fig:sep1}
\end{figure}

The \textbf{Q-Sep} procedure separates $Q_1$ by the following steps:
\begin{enumerate}
	\item W.l.o.g.~we begin with removing the isolated variable~$x_1$ from~$Q_1$. This means we separate the surface literal $\lnot A_1({x_1},{x_2})$ from $Q_1$. Using a fresh predicate symbol~$P_1$, applying the \textbf{SepIndeQ} rule to $Q_1$ derives:
	\begin{align*}
	&C_1 = \textcolor{blue}{\lnot A_1(x_1, x_2)} \lor P_1(x_2) \ \text{and} \\
	&Q_1^\prime = \lnot P_1(\textcolor{red}{x_2}) \lor \lnot A_2(\textcolor{red}{x_2},{x_3}) \lor \lnot A_3({x_3}, \textcolor{red}{x_4}, {x_5}) \lor \lnot A_4({x_5}, \textcolor{red}{x_6}) \lor \lnot A_5({x_3}, \textcolor{red}{x_4}).
	\end{align*}
	\item As $C_1$ is a \emph{guarded clause}, it is not separable. In $Q_1^\prime$ the surface literal $\lnot A_2({x_2},{x_3})$ guards the literal $\lnot P_1({x_2})$. To remove the isolated variable~$x_2$ from~$Q_1^\prime$, we use the \textbf{SepIndeQ} rule to separate $\lnot P_1({x_2}) \lor \lnot A_2({x_2},{x_3})$ from~$Q_1^\prime$. Using a fresh predicate symbol $P_2$, $Q_1^\prime$ is separated into: 
	\begin{align*}
	&C_2 = \textcolor{blue}{\lnot P_1(x_2) \lor \lnot A_2(x_2, x_3)} \lor P_2(x_3) \ \text{and} \\
	&Q_2^\prime = \lnot P_2(\textcolor{red}{x_3}) \lor \lnot A_3(\textcolor{red}{x_3}, \textcolor{red}{x_4}, x_5) \lor \lnot A_4(x_5, \textcolor{red}{x_6}) \lor \lnot A_5(\textcolor{red}{x_3}, \textcolor{red}{x_4}).
	\end{align*}
	\item No separation rule is applicable to $C_2$. We separate the isolated variable $x_3$ from~$Q_2^\prime$: find that $\lnot A_3({x_3}, {x_4}, x_5)$ is the $x_3$-occurring surface literal in $Q_2^\prime$, and then separate this literal and the literals guarded by it, viz., $\lnot P_2({x_3})$ and $\lnot A_5({x_3}, {x_4})$. Using a fresh predicate symbol $P_3$, $Q_2^\prime$ is separated into:  
	\begin{align*}
	&C_3 = \textcolor{blue}{\lnot P_2({x_3}) \lor \lnot A_3({x_3}, {x_4}, x_5) \lor \lnot A_5({x_3}, {x_4})} \lor P_3(x_5) \ \text{and} \\
	&Q_3^\prime = \lnot A_4(\textcolor{red}{x_5}, \textcolor{red}{x_6}) \lor \lnot P_3(\textcolor{red}{x_5}).
	\end{align*}
	\item The conclusions $C_3$ and $Q_3^\prime$ are not separable. Finally, $Q_1$ is replaced by the Horn guarded clauses $C_1, C_2, C_3$ and the indecomposable isolated-only query clause~$Q_3^\prime$. 
\end{enumerate}
Though Step 3.~aims to remove the isolated variable $x_3$ from $Q_2^\prime$, it turns out that both the isolated variables $x_3$ and $x_4$ are removed from $Q_2^\prime$. This is because $x_4$ occurs in the $x_3$-occurring surface literal $\lnot A_3({x_3}, {x_4}, x_5)$, therefore by Remark \ref{rem:sepindq}, Step 3.~also removes all $x_4$-occurring literals from $Q_2^\prime$. \textbf{Figure~\ref{fig:sep1}} shows how the \textbf{Q-Sep} procedure separates $Q_1$ into $C_1, C_2, C_3$ and~$Q_3^\prime$, framed in the green box.


The indecomposable isolated-only query clauses, for example, $Q_3^\prime$ from the previous example, are indeed Horn guarded clauses. Analysis of these two clausal classes reveals the following property:

\begin{lem}
\label{lem:iq}
An indecomposable isolated-only query clause is a Horn guarded clause.
\end{lem}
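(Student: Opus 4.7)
The plan is to show that an indecomposable isolated-only query clause $Q$ admits a surface literal whose variable set equals $\Var(Q)$; since query clauses are by definition flat and negative, such a surface literal is a guard, making $Q$ guarded, and the absence of positive literals then makes it Horn.

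First I would set up an equivalence relation on the surface literals of $Q$. Define $L \sim L'$ when $\Var(L) \cap \Var(L') \neq \emptyset$. Because every variable in $Q$ is isolated, Remark~\ref{rem:iso} gives that $L \sim L'$ already forces $\Var(L) = \Var(L')$. Reflexivity is trivial (any surface literal has variables, unless it is ground — which I would handle as a degenerate case), symmetry is immediate, and transitivity follows from the equal-variable-set consequence. Thus $\sim$ partitions the surface literals into classes, each class carrying a common variable set, and distinct classes having disjoint variable sets.

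Next I would lift this partition to all literals of $Q$. For a non-surface literal $L$, the definition of surface literal guarantees the existence of some literal $L'$ with $\Var(L) \subset \Var(L')$; iterating and choosing a $\subseteq$-maximal such $L'$ yields a surface literal $L^*$ with $\Var(L) \subseteq \Var(L^*)$. So every literal of $Q$ lives inside the variable set of some equivalence class. This gives a decomposition of $Q$ into variable-disjoint subclauses indexed by the classes. Indecomposability of $Q$ then forces the partition to be trivial: there is only one equivalence class, so every surface literal has variable set exactly $\Var(Q)$.

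Finally, pick any surface literal $L$ of $Q$. Since $Q$ is a query clause, $L$ is a negative flat literal, and by the previous step $\Var(L) = \Var(Q)$, so $L$ is a guard of $Q$; hence $Q$ is a guarded clause. Because $Q$ contains no positive literal at all, it is trivially Horn. The main obstacle I anticipate is just making the indecomposability-to-single-equivalence-class argument watertight, in particular verifying that non-surface literals cannot straddle two classes — this is where the maximal surface-literal-above argument does the real work, and where a ground-literal corner case (an empty variable set surface literal) should be explicitly dispatched.
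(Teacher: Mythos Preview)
Your proposal is correct and follows essentially the same approach as the paper: both argue that in an isolated-only query clause any two surface literals either have identical variable sets or disjoint ones, then use indecomposability to rule out the disjoint case, concluding that some surface literal covers all of $\Var(Q)$ and hence serves as a guard. Your equivalence-relation framing and the explicit treatment of non-surface literals make rigorous precisely the step the paper leaves terse (namely, why disjoint surface literals force decomposability), but the underlying idea is the same.
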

\begin{proof}
Suppose $Q$ is an indecomposable isolated-only query clause. Recall that if $Q$ contains two surface literals $L_1$ and~$L_2$ such that $\Var(L_1) \not = \Var(L_2)$ and $x \in \Var(L_1) \cap \Var(L_2)$, then $x$ is a chained variable in $Q$. Since $Q$ contains no chained variables, it is the case that either i) $Q$ contains only one surface literal, or ii) $Q$ contains multiple surface literals and each pair $L_1$ and $L_2$ of surface literals satisfies either $\Var(L_1) = \Var(L_2)$ or $\Var(L_1) \cap \Var(L_2) = \emptyset$. We distinguish these two cases:

i) The indecomposable isolated-only query clause $Q$ is flat, negative and contains only one surface literal $L$. By the definition of surface literals, $\Var(L) = \Var(Q)$. Then,~$Q$ is a Horn guarded clause with a guard $L$. 

ii) If any pair $L_1$ and $L_2$ of surface literals in $Q$ satisfies $\Var(L_1) = \Var(L_2)$, then it is the same case as i) but $L_1$ and $L_2$ are both guards of $Q$. If there exists a pair $L_1$ and~$L_2$ of surface literals satisfying $\Var(L_1) \cap \Var(L_2) = \emptyset$, then $Q$ is decomposable, which contradicts the assumption. 
\end{proof}

A chained variable in the \textbf{SepIndeQ} premise may become an isolated variable in the \textbf{SepIndeQ} conclusion, but not vice-versa. For example, in Step 1.~of the previous example, the chained variable $x_2$ in $Q_1$ becomes isolated in $Q_1^\prime$, due to the removal of the isolated variable $x_1$ in $Q_1$. However, since the \textbf{SepIndeQ} rule does not introduce new connections between variables in the conclusions, an isolated variable in the \textbf{SepIndeQ} premise cannot turn into a chained variable in the \textbf{SepIndeQ} conclusion. Since the \textbf{Q-Sep} procedure continuously removes isolated variables in the \textbf{SepIndeQ} conclusions, the procedure handles the freshly converted isolated variables.

Next, we look at another query clause 
\begin{align*}
Q_2 = \ & \lnot A_1(x_1,\textcolor{red}{x_2},x_3) \lor \lnot A_2(x_3,\textcolor{red}{x_4},x_5) \lor \lnot A_3(x_5, \textcolor{red}{x_6}, x_7) \lor \\
& \lnot A_4(x_1, x_7,\textcolor{red}{x_8}) \lor \lnot A_5(x_3, \textcolor{red}{x_4}, \textcolor{red}{x_9}).	
\end{align*}
To remove the isolated variables $x_2, x_4, x_6, x_8$ and $x_9$ from $Q_2$, we apply the \textbf{SepIndeQ} rule to $Q_2$ five times. Using fresh predicate symbols $P_4$, $P_5$, $P_6$, $P_7$ and $P_8$, the \textbf{Q-Sep} procedure separates $Q_2$ into Horn guarded clauses 
\begin{align*}
&\lnot A_1(x_1, x_2, x_3) \lor P_4(x_1, x_3), \quad \quad \lnot A_4(x_1, x_7, x_8) \lor P_5(x_1, x_7), \\
&\lnot A_3(x_5 ,x_6 ,x_7) \lor P_6(x_5, x_7), \quad \quad \lnot A_5(x_3 ,x_4 ,x_9) \lor P_7(x_3, x_4), \\
&\lnot A_2(x_3, x_4, x_5) \lor \lnot P_7(x_3, x_4) \lor P_8(x_3, x_5),
\end{align*}
and an indecomposable chained-only query clause 
\begin{align*}
Q_3 = \lnot P_4(x_1, x_3) \lor \lnot P_8(x_3, x_5) \lor \lnot P_6(x_5, x_7) \lor \lnot P_5(x_1, x_7).	
\end{align*}
\textbf{Figure~\ref{fig:sep2}} shows how the \textbf{Q-Sep} procedure separates $Q_2$ into the above Horn guarded clauses and the above indecomposable chained-only query clause. We see that each application of the \textbf{SepIndeQ} rule separates a coloured surface literal. 

Unlike the \textbf{Q-Sep} conclusions of $Q_1$, applying the \textbf{Q-Sep} procedure to $Q_2$ derives the \emph{indecomposable chained-only query clause}, c.f.~$Q_3$. By Remark \ref{rem:sepindq}, the procedure of recursively applying the \textbf{SepIndeQ} rule to an indecomposable query clause terminates if either an \emph{indecomposable chained-only query clause} or an \emph{indecomposable isolated-only query clause} is derived. We use the notion of \textsf{ICQ} to denote indecomposable chained-only query clauses.

\begin{figure}[t]
\center
\includegraphics[width=.85\textwidth]{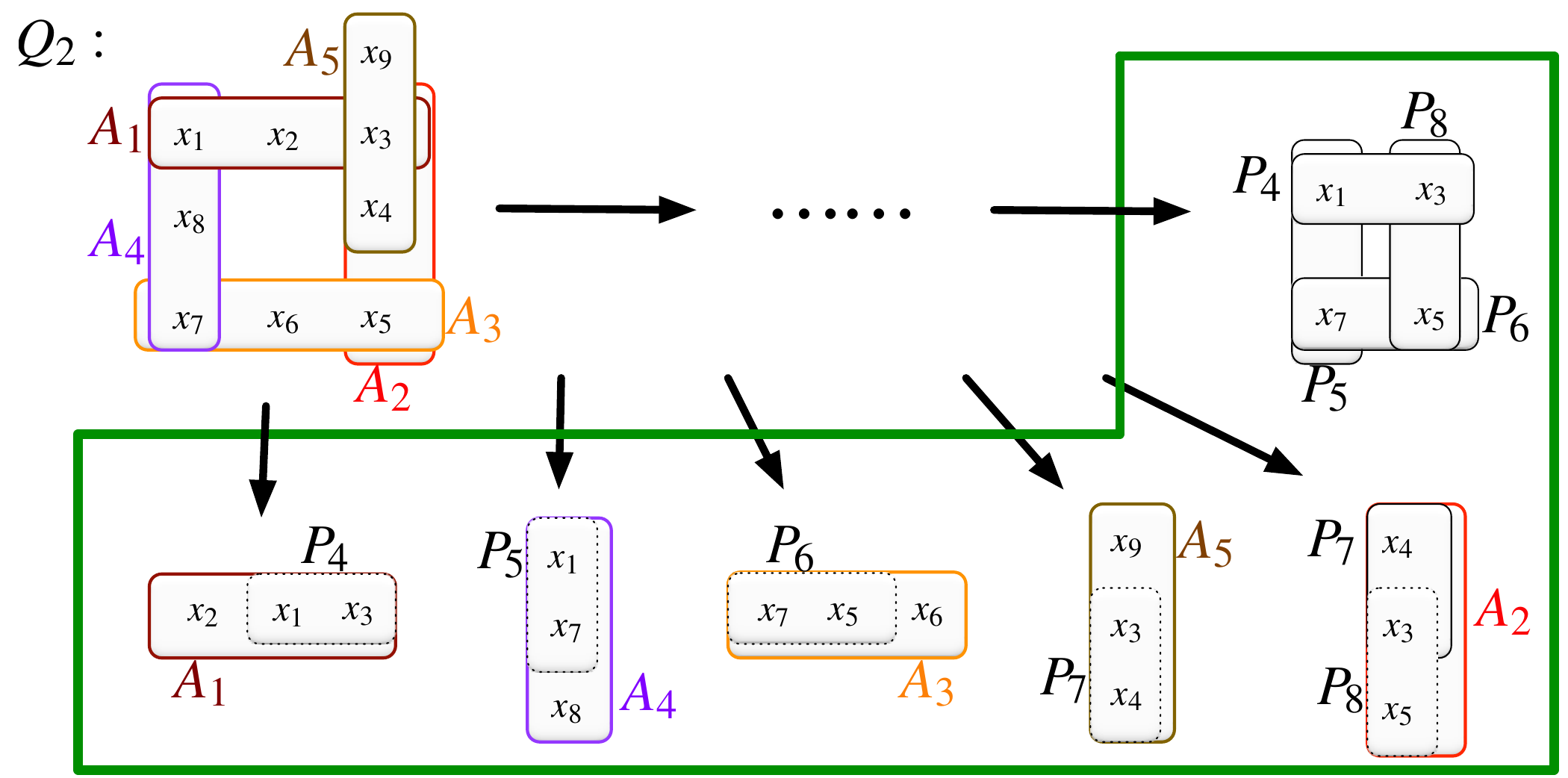}
\caption{The \textbf{Q-Sep} procedure separates $Q_2$ into Horn guarded clauses and an indecomposable chained-only query clause.}
\label{fig:sep2}
\end{figure}

%

The main result of this section is given as follows.

\begin{lem}
\label{lem:sep_conclusion}
Applying the \textbf{Q-Sep} procedure to a query clause replaces it with narrower guarded clauses and optionally narrower \textsf{ICQ} clauses.
\end{lem}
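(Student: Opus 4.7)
The plan is to prove the statement by strong induction on the width $w(Q)$ of the input query clause $Q$, where $w(Q)$ is the number of distinct variables in $Q$. In the base case, $w(Q) \leq 1$: such a clause is either ground or single-variable, so it is either a Horn guarded clause or an \textsf{ICQ} clause, and the \textbf{Q-Sep} procedure halts immediately.

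For the inductive step, I would case-split on the structure of $Q$. If $Q$ is decomposable, then \textbf{SepDeQ} applies, and Lemma \ref{lem:sep1_conlusion} guarantees its conclusions are strictly narrower query clauses and strictly narrower guarded clauses. I would invoke the induction hypothesis on each narrower query clause produced and collect the resulting guarded and \textsf{ICQ} clauses. If $Q$ is indecomposable, I further split on whether some surface literal of $Q$ contains both chained and isolated variables. By Remark \ref{rem:sepindq}, when such a surface literal exists \textbf{SepIndeQ} applies, and by Lemma \ref{lem:sep2_conlusion} its conclusions are strictly narrower query clauses together with strictly narrower Horn guarded clauses; the induction hypothesis finishes this subcase. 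When no such surface literal exists, I would argue that $Q$ is either chained-only, hence already an \textsf{ICQ} clause, or isolated-only, in which case Lemma \ref{lem:iq} identifies it as a Horn guarded clause; in either situation the procedure halts at $Q$ itself.

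To glue the recursion together, I would note that width is a well-founded measure bounded below by zero, so the recursive calls terminate. Strict narrowing is transitive along the chain of separations, hence every guarded or \textsf{ICQ} clause eventually produced is strictly narrower than the original $Q$, giving the claimed conclusion.

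The main obstacle I anticipate is justifying the final subcase of indecomposable $Q$ where no surface literal mixes chained and isolated variables. One has to rule out configurations where $Q$ simultaneously contains chained variables, isolated variables, and remains indecomposable. I would handle this by a short structural observation using Remark \ref{rem:iso} and the definitions: a surface literal with only isolated variables would, by definition of isolated, share no variable with any other surface literal, forcing $Q$ to be decomposable (contradicting our assumption) unless $Q$ is precisely that single surface literal, in which case Lemma \ref{lem:iq} applies. Consequently, under indecomposability the only alternatives to SepIndeQ-applicability are the pure chained-only (\textsf{ICQ}) and pure isolated-only (Horn guarded) cases, closing the argument.
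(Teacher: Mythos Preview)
Your approach is essentially the paper's: it too splits into (i) decomposable (Lemma~\ref{lem:sep1_conlusion}), (ii) indecomposable with a mixed surface literal (Remark~\ref{rem:sepindq} and Lemma~\ref{lem:sep2_conlusion}), and (iii) indecomposable with no mixed surface literal (conclude $Q$ is chained-only or isolated-only and invoke Lemma~\ref{lem:iq}); you have simply packaged the recursion as an explicit width induction.

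One small correction to your final paragraph: the assertion that a surface literal $L$ containing only isolated variables ``share[s] no variable with any other surface literal'' is not literally true. Remark~\ref{rem:iso} only gives that any surface literal $L'$ overlapping $L$ satisfies $\Var(L') = \Var(L)$; such $L'$ may well exist (e.g.\ $Q = \lnot A(x,y) \lor \lnot B(x,y)$ has two surface literals sharing both variables, and $Q$ is indecomposable). The clean way to close case~(iii) is: if no surface literal is mixed, then every surface literal---and hence every literal, since each non-surface literal has its variable set contained in that of some surface literal---lies entirely within the isolated variables or entirely within the chained variables; if both parts were non-empty this would give a variable-disjoint partition of $Q$, contradicting indecomposability. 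This yields exactly the dichotomy you need without the overclaim.
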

\begin{proof}
i) By Lemma \ref{lem:sep1_conlusion}, recursively applying the \textbf{SepDeQ} rule to a decomposable query clause replaces it with narrower guarded clauses and narrower indecomposable query clauses. ii) By Remark \ref{rem:sepindq} and Lemmas \ref{lem:sep2_conlusion} and \ref{lem:iq}, recursively applying the \textbf{SepIndeQ} rule to an indecomposable query clause, in which a surface literal contains both isolated variables and chained variables, replaces it by narrower Horn guarded clauses and narrower \textsf{ICQ} clauses. iii) Suppose $Q$ an indecomposable query clause that the \textbf{SepIndeQ} rule cannot separate. By Remark \ref{rem:sepindq}, $Q$ is an indecomposable query clause containing either only chained variables or only isolated variables. Then~$Q$ is either an indecomposable chained-only query clause, viz., an \textsf{ICQ} clause, or an indecomposable isolated-only query clause, viz., a Horn guarded clause, thanks to Lemma \ref{lem:iq}. By i)--iii), the claim holds. 
\end{proof}

Following Lemma \ref{lem:sep_conclusion}, we analyse the number of fresh predicate symbols that may be introduced in an application of the \textbf{Q-Sep} procedure to a query clause. 
\begin{lem}
\label{lem:sep_definer}
In the application of the \textbf{Q-Sep} procedure to a query clause, finitely many fresh predicate symbols are introduced.
\end{lem}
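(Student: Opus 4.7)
The plan is to argue that the recursion tree associated with an application of \textbf{Q-Sep} is finite, so that only finitely many separation steps occur, each introducing a bounded number of fresh predicate symbols (at most two propositional variables per application of \textbf{SepDeQ}, and one predicate symbol per application of \textbf{SepIndeQ}). Hence it suffices to bound the total number of separation steps.

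First I would view the execution of \textbf{Q-Sep} on an input query clause $Q$ as a rooted tree whose nodes are clauses, whose root is $Q$, and whose children of each node are the conclusions produced by one application of either \textbf{SepDeQ} or \textbf{SepIndeQ}. By Lemma \ref{lem:sep_conclusion} every child is either a guarded clause, an \textsf{ICQ} clause, or a narrower query clause; the procedure recurses only on the last case, since guarded and \textsf{ICQ} clauses are not touched by \textbf{Q-Sep}. Thus the leaves of the tree are exactly the non-query-clause conclusions, and the internal nodes are the query clauses that are further separated.

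Second I would use Lemmas \ref{lem:sep1_conlusion} and \ref{lem:sep2_conlusion} to note that whenever a query clause $Q^\prime$ is an internal node of the tree and $Q^{\prime\prime}$ is a query-clause child of $Q^\prime$, the width of $Q^{\prime\prime}$ is strictly smaller than that of $Q^\prime$. Since the width of a query clause is a non-negative integer bounded above by the width of the original $Q$, every path from the root to a leaf has length at most $\mathrm{width}(Q)$. Moreover each application of \textbf{SepDeQ} or \textbf{SepIndeQ} produces exactly two conclusions, so the tree is binary. A binary tree of depth at most $\mathrm{width}(Q)$ has at most $2^{\mathrm{width}(Q)+1}-1$ nodes, hence at most that many internal nodes, i.e.~at most that many separation steps.

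The potential obstacle is that an isolated variable in a parent clause may reappear (through the renaming literal) as a chained variable in a child, or vice versa, so one might worry that widths fail to decrease strictly. However, this has already been handled: Lemma \ref{lem:sep1_conlusion} and Lemma \ref{lem:sep2_conlusion} explicitly guarantee that every conclusion is narrower than the separated premise, regardless of how the isolated/chained classification shifts. Putting the pieces together, the number of applications of \textbf{SepDeQ} and \textbf{SepIndeQ} is finite, and since each application introduces at most two fresh predicate symbols, the total number of fresh predicate symbols introduced by \textbf{Q-Sep} is finite.
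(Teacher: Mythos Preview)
Your proof is correct. Both you and the paper use the strict-width-decrease established in Lemmas \ref{lem:sep1_conlusion} and \ref{lem:sep2_conlusion}, but you package the argument differently: you bound the depth of the (essentially) binary recursion tree by $\mathrm{width}(Q)$ and deduce an exponential bound $O(2^{\mathrm{width}(Q)})$ on the number of separation steps, whereas the paper counts more directly and obtains a linear bound of $3(n-1)$ fresh symbols, where $n=\mathrm{width}(Q)$. The paper's sharper count rests on the observation that \textbf{SepDeQ} partitions the variable set (so the total width across all active query clauses is preserved) while \textbf{SepIndeQ} strictly decreases it; hence at most $n-1$ applications of each kind suffice. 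Your tree argument is a bit more robust to bookkeeping details and perfectly adequate for the stated claim of finiteness; the paper's argument buys a tighter quantitative bound that is later useful in Lemma \ref{lem:definer}. One small inaccuracy to fix: \textbf{SepDeQ} actually produces three conclusions (the two separated query clauses and the positive ground clause $p_1 \lor p_2$), not two; your binary-tree claim still goes through because only the two query-clause conclusions are recursed on, but you should say so explicitly.
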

\begin{proof}
Suppose $Q$ is a query clause and $n$ is the width, viz., the number of distinct variables, in $Q$. By Lemma \ref{lem:sep1_conlusion}, recursively applying the \textbf{SepDeQ} rule to~$Q$ terminates in at most $n-1$ steps. The fact that each application of the \textbf{SepDeQ} rule to $Q$ introduces two fresh predicate symbols implies that at most $2*(n-1)$ fresh predicate symbols are needed. Similarly, by Lemma \ref{lem:sep2_conlusion}, recursively applying the \textbf{SepIndeQ} rule to $Q$ requires at most $n-1$ fresh predicate symbols. In total at most $3*(n-1)$ fresh predicate symbols are needed in separating $Q$.
\end{proof}

Depending on the surface literal one picks, applying the \textbf{Q-Sep} procedure to a query clause may derive \emph{distinct} sets of guarded clauses and \textsf{ICQ} clauses.

Regarding a query clause as a hypergraph, the \textbf{Q-Sep} procedure is a process of `cutting the branches off' the hypergraph. Interestingly, this procedure handles query clauses like the GYO-reduction in \cite{G79,YO79,GST84}. Using the notion of cyclic queries~\cite{BFMY83}, the GYO-reduction identifies cyclic conjunctive queries by recursively removing branches, viz.,~`ears' in the hypergraph of the queries. This method reduces a conjunctive query to an empty formula if the query is acyclic, otherwise, the query is cyclic. In our definition, an `ear' is the surface literal containing both isolated variables and chained variables, and it is separated from the query clause using the \textbf{Q-Sep} procedure. Hence, the \textbf{Q-Sep} procedure can be regarded as an implementation of the GYO-reduction: \emph{if a query clause can be separated into guarded clauses, then, that query clause is acyclic, otherwise it is cyclic}. However, the \textbf{Sep} rule, which is the basis of the \textbf{Q-Sep} procedure, is more general than the GYO-reduction as its applicability is for any first-order clause. The fact that an acyclic conjunctive query is expressible as a guarded formula is also reflected in \cite{FFG02,GLS03}.



\subsection*{\textbf{Handling indecomposable chained-only query clauses}}
In this section, we show how the \emph{term depth increase problem} is avoided when the \textbf{T-Res} rule is performed on \textsf{ICQ} clauses and \textsf{LG} clauses, and we devise a \emph{formula renaming} technique to manage the \textbf{T-Res} resolvents, which are not necessarily in the \textsf{LGQ} clausal class.

In an \textsf{ICQ} clause 
\begin{align*}
Q_3 = \lnot P_4(x_1, x_3) \lor \lnot P_8(x_3, x_5) \lor \lnot P_6(x_5, x_7) \lor \lnot P_5(x_1, x_7),	
\end{align*}
the chained variables $x_1, x_3, x_5$ and $x_7$ form a `cycle' through the literals $P_4$, $P_5$, $P_6$ and~$P_8$, as shown by the hypergraph representation given in the top-right corner in \textbf{Figure \ref{fig:sep2}}. The application of the \textbf{S-Res} rule can lead to nested compound terms in the resolvents. Consider a set $N$ of the \textsf{LGQ} clause $Q_3$ and the following \textsf{LG} clauses:
\begin{align*}
C_1 = \ & P_4(x, g(x,y,z_1,z_2))^\ast \lor \lnot G_1(x,y,z_1,z_2), \\
C_2 = \ & \lnot G_2(x,y,z_1,z_2) \lor  
P_8(g(x,y,z_1,z_2), x)^\ast \lor A(h(x,y,z_1,z_2)),   \\
C_3 = \  & P_6(f(x), x)^\ast \lor \lnot G_3(x)  \ \text{and} \ C_4 = P_5(f(x), x)^\ast \lor \lnot G_4(x).
\end{align*}
Applying the \textbf{S-Res} rule to $C_1, \ldots, C_4$ as the side premises and $Q_3$ as the main premise \emph{with all negative literals selected} derives the \textbf{S-Res} resolvent:
\begin{align*}
R_1 = & \ \lnot G_3(x) \lor \lnot G_4(x) \lor \lnot G_1(f(x),y,z_1,z_2) \lor \\ 
&\lnot G_2(f(x),y,z_1,z_2) \lor A(h(f(x),y,z_1,z_2)).
\end{align*}
The nested compound term in the literal $A(h(f(x),y,z_1,z_2))$ occurs in $R_1$. Applying the binary \textbf{S-Res} rule to $C_3$ and $Q_3$ \emph{with $\lnot P_6(x_5, x_7)$ selected} derives
\begin{align*}
R_2 = \lnot P_4(x_1, x_3) \lor \boxed{\lnot P_8(x_3, f(x))} \lor \lnot G_3(x) \lor \lnot P_5(x_1, x).
\end{align*}
Then applying the binary \textbf{S-Res} rule to $C_2$ and $R_2$ \emph{with $\lnot P_8(x_3, f(x))$ selected} derives
\begin{align*}
R_3 = \lnot P_4(x_1, x_3) \lor \lnot G_3(x) \lor \lnot P_5(x_1, x) \lor \lnot G_2(f(x),y,z_1,z_2) \lor A(h(f(x),y,z_1,z_2)),
\end{align*}
in which, again, a nested compound-term occurs in the literal $A(h(f(x),y,z_1,z_2))$. The result is predictable since an application of the \textbf{S-Res} rule can be seen as successive applications of the binary \textbf{S-Res} rule.

Now we show how the \emph{top-variable technique} handles this term depth increase. In Algorithms \ref{algorithm:refine}--\ref{algorithm:top}, the \textbf{T-Res} rule is applied to $Q_3$ and $C_1 \ldots, C_4$ as follows.
\begin{enumerate}
\item The $\T-Res(N, Q_3)$ function first selects all negative literals in $Q_3$, and then seeks the \textbf{S-Res} side premises for $Q_3$, which are $C_1, \ldots, C_4$.
\item The \textbf{S-Res} mgu of $C_1, \ldots, C_4$ and $Q_3$ is 
\begin{align*}
\{x_1 \mapsto f(x), x_5 \mapsto f(x), x_7 \mapsto x, x_3 \mapsto g(f(x),y,z_1,z_2)\}
\end{align*}
for the variables in $Q_3$. Hence $x_3$ is the only top variable in $Q_3$.
\item The literals $\lnot P_4(x_1, x_3)$ and $\lnot P_8(x_3, x_5)$ in $Q_3$ are therefore the \emph{top-variable literals}. A \textbf{T-Res} inference is performed on $C_1$, $C_2$ and $Q_3$, deriving:
\begin{align*}
R = \ & \textcolor{red}{\lnot G_1(x,y,z_1,z_2)} \lor \textcolor{blue}{\lnot G_2(x,y,z_1,z_2) \lor A(h(x,y,z_1,z_2))}^\ast \lor \\
& \textcolor{brown}{\lnot P_6(x, x_7) \lor \lnot P_5(x, x_7)},
\end{align*}
Notice that $R$ contains no nested compound terms.
\item No further inference is possible for $N \cup \{R\}$, hence $N \cup \{R\}$ is saturated.
\end{enumerate}

Though the \textbf{T-Res} resolvent $R$ is free of nested compound terms, it is wider than any of its premises; moreover, it is neither a query clause due to the occurrence of the compound term $h(x,y,z_1,z_2)$ nor an \textsf{LG} clause since $R$ contains no loose guard. The resolvent $R$ is formed with the \emph{remainders} of $C_1$, $C_2$ and $Q_3$ coloured in \textcolor{red}{red}, \textcolor{blue}{blue} and \textcolor{brown}{brown} above, respectively. Observe that: i) the remainders of $C_1$ and $C_2$ are \emph{\textsf{LG} clauses} and the remainder of $Q_3$ is a \emph{query clause}, and ii) due to the covering property of \textsf{LG} clauses, after unification, the remainders of $C_1$ and $C_2$ form an \textsf{LG} clause in $R$. Based on this observation, we devise a formula renaming technique which introduces a fresh predicate symbol~$P_9$ to abstract the remainders of $C_1$ and $C_2$ from $R$ and replaces $R$ by its equisatisfiable set of \textsf{LGQ} clauses:
\begin{align*}
C_5 = \ & \textcolor{red}{\lnot G_1(x,y,z_1,z_2)} \lor \textcolor{blue}{\lnot G_2(x,y,z_1,z_2) \lor  A(h(x,y,z_1,z_2))} \lor P_9(x,y,z_1,z_2), \\
Q_4 = \ & \textcolor{brown}{\lnot P_7(x, x_7) \lor \lnot P_6(x, x_7)} \lor \lnot P_9(x,y,z_1,z_2)
\end{align*} 
where $C_5$ is an \textsf{LG} clause and $Q_4$ is an indecomposable query clause. Since the \textbf{SepIndeQ} rule is applicable to $Q_4$, one can remove the isolated variable $x_7$ from $Q_4$ via separating the literals $\lnot P_7(x, x_7)$ and $\lnot P_6(x, x_7)$ from~$Q_4$. Using a new predicate symbol~$P_{10}$, one separates $Q_4$ into the Horn guarded clauses:
\begin{align*}
& C_6 = \lnot P_7(x, x_7) \lor \lnot P_6(x, x_7) \lor \lnot P_{10}(x) \ \text{and} \ C_7 = \lnot P_9(x,y,z_1,z_2) \lor P_{10}(x).
\end{align*} 
\textbf{Figure~\ref{fig:sep3}} shows how the \textbf{Q-Sep} procedure separates~$Q_4$ into $C_6$ and $C_7$. Then, the \textbf{T-Res} resolvent $R$ is replaced by the \textsf{LG} clauses $C_5, C_6$ and~$C_7$. To sum up, i) given an \textsf{LGQ} clausal set $\{Q_3, C_1, \ldots, C_4\}$, a saturated \textsf{LGQ} clausal set $\{Q_3, C_1, \ldots, C_7\}$ is derived, and ii) the newly derived clauses $C_5, C_6$ and $C_7$ are no wider than the \textbf{T-Res} side premises $C_1$ and $C_2$. 

\begin{figure}[t]
\center
\includegraphics[width=.65\textwidth]{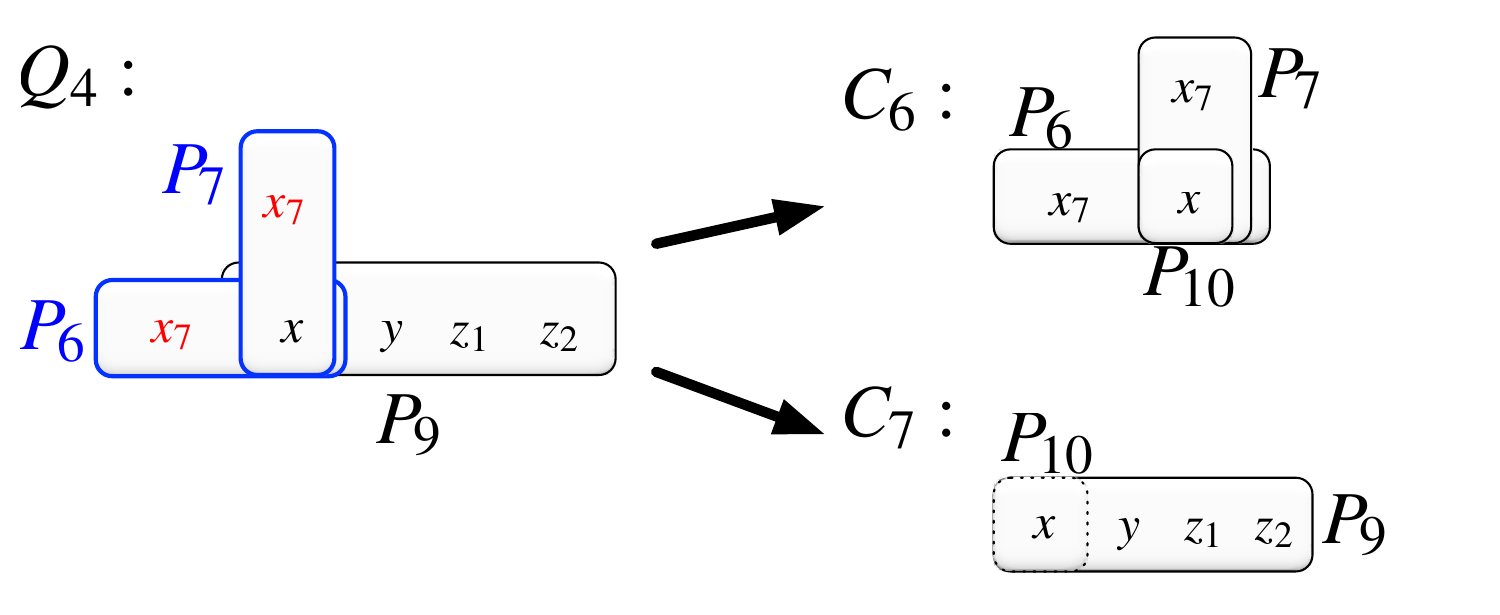}
\caption{Applying the \textbf{Q-Sep} procedure to $Q_4$ separates it into Horn guarded clauses. The removed isolated variables are coloured in \textcolor{red}{red} and the separated literals are coloured in \textcolor{blue}{blue}.}
\label{fig:sep3}
\end{figure}  



The other challenge in applying the \textbf{T-Res} rule to an \textsf{ICQ} clause and \textsf{LG} clauses is that the \textbf{T-Res} resolvents may have a \emph{wider variable cycle} than the \textbf{T-Res} main premise. For example, applying the \textbf{T-Res} rule to the \textsf{LG} clauses
\begin{align*}
C_1^\prime = & \lnot A_1(x_1, x_2)	\lor \lnot A_1(x_2, x_3) \lor \lnot A_1(x_3, x_1) \lor P_4(x_1, x_3), \\
C_2^\prime = & \lnot A_1(x_3, x_4)	\lor \lnot A_1(x_4, x_5) \lor \lnot A_1(x_5, x_3) \lor P_8(x_3, x_5), \\
C_3^\prime = & \lnot A_1(x_5, x_6)	\lor \lnot A_1(x_6, x_7) \lor \lnot A_1(x_7, x_5) \lor P_6(x_5, x_7), \\
C_4^\prime = & \lnot A_1(x_1, x_4)	\lor \lnot A_1(x_4, x_7) \lor \lnot A_1(x_7, x_1) \lor P_5(x_1, x_7)
\end{align*}
as the side premises and $Q_3$ as the main premise derives the \textsf{ICQ} clause 
\begin{align*}
& \lnot A_1(x_1, x_2)	\lor \lnot A_1(x_2, x_3) \lor \lnot A_1(x_3, x_1) \lor \lnot A_1(x_3, x_4)	\lor \lnot A_1(x_4, x_5) \lor \lnot A_1(x_5, x_3) \lor \\
& \lnot A_1(x_5, x_6)	\lor \lnot A_1(x_6, x_7) \lor \lnot A_1(x_7, x_5) \lor \lnot A_1(x_1, x_4) \lor \lnot A_1(x_4, x_7) \lor \lnot A_1(x_7, x_1)
\end{align*}
in which the variable cycle is significantly wider than the one in the query clause~$Q_3$. However, the \textbf{T-Res} system avoids this \textbf{T-Res} inference by selecting all negative literals in $C_1^\prime, C_2^\prime, C_3^\prime$ and $C_4^\prime$, forcing these clauses to act as the main premises in the resolution inferences. Specifically, the \textbf{T-Res} system restricts that only \emph{ground simple clauses} and \emph{non-ground compound-term clauses} can be side premises for \textsf{ICQ} clauses. Without introducing wider variable cycles, the application of the \textbf{T-Res} rule to~$Q_3$ and $C_1, \ldots, C_4$ breaks the variable cycle in $Q_3$. This is due to the covering property of the \textsf{LG} clauses in the \textbf{T-Res} side premises, ensuring that the variables in the side premises are simultaneously unified, therefore the new variable relations in the remainders of the side premises, occurring in the \textbf{T-Res} resolvent, remain controlled by the loose guards of the \textsf{LG} side premises.

Transforming the \textbf{T-Res} resolvent of an \textsf{ICQ} clause and \textsf{LG} clauses to the smallest number of \textsf{LGQ} clauses is not straightforward. We use the notions of \emph{connected top variables} and \emph{closed top-variable subclauses} to identify the \textsf{LG} subclauses in the \textbf{T-Res} resolvents.

\begin{defi}
\label{def:close_tv}
In a \textbf{T-Res} inference on an \textsf{ICQ} clause as the main premise with the top-variable subclause $C$, and \textsf{LG} clauses as the side premises, 
\begin{enumerate}
\item top variables $x_i$ and $x_j$ are \emph{connected} in $C$ if there exists a sequence of top variables $x_i, \ldots, x_j$ in $C$ such that each pair of adjacent variables co-occurs in a top-variable literal, and
\item the clause $C^\prime$ is a \emph{closed top-variable subclause} of $C$ if
\begin{enumerate}
\item each pair of top variables in $C^\prime$ are connected, and
\item the top variables in $C^\prime$ do not connect to the top variables that are in $C$ but not in $C^\prime$.
\end{enumerate}
\end{enumerate}
\end{defi}

Suppose $Q_{icq}$ is an \textsf{ICQ} clause and $N_{lg}$ are \textsf{LG} clauses. Further, suppose $Q_{icq}$ is the main premise and $N_{lg}$ are the side premises in a \textbf{T-Res} inference. Then, each \emph{closed top-variable subclause} in $Q_{icq}$ is resolved with a \emph{subset $N_{lg}^\prime$ of~$N_{lg}$}, and the disjunction of the remainders of all clauses in $N_{lg}^\prime$ forms an \textsf{LG} clause in the \textbf{T-Res} resolvent. In the previous example, the top-variable subclause $\lnot P_5(x_1, x_3) \lor \lnot P_9(x_3, x_5)$ in $Q_3$ is the only \emph{closed top-variable subclause} in~$Q_3$, since $x_3$ is the only top variable in~$Q_3$. The fact that the \textbf{T-Res} side premises of $\lnot P_5(x_1, x_3)$ and $\lnot P_9(x_3, x_5)$ are $C_1$ and $C_2$ implies that the disjunction of remainders of $C_1$ and $C_2$ forms an \textsf{LG} clause 
\begin{align*}
C_{lg}^\prime = \lnot G_1(x,y,z_1,z_2) \lor \lnot G_2(x,y,z_1,z_2) \lor A(h(x,y,z_1,z_2))
\end{align*}
in the \textbf{T-Res} resolvent
\begin{align*}
R = \lnot G_1(x,y,z_1,z_2) \lor \lnot G_2(x,y,z_1,z_2) \lor A(h(x,y,z_1,z_2))^\ast \lor \lnot P_7(x, x_7) \lor \lnot P_6(x, x_7).
\end{align*}
In the previous example, we abstracted~$C_{lg}^\prime$ from $R$ by introducing a fresh predicate symbol $P_9$, obtaining an \textsf{LG} clause $C_5$ and a query clause $Q_4$.

The \textbf{T-Res} resolvents of an \textsf{ICQ} clause and \textsf{LG} clauses is handled by the following formula renaming:
\begin{mdframed}[linewidth=2pt]
Given an \textsf{ICQ} clause $Q = \lnot A_1 \lor \ldots \lor \lnot A_m \lor \ldots \lor \lnot A_n$ and \textsf{LG} clauses $C_1 = B_1 \lor D_1, \ldots, C_n = B_n \lor D_n$, applying the \textbf{T-Res} rule to $Q$ as the main premise and $C_1, \ldots, C_n$ as the side premises derives the \textbf{T-Res} resolvent
\begin{align*}
R = (\lnot A_{m+1} \lor \ldots \lor \lnot A_n)\sigma \lor D_1\sigma \lor \ldots \lor D_m\sigma	
\end{align*}
where $\sigma = \mgu(A_1 \doteq B_1, \ldots, A_m \doteq B_m)$ and the \emph{top-variable subclause} is $\lnot A_1 \lor \ldots \lor \lnot A_m$ in $Q$ where $1 \leq m \leq n$.

Suppose $\lnot A_1 \lor \ldots \lor \lnot A_m$ is partitioned into the \emph{closed top-variable subclauses} $C_1^\prime, \ldots, C_t^\prime$. Then, we can represent $R$ as 
\begin{align*}
R = (\lnot A_{m+1} \lor \ldots \lor \lnot A_n)\sigma \lor D_1^\prime(\overline {x_1}\sigma) \lor \ldots \lor  D_t^\prime(\overline {x_t}\sigma),	
\end{align*}
where $\overline {x_i}$ are the variables occurring in $D_i^\prime$ for all $i$ such that $1 \leq i \leq t$. Then, $R$ is transformed using the following rule:
\begin{mathpar}
\inferrule* [left=\normalsize {\textnormal{\textbf{T-Trans:}}} \ , right={}]{N \cup \{(\lnot A_{m+1} \lor \ldots \lor \lnot A_n)\sigma \lor D_1^\prime(\overline {x_1}\sigma) \lor \ldots \lor  D_t^\prime(\overline {x_t}\sigma)\} }
{N \cup \{P_1(\overline {x_1}\sigma) \lor D_1^\prime(\overline {x_1}\sigma), \ \ldots, \ P_t(\overline {x_t}\sigma) \lor D_t^\prime(\overline {x_t}\sigma), \\\\
    (\lnot A_{m+1} \lor \ldots \lnot A_n)\sigma \lor \lnot P_1(\overline {x_1}\sigma) \lor \ldots \lor \lnot P_t(\overline {x_t}\sigma)\}
    }
\end{mathpar}
where $P_1, \ldots, P_t$ are the fresh predicate symbols. 
\end{mdframed}
Applying the \textbf{T-Trans} rule to a \textbf{T-Res} resolvent of an \textsf{ICQ} clause and \textsf{LG} clause replaces it with a set of \textsf{LGQ} clauses and preserves satisfiability equivalence. Formally:


\begin{lem}
\label{lem:ttrans}
Let $R$ be a \textbf{T-Res} resolvent of an \textsf{ICQ} clause $Q_{icq}$ as the main premise and \textsf{LG} clauses $N_{lg}$ as the side premises. Then, the following properties hold.
\begin{enumerate}
\item Applying the \textbf{T-Trans} rule to $R$ replaces it by a set $N_{lg}^\prime$ of \textsf{LG} clauses and a query clause $Q_r$.
\item Applying the \textbf{Q-Sep} procedure to $Q_r$ separates it into a set $N_{g}$ of guarded clauses and optionally a set $N_{icq}$ of \textsf{ICQ} clauses.
\item For each clause $C^\prime$ in $N_{lg}^\prime$, there exists a clause $C$ in $N_{lg}$ such that $C^\prime$ is no wider than $C$.
\item For each clause $C^\prime$ in $N_{g}$, it is the case that either $C^\prime$ is narrower than $Q_{icq}$, or there exists a clause~$C$ in $N_{lg}$ such that $C^\prime$ is not wider than $C$.
\item For each clause $Q_{icq}^\prime$ in $N_{icq}$, $Q_{icq}^\prime$ is narrower than $Q_{icq}$.
\item Suppose $N$ is a clausal set. Then, $N \cup \{R\}$ is satisfiable if and only if $N \cup N_{lg}^\prime \cup N_{g} \cup N_{icq}$ is satisfiable.
\end{enumerate}

\end{lem}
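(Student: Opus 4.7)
The plan is to verify the six claims in turn, reusing the structural results about T-Res established in Lemmas~\ref{lem:res}--\ref{lem:bounded_width} and the Q-Sep analysis from Lemmas~\ref{lem:sep1_conlusion}--\ref{lem:sep_conclusion}.

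For claim~1 (and simultaneously claim~3), I would show that each clause $P_i(\overline{x_i}\sigma) \lor D_i^\prime(\overline{x_i}\sigma)$ is an \textsf{LG} clause whose width is at most that of some side premise in $N_{lg}$. The key is that $D_i^\prime(\overline{x_i}\sigma)$ is exactly the disjunction of remainders of those \textsf{LG} side premises whose eligible literals unify with the top-variable literals in the $i$-th closed top-variable subclause; by the non-ground flat main-premise case of Lemma~\ref{lem:res}, this disjunction is itself an \textsf{LG} clause, and by Lemma~\ref{lem:bounded_width} its width is bounded by the widest involved side premise. The fresh atom $P_i(\overline{x_i}\sigma)$ is flat with $\Var(P_i(\overline{x_i}\sigma)) = \overline{x_i}\sigma = \Var(D_i^\prime(\overline{x_i}\sigma))$, so adjoining it preserves flatness, the covering property, strong compatibility and the existing loose guard. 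The remaining T-Trans conclusion $(\lnot A_{m+1} \lor \ldots \lor \lnot A_n)\sigma \lor \lnot P_1(\overline{x_1}\sigma) \lor \ldots \lor \lnot P_t(\overline{x_t}\sigma)$ is a query clause because Corollary~\ref{col:tres_uni} guarantees that $\sigma$ acts on the non-top variables of $Q_{icq}$ only by substituting variables or constants, keeping each $\lnot A_k\sigma$ flat, while the $\lnot P_i$ literals are negative and flat by construction.

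Claim~2 follows directly from Lemma~\ref{lem:sep_conclusion} applied to $Q_r$. For claims~4 and~5 I would analyse the widths of the clauses returned by the \textbf{Q-Sep} procedure on $Q_r$. Every guarded clause produced by \textbf{Q-Sep} has variable set equal to that of some surface literal of an intermediate query clause: if this surface literal is (or descends from) a $\lnot A_k\sigma$ with $k > m$, then $\lnot A_k$ contains no top variables by definition of top-variable literals, and since $Q_{icq}$ is a chained-only indecomposable query clause in which every single literal has strictly fewer variables than the whole clause, the resulting width is strictly less than $|\Var(Q_{icq})|$; if instead the surface literal is (or descends from) a $\lnot P_i(\overline{x_i}\sigma)$, its width is bounded by $|\overline{x_i}\sigma|$ and hence, by the width bound established for claim~3, by the width of some $C$ in $N_{lg}$. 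For claim~5, the essential point is that the chained variables of $Q_r$ form a subset of $V_{ntop}\sigma$, the image under $\sigma$ of the non-top variables of $Q_{icq}$: pure side-premise variables that are not $\sigma$-images of non-top variables can appear in only one $\lnot P_i$ literal (because the $\lnot P_i$ literals are pairwise variable-disjoint, the T-Res side premises being variable-disjoint) and do not occur in any $\lnot A_k\sigma$, so they are isolated in $Q_r$ and do not contribute to any \textsf{ICQ} core; hence the width of any $Q_{icq}^\prime$ in $N_{icq}$ is at most $|V_{ntop}\sigma| \leq |V_{ntop}| < |\Var(Q_{icq})|$, where the last strict inequality holds because at least one top variable has been instantiated away.

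Claim~6 will follow by viewing the T-Trans rule as a cascade of formula renamings that abstract each $D_i^\prime(\overline{x_i}\sigma)$ behind the fresh predicate $P_i$, so satisfiability is preserved by essentially the same argument as in Lemma~\ref{lem:sep_sound}; composing this with the soundness of \textbf{Q-Sep} (Lemma~\ref{lem:qsep_sound}) then yields the required equivalence between $N \cup \{R\}$ and $N \cup N_{lg}^\prime \cup N_g \cup N_{icq}$, using the freshness of all introduced predicate symbols over $N$. The main technical obstacle I expect is the width analysis in claims~4 and~5: carefully tracking, via Corollary~\ref{col:tres_uni} and the definition of closed top-variable subclauses, how $\sigma$ identifies, grounds and renames variables across the main premise and the side premises, and ruling out the creation of any new cycle in $Q_r$ through pure side-premise variables that could inflate the chained-variable count above $|\Var(Q_{icq})|$.
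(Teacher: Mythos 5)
Your proposal follows essentially the same route as the paper's proof: decompose the top-variable subclause into closed top-variable subclauses, show each block of side-premise remainders forms an \textsf{LG} clause inheriting a loose guard (hence the width bound) from one of the side premises, verify that the residue is a query clause via Corollary~\ref{col:tres_uni}, and trace the chained variables of $Q_r$ back to the non-top variables of $Q_{icq}$ to get claims 4 and 5. The one caveat is that Lemma~\ref{lem:res} is stated for \textsf{LG} main premises and so cannot be cited verbatim when the main premise is an \textsf{ICQ} clause; the paper instead re-derives the ``connected remainders form an \textsf{LG} clause'' step directly (its steps 1.-2-i through 1.-2-iii), which is the sub-argument your plan actually relies on, and your parenthetical claim that the $\lnot P_i$ literals are pairwise variable-disjoint should be weakened to the correct statement that any variable shared between distinct $P_i$ and $P_j$ must be the $\sigma$-image of a non-top variable of $Q_{icq}$.
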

\begin{proof}
Recall the \textbf{T-Res} rule with a-priori eligibility.
\begin{align*}
 \prftree[l]{\textbf{T-Res}: }
    {B_1 \lor D_1, \ \ldots, \ B_m \lor D_m, \ \ldots, \ B_n \lor D_n}
    {\lnot A_1 \lor \ldots \lor \lnot A_{m} \lor \ldots \lor \lnot A_n \lor D}
    {(D_1 \lor \ldots \lor D_m \lor \lnot A_{m+1} \lor \ldots \lor \lnot A_{n} \lor D)\sigma}
\end{align*}
if the following conditions are satisfied.
\begin{enumerate}
\item[1.] No literal is selected in $D_1, \ldots, D_n,D$ and $B_1, \ldots, B_n$ are strictly $\succ_{lpo}$-maximal with respect to $D_1, \ldots, D_n$, respectively. 
\item[2a.] If $n = 1$, i) either $\lnot A_1$ is selected, or nothing is selected in $\lnot A_1 \lor D$ and $\lnot A_1$ is $\succ_{lpo}$-maximal with respect to $D$, and ii) $\sigma = \mgu(A_1 \doteq B_1)$ or
\item[2b.] there must exist an mgu $\sigma^\prime$ such that $\sigma^\prime = \mgu(A_1 \doteq B_1, \ldots, A_n \doteq B_n)$, then $\lnot A_1, \ldots, \lnot A_m$ are the \emph{top-variable literals} of $\lnot A_1 \lor \ldots \lor \lnot A_{m} \lor \ldots \lor \lnot A_n \lor D$ and $\sigma = \mgu(A_1 \doteq B_1, \ldots, A_m \doteq B_m)$ where $1 \leq m \leq n$.
\item[3.] All premises are variable disjoint.
\end{enumerate} 
Suppose $Q_{icq} = \lnot A_1 \lor \ldots \lor \lnot A_{m} \lor \ldots \lor \lnot A_n$ is the \textbf{T-Res} main premise and an \textsf{ICQ} clause, and $C_1 = B_1 \lor D_1, \ldots, C_m = \ B_m \lor D_m, \ldots, C_n = B_n \lor D_n$ are the \textbf{T-Res} side premises and \textsf{LG} clauses. Further suppose $R$ is the \textbf{T-Res} resolvent~$(D_1 \lor \ldots \lor D_m \lor \lnot A_{m+1} \lor \ldots \lor \lnot A_{n})\sigma$ of $C_1, \ldots, C_n$ and $C$. The variables occurring in the \textbf{T-Trans} rule are omitted in this proof.

Suppose $C_i$ is a clause in $C_1, \ldots, C_m$. By Algorithm \ref{algorithm:refine}, $C_i$ is either a ground flat clause or a compound-term clause. Suppose $C_i$ is a ground flat clause. This means that a top variable in $Q_{icq}$ pairs a constant in $C_i$. By Lemma~\ref{lem:tres_cons}, $C_1, \ldots, C_m$ are ground flat clauses and all negative literals in $Q_{icq}$ are selected. Hence, the \textbf{T-Res} resolvent $R$ is a ground flat clause, viz., an \textsf{LG} clause, and the case of applying the \textbf{T-Trans} rule to $R$ is trivial. Hence, $C_1, \ldots, C_m$ are compound-term clauses. We now prove 1.--6.~by in sequential order.

1.-1: We first prove that $(\lnot A_{m+1} \lor \ldots \lor \lnot A_{n})\sigma$ is a query clause. By 1.~of Corollary~\ref{col:tres_uni}, the mgu $\sigma$ substitutes all variables in $\lnot A_{m+1} \lor \ldots \lor \lnot A_{n}$ with either variables or constants. Then, $(\lnot A_{m+1} \lor \ldots \lor \lnot A_{n})\sigma$ is a query clause. When $m=n$ the statement trivially holds.

1.-2: We prove that $(D_1 \lor \ldots \lor D_m)\sigma$ is a disjunction of \textsf{LG} clauses, and each disjunct maps to a closed top-variable subclause. This is done by proving: 
\begin{enumerate}
	\item[i] The subclause $D_i\sigma$ is an \textsf{LG} clause for each $i$ such that $1 \leq i \leq m$.
	\item[ii] Suppose $\lnot A_i$ and $\lnot A_j$ are two distinct literals containing connected top variables where $1 \leq i \leq m$ and $1 \leq j \leq m$. Then, $(D_i \lor D_j)\sigma$ is an \textsf{LG} clause.
	\item[iii] Suppose $\lnot A_{i_1} \lor \ldots \lor \lnot A_{i_k}$ is a closed top-variable subclause of $\lnot A_1 \lor \ldots \lor \lnot A_{m}$, and suppose $D_i^\prime$ represents $D_{i_1} \lor \ldots \lor D_{i_k}$. Then, $(D_1 \lor \ldots \lor D_m)\sigma$ can be represented as $(D_1^\prime \lor \ldots \lor D_t^\prime)\sigma$ where $1 \leq t \leq m$.
\end{enumerate}

1.-2-i: By Lemma \ref{lem:com_large} and the fact that $C_i$ is a compound-term clause, the eligible literal $B_i$ in $C_i$ is a compound-term literal. By the covering property of \textsf{LG} clauses, $\Var(B_i) = \Var(C_i)$. By 2.~of Corollary~\ref{col:tres_uni}, the mgu $\sigma$ substitutes variables in $C_i$ with variables and constants. By the fact that $C_i$ is an \textsf{LG} clause and Lemma \ref{lem:rem_lit}, $D_i\sigma$ is an \textsf{LG} clause.

1.-2-ii: Suppose $x$ and $y$ are top variables in $\lnot A_i$ and $\lnot A_{j}$, respectively. Further suppose $x$ and $y$ are connected. By the definition of connected top variables, there exists a sequence of top variables $x, \ldots, y$ in $C$ such that each pair of adjacent variables co-occurs in a top-variable literal. By Lemma \ref{lem:pres_gnd_pairing}, $x, \ldots, y$ only pair compound terms. Suppose $x^\prime$ and $y^\prime$ are two adjacent top variables in $x, \ldots, y$. W.l.o.g.~suppose $\lnot A_t$ is a top-variable literal in~$C$ where $x^\prime$ and $y^\prime$ co-occur. Suppose $B_t$ is the compound-term literal in the \textbf{T-Res} side premises that resolves~$\lnot A_t$, satisfying that $A_t\sigma \doteq B_t\sigma$. Further suppose $s^\prime$ and $t^\prime$ are the compound terms in~$B_t$ that~$x^\prime$ and~$y^\prime$ pair, respectively. By 1.~of Corollary~\ref{col:tres_uni} and the covering property of \textsf{LG} clauses, $\Var(s^\prime\sigma) = \Var(t^\prime\sigma)$, therefore $\Var(x^\prime\sigma) = \Var(y^\prime\sigma)$. Hence, $\Var(x\sigma) = \Var(y\sigma)$. By the strong compatibility of \textsf{LG} clauses,~$s^\prime\sigma$ is compatible with $t^\prime\sigma$, therefore $x^\prime\sigma$ is compatible with~$y^\prime\sigma$. Hence,~$x\sigma$~is compatible with $y\sigma$. W.l.o.g. suppose $x$ pairs a compound term $t$ in~$B_i$ and $y$ pairs a compound term $s$ in $B_j$. By the result established in 1.-2-ii, $D_i\sigma$ and~$D_j\sigma$ are \textsf{LG} clauses, The fact that $\Var(x\sigma) = \Var(y\sigma)$ implies $\Var(s\sigma) = \Var(t\sigma)$. By the covering property of \textsf{LG} clauses, $\Var(D_i\sigma) = \Var(D_j\sigma)$, therefore $D_i\sigma \lor D_j\sigma$ is covering. Since~$x\sigma$ is compatible with $y\sigma$, $s\sigma$ is compatible with $t\sigma$. By the strong compatibility property of \textsf{LG} clauses, the compound terms in $D_i\sigma$ and $D_j\sigma$ are compatible, therefore $D_i\sigma \lor D_j\sigma$ are strongly compatible. The fact that $D_i\sigma$ and $D_j\sigma$ are \textsf{LG} clauses implies that $D_i\sigma \lor D_j\sigma$ is a simple clause. Since $D_i\sigma$ is an \textsf{LG} clause,~$D_i\sigma$~contains a loose guard. By the fact that $\Var(D_i\sigma) = \Var(D_j\sigma)$, $D_i\sigma \lor D_j\sigma$ contains a loose guard. Hence, $D_i\sigma \lor D_j\sigma$ is an \textsf{LG} clause. 

1.-2-iii: Suppose $\lnot A_{i_1} \lor \ldots \lor \lnot A_{i_k}$ is a closed top-variable subclause of $\lnot A_1 \lor \ldots \lor \lnot A_{m}$. Further suppose $D_i^\prime$ represents $D_{i_1} \lor \ldots \lor D_{i_k}$ where $\sigma = \mgu(A_{i_1} \doteq B_{i_1}, \ldots, A_{i_k} \doteq B_{i_k})$. We first prove that $D_i^\prime$ is an \textsf{LG} clause. Suppose~$C^\prime$ is the top-variable subclause $\lnot A_1 \lor \ldots \lor \lnot A_{m}$. By the fact that each literal in $C^\prime$ contains at least one top variable, and 2b.~of Definition \ref{def:close_tv} that each pair of closed top-variable subclauses of $C^\prime$~has no connected top variables, one can partition $C^\prime$ into a set of closed top-variable subclauses. We use $C_1^\prime, \ldots, C_t^\prime$ to denote this set of subclauses. W.l.o.g.~we use $C_i^\prime$ to represent $\lnot A_{i_1} \lor \ldots \lor \lnot A_{i_k}$. By 2a.~of Definition \ref{def:close_tv}, each pair of top variables in~$C_i^\prime$~is connected. By the result established in 1.-2-ii, $(D_{i_1} \lor \ldots \lor D_{i_k})\sigma$ is an \textsf{LG} clause, therefore $D_i^\prime$ is an \textsf{LG} clause. We represent $(D_1 \lor \ldots \lor D_m)\sigma$ as $(D_1^\prime \lor \ldots \lor D_t^\prime)\sigma$ where each $D_i^\prime$ in $D_1^\prime, \ldots, D_t^\prime$ maps to a closed top-variable subclause~$C_i^\prime$. Now we can present the \textbf{T-Res} resolvent as follows.
\begin{align*}
R = (D_1^\prime \lor \ldots \lor D_t^\prime \lor \lnot A_{m+1} \lor \ldots \lor \lnot A_{n})\sigma
\end{align*}
Applying the \textbf{T-Trans} rule to $R$ transforms it into 
\begin{align*}
D_1^\prime\sigma \lor P_1, \ \ldots, \  D_t^\prime\sigma \lor P_t, \ Q_r =  (\lnot A_{m+1} \lor \ldots \lor \lnot A_{n})\sigma \lor \lnot P_1 \lor \ldots \lor \lnot P_t.
\end{align*}
We prove that $D_i^\prime\sigma \lor P_i$ is an \textsf{LG} clause for all $i$ such that $1 \leq i \leq t$. The case is trivial when $D_i^\prime\sigma$ is ground. Now assume that $D_i^\prime\sigma$ is non-ground. By 1.-2-iii, $D_i^\prime\sigma$ is an \textsf{LG} clause. By the definition of the \textbf{T-Trans} rule, $P_i$ is a flat literal and $\Var(D_i^\prime\sigma) = \Var(P_i)$, hence $D_i^\prime\sigma \lor P_i$ is an \textsf{LG} clause. Next, we prove that $Q_r$ is a query clause. By the definition of the \textbf{T-Trans} rule, $\lnot P_1 \lor \ldots \lor \lnot P_t$ is a negative flat clause. By the result established in 1.-1, $Q_r$ is a query clause.

2.: This is a consequence of Lemma \ref{lem:sep_conclusion}.

3.: We prove that for each clause $D_i^\prime\sigma \lor P_i$ in $D_1^\prime\sigma \lor P_1, \ \ldots, D_t^\prime\sigma \lor P_t$, there exists a \textbf{T-Res} side premise~$C$ in $C_1, \ldots, C_m$ such that $D_i^\prime\sigma \lor P_i$ is no wider than $C$. By 1.-2-i, the loose guard $\mathbb{G}\sigma$ in $D_i^\prime\sigma$ is inherited from a loose guard $\mathbb{G}$ in $C_1, \ldots, C_m$. W.l.o.g.~suppose a side premise $C$ contains the loose guard $\mathbb{G}$. The fact that a loose guard contains all variables of an \textsf{LG} clause implies that $\Var(D_i^\prime\sigma \lor P_i) = \Var(\mathbb{G}\sigma)$ and $\Var(C) = \Var(\mathbb{G})$. Then, $\Var(D_i^\prime\sigma \lor P_i) = \Var(C\sigma)$. By~2.~of Corollary~\ref{col:tres_uni}, the mgu $\sigma$ substitutes all variables in~$\mathbb{G}$ with either constants or variables, therefore~$C$ contains no less distinct variables than $D_i^\prime\sigma \lor P_i$. 

4.: Suppose $C^\prime$ is a guarded clause obtained by applying the \textbf{Q-Sep} procedure to 
\begin{align*}
Q_r = \lnot A_{m+1}\sigma \lor \ldots \lor \lnot A_{n}\sigma \lor \lnot P_1 \lor \ldots \lor \lnot P_t.	
\end{align*}
Then, $C^\prime$ can only be derived due to the fact that a surface literal in $Q_r$ is separated by the \textbf{Q-Sep} procedure. We prove that 
\begin{enumerate}
	\item[1] if the separated surface literal belongs to $\lnot A_{m+1}\sigma, \ldots, \lnot A_{n}\sigma$, then $C^\prime$ is narrower than~$Q_{icq}$, or
	\item[2] if the separated surface literal belongs to $\lnot P_1, \ldots, \lnot P_t$, then there exists a \textbf{T-Res} side premise~$C$ in $C_1, \ldots, C_m$ such that~$C^\prime$ is no wider than~$C$.
\end{enumerate}

4.-1: Suppose $C^\prime$ is a guarded clause that is obtained by separating a surface literal in~$Q_r$ belonging to $\lnot A_{m+1}\sigma,  \ldots, \lnot A_{n}\sigma$. The fact that $\lnot A_{m+1} \lor \ldots \lor \lnot A_{n}$ contains only non-top variables implies that $\lnot A_{m+1} \lor \ldots \lor \lnot A_{n}$ is narrower than $Q_{icq}$. By 1.~of Corollary~\ref{col:tres_uni}, the mgu $\sigma$ substitutes the variables in $\lnot A_{m+1} \lor \ldots \lor \lnot A_{n}$ with either variables or constants, hence $\lnot A_{m+1}\sigma \lor \ldots \lor \lnot A_{n}\sigma$ is narrower than $Q_{icq}$. By Lemma~\ref{lem:sep_conclusion}, $C^\prime$ is narrower than $\lnot A_{m+1}\sigma \lor \ldots \lor \lnot A_{n}\sigma$, hence $C^\prime$ is narrower than $Q_{icq}$.

4.-2: W.l.o.g.~suppose $\lnot P_1$ is a surface literal in $\lnot P_1, \ldots, \lnot P_t$ that is separated from~$Q_r$ and suppose $D_1^{\prime}\sigma$ is the subclause that  $P_1$ defines. Further, suppose $C^\prime$ is the guarded clause obtained by separating $\lnot P_1$ from $Q_r$. By the definition of the \textbf{T-Trans} rule, $\Var(P_1) = \Var(D_1^{\prime}\sigma)$. By 1.-2-iii, $D_1^{\prime}\sigma$ is a disjunction of the remainders from the \textbf{T-Res} side premises that map to a closed top-variable clause. W.l.o.g.~suppose~$D_1$~is one of those remainders and $D_1\sigma$ is a disjunct in $D_1^{\prime}\sigma$. Suppose $C$ is the \textbf{T-Res} side premise where $D_1$ occurs. By 2.~of Corollary~\ref{col:tres_uni}, the mgu $\sigma$ substitutes variables in the \textbf{T-Res} side premises with variables and constants, therefore $D_1\sigma$ is no wider than~$D_1$. By 1.-2-ii, $\Var(D_1\sigma) = \Var(D_1^{\prime}\sigma)$. Hence, $D_1^{\prime}\sigma$ is no wider than~$D_1$, thus~$D_1^{\prime}\sigma$ is no wider than $C$. The fact that $\Var(P_1) = \Var(D_1^{\prime}\sigma)$ implies that $P_1$ is no wider than $C$. Since the guarded clause~$C^\prime$ is obtained by separating the surface literal~$\lnot P_1$ from $Q_r$, $\lnot P_1$ acts as a guard in $C^\prime$, hence $\Var(P_1) = \Var(C^\prime)$. Then, $C^\prime$ is no wider than $C$.

5.: Suppose applying the \textbf{Q-Sep} procedure to 
\begin{align*}
Q_r = \lnot A_{m+1}\sigma \lor \ldots \lor \lnot A_{n}\sigma \lor \lnot P_1 \lor \ldots \lor \lnot P_t	
\end{align*}
derives a set $N_{icq}$ of \textsf{ICQ} clauses, and $Q_{icq}^\prime$ is an \textsf{ICQ} clause in $N_{icq}$. W.l.o.g.~we assume that the mgu $\sigma$ substitutes the variable arguments in the \textbf{T-Res} side premises $C_1, \ldots, C_m$ with the variable arguments in the \textbf{T-Res} main premise $Q_{icq}$. We prove that~$Q_{icq}^\prime$ is narrower than $Q_{icq}$ by showing that $Q_{icq}^\prime$ contains only the non-top-variables from~$Q_{icq}$. The following three steps prove this claim.

5.-1: First we prove that the chained variables (in $Q_r$) occurring in $\lnot P_1, \ldots, \lnot P_t$ belong to the non-top-variables from $Q_{icq}$. W.l.o.g.~suppose $\lnot P_1$ and $\lnot P_2$ are two surface literals in $Q_r$ that have common variables. Suppose $D_1^{\prime}\sigma$ and $D_2^{\prime}\sigma$ are the subclauses that $P_1$ and $P_2$ define, respectively. Further suppose $D_1$ is a disjunct in $D_1^{\prime}$ and $D_2$ is a disjunct in $D_2^{\prime}$. Suppose $C_1 = B_1 \lor D_1$ and $C_2 = B_2 \lor D_2$ are \textbf{T-Res} side premises. By 1.-2-ii, $\Var(D_1\sigma) = \Var(D_1^{\prime}\sigma)$ and $\Var(D_2\sigma) = \Var(D_2^{\prime}\sigma)$. By the definition of the \textbf{T-Trans} rule, $\Var(P_1) = \Var(D_1^{\prime}\sigma)$ and $\Var(P_2) = \Var(D_2^{\prime}\sigma)$, therefore $\Var(P_1) = \Var(D_1\sigma)$ and $\Var(P_2) = \Var(D_2\sigma)$. Hence, the overlapping variables between $\lnot P_1$ and $\lnot P_2$ are the same as those of $D_1\sigma$ and~$D_2\sigma$. Now we consider how the mgu $\sigma$ substitutes the variables in $D_1$ and~$D_2$. W.l.o.g~suppose $\lnot A_1$ and $\lnot A_2$ are top-variable literals in $Q_{icq}$ satisfying $A_1\sigma = B_1\sigma$ and $A_2\sigma = B_2\sigma$. To understand how the mgu $\sigma$ substitutes the variables in $D_1$ and $D_2$ is to understand how~$\sigma$, respectively, unifies the pair $A_1$ and~$B_1$ and the pair $A_2$ and~$B_2$. By 2.~in Corollary \ref{col:tres_uni} and the assumption that the mgu $\sigma$ substitutes the variable arguments in~$B_i$ with that in $A_i$, $\sigma$ substitutes all variable arguments in $B_1$ and~$B_2$ with either non-top-variables or constants from $Q_{icq}$. Hence, the overlapping variables between $B_1\sigma$ and $B_2\sigma$ are non-top-variables in $Q_{icq}$. Then, the overlapping variables between $D_1\sigma$ and $D_2\sigma$, and the ones between $P_1$ and~$P_2$, are non-top-variables from~$Q_{icq}$. By the definition of chained variables and the assumption that $\lnot P_1$ and $\lnot P_2$ are the surface literals in~$Q_r$, the overlapping variables of $P_1$ and $P_2$ are the chained variables in $Q_r$. Hence, the chained variables occurring in $\lnot P_1, \ldots, \lnot P_t$ come from the non-top variables in $Q_{icq}$.

5.-2: Next we prove that the chained variables occurring in $\lnot A_{m+1}\sigma \lor \ldots \lor \lnot A_{n}\sigma$ are the non-top-variables from $Q_{icq}$. By 1.~in Corollary \ref{col:tres_uni}, the fact that $\lnot A_{m+1} \lor \ldots \lor \lnot A_{n}$ contains only non-top-variables and the assumption that the mgu $\sigma$ substitutes the variable arguments in $C_1, \ldots, C_m$ with the variable arguments in $Q_{icq}$, the variables in $\lnot A_{m+1}\sigma \lor \ldots \lor \lnot A_{n}\sigma$ are the non-top-variables in $Q_{icq}$. Hence, the chained variables in $\lnot A_{m+1}\sigma \lor \ldots \lor \lnot A_{n}\sigma$ belong to the non-top-variables in $Q_{icq}$.

5.-3: By 5.-1 and 5.-2 and the fact that applying the \textbf{Q-Sep} procedure to a query clause does not introduce new chained variables to the query clause in the conclusions, $Q_{icq}^\prime$ contains no more distinct variables than the non-top-variables in $Q_{icq}$. Since the top variables in $Q_{icq}$ do not occur in $Q_{icq}^\prime$, $Q_{icq}^\prime$ is narrower than $Q_{icq}$.

6.: By Lemma \ref{lem:qsep_sound}, the \textbf{Q-Sep} procedure is sound. The fact that the \textbf{T-Trans} rule is formula renaming implies that the rule itself is sound. Hence, satisfiability equivalence is preserved.
\end{proof}

\begin{figure}[t]
\center
\includegraphics[width=.85\textwidth]{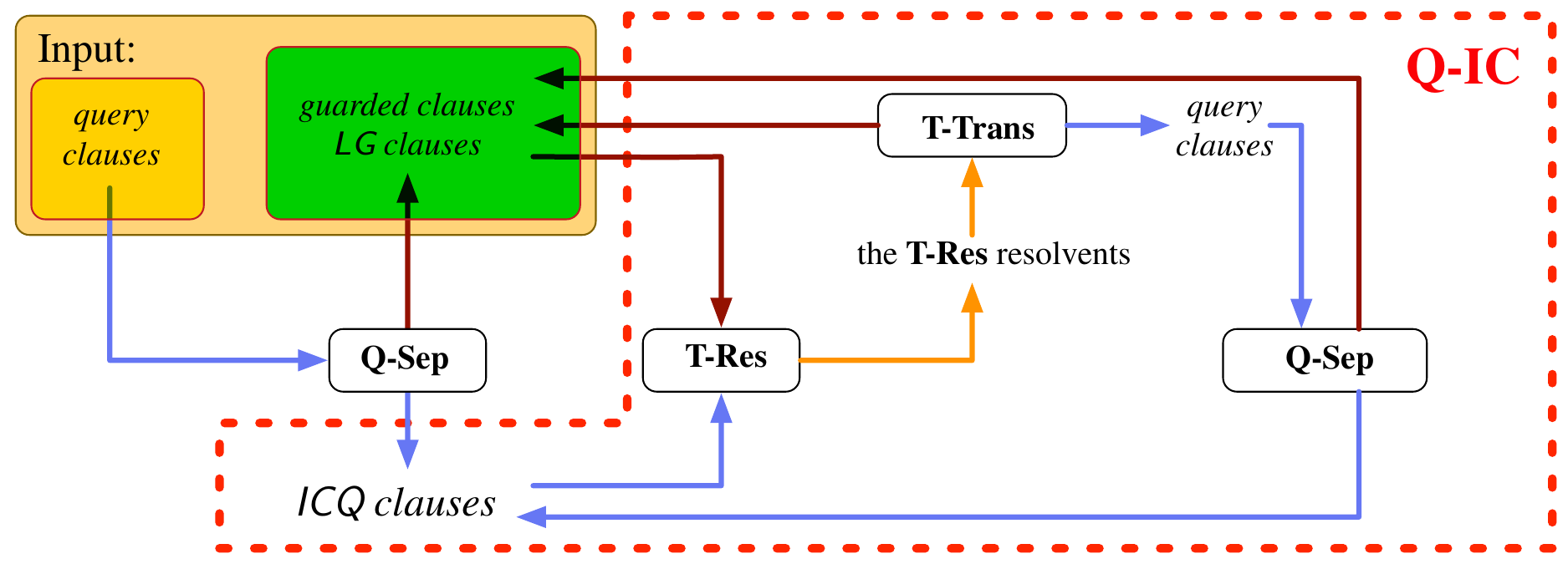}
\caption{\small{Overview of handling query clauses}}
\label{fig:q_process}
\end{figure}

We use \textbf{Q-IC} to denote the procedure of applying our rules to \textsf{ICQ} clauses. This procedure consists of the following steps: 
\begin{enumerate}
\item Apply the \textbf{T-Res} rule to an \textsf{ICQ} clause as the main premise and \textsf{LG} clauses as the side premises, deriving the \textbf{T-Res} resolvent $R$.
\item Apply the \textbf{T-Trans} rule to $R$, deriving a query clause $Q$ and \textsf{LG} clauses.
\item Apply the \textbf{Q-Sep} procedure to $Q$, deriving guarded clauses and optionally \textsf{ICQ} clauses.
\end{enumerate}
\textbf{Figure~\ref{fig:q_process}} gives an overview of the query handling process for \textsf{LG} clauses presented in this section.

The idea behind the \textbf{Q-IC} procedure is: whenever the \textbf{T-Res} resolvent $R$ of an \textsf{ICQ} clause $Q$ and \textsf{LG} clauses $C_1, \ldots, C_n$ is derived, we use the \textbf{T-Trans} rule and the \textbf{Q-Sep} procedure to replace $R$ by a set $N$ of \textsf{LGQ} clauses, which can be decided by the \textbf{T-Res}$^+$ system that we introduced above Theorem \ref{thm:tresp}. Most importantly, for each clause $C$ in~$N$, there exists a clause~$C^\prime$ in $Q, C_1, \ldots, C_n$ satisfying that $C$ is no wider than $C^\prime$. Another optional implementation for 2.--3.~of the \textbf{Q-IC} procedure is to devise a customised separation rule that separates the \textbf{T-Res} resolvent $R$ into \textsf{LGQ} clauses in one step. This implementation is feasible due to the analysis of the variable relations of $R$, as explored in Lemma~\ref{lem:ttrans}.

The main result of this section is given as follows.

\begin{lem}
\label{lem:q-co}
In the application of the \textbf{Q-IC} procedure to an \textsf{ICQ} clause $Q_{icq}$ and \textsf{LG} clauses $N_{lg}$, the \textbf{Q-IC} conclusions satisfy the following conditions.  
\begin{enumerate}
\item They are a set $N_{lg}^\prime$ of \textsf{LG} clauses and optionally a set $N_{icq}$ of \textsf{ICQ} clauses.
\item For each clause $C^\prime$ in $N_{lg}^\prime$, it is the case that either $C^\prime$ is narrower than $Q_{icq}$, or there exists a clause~$C$ in $N_{lg}$ such that $C^\prime$ is no wider than $C$.
\item For each clause $Q_{icq}^\prime$ in $N_{icq}$, $Q_{icq}^\prime$ is narrower than $Q_{icq}$.
\item The replacement of $\{Q_{icq}\} \cup N_{lg}$ by $N_{lg}^\prime \cup N_{icq}$ preserves satisfiability equivalence.
\end{enumerate}
\end{lem}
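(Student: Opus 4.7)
The proof will proceed by unpacking the three-step definition of the \textbf{Q-IC} procedure and assembling the conclusions from Lemma~\ref{lem:ttrans} and Lemma~\ref{lem:sep_conclusion}. Suppose $R$ is the \textbf{T-Res} resolvent obtained in Step~1 of the \textbf{Q-IC} procedure. By item~1 of Lemma~\ref{lem:ttrans}, the \textbf{T-Trans} rule applied to $R$ in Step~2 replaces $R$ by a set $N_1$ of \textsf{LG} clauses (the clauses $D_i^\prime\sigma \lor P_i$) together with a single query clause $Q_r$. By item~2 of Lemma~\ref{lem:ttrans}, applying the \textbf{Q-Sep} procedure to $Q_r$ in Step~3 yields a set $N_g$ of guarded clauses and optionally a set $N_{icq}$ of \textsf{ICQ} clauses. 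Since guarded clauses are \textsf{LG} clauses, setting $N_{lg}^\prime = N_1 \cup N_g$ gives claim~1 of the lemma.

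For claim~2, the analysis splits by origin. Suppose $C^\prime \in N_{lg}^\prime$. If $C^\prime \in N_1$, then by item~3 of Lemma~\ref{lem:ttrans} there exists a clause $C \in N_{lg}$ such that $C^\prime$ is no wider than $C$. If $C^\prime \in N_g$, then by item~4 of Lemma~\ref{lem:ttrans} either $C^\prime$ is narrower than $Q_{icq}$, or there exists a clause $C \in N_{lg}$ such that $C^\prime$ is no wider than $C$. In both sub-cases the disjunction required by claim~2 holds. Claim~3 is immediate from item~5 of Lemma~\ref{lem:ttrans}: every \textsf{ICQ} clause produced by \textbf{Q-Sep} is narrower than the main premise $Q_{icq}$.

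For claim~4, it suffices to show that the full three-step replacement is satisfiability-preserving. Step~1 only derives a logical consequence $R$ of $\{Q_{icq}\} \cup N_{lg}$, so $\{Q_{icq}\} \cup N_{lg}$ and $\{Q_{icq}\} \cup N_{lg} \cup \{R\}$ are equisatisfiable. Since item~6 of Lemma~\ref{lem:ttrans} establishes that replacing $R$ by $N_1 \cup N_g \cup N_{icq}$ preserves satisfiability equivalence (through the soundness of \textbf{T-Trans} as a formula renaming rule and the soundness of \textbf{Q-Sep} via Lemma~\ref{lem:qsep_sound}), and since the original main and side premises $\{Q_{icq}\} \cup N_{lg}$ can now be discarded because $R$ (hence its equivalent replacement) subsumes the content contributed by the \textbf{T-Res} inference, the net effect is that $\{Q_{icq}\} \cup N_{lg}$ is satisfiable if and only if $N_{lg}^\prime \cup N_{icq}$ is satisfiable modulo the presence of the ambient context.

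The only substantive obstacle is the last point: strictly speaking, $\{Q_{icq}\} \cup N_{lg}$ cannot simply be \emph{removed} in favour of $N_{lg}^\prime \cup N_{icq}$ unless the theorem is read, as elsewhere in the paper, inside a surrounding clausal set that preserves the original premises when needed. I would therefore phrase claim~4 as it is used in subsequent sections, namely that for any clausal set $N$, the set $N \cup \{Q_{icq}\} \cup N_{lg}$ is satisfiable if and only if $N \cup \{Q_{icq}\} \cup N_{lg} \cup N_{lg}^\prime \cup N_{icq}$ is, where the \textbf{T-Trans} and \textbf{Q-Sep} rules act as simplification/replacement rules on the derived resolvent $R$ rather than on the original premises. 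This follows directly from item~6 of Lemma~\ref{lem:ttrans} combined with the standard observation that adding logical consequences preserves satisfiability.
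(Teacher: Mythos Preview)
Your proposal is correct and follows the same route as the paper: the paper's own proof is a one-line appeal to Lemma~\ref{lem:ttrans} together with the observation that guarded clauses are \textsf{LG} clauses, and you have simply unpacked that citation item by item. Your reservations about claim~4 are well taken---the paper, too, treats the ``replacement'' phrasing loosely and relies on item~6 of Lemma~\ref{lem:ttrans} without further comment---so your more careful reading is, if anything, an improvement rather than a deviation.
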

\begin{proof}
By Lemma \ref{lem:ttrans} and the fact that the guarded clauses are \textsf{LG} clausal clauses.	
\end{proof}

\section{Answering \textsf{BCQ}s for the guarded quantification fragments}
\label{sec:qans}
In \textbf{Section \ref{sec:tinf}} we introduce the top-variable inference system, in \textbf{Section \ref{sec:lgc}} we show that this system decides loosely guarded clauses, and in \textbf{Section \ref{sect:query}} we show how we handle query clauses. Now we combine the results from these sections and we are ready to describe a concrete \emph{saturation-based procedure for answering \textsf{BCQ}s for the guarded quantification fragments}. 

We use the notation \textbf{Q-Ans} to denote this procedure. To show that the \textbf{Q-Ans} procedure is suitable for implementation in modern saturation-based first-order theorem provers, we devise the procedure in accordance with the \emph{given-clause algorithm}~\cite{W01,MW97} in Algorithm \ref{algorithm:q-ans}.
\begin{algorithm}[h]
  \normalsize
  \DontPrintSemicolon
  \KwIn{A union $q$ of \textsf{BCQ}s and a set $\Sigma$ of the guarded quantification formulas}
 \KwOut{`Yes' or `No'}
 
 workedOff $\gets \emptyset$
 
 $N_{lg}, N_{q} \gets \Trans(\Sigma, q)$
 
 usable $\gets N_{lg}$
 
 \ForEach{$Q$ in $N_{q}$} {
 $N_{g}, N_{icq} \gets \Sep(Q)$
 
 usable $\gets \text{usable} \cup N_{icq} \cup N_{g}$
 }
 
 
 usable $\gets \Simplify(\text{usable}, \text{usable})$
 
 \;

\While{(\emph{usable} $\not = \emptyset$ and $\bot \not \in$ \emph{usable})}
{
given $\gets \Pick(\text{usable})$

workedOff $\gets$ workedOff $\cup$ \{given\}

\If{(\emph{given} is an \textsf{ICQ} clause)}
{
$R_{tres} \gets \T-Res(\text{given, workedOff})$

$N_{lg}, Q \gets \TTrans(R_{tres})$

$N_{g}, N_{icq} \gets \Sep(Q)$

new $\gets N_{lg} \cup N_{g} \cup N_{icq}$
}

\Else{
new $\gets \T-Res(\text{given, workedOff}) \cup \Factor(\text{given})$
}

new $\gets \Simplify(\text{new}, \text{new})$

new $\gets \Simplify(\Simplify(\text{new, workedOff}), \text{usable})$

workedOff $\gets \Simplify(\text{workedOff, new})$

usable $\gets \Simplify(\text{usable, new}) \cup \text{new}$
}
\

\lIf{\emph{usable} = $\emptyset$}{Print `No'}
\lIf{$\bot \in$ \emph{usable}}{Print `Yes'}
\caption{The \textbf{Q-Ans} algorithm for answering \textsf{BCQ}s for the guarded quantification fragments}
\label{algorithm:q-ans}
\end{algorithm}

The functions in Algorithm \ref{algorithm:q-ans} are listed below.
\begin{enumerate}
\item $\Trans(\Sigma, q)$ applies the \textbf{Trans} process to a set $\Sigma$ of guarded quantification formulas and a union $q$ of \textsf{BCQ}s, returning a set $N_{lg}$ of \textsf{LG} clauses and a set $N_{q}$ of query clauses.
\item $\Sep(Q)$ applies the \textbf{Q-Sep} procedure to a query clause $Q$, and returns a set $N_g$ of guarded clauses and optionally a set $N_{icq}$ of \textsf{ICQ} clause.
\item $\Pick(N)$ picks and then removes a clause from a clausal set $N$.
\item $\T-Res(C, N)$ eagerly applies the \textbf{T-Res} rule to a clause $C$ and clauses in $N$, and returns the \textbf{T-Res} resolvent $R_{tres}$.
\item $\TTrans(R_{tres})$ applies the \textbf{T-Trans} rule to the \textbf{T-Res} resolvents $R_{tres}$, returning a set $N_{lg}$ of \textsf{LG} clauses and a query clause $Q$.
\item $\Factor(C)$ applies the \textbf{Factor} rule (of the \textbf{T-Res} system) to a clause $C$, and returns the factor of $C$.
\item $\Simplify(N_1, N_2)$ returns all clauses from $N_1$ that are not redundant with respect to clauses in $N_2$.
\end{enumerate}

The derivation in Algorithm \ref{algorithm:q-ans} needs to guarantee \emph{fairness}. Let~$N$ be a set of clauses. Then, a derivation $N = N_0, N_1, \ldots$, with \emph{limit} $N_{\infty} = \bigcup_{j} \bigcap_{k \geq j} N_k$ is \emph{fair} if the conclusion of the non-redundant premises in $N_{\infty}$ is contained in $\bigcup_{j} N_j$. Intuitively fairness means that no inference in the derivation is delayed indefinitely. To ensure fairness, the $\Pick(N)$ function should guarantee that every clause in $N$ will eventually be picked. We refer the reader to \cite[page 36]{BG01} for a precise definition of \emph{fairness}.

As a given-clause algorithm, Algorithm \ref{algorithm:q-ans} splits input clauses into a worked-off clausal set $\mathit{workedOff}$ storing the clauses that have already been picked as \emph{given clauses}, and a clausal set $\mathit{usable}$ with clauses needed to be considered for further inferences. For each clause $C$ in $\mathit{usable}$, we remove it from $\mathit{usable}$, and then add $C$, all non-redundant conclusions for $C$ and the non-redundant clauses in $\mathit{workedOff}$ to $\mathit{usable}$. In the inference loop, \emph{reduction rules} are applied to guarantee termination.

Algorithm \ref{algorithm:q-ans} consists of the following stages.
\begin{itemize}
\item Lines 1--7 transform a union of \textsf{BCQ}s, guarded quantification formulas into a set of \textsf{LG} clauses and \textsf{ICQ} clauses.
\item Lines 9--22 saturate the class of \textsf{LG} clauses and \textsf{ICQ} clauses.
\item Lines 24--25 output the answer to the given \textsf{BCQ}s.
\end{itemize}

Lines 1--3 initialise the $\mathit{workedOff}$ and $\mathit{usable}$ clausal sets. Lines \text{4--6} transform a union of \textsf{BCQ} into a set of \textsf{ICQ} and guarded clauses, and then add these clauses to the $\mathit{usable}$ clausal set. Line 7 performs the \emph{input reduction} that removes redundancy in $\mathit{usable}$. 

The while-loop in Lines 9--22 terminates if either $\mathit{usable}$ is empty or it contains an empty clause $\bot$. Lines 10--11 pick a clause, namely $\mathit{given}$, from the $\mathit{usable}$ causal set and then add $\mathit{given}$ to the $\mathit{workdedOffs}$ causal set. Lines 12--18 derive new conclusions. Lines 12--16 say that if $\mathit{given}$ is an \textsf{ICQ} clause, then the \textbf{Q-IC} procedure is applied to this \textsf{ICQ} clause and \textsf{LG} clauses in the $\mathit{workedOff}$ clausal set, deriving a set of \textsf{ICQ} clauses and \textsf{LG} clauses. These newly derived clauses are denoted as $\mathit{new}$. As \textsf{ICQ} clauses are negative clauses, the positive factoring rule \textbf{Factor} does not apply to them. Lines 17--18 say that if $\mathit{given}$ is an \textsf{LG} clause, then the \textbf{T-Res} or the \textbf{Factor} rules are applied to that clause, deriving new \textsf{LG} clauses, denoted as $\mathit{new}$. Finally Lines 19--22 are the \emph{inter-reduction steps} that removes redundancy in the $\mathit{new}$, the $\mathit{workdedOff}$ and the $\mathit{usable}$ clausal sets. 

Lines 24--25 output the answer to the given \textsf{BCQ}. Suppose $q = q_1 \lor \ldots \lor q_n$ is a union of \textsf{BCQ}s and $\Sigma$ is a set of the guarded quantification formulas. An empty $\mathit{usable}$ clausal set implies that $\{\lnot q_1, \ldots, \lnot q_n\} \cup \Sigma$ is satisfiable. Hence, the answer to~$q$ is `No'. If the $\mathit{usable}$ clausal set contains an empty clause, then $\{\lnot q_1, \ldots, \lnot q_n\} \cup \Sigma$ is unsatisfiable. In this case, the answer to $q$ is `Yes'.

Since new predicate symbols are iteratively introduced in the derivation, one needs to ensure that only finitely many new predicate symbols are required. The introduced new predicate symbol will be reused whenever one needs to define a clause that has been defined before. This approach is formally stated as: 

%

\begin{rem}
\label{rem:reuse}
In the \textbf{Q-Ans} procedure, suppose a predicate symbol $P$ is used to define an \textsf{LGQ} clause $C$ at one step in the derivation. Then, in any further step whenever a predicate symbol is needed for defining $C$, we reuse the symbol~$P$.	
\end{rem}
We show that for the fragments we consider the \textbf{Q-Ans} procedure requires a finite number of predicate symbols. 

\begin{lem}
\label{lem:definer}
In the application of the \textbf{Q-Ans} procedure to the \textsf{BCQ} answering problem for \textsf{GF}, \textsf{LGF} and \textsf{CGF}, only finitely many predicate symbols are introduced. 	
\end{lem}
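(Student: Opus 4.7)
The plan is to locate every source of fresh predicate symbols in Algorithm~\ref{algorithm:q-ans} and then argue, using uniform size bounds and the reuse convention of Remark~\ref{rem:reuse}, that each source contributes only finitely many. Fresh predicates arise in three places: (i) the initial \textbf{Trans} process in Line~2, which introduces finitely many by Lemma~\ref{lem:trans} applied to the finite input $\Sigma \cup \{\lnot q\}$; (ii) the \textbf{Q-Sep} procedure invoked in Lines~5 and~15, where by Lemma~\ref{lem:sep_definer} each query clause triggers at most $3(n-1)$ new predicates; and (iii) the \textbf{T-Trans} rule invoked in Line~14, where each closed top-variable subclause of a \textbf{T-Res} resolvent of an \textsf{ICQ} main premise and \textsf{LG} side premises is abstracted by a fresh predicate.

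First I would establish uniform size bounds for every clause that can ever appear in the run. By Lemmas~\ref{lem:fact}, \ref{lem:res} and~\ref{lem:bounded_width}, \textbf{T-Res} and \textbf{Factor} preserve the \textsf{LG} clausal class without increasing width and, because \textsf{LG} clauses are simple, they keep the term depth at most one. By Lemma~\ref{lem:sep_conclusion}, every conclusion of \textbf{Q-Sep} is narrower than its query premise, and by Lemma~\ref{lem:q-co}, every conclusion of \textbf{Q-IC} is either narrower than the \textsf{ICQ} premise or no wider than one of the \textsf{LG} side premises. Hence every clause ever derived has width bounded by the maximum width $N$ of clauses produced in Lines~1--7, depth bounded by~$1$, and uses only the (fixed, finite) set of constants and Skolem function symbols produced by \textbf{Trans}. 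In particular each fresh predicate has arity at most $N$.

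Next I would close the counting argument. By Remark~\ref{rem:reuse}, a fresh predicate is emitted only when the \textsf{LGQ} clause it would define has not been defined before, so predicates introduced via \textbf{Q-Sep} and \textbf{T-Trans} are in injective correspondence with their defining clauses. I would then run a two-level induction: the outer level tracks successive signatures $\mathcal{P}_0 \subseteq \mathcal{P}_1 \subseteq \ldots$ where $\mathcal{P}_0$ is the output of \textbf{Trans} and $\mathcal{P}_{k+1}$ extends $\mathcal{P}_k$ by the predicates introduced while saturating over $\mathcal{P}_k$; the inner level argues that each round over a fixed signature terminates. Over any finite $\mathcal{P}_k$ there are, by the size bounds of the previous paragraph, only finitely many simple, covering, strongly compatible clauses of width $\leq N$ and depth $\leq 1$, so at most finitely many predicates can be emitted before reuse takes over and $\mathcal{P}_{k+1}$ is finite. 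Termination of the inner level for pure \textsf{LG} derivations follows from Theorem~\ref{thm:lgc}; the only way to leave that regime is through an \textsf{ICQ} clause, and by part~3 of Lemma~\ref{lem:q-co} each application of \textbf{Q-IC} produces strictly narrower \textsf{ICQ} descendants, so the chain of \textsf{ICQ} clauses (and hence of \textbf{T-Trans} and subsequent \textbf{Q-Sep} invocations) is bounded by $N$.

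The main obstacle is the apparent circularity between the signature and the set of definable clauses: new predicates enlarge $\mathcal{P}_k$, which enlarges the set of clauses that \emph{could} be defined at stage $k+1$, which could in turn trigger further predicates. Breaking this loop is the crux of the proof, and the plan is to break it by pairing the width-decrease for \textsf{ICQ} clauses under \textbf{Q-IC} with the termination of \textbf{T-Res} on \textsf{LG} clauses over any fixed signature, so that the outer induction stabilises after finitely many rounds and the total count of predicate symbols remains finite.
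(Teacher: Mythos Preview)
Your plan tracks the paper's proof in its overall structure: both isolate the same three sources of fresh symbols (the \textbf{Trans} process, the initial \textbf{Q-Sep} pass, and the \textbf{T-Trans}/\textbf{Q-Sep} steps inside the main loop), and both lean on the uniform width/depth bounds together with the reuse convention of Remark~\ref{rem:reuse}. Where you diverge is in how the circularity is broken. The paper does not stratify by signatures $\mathcal{P}_0\subseteq\mathcal{P}_1\subseteq\cdots$; instead it argues directly that only finitely many distinct \emph{$R$-type clauses} (the \textbf{T-Res} resolvents of an \textsf{ICQ} main premise with \textsf{LG} side premises, \emph{before} \textbf{T-Trans} is applied) can arise. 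The point is that each such $R$ decomposes as a disjunction $D_1\sigma\lor\cdots\lor D_m\sigma$ of \textsf{LG} clauses, each no wider than a side premise (part~3 of Lemma~\ref{lem:ttrans}), together with a query clause $(\lnot A_{m+1}\lor\cdots\lor\lnot A_n)\sigma$ strictly narrower than the \textsf{ICQ} premise; under reuse, the predicates needed to process $R$ are determined by these components, so bounding the components (over bounded width and depth) bounds the predicates.

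Your stabilisation step has a gap. Part~3 of Lemma~\ref{lem:q-co} tells you that every \textsf{ICQ} descendant is strictly narrower than its \textsf{ICQ} premise, so any \emph{ancestry chain} of \textsf{ICQ} clauses has length at most~$N$. But this does not bound the number of \textbf{Q-IC} invocations, which is what actually drives predicate introduction: a fixed \textsf{ICQ} clause~$Q$ can be the main premise of a distinct \textbf{T-Res} inference for every valid tuple of \textsf{LG} side premises, and as your rounds advance, the freshly introduced predicates generate new \textsf{LG} clauses that become new side premises for the \emph{same}~$Q$, spawning further \textbf{T-Trans} and \textbf{Q-Sep} steps on~$Q$. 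So from the bound on ancestry depth alone one cannot conclude that the outer induction stabilises after~$N$ rounds. To close the loop you should shift the counting target from \textbf{Q-IC} invocations to the components of $R$-type clauses, as the paper does: the \textsf{LG} disjuncts and the residual query piece are what the fresh predicates actually abstract (and hence what reuse keys on), and it is their finiteness that cuts the circularity.
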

\begin{proof}
In the \textbf{Q-Ans} procedure, new predicate symbols are introduced in Line 2, Lines 4--6 and Lines 14--15 in Algorithm \ref{algorithm:q-ans}. We distinguish these cases:

Line 2: Since the \textbf{Trans} process is applied to formulas before the saturation process, this introduces finitely many new predicate symbols. 

Lines 4--6: A union of \textsf{BCQ}s is transformed into a finite number of query clauses. By Lemma \ref{lem:sep_definer}, only finitely many new predicate symbols are needed for separating the input query clauses.

Lines 14--15: This step uses new predicate symbols to transform the \textbf{T-Res} resolvents $R$ of an \textsf{ICQ} clause and \textsf{LG} clauses by a set of \textsf{LGQ} clauses. Since we reuse the introduced predicate symbols (Remark \ref{rem:reuse}), we need to prove that given an \textsf{LGQ} clausal set, the number of different \textbf{T-Res} resolvents $R$ is finitely bounded, and therefore the number of predicate symbols needed to transform the \textbf{T-Res} resolvents $R$ to \textsf{LGQ} clauses is finitely bounded.

W.l.o.g.~suppose the \textbf{T-Res} rule is applied to an \textsf{ICQ} clause $Q_{icq} = \lnot A_1 \lor \ldots \lor \lnot A_{m} \lor \ldots \lor \lnot A_n$ as the main premise and \textsf{LG} clauses $C_1 = B_1 \lor D_1, \ldots, C_m = \ B_m \lor D_m$ as the side premises, deriving the \textbf{T-Res} resolvent  
\begin{align*}
R = D_1\sigma \lor \ldots \lor D_m\sigma \lor \lnot A_{m+1}\sigma \lor \ldots \lor \lnot A_{n}\sigma,
\end{align*}
where $\sigma = \mgu(A_1 \doteq B_1, \ldots, A_m \doteq B_m)$. By 1.~of Lemma \ref{lem:ttrans}, $D_1\sigma, \ldots, D_m\sigma$ are \textsf{LG} clauses and $\lnot A_{m+1}\sigma \lor \ldots \lor \lnot A_{n}\sigma$ is a query clause. By 1.~of Corollary \ref{col:tres_uni} and the fact that the variables in $\lnot A_{m+1}\sigma \lor \ldots \lor \lnot A_{n}\sigma$ are the non-top variables from~$Q_{icq}$, $\lnot A_{m+1}\sigma \lor \ldots \lor \lnot A_{n}\sigma$ is narrower than~$Q_{icq}$. By 3.~of Lemma \ref{lem:ttrans}, the clauses in $D_1\sigma, \ldots, D_m\sigma$ are no wider than the clauses in $C_1, \ldots, C_m$. Hence the \textbf{T-Res} resolvent~$R$ is indeed a disjunction of a query clause~(narrower than the query clause in the \textbf{T-Res} main premise) and \textsf{LG} clauses (that are no wider than the \textsf{LG} clauses in the \textbf{T-Res} side premises). We use the terminology $R$-\emph{type clauses} to refer to the \textbf{T-Res} resolvents of an \textsf{ICQ} clause and \textsf{LG} clauses. 

We first prove that in the application of the \textbf{Q-Abs} procedure to \textsf{LGQ} clauses, the number of $R$-type clauses is finite. Suppose $N$ is an \textsf{LGQ} clausal set. Then, by applying the \textbf{Q-Sep} procedure to the query clauses in $N$, as shown in Lines~4--6 of Algorithm~\ref{algorithm:q-ans}, $N$ is transformed into a set of \textsf{LG} clauses and a set of \textsf{ICQ} clauses. Suppose $N_1$ and~$N_2$ are sets of \textsf{LG} and \textsf{ICQ} clauses, respectively. W.l.o.g. suppose $N = N_1 \cup N_2$. We distinguish the inferences performed on $N_1$ and $N_2$.

i: Suppose $N_1^\prime$ is the union of $N_1$ and the \textsf{LG} clauses derived by applying the \textbf{T-Res}$^+$ system to $N$. By Lemma \ref{lem:bounded_width} and the property that \textsf{LG} clauses contain no nested compound terms, $N_1^\prime$ consists of finitely many clauses. Suppose~$N_1^{\prime\prime}$ is the set of \textsf{LG} clauses (after condensation and modulo variable renaming) built using the signature of $N_1^\prime$, and no clause $N_1^{\prime\prime}$ is wider than the maximal width of the clauses in $N_1^\prime$. By the fact that the clauses in $N_1^{\prime\prime}$ are of bounded depth and width, the number of clauses in~$N_1^{\prime\prime}$ is finitely bounded. Suppose $C$ is an \textsf{LG} clause that is a subclause in the $R$-type clause when applying the \textbf{T-Res} rule to $N$. By 3.~of Lemma \ref{lem:ttrans}, $C$ is no wider than the clauses in $N_1^\prime$, therefore $C$ belongs to $N_1^{\prime\prime}$. By the fact that the number of clauses in~$N_1^{\prime\prime}$~is bounded, the number of clauses that are built using \textsf{LG} subclauses is bounded, hence, using the signature in $N_1$, there are finitely many $D_1\sigma \lor \ldots \lor D_m\sigma$ clauses. 

ii: Suppose $N_2^\prime$ is the set of query clauses (after condensation and modulo variable renaming) built using the signature of $N_2$, and the clauses in $N_2^\prime$ are narrower than the maximal width of the clauses in $N_2$. Since clauses in $N_2^\prime$ are of bounded depth and width, there are finitely many clauses in $N_2^\prime$. Suppose $Q_r$ is the query clause occurring in the $R$-type clause in applying the \textbf{T-Res} rule to $N$. Then, $Q_r$ is narrower than the clauses in $N_2^\prime$, hence $Q_r$ belongs to $N_2^\prime$. Hence, using the signature in $N_2$, there are finitely many $\lnot A_{m+1}\sigma \lor \ldots \lor \lnot A_{n}\sigma$ clauses.

By the results established in i and ii, given an \textsf{LGQ} clausal set $N$, the number of $R$-type clauses that can be derived from $N$ is finitely bounded. Then, for each $R$-type clause, only a finite number of new predicate symbols is needed. Since we reuse the introduced predicate symbols as stated in Remark \ref{rem:reuse}, the total number of new predicate symbols for transforming $R$-type clauses is finitely bounded. Then, Lines~14--15 only require a finitely bounded number of new predicate symbols.  
\end{proof}

Next, we prove that the \textbf{Q-Ans} procedure guarantees termination.
\begin{thm}
\label{thm:ter}
The \textbf{Q-Ans} procedure guarantees termination of deciding satisfiability for the \textsf{LGQ} clausal class.	
\end{thm}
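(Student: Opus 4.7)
The plan is to show that the set of clauses derivable by the \textbf{Q-Ans} procedure, modulo variable renaming and redundancy elimination, is finite; combined with the fairness of the given-clause loop in Algorithm~\ref{algorithm:q-ans}, termination then follows. Concretely, I will argue that the signature stays finite, the term depth of derived clauses stays bounded, and the width (number of distinct variables) of derived clauses stays bounded, so that only finitely many clauses up to variant can ever appear in \emph{workedOff} $\cup$ \emph{usable}.

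First, I would invoke Lemma~\ref{lem:definer} to conclude that the overall number of predicate symbols (including those introduced by \textbf{Trans}, \textbf{Q-Sep}, and \textbf{T-Trans}) stays finite throughout the derivation, under the reuse convention of Remark~\ref{rem:reuse}. Since no rule in the \textbf{Q-Ans} procedure introduces new function or constant symbols beyond those produced by \textbf{Trans} during preprocessing, the whole signature is fixed and finite after Line~7 of Algorithm~\ref{algorithm:q-ans}.

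Next I would analyse depth and width of derived clauses by case distinction on the inference that produced them. Every clause in \emph{usable} $\cup$ \emph{workedOff} is by construction an \textsf{LGQ} clause, either an \textsf{LG} clause or an \textsf{ICQ} clause. \textsf{ICQ} clauses are flat by definition so they have depth $0$; \textsf{LG} clauses are simple and therefore have depth at most $1$. For width, there are two sources of new clauses in the main loop. In Lines~17--18, the given clause is non-\textsf{ICQ}, and the conclusions produced by \textbf{T-Res} and \textbf{Factor} are \textsf{LG} clauses; by Lemma~\ref{lem:bounded_width} each such conclusion is no wider than at least one of its premises, so its width is bounded by the maximum width already present. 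In Lines~12--16, the given clause is an \textsf{ICQ} clause and the conclusions are produced by the \textbf{Q-IC} procedure; by Lemma~\ref{lem:q-co}, each conclusion is either an \textsf{LG} clause no wider than one of the existing \textsf{LG} side premises, a guarded clause narrower than the given \textsf{ICQ} clause or not wider than an \textsf{LG} side premise, or an \textsf{ICQ} clause strictly narrower than the given \textsf{ICQ} one. Hence the width of any derived \textsf{LG} clause is bounded by the maximum width of the initial \textsf{LG} clauses, and the width of any derived \textsf{ICQ} clause is strictly less than the maximum width of the initial \textsf{ICQ} clauses (and the latter admits only finitely many values, descending).

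Combining these three facts, over a fixed finite signature, the number of clauses of bounded depth and width modulo variable renaming is finite. The \textbf{Simplify} calls in Lines~7 and 19--22 eliminate variants and other redundancies, so \emph{usable} $\cup$ \emph{workedOff} stays a subset of this finite set of representatives throughout the derivation. Under the assumed fair picking by \textbf{Pick}, eventually every non-redundant clause is selected as \emph{given}, after which no new non-redundant conclusion can arise. Hence the while-loop in Lines~9--22 must terminate, either with $\bot\in \emph{usable}$ or with \emph{usable} $=\emptyset$. The main obstacle is the \textsf{ICQ} case: we need Lemma~\ref{lem:q-co} to control both the newly introduced predicate symbols (via Remark~\ref{rem:reuse} and Lemma~\ref{lem:definer}) and the widths of all three kinds of output clauses simultaneously; otherwise \textsf{ICQ} clauses could in principle spawn wider \textsf{LG} clauses and break the width bound.
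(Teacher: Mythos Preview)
Your proposal is correct and follows essentially the same approach as the paper: bound the signature via Lemma~\ref{lem:definer}, bound depth via simpleness of \textsf{LG} clauses and flatness of \textsf{ICQ} clauses, and bound width via Lemma~\ref{lem:bounded_width} (for the \textsf{LG}-on-\textsf{LG} case) and Lemma~\ref{lem:q-co} (for the \textsf{ICQ} case), then conclude finiteness of the clause space. The paper's proof is terser, packaging the \textsf{LG}-on-\textsf{LG} analysis into Theorem~\ref{thm:lgc} and also citing Lemma~\ref{lem:sep_conclusion} for the preprocessing separations in Lines~4--6; one small imprecision in your write-up is the claim that derived \textsf{LG} clauses are bounded by the maximum width of the \emph{initial \textsf{LG}} clauses, since Lemma~\ref{lem:q-co} allows an \textsf{LG} output to be merely narrower than the \textsf{ICQ} main premise, so the correct uniform bound is the maximum width over all initial \textsf{LGQ} clauses.
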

\begin{proof}
By Theorem \ref{thm:lgc}, the \textbf{Q-Ans} procedure is guaranteed to terminate on the \textsf{LG} clausal class. By Lemmas \ref{lem:sep_conclusion} and \ref{lem:q-co}, applying the \textbf{Q-Ans} procedure to query clauses and \textsf{LG} clauses derives \textsf{LGQ} clauses that are no wider and no deeper than the premises. By Lemma \ref{lem:definer}, applying the \textbf{Q-Ans} procedure to \textsf{LGQ} clauses requires finitely many new predicate symbols. Therefor, the \textbf{Q-Ans} procedure decides satisfiability of the \textsf{LGQ} clausal class.
\end{proof}

Finally, the next theorem positively answers Question \ref{ques:qans}.

\begin{thm}
\label{thm:ans}
The \textbf{Q-Ans} procedure is a decision procedure for answering \textsf{BCQ}s for \textsf{GF}, \textsf{LGF} and \textsf{CGF}.
\end{thm}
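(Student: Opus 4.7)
The plan is to assemble soundness, refutational completeness, and termination from the components already established in the preceding sections, and then verify that the given-clause organisation in Algorithm~\ref{algorithm:q-ans} respects the fairness conditions required by the underlying calculus.

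First, I would reduce the \textsf{BCQ} answering problem to a satisfiability problem on clauses: by Theorem~\ref{thm:trans}, for a union $q$ of \textsf{BCQ}s and a set $\Sigma$ of formulas from \textsf{GF}, \textsf{LGF} and/or \textsf{CGF}, the $\Trans$ call in Line~2 produces an equisatisfiable set $N_{lg} \cup N_q$ of \textsf{LG} clauses and query clauses such that $\Sigma \models q$ iff this set is unsatisfiable. Lines~4--6 then apply the \textbf{Q-Sep} procedure to every query clause; by Lemma~\ref{lem:sep_conclusion} each such replacement yields only guarded clauses and \textsf{ICQ} clauses, and by Lemma~\ref{lem:qsep_sound} satisfiability is preserved. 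Hence after Line~7 the $\mathit{usable}$ set is an equisatisfiable \textsf{LGQ} clausal set.

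Next, for the main loop (Lines~9--22), soundness and refutational completeness are inherited from the underlying inference system. The only non-standard conclusion-deriving step is Lines~13--16, where a picked non-ground \textsf{ICQ} clause is resolved against worked-off clauses and the resolvent is post-processed by $\TTrans$ and $\Sep$; by Lemma~\ref{lem:q-co}(4) this three-step transformation preserves satisfiability equivalence, so the combined derivation is still a sound derivation in the \textbf{T-Res}$^+$ system augmented by the \textbf{T-Trans} rule. By Lemma~\ref{lem:definer} only finitely many fresh predicate symbols are ever introduced, so Theorem~\ref{thm:tresp} applies and refutational completeness holds provided the $\Pick$ function is fair in the sense of~\cite{BG01}; I would include a brief remark that the given-clause loop with eventual selection of every clause in $\mathit{usable}$ delivers this fairness in the standard way.

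Termination is handled by Theorem~\ref{thm:ter}: every conclusion generated in Lines~13--18 is an \textsf{LGQ} clause of bounded depth and bounded width over a signature that is finite by Lemma~\ref{lem:definer}, so, modulo variable renaming and the simplification steps in Lines~19--22, only finitely many distinct non-redundant clauses can ever appear. The while-loop therefore terminates, exiting either with $\bot\in\mathit{usable}$ (in which case Lines~24--25 report \emph{Yes}, justified by refutational completeness) or with $\mathit{usable}=\emptyset$ (in which case the saturated set is consistent and \emph{No} is correct by soundness). The main obstacle, and the step I would write out most carefully, is the interplay between Lines~13--16 and fairness: one must argue that the decomposition of an \textsf{ICQ} resolvent via \textbf{T-Trans} and \textbf{Q-Sep} cannot indefinitely defer any \textbf{T-Res}$^+$ inference from the (notional) non-decomposed derivation, which follows because the replacement clauses are strictly narrower than the \textsf{ICQ} premise (Lemma~\ref{lem:q-co}(2)--(3)) and thus cannot block the eventual selection of any persistent clause in the limit.
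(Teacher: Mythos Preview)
Your proposal is correct and follows essentially the same route as the paper: the paper's proof is the one-liner ``By Theorems~\ref{thm:trans}, \ref{thm:tresp} and~\ref{thm:ter}'', and your plan simply unpacks those three ingredients (reduction to \textsf{LGQ} satisfiability, soundness/completeness of \textbf{T-Res}$^+$ with finitely many fresh symbols, and termination). Your additional fairness discussion around Lines~13--16 is extra care the paper delegates to the surrounding prose; note only that Lemma~\ref{lem:q-co}(2) does not guarantee the \textsf{LG} conclusions are strictly narrower than the \textsf{ICQ} premise, so if you keep that paragraph you should phrase the well-foundedness via item~(3) for \textsf{ICQ} clauses and the bounded-width argument from Theorem~\ref{thm:ter} for the \textsf{LG} clauses rather than a blanket ``strictly narrower''.
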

\begin{proof}
By Theorems \ref{thm:trans}, \ref{thm:tresp} and \ref{thm:ter}.
\end{proof}

%

\section{Saturation-based \textsf{BCQ} rewriting for the guarded quantification fragments}
\label{sec:qrew}

In this section, we turn our attention to investigating the \emph{saturation-based \textsf{BCQ} rewriting problem for the guarded quantification fragments}.

\setcounter{ques}{1}
\begin{ques}
Suppose $\Sigma$ is a set of formulas in \textsf{GF}, \textsf{LGF} and \textsf{CGF}, $D$ is a set of ground atoms and $q$ is a union of \textsf{BCQ}s. Further, suppose $N$ is the saturation obtained by applying the procedure devised for Question 1 to $\{\lnot q\} \cup \Sigma$. Can $N$ be back-translated to a (Skolem-symbol-free) first-order formula~$\Sigma_q$ such that $\Sigma \cup D \models q$ if and only if $D \models \Sigma_q$?
\end{ques}

Unlike the previous setting of \textsf{BCQ} answering, the \textsf{BCQ} rewriting problem depends only on the rules $\Sigma$ and the query $q$. As guarded quantification formulas are free of function symbols, the function symbols in the saturation of $\{\lnot q\} \cup \Sigma$ are Skolem symbols, hence the obtained formula $\Sigma_q$ should also be function-free.



\subsection*{\textbf{Basic notions and rules for back-translation}}
\label{sec:unsko_con}
That a clausal set $N$ can be back-translated into a first-order formula if $N$ is \emph{globally consistent}, \emph{globally linear}, \emph{normal} and \emph{unique} is shown in~\cite[chapter 5]{E96}. To avoid ambiguity, we replace the word \emph{consistency} with \emph{compatibility} in this paper.

Now we formally define the above notions, starting with \emph{global compatibility}. The \emph{compatibility property} of a clause in \textbf{Section \ref{sec:trans}} is extended to that of a clausal set. Recall that two compound terms $t$ and $s$ are \emph{compatible} if the argument sequences of $t$ and~$s$ are identical. A clause $C$ is \emph{compatible} if, in $C$, compound terms that are under the same function symbol are compatible.

%

\begin{defi}[Compatibility]
\label{def:lg_consis}


A clausal set $N$ is \emph{locally compatible} if all clauses in~$N$ are \emph{compatible}. A clausal set $N$ is \emph{globally compatible} if compound terms in $N$ that are under the same function symbol are compatible.	
\end{defi}


\begin{defi}[Linearity]
\label{def:lG_rinear}
A pair of compound terms $t$ and $s$ is \emph{linear} if the set of arguments in $t$ is a subset of that in $s$ or vice-versa. A clause $C$ is \emph{linear} if in $C$, each pair of compound terms that are under different function symbols, is linear.

A clausal set $N$ is \emph{locally linear} if all clauses in $N$ are linear. A clausal set $N$ is \emph{globally linear} if each pair of compound terms in $N$ that are under different function symbols is linear.
\end{defi}

\begin{defi}[Normality]
\label{def:normal}
A clause is \emph{normal} if the compound terms in it contain only variables as arguments. A clausal set is \emph{normal} if each clause in it is normal.
\end{defi}

\begin{defi}[Uniqueness] 
\label{def:unique} 
A compound term $f(t_1, \ldots, t_n)$ is \emph{unique} if $t_1, \ldots, t_n$ are distinct variables. A clausal set $N$ is \emph{unique} if every compound term in $N$ is unique.
\end{defi}

A first-order clausal set $N$ can be back-translated into a first-order formula if $N$ satisfies all the aforementioned properties.

\begin{thm}[{\cite[\normalfont{chapter 5}]{E96}}]
\label{thm:unsko}
Suppose $N$ is a normal, unique, globally linear and globally compatible first-order clausal set. Then, $N$ can be back-translated into a first-order formula without Skolem symbols.	
\end{thm}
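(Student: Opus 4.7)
The plan is to reverse the Skolemisation step by step, introducing for each Skolem function an existentially quantified variable whose dependency structure is dictated by the four syntactic properties of $N$. First I would identify the set $\mathcal{F}$ of Skolem function symbols occurring in $N$. By global compatibility, every occurrence of a function symbol $f \in \mathcal{F}$ has the same argument list, say $\overline{x}_f$; by normality these arguments are variables, and by uniqueness they are pairwise distinct. So to each $f \in \mathcal{F}$ I can associate a fresh variable $y_f$ and a canonical universal variable list $\overline{x}_f$, and replace every occurrence of the term $f(\overline{x}_f)$ throughout $N$ by $y_f$. After this substitution, $N$ becomes a clausal set $N'$ in which no Skolem function appears, only the original predicate symbols and the fresh variables $y_f$ alongside the original variables.

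Next I would use global linearity to build the quantifier prefix. Define a binary relation $\preceq$ on $\mathcal{F}$ by $f \preceq g$ iff the argument set of $\overline{x}_f$ is contained in that of $\overline{x}_g$. Global linearity says every two elements of $\mathcal{F}$ are $\preceq$-comparable on their argument sets, so $\preceq$ induces a linear preorder; after quotienting by equal argument sets it becomes a linear order. I would then construct a prefix $Q_1 \overline{z}_1 \, Q_2 \overline{z}_2 \, \ldots$ in which each universal block $\forall \overline{z}_i$ lists the variables introduced at the $i$-th level of the argument-set chain, followed by an existential block $\exists y_{f_1} \cdots \exists y_{f_k}$ for all Skolem symbols whose argument list is exactly $\overline{z}_1 \cup \cdots \cup \overline{z}_i$. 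The resulting formula is
\begin{equation*}
\Phi \;=\; Q_1\overline{z}_1 \exists \overline{y}_1 \; Q_2\overline{z}_2 \exists \overline{y}_2 \; \cdots \; Q_m\overline{z}_m \exists \overline{y}_m \; \Bigl( \bigwedge_{C \in N'} C^{*} \Bigr),
\end{equation*}
where $C^{*}$ is the universal closure of $C$ restricted to its remaining free variables (handled by a standard inner $\forall$ block for each clause).

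To finish I would verify that the Skolemisation of $\Phi$ yields a clausal set equisatisfiable with $N$. Since each $y_f$ is existentially quantified after precisely the universal variables in $\overline{x}_f$, Skolemising $\Phi$ reintroduces a function symbol of the correct arity applied to precisely those arguments, reconstructing $N$ up to renaming of Skolem symbols. Conversely, a model of $N$ supplies witnesses for each $y_f$ as a function of $\overline{x}_f$, which by global compatibility and uniqueness is well defined; this gives a model of $\Phi$. Hence $N$ and $\Phi$ are equisatisfiable, and $\Phi$ contains no Skolem symbols.

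The hard part will be the bookkeeping for the quantifier prefix in the presence of clauses that mix several Skolem terms: I must show that global linearity genuinely forces the argument sets to form a chain (up to equality) so that nesting the existential quantifiers is unambiguous, and that no clause requires an $y_f$ to be bound outside the scope of a universal variable on which it depends. Once linearity is exploited to place every Skolem variable in a unique level of the chain, the remaining arguments are routine; but setting up this hierarchy correctly, and proving the equivalence rigorously rather than merely suggestively, is where the technical weight of \cite[Chapter 5]{E96} lies and where I would spend most of the effort.
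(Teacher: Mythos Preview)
The paper does not prove this theorem: it is stated with a citation to \cite[Chapter~5]{E96} and no proof is given. What the paper does supply is the concrete \textbf{Unsko} rule (and later the specialised \textbf{UnskoIn} and \textbf{UnskoFt} rules), which realises the back-translation by replacing Skolem constants and Skolem function terms with fresh existentially quantified variables in a prenex prefix of the shape $\exists y\,\forall x_1\ldots x_n\,\exists y_1\ldots y_k\,\forall z$, followed by the conjunction of the modified clauses. Soundness of that rule is again not argued in the paper but deferred to \cite[Section~5]{GSS08} via Lemma~\ref{lem:sound_unsko}.

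Your proposal is a faithful reconstruction of exactly this construction and of the role each hypothesis plays: normality and uniqueness let you read off a well-defined variable list $\overline{x}_f$ for each Skolem symbol, global compatibility makes that list canonical across all occurrences of $f$, and global linearity forces the argument sets to form a chain so that the existential quantifiers can be nested consistently in a single prenex prefix. The equisatisfiability argument you sketch (Skolemising $\Phi$ recovers $N$ up to renaming; conversely a model of $N$ supplies witness functions) is the standard one. So your approach matches what the cited source does and what the paper's \textbf{Unsko} rule encodes; there is nothing to compare against in the paper itself beyond noting that your outline agrees with the rule's quantifier pattern.
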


Next, we introduce the basic rules for back-translation. We use the notation $C(t)$ to denote that $C(t)$ is a clause and $t$ is a term that possibly occurs in $C(t)$. We use $C_n(f(\overline {x^n_m}))$ to denote that $f(\overline {x^n_m})$ is a flat compound term and $\overline {x^n_m}$ is a variable sequence $x_1, \ldots, x_m$ occurring in the clause $C_n$.

%

A term is abstracted from a clause using:
\begin{mdframed}[linewidth=2pt]
\begin{displaymath}
 \prftree[l]{\textbf{Abs}: \ \ } 
   {N \cup \{C(t)\}}
   {N \cup \{C(y) \lor t \not \approx y\}}
\end{displaymath}
if $t$ is a term and the variable $y$ does not occur in $C(t)$. 
\end{mdframed}

Variables are renamed using:
\begin{mdframed}[linewidth=2pt]
\begin{displaymath}
 \prftree[l]{\textbf{Rena}: \ \ }
   {N \cup \{C(x)\}}
   {N \cup \{C(y)\}}
\end{displaymath}
if every occurrence of the variable $x$ in $C(x)$ is replaced by the variable $y$ and $y$ does not occur in $C(x)$.
\end{mdframed}

A clausal set is unskolemised to a first-order formula using:
\begin{mdframed}[linewidth=2pt]
Suppose $N^\prime$ is a first-order clausal set
\begin{displaymath}
\Bigg\{
\begin{array}{l}
C_1(\ldots, f(x_1, \ldots, x_n), \ldots, a, \ldots, z), \\
\hfil  \ldots, \\
C_m(\ldots, g(x_1, \ldots, x_n), \ldots, b, \ldots)	
\end{array}
\Bigg\},
\end{displaymath}
where $a$ and $b$ represent the Skolem and the non-Skolem constants in $N^\prime$, respectively, $f$ and $g$ represent the Skolem function symbols in $N^\prime$, and $z$ represents the variables that are not under Skolem functions in $N^\prime$.

Let $F$ be a Skolem-symbol-free first-order formula
\begin{displaymath}
\exists y\forall x_1 \ldots x_n \exists y_1\ldots y_k \forall z
\left\lbrack
\begin{array}{l}
C_1(\ldots, y_1, \ldots, y, \ldots, z) \land \\
\hfil  \ldots \\
C_m(\ldots, y_k, \ldots, b, \ldots)	
\end{array}
\right\rbrack,
\end{displaymath}
where the variables $y, y_1, \ldots, y_k$ do not occur in $N^\prime$. Then, $N^\prime$ is unskolemised by the following rule:
\begin{displaymath}
 \prftree[l]{\textbf{Unsko}: \ \ }
   {N \cup N^\prime}
   {N \cup \{F\}}
\end{displaymath}
if $N^\prime$ is normal, unique, globally linear and globally compatible.
\end{mdframed}


%
%
The challenge of applying the \textbf{Unsko} rule to a clausal set $N$ is not only about computing a correct result, but it is about ensuring that $N$ is normal, unique, globally linear and globally compatible. Given a clausal set~$N$ that is obtained by saturating a set of clausified formulas, the \textbf{Unsko} rule restores first-order quantifications for $N$ by eliminating the Skolem symbols in $N$. We refer the reader to~\cite[chapter 5]{E96} and \cite[pages 63--69]{GSS08} for more details on unskolemisation.

 \begin{lem}[{\cite[\normalfont{section 5}]{GSS08}}]
 \label{lem:sound_unsko}
The \textbf{Abs}, the \textbf{Rena} and the \textbf{Unsko} rules preserve logical equivalence.
 \end{lem}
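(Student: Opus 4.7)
The plan is to prove the three rules separately, in order of increasing difficulty. The \textbf{Rena} rule is the easiest: under the implicit universal closure of a clause, consistently renaming a variable $x$ to a fresh variable $y$ is simply $\alpha$-conversion of a bound variable, which preserves the truth value in every structure. So for the \textbf{Rena} rule it suffices to note that for any interpretation $\mathcal{I}$ and any clause $C(x)$, $\mathcal{I} \models \forall x\, C(x)$ iff $\mathcal{I} \models \forall y\, C(y)$ whenever $y$ is fresh.

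For the \textbf{Abs} rule, I would show the stronger statement that $\forall \overline{z}\, C(t) \equiv \forall \overline{z}\, \forall y\, (C(y) \lor t \not\approx y)$, where $\overline{z}$ collects the remaining free variables of $C(t)$ and $y$ does not occur in $C(t)$. For the $(\Rightarrow)$ direction, fix an interpretation $\mathcal{I}$ satisfying the left-hand side and pick arbitrary values for $\overline{z}$ and $y$. If the value of $y$ differs from that of $t$, the literal $t \not\approx y$ holds; otherwise $y = t$ and then $C(y) = C(t)$ holds by assumption. For the $(\Leftarrow)$ direction, instantiate $y$ with $t$ in the right-hand side: the disjunct $t \not\approx y$ becomes $t \not\approx t$, which is false under the congruence axioms for $\approx$, so $C(t)$ must hold. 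This is a routine equational argument and should pose no real difficulty.

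The substantive case is the \textbf{Unsko} rule, and the plan here is to reduce it to Theorem~\ref{thm:unsko}. Given a clausal set $N'$ satisfying the four preconditions (normality, uniqueness, global linearity and global compatibility), Theorem~\ref{thm:unsko} supplies a Skolem-symbol-free first-order formula $F$ logically equivalent to $N'$ modulo introduction of Skolem functions. Concretely, one direction is standard: any model $\mathcal{J}$ of $F$ can be expanded to a model $\mathcal{J}'$ of $N'$ by interpreting each Skolem function $f$ of $N'$ so that $f(\overline{x})$ picks the existential witness for the corresponding variable $y_i$ of $F$; normality and uniqueness guarantee that such a witness-choosing map is well-defined on the right arity. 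The other direction uses global compatibility and global linearity: these are precisely the two conditions in Definitions~\ref{def:lg_consis} and~\ref{def:lG_rinear} that ensure all occurrences of a Skolem term can be replaced consistently by a single existentially-quantified variable, so that given a model of $N'$ one obtains a model of $F$ by existentially witnessing each $y_i$ with the interpretation of the associated Skolem term.

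The main obstacle is not the logical argument itself but verifying that the quantifier prefix displayed in the statement of the \textbf{Unsko} rule (namely $\exists y \forall x_1 \ldots x_n \exists y_1 \ldots y_k \forall z$) indeed matches the structure recovered by the unskolemisation construction of Theorem~\ref{thm:unsko}. Once this matching is checked --- Skolem constants $a$ correspond to the outermost $\exists y$, Skolem function symbols $f, g$ of arity $n$ correspond to $\forall x_1 \ldots x_n \exists y_i$, and the genuinely universal variables $z$ of $N'$ correspond to the innermost $\forall z$ --- the soundness of the rule follows by a direct appeal to Theorem~\ref{thm:unsko} together with Lemma~\ref{lem:trans}-style bookkeeping of prenex form. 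I would therefore structure the proof as: (i) dispose of \textbf{Rena} by $\alpha$-equivalence; (ii) dispose of \textbf{Abs} by the two-directional equational argument above; (iii) for \textbf{Unsko}, invoke Theorem~\ref{thm:unsko} and check the quantifier alignment.
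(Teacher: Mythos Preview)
The paper does not actually prove this lemma: it is stated with an attribution to \cite[Section 5]{GSS08} and no argument is given in the paper itself. So there is no ``paper's own proof'' to compare against --- the paper simply imports the result.

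Your plan is substantially more detailed than what the paper does, and the arguments for \textbf{Rena} and \textbf{Abs} are correct and standard. Two remarks on the \textbf{Unsko} case. First, invoking Theorem~\ref{thm:unsko} alone does not quite discharge the obligation: as stated in the paper, that theorem only asserts that a back-translation \emph{exists}, not that it is logically equivalent to the input clausal set. The real content is the model-expansion/restriction argument you sketch, and you should present that as the proof rather than as a gloss on the theorem. Second, be careful with the phrase ``logical equivalence'' for \textbf{Unsko}: Skolemisation and its inverse do not preserve logical equivalence in the strict sense (e.g.\ $\forall x\,P(x,f(x))$ entails $\forall x\exists y\,P(x,y)$ but not conversely in a signature containing $f$). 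What actually holds is equivalence modulo the Skolem signature --- every model of $F$ expands to a model of $N'$ and every model of $N'$ restricts to a model of $F$ --- which is exactly what your expansion/restriction argument establishes and exactly what is needed for Theorem~\ref{thm:rew-gf}. You already flag this with ``modulo introduction of Skolem functions'', but it is worth making explicit that this weaker notion is both what is provable and what suffices downstream.

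The reference to ``Lemma~\ref{lem:trans}-style bookkeeping'' is a slip --- that lemma concerns the forward clausification \textbf{Trans}, not back-translation --- but your intended meaning (verifying the displayed quantifier prefix matches the Skolem structure) is clear and correct.
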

Next, we devise a back-translation procedure for \textsf{LGQ} clausal sets. This procedure first transforms an \textsf{LGQ} clausal set $N$ to a normal, unique, globally linear and globally compatible clausal set~$N_1$, and then unskolemises $N_1$ into a Skolem-symbol-free first-order formula. The following \textsf{LGQ} clausal set
\begin{align*}
N = 
\left\{
\begin{array}{l}
\lnot G_1(x_1,a) \lor A_1(f(x_1,a),x_1) \lor A_2(g(x_1,a),x_1), \\
\lnot G_2(x_2,x_3) \lor A_3(f(x_2,x_3),x_2) \lor A_4(g(x_2,x_3),x_2),\\
\lnot G_3(b,x_4) \lor A_5(g(b,x_4),b)\\
\lnot G_4(x_5, c, c) \lor A_6(h(c,c,x_5)) \lor A_7(h(c,c,x_5))\\
\lnot B_1(x_8, x_6) \lor \lnot B_2(x_6, x_7) \lor \lnot B_3(x_7, x_8)
\end{array}
\right\}
\end{align*}
will be used as a running example, in which $a$ and $c$ are non-Skolem constants and $b$ is a Skolem constant.


\subsubsection*{\textnormal{\textbf{Transforming \textsf{LGQ} clausal sets to normal and unique clausal sets}}}
In this section, we transform an \textsf{LGQ} clausal set into a normal, unique, locally linear and locally compatible clausal set. First, we introduce two variations of the \textbf{Abs} rule. 


Constants in compound terms are abstracted using:
\begin{mdframed}[linewidth=2pt]
\begin{displaymath}
 \prftree[l]{\textbf{ConAbs:} \ }
   {N \cup \{C(f(\ldots,a,\ldots))\}}
   {N \cup \{C(f(\ldots,x,\ldots)) \lor x \not \approx a\}}
\end{displaymath}
if the following conditions are satisfied.
\begin{enumerate}[noitemsep]
\item $C(f(\ldots,a,\ldots))$ is a compound-term clause.
\item The variable $x$ does not occur in $C(f(\ldots,a,\ldots))$.
\item All occurrences of $a$ in $C(f(\ldots,a,\ldots))$ are simultaneously replaced by~$x$.
\end{enumerate}
\end{mdframed}

Duplicate variables in compound terms are abstracted using:
\begin{mdframed}[linewidth=2pt]
\begin{displaymath}
 \prftree[l]{\textbf{VarAbs:} \ }
   {N \cup \{C(f(\ldots,x,\ldots,x,\ldots)\}}
   {N \cup \{C(f(\ldots,x,\ldots,y,\ldots) \lor y \not \approx x\}}
\end{displaymath}
if the following conditions are satisfied.
\begin{enumerate}[noitemsep]
\item $C(f(\ldots,x,\ldots,x,\ldots))$ is a compound-term clause.
\item The variable $y$ does not occur in $C(f(\ldots,x,\ldots,x,\ldots))$.
\item Let the second variable $x$ in $f(\ldots,x,\ldots,x,\ldots)$ occur at the position $i$ in $f(\ldots,x,\ldots,x,\ldots)$. Then, all occurrence of $x$ in position $i$ in all compound terms in $C(f(\ldots,x,\ldots,x,\ldots))$ are simultaneously replaced by $y$.
\end{enumerate}
\end{mdframed}


We use \textbf{Q-Abs} to denote the procedure of applying the \textbf{ConAbs} and the \textbf{VarAbs} rules to an \textsf{LGQ} clausal set. The \textbf{Q-Abs} procedure ensures that an \textsf{LGQ} clausal set is transformed into a \emph{normal} and \emph{unique} clausal set. Using the \textsf{LGQ} clausal set $N$ as an example, the \textbf{Q-Abs} procedure is applied to $N$ by the following steps.

\begin{enumerate}
\item Recursively apply the \textbf{ConAbs} rule to each clause in an \textsf{LGQ} clausal set. From $N$ we obtain
\begin{align*}
N_1 = 
\left\{
\begin{array}{l}
\lnot G_1(x_1,y_1) \lor A_1(f(x_1,y_1),x_1) \lor A_2(g(x_1,y_1),x_1) \lor y_1 \not \approx a, \\
\lnot G_2(x_2,x_3) \lor A_3(f(x_2,x_3),x_2) \lor A_4(g(x_2,x_3),x_2),\\
\lnot G_3(y_2,x_4) \lor A_5(g(y_2,x_4),y_2) \lor y_2 \not \approx b,\\
\lnot G_4(x_5, y_3, y_3) \lor A_6(h(y_3,y_3,x_5)) \lor A_7(h(y_3,y_3,x_5)) \lor y_3 \not \approx c \\
\lnot B_1(x_8, x_6) \lor \lnot B_2(x_6, x_7) \lor \lnot B_3(x_7, x_8)
\end{array}
\right\}.
\end{align*}
\item For each clause in the clausal set obtained in 1., recursively apply the \textbf{VarAbs} rule to it. From $N_1$ we obtain
\begin{align*}
N_2 = 
\left\{
\begin{array}{l}
\lnot G_1(x_1,y_1) \lor A_1(f(x_1,y_1),x_1) \lor A_2(g(x_1,y_1),x_1) \lor y_1 \not \approx a, \\
\lnot G_2(x_2,x_3) \lor A_3(f(x_2,x_3),x_2) \lor A_4(g(x_2,x_3),x_2),\\
\lnot G_3(y_2,x_4) \lor A_5(g(y_2,x_4),y_2) \lor y_2 \not \approx b,\\
\lnot G_4(x_5, y_3, y_4) \lor A_6(h(y_3,y_4,x_5)) \lor A_7(h(y_3,y_4,x_5)) \lor y_3 \not \approx c \lor y_4 \not \approx y_3\\
\lnot B_1(x_8, x_6) \lor \lnot B_2(x_6, x_7) \lor \lnot B_3(x_7, x_8)
\end{array}
\right\}.	
\end{align*}
\end{enumerate}

We use the notation \textsf{LGQ}$_\textsf{nu}$ to denote the clausal set obtained by applying the \textbf{Q-Abs} procedure to an \textsf{LGQ} clausal set. 
\begin{lem}
\label{lem:un}
Let $N$ be a set of \textsf{LGQ}$_\textsf{nu}$ clauses. Then, i) all clauses in $N$ are strongly compatible, and ii) $N$ is normal, unique, locally compatible and locally linear. 	
\end{lem}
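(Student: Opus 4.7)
The plan is to establish the four properties in turn, leveraging the strong compatibility that \textsf{LGQ} clauses already enjoy by definition, and then showing that the rules \textbf{ConAbs} and \textbf{VarAbs} neither destroy this invariant nor leave any residual constants or repeated variables inside compound terms.

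First I would observe that every \textsf{LGQ} clause is either an \textsf{LG} clause, which is strongly compatible by Definition \ref{def:lgc}, or a query clause, which is flat and therefore vacuously strongly compatible. The key structural observation about \textbf{ConAbs} and \textbf{VarAbs} is that both are applied \emph{simultaneously} to every compound term of the clause: \textbf{ConAbs} replaces, across all compound terms, every occurrence of the chosen constant $a$ at the specified argument position by the same fresh variable $x$, and \textbf{VarAbs} performs an analogous simultaneous replacement of a repeated variable $x$ at a specified position by a fresh $y$. Since strong compatibility asserts that all compound terms of the clause share one and the same argument list, a position-wise simultaneous replacement preserves this identity of argument lists, hence preserves strong compatibility. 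This proves (i), and since strong compatibility entails local compatibility, it also yields that part of (ii).

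Next I would argue normality and uniqueness. An \textsf{LGQ}$_{\textsf{nu}}$ clause is, by definition, one on which no further \textbf{ConAbs} or \textbf{VarAbs} step is applicable. Exhaustive applicability of \textbf{ConAbs} means no constant can still sit as an argument of any compound term, and combined with the simpleness of \textsf{LG} clauses (arguments of compound terms are already either variables or constants, never nested terms), this forces every argument of every compound term to be a variable, i.e.\ the clause is normal. Exhaustive applicability of \textbf{VarAbs} then forces every compound term to have pairwise distinct variable arguments, which together with normality is exactly uniqueness.

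Finally, local linearity follows essentially for free from strong compatibility: within any clause of $N$, every two compound terms $f(\overline{x})$ and $g(\overline{y})$, even under distinct function symbols, share the same argument sequence, so $\{\overline{x}\} = \{\overline{y}\}$ and each argument set is (trivially) a subset of the other, which is the definition of a linear pair. The only step requiring real care is verifying that \textbf{ConAbs} and \textbf{VarAbs}, as stated with their simultaneous-replacement conditions, really do preserve list-identity of arguments \emph{across} different compound terms in the clause and not merely inside a single compound term; once this invariant is pinned down, all remaining claims reduce to direct unpacking of the definitions of normal, unique, locally compatible, and locally linear.
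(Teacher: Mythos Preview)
Your proof is correct and follows essentially the same approach as the paper: start from the strong compatibility of \textsf{LGQ} clauses, argue that the simultaneous-replacement discipline of \textbf{ConAbs} and \textbf{VarAbs} preserves strong compatibility, derive normality and uniqueness from exhaustive application together with simpleness, and read off local compatibility and local linearity directly from strong compatibility. One small descriptive imprecision worth noting: the paper's \textbf{ConAbs} rule replaces \emph{all} occurrences of the constant $a$ throughout the clause (not only those at a particular argument position inside compound terms), as the running example with $\lnot G_1(x_1,a)$ illustrates; this does not affect your argument, since either reading preserves the identity of argument lists across compound terms.
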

\begin{proof}
W.l.o.g.~suppose $N_1$ is an \textsf{LGQ} clausal set satisfying such that applying \textbf{Q-Abs} procedure to $N_1$ derives $N$. Further, suppose $C$ is a clause in $N_1$.

By the strong compatible property of \textsf{LGQ} clauses and the fact that the \textbf{ConAbs} and the \textbf{VarAbs} rules simultaneously abstract variables or constants from $C$, applying the \textbf{Q-Abs} procedure to $C$ derives a strongly compatible clause. Hence, the clauses in $N$ are strongly compatible, therefore $N$ is locally compatible and locally linear.

That $C$ is simple implies that the arguments in compound terms of $C$ are either variables or constants. Suppose $C^\prime$ is the clause obtained by recursively applying the \textbf{ConAbs} rule to $C$. Since each application of the \textbf{ConAbs} rule to $C$ abstracts a constant occurring in the compound terms of $C$, no constants occur in compound terms in $C^\prime$, hence $C^\prime$ is normal. Suppose~$C^{\prime\prime}$ is the clause obtained by recursively applying the \textbf{VarAbs} rule to $C^\prime$. Since each application of the \textbf{VarAbs} rule to $C^\prime$ abstracts a duplicate variable occurring in the compound terms of~$C^\prime$, no duplicate variables occur in compound terms in $C^{\prime\prime}$, therefore $C^{\prime\prime}$ is unique. The fact that $C^\prime$ is normal implies that~$C^{\prime\prime}$ is normal. Then, $N$ is normal and unique.
\end{proof} 
Note that an \textsf{LGQ}$_\textsf{nu}$ clause may not belong to the \textsf{LGQ} clausal class due to the presence of equality literals. 


\subsubsection*{\textnormal{\textbf{Renaming \textsf{LGQ}$_\textsf{nu}$ clausal sets for unskolemisation}}}
In this section, we transform an \textsf{LGQ$_\textsf{nu}$} clausal set into a normal, unique, globally compatible and globally linear clausal set, preparing the set for unskolemisation. 

Given an \textsf{LGQ$_\textsf{nu}$} clausal set $N$, one needs to locate the \textsf{LGQ$_\textsf{nu}$} clauses in $N$ that have common Skolem function symbols, so that we can simultaneously unskolemise these clauses. We introduce the notions of \emph{connectedness}, \emph{inter-connectedness} and \emph{closed clausal set} to define clauses that have identical function symbols.

\begin{defi}[Inter-connected clausal set]
\label{def:connect}
Two clauses are \emph{connected} if they contain at least one common function symbol. Two clausal sets are \emph{connected} if they contain at least one common function symbol, otherwise, they are \emph{unconnected}.

A clausal set $N$ is an \emph{inter-connected clausal set} if for any pair of clauses $C$ and~$C^\prime$ in $N$, there exists a sequence of clauses $C, C_1, \ldots, C_n, C^\prime$ in $N$ such that each pair of adjacent clauses in $C, C_1, \ldots, C_n, C^\prime$ is connected.
\end{defi}

Recall that a flat clause is a clause containing no function symbols. We say that a clausal set is \emph{flat} if the set contains only flat clauses. We partition an \textsf{LGQ$_\textsf{nu}$} clausal set~$N$ into clausal sets $N_1, \ldots, N_n$ such that i) each $N_i$ is either an \emph{inter-connected clausal set} or a \emph{flat clausal set}, and ii) each pair of clausal sets in $N_1, \ldots, N_n$ are \emph{unconnected}. Then, $N_1, \ldots, N_n$ are \emph{closed clausal sets} in~$N$.

An inter-connected \textsf{LGQ$_\textsf{nu}$} clausal set has the following useful property.

\begin{lem}
\label{lem:interc}
Let $N$ be an inter-connected \textsf{LGQ$_\textsf{nu}$} clausal set. Then, all compound terms in $N$ have the same arity.
\end{lem}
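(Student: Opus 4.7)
The plan is to exploit the strong compatibility of individual clauses (established in Lemma~\ref{lem:un}) together with the fact that every function symbol has a fixed arity, and then propagate the arity constraint along the connectivity chain guaranteed by the definition of an inter-connected clausal set.

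First, I would show the local version of the claim: within a single \textsf{LGQ}$_\textsf{nu}$ clause $C$, all compound terms have the same arity. By Lemma~\ref{lem:un}, $C$ is strongly compatible, so any two compound terms appearing in $C$, regardless of the outermost function symbol, share an identical list of arguments, and in particular have the same number of arguments. (Uniqueness and normality give that these arguments are pairwise distinct variables, which will be convenient but is not strictly needed here.) Hence there is a well-defined ``arity of compound terms in $C$'' and it equals the arity of every function symbol occurring at the top of some compound term in $C$.

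Next, I would prove the two-clause case: if $C_1$ and $C_2$ are two \textsf{LGQ}$_\textsf{nu}$ clauses that share a function symbol $f$, then the compound terms in $C_1$ and those in $C_2$ have the same arity. Indeed, $f$ has a fixed arity $n$ in the signature, so every occurrence of $f$ heads a term of arity $n$, both in $C_1$ and in $C_2$. Combining this with the local step, every compound term in $C_1$ has arity $n$, and every compound term in $C_2$ has arity $n$.

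Finally, I would lift this to the full inter-connected set $N$ by a straightforward induction along the connectivity chains. Given arbitrary clauses $C$ and $C'$ in $N$, by Definition~\ref{def:connect} there is a sequence $C = C_{i_1}, C_{i_2}, \ldots, C_{i_k} = C'$ in which each adjacent pair shares a function symbol; applying the two-clause case to each step yields that all compound terms in $C$ and in $C'$ have the same arity. Hence a single common arity is shared by all compound terms of $N$. The argument is essentially bookkeeping, and I do not anticipate a real obstacle once one observes that strong compatibility forces equal argument lists within a clause and that a function symbol's arity is a global invariant of the signature.
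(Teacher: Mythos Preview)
Your proposal is correct and follows essentially the same approach as the paper's own proof. The paper argues in three terse sentences: (a) compound terms under the same function symbol have the same arity, (b) by part i) of Lemma~\ref{lem:un} (strong compatibility), all compound terms within a single \textsf{LGQ}$_\textsf{nu}$ clause have the same arity, and (c) hence inter-connectedness propagates this common arity across the whole set; you have simply made the two-clause step and the induction along the connectivity chain explicit.
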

\begin{proof}
In a clausal set, compound terms that are under the same function symbol have the same arity. By i) of Lemma \ref{lem:un}, the compound terms in an \textsf{LGQ$_\textsf{nu}$} clause have the same arity. Hence, all compound terms in an inter-connected \textsf{LGQ$_\textsf{nu}$} clausal set have the same arity.
\end{proof}

Given a closed \textsf{LGQ$_\textsf{nu}$} clausal set $N$, the \textbf{Rena} rule does not apply to it if~$N$ is a flat clausal set. Variables in an inter-connected \textsf{LGQ$_\textsf{nu}$} clausal set are renamed using the following rule:

\begin{mdframed}[linewidth=2pt]
\begin{displaymath}
 \prftree[l]{\textbf{VarRe:} \ }
   {N \cup \{C_1(f(\overline {x^1_m})), \ldots, C_n(g(\overline {x^n_m})))\}}
   {N \cup \{C_1(f(\overline {y_m})), \ldots, C_n(g(\overline {y_m})))\}}
\end{displaymath}
if the following conditions are satisfied.
\begin{enumerate}[noitemsep]
\item $\{C_1(f(\overline {x^1_m})), \ldots, C_n(g(\overline {x^n_m})))\}$ is an inter-connected \textsf{LGQ}$_\textsf{nu}$ clausal set.
\item For variable sequences $\overline {x^1_m}, \ldots, \overline {x^n_m}$ occurring in all compound terms of \\ $\{C_1(f(\overline {x^1_m})), \ldots, C_n(g(\overline {x^n_m})))\}$, each of $\overline {x^1_m}, \ldots, \overline {x^n_m}$ is renamed with $\overline {y_m}$.
\item The variable sequence $\overline {y_m}$ does not occur in $\{C_1(f(\overline {x^1_m})), \ldots, C_n(g(\overline {x^n_m}))\}$.
\end{enumerate}
\end{mdframed}

We use \textbf{Q-Rena} to denote the procedure of applying the \textbf{VarRe} rule to an inter-connected \textsf{LGQ$_\textsf{nu}$} clausal set. The \textbf{Q-Rena} procedure transforms an \textsf{LGQ}$_\textsf{nu}$ clausal set to a normal, unique, globally compatible and globally linear clausal set, detailed below.
\begin{enumerate}
\item Partition an \textsf{LGQ$_\textsf{nu}$} clausal set to closed \textsf{LGQ$_\textsf{nu}$} clausal sets. We use the \textsf{LGQ$_\textsf{nu}$} clausal set $N_2$ from the previous section as an example. Partition $N_2$ into closed \textsf{LGQ$_\textsf{nu}$} clausal sets
\begin{align*}
& N_2^\prime = 
\left\{
\begin{array}{l}
\lnot G_1(x_1,y_1) \lor A_1(f(x_1,y_1),x_1) \lor A_2(g(x_1,y_1),x_1) \lor y_1 \not \approx a, \\
\lnot G_2(x_2,x_3) \lor A_3(f(x_2,x_3),x_2) \lor A_4(g(x_2,x_3),x_2),\\
\lnot G_3(y_2,x_4) \lor A_5(g(y_2,x_4),b) \lor y_2 \not \approx b
\end{array}
\right\},
\\
& N_2^{\prime\prime} = 
\left\{
\begin{array}{l}
\lnot G_4(x_5, y_3, y_4) \lor A_6(h(y_3,y_4,x_5)) \lor A_7(h(y_3,y_4,x_5)) \lor y_3 \not \approx c \lor y_4 \not \approx y_3
\end{array}
\right\}, \\
& \text{and} \ N_2^{\prime\prime\prime} = \{\lnot B_1(x_8, x_6) \lor \lnot B_2(x_6, x_7) \lor \lnot B_3(x_7, x_8)\}.
\end{align*}
\item Apply the \textbf{VarRe} rule to the inter-connected \textsf{LGQ$_\textsf{nu}$} clausal sets obtained in 1. Using a sequence of new variables $x$ and $y$, applying the \textbf{VarRe} rule to $N_2^\prime$ derives 
\begin{align*}
N_3^\prime = 
\left\{
\begin{array}{l}
\lnot G_1(x,y) \lor A_1(f(x,y),x) \lor A_2(g(x,y),x) \lor y \not \approx a, \\
\lnot G_2(x,y) \lor A_3(f(x,y),x) \lor A_4(g(x,y),x),\\
\lnot G_3(x,y) \lor A_5(g(x,y),x) \lor x \not \approx b
\end{array}
\right\}.
\end{align*}
Using new variables $x_1, y_1, z_1$, applying the \textbf{VarRe} rule to $N_2^{\prime\prime}$ transforms it into
\begin{align*}
N_3^{\prime\prime} = 
\left\{
\begin{array}{l}
\lnot G_4(x_1, y_1, z_1) \lor A_6(h(y_1,z_1,x_1)) \lor A_7(h(y_1,z_1,x_1)) \lor y_1 \not \approx c \lor z_1 \not \approx y_1
\end{array}
\right\}.
\end{align*}
Finally, from $N_2$ we obtain the clausal set $N_3^\prime \cup N_3^{\prime\prime} \cup N_2^{\prime\prime\prime}$.
\end{enumerate}

We use the notation of \textsf{LGQ}$_\textsf{nucl}$ to denote the clausal set obtained by applying the \textbf{Q-Rena} procedure to an \textsf{LGQ}$_\textsf{nu}$ clausal set. 

\begin{lem}
\label{lem:nucl}
Let $N$ be an \textsf{LGQ}$_\textsf{nucl}$ clausal set. Then, $N$ is normal, unique, globally compatible and globally linear.
\end{lem}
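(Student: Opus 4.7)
The plan is to verify the four properties in turn, exploiting Lemma \ref{lem:un} (the input \textsf{LGQ}$_\textsf{nu}$ clausal set $N_1$ is already normal, unique, locally compatible and locally linear, and each clause is strongly compatible) together with Lemma \ref{lem:interc} (all compound terms in an inter-connected \textsf{LGQ}$_\textsf{nu}$ clausal set share a common arity). Normality and uniqueness follow essentially for free, because the \textbf{VarRe} rule only renames variables and leaves constants, function symbols, and the shape of compound terms untouched; the absence of constants inside compound terms and the pairwise distinctness of variable arguments therefore transfer directly from $N_1$ to $N$.

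For global compatibility I would first partition $N$ into its closed subsets $N^1, \ldots, N^k$, obtained from the closed subsets of $N_1$ by applying \textbf{VarRe} componentwise. By the definition of a closed clausal set, distinct subsets share no function symbols, so any two compound terms under a common function symbol must lie in the same closed subset. In a flat subset the statement is vacuous; in an inter-connected subset, Lemma \ref{lem:interc} supplies a common arity $n$, and \textbf{VarRe} rewrites the argument sequence of every compound term in the subset to the same fresh tuple $y_1, \ldots, y_n$. Hence all compound terms in the subset have identical argument lists, and in particular any two under the same function symbol are compatible, establishing global compatibility.

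Global linearity within a single inter-connected subset is immediate from the same observation: after \textbf{VarRe} every pair of compound terms in the subset has identical argument lists, so the argument sets coincide and are trivially subsets of one another. The cross-subset case is the main obstacle: two compound terms drawn from distinct closed subsets sit under different function symbols (so linearity must be checked) but, since the \textbf{VarRe} rule only requires the target variables to be fresh within the subset being rewritten, without further coordination the argument sets of compound terms from different closed subsets could be disjoint. My plan is to make the variable choice uniform across subsets: fix a single infinite enumeration $y_1, y_2, \ldots$ of fresh variables, and whenever \textbf{VarRe} is applied to an inter-connected subset of common arity $n$ use the prefix $y_1, \ldots, y_n$. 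This is compatible with the freshness side-condition on \textbf{VarRe} and is sound because it is merely a choice of witness variables. With this discipline the argument sets of compound terms across all of $N$ form a chain under inclusion ordered by arity, so every pair of compound terms under distinct function symbols satisfies the subset condition, yielding global linearity and completing the proof.
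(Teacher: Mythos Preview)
Your argument is correct and follows the same overall structure as the paper's proof: partition into closed subsets, observe that flat subsets are trivial, and use Lemma~\ref{lem:interc} together with the effect of \textbf{VarRe} to conclude that within each inter-connected subset all compound terms carry the same argument tuple, from which normality, uniqueness, global compatibility and global linearity follow.

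Where you differ is in the treatment of \emph{global linearity across distinct closed subsets}. The paper's proof considers essentially one inter-connected subset $N_1$ and one flat subset $N_2$ and infers the global properties of $N$ from those of $N_1'$ and $N_2$; since the flat part contains no compound terms, the cross-subset case is vacuous there. You correctly observe that with several inter-connected subsets (as in the paper's own running example, where $N_3'$ uses $x,y$ and $N_3''$ uses $x_1,y_1,z_1$), the \textbf{VarRe} rule as stated only guarantees freshness within each subset, so argument sets from different subsets need not be comparable under inclusion. Your fix of drawing all target tuples from a single enumeration $y_1,y_2,\ldots$, so that a subset of common arity $n$ always receives the prefix $y_1,\ldots,y_n$, is a legitimate instantiation of \textbf{VarRe} and makes the argument sets a chain, restoring global linearity for the full $N$. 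This is a genuine refinement: the paper leaves this point implicit (and for the downstream \textbf{Q-Unsko} step it does not matter, since \textbf{UnskoIn} is applied to each closed subset separately), whereas your version actually establishes the lemma as stated.
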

\begin{proof}
Suppose $N_1$ is an inter-connected \textsf{LGQ}$_\textsf{nu}$ clausal set, and $N_2$ is a flat \textsf{LGQ}$_\textsf{nu}$ clausal set. As $N_2$ is a flat clausal set, it is trivially is normal, unique, globally compatible and globally linear. 

We prove that applying the \textbf{Q-Rena} procedure to $N_1$ transforms it to a normal, unique, globally compatible and globally linear clausal set. Suppose $N_1^\prime$ is the clausal set obtained by applying the \textbf{Q-Rena} procedure to $N_1$. By Lemma \ref{lem:un}, $N_1^\prime$ is normal and unique. By Lemma \ref{lem:interc}, the \textbf{Q-Rena} procedure renames the variables in $N_1$ so that the variable arguments in all compound terms of $N_1$ are renamed with an identical variable sequence. Then,~$N_1^\prime$ is globally compatible and globally linear. Since~$N_2$ is normal, unique, globally compatible and globally linear, $N$ is normal, unique, globally compatible and globally linear.
\end{proof}

\subsubsection*{\textnormal{\textbf{Unskolemising \textsf{LGQ}$_\textsf{nucl}$ clausal sets}}}

In this section, we unskolemise an \textsf{LGQ}$_\textsf{nucl}$ clausal set to a first-order formula without Skolem symbols. Two variations of the \textbf{Unsko} rule, respectively, are devised for inter-connected \textsf{LGQ}$_\textsf{nucl}$ clausal sets and flat \textsf{LGQ}$_\textsf{nucl}$ clausal sets.

An inter-connected \textsf{LGQ}$_\textsf{nucl}$ clausal set is unskolemised using: 
\begin{mdframed}[linewidth=2pt]

Suppose $N^\prime$ is an inter-connected \textsf{LGQ}$_\textsf{nucl}$ clausal set 
\begin{displaymath}
\Bigg\{
\begin{array}{l}
C_1(x_1, \ldots, x_n, f(x_1, \ldots, x_n), z_1, a), \\
\hfil  \ldots \\
C_n(x_1, \ldots, x_n, g(x_1, \ldots, x_n), z_t, b)	
\end{array}
\Bigg\},
\end{displaymath}
where $a$, $b$, $x_1, \ldots, x_n$ and $z_1, \ldots, z_t$ represent the Skolem constants, the non-Skolem constants and the variables introduced by the \textbf{Q-Rena} and \textbf{Q-Abs} procedures, respectively. Suppose $F$ is the Skolem-symbol-free first-order formula
\begin{displaymath}
\exists y\forall x_1 \ldots x_n \exists y_1\ldots y_m \forall z_1, \ldots, z_t
\left\lbrack
\begin{array}{l}
C_1(x_1, \ldots, x_n, y_1, z_1, y) \land \\
\hfil  \ldots \\
C_n(x_1, \ldots, x_n, y_m, z_t, b)	
\end{array}
\right\rbrack,
\end{displaymath}
where the variables $y, y_1, \ldots, y_m$ do not occur in $N^\prime$. 

Then, $N^\prime$ is unskolemised by the following rule:
\begin{displaymath}
 \prftree[l]{\textbf{UnSkI}: \ }
   {N \cup N^\prime}
   {N \cup \{F\}}.
\end{displaymath}
\end{mdframed}

A flat \textsf{LGQ}$_\textsf{nucl}$ clausal set is unskolemised using:
\begin{mdframed}[linewidth=2pt]
\begin{displaymath}
 \prftree[l]{\textbf{UnSkF}: \ }
   {N \cup \{C_1(x,a), \ldots, C_n(y,b)\}
}
{N \cup \{\exists z \forall xy (C_1(x,z) \land \ldots \land C_n(y,b))\}}
\end{displaymath}
if the following conditions are satisfied.
\begin{enumerate}[noitemsep]
\item $\{C_1(x,a), \ldots, C_n(y,b)\}$ is a flat \textsf{LGQ}$_\textsf{nucl}$ clausal set.
\item $a$ and $b$, respectively, represent the Skolem and the non-Skolem constants in $\{C_1(x,a), \ldots, C_n(y,b)\}$.
\item The variable $z$ does not occur in $\{C_1(x,a), \ldots, C_n(y,b)\}$. 
\end{enumerate}
\end{mdframed}

We use \textbf{Q-Unsko} to denote the procedure of applying the \textbf{UnSkI} and the \textbf{UnSkF} rules to an \textsf{LGQ}$_\textsf{nucl}$ clausal set. Using the \textsf{LGQ}$_\textsf{nucl}$ clausal set $N_2^{\prime\prime\prime} \cup N_3^\prime \cup N_3^{\prime\prime}$ as an example, we show what the \textbf{Q-Unsko} procedure does.
\begin{enumerate}
\item For inter-connected \textsf{LGQ}$_\textsf{nucl}$ clausal sets, the \textbf{UnSkI} rule is applied to them. Applying the \textbf{UnSkI} rule to $N_3^\prime$ and $N_3^{\prime\prime}$, respectively, derives
\begin{align*}
F_1 = \exists z^\prime \forall x y \exists x^\prime y^\prime 
\left\lbrack
\begin{array}{ll}
(\lnot G_1(x,y) \lor A_1(x^\prime,x) \lor A_2(y^\prime,x) \lor y \not \approx a) & \land \\
(\lnot G_2(x,y) \lor A_3(x^\prime,x) \lor A_4(y^\prime,x)) & \land\\
(\lnot G_3(x,y) \lor A_5(y^\prime,x) \lor x \not \approx z^\prime) &
\end{array}
\right\rbrack \ \text{and}
\\
F_2 = \forall y_1 z_1 x_1 \exists x_1^\prime
\left\lbrack
\begin{array}{l}
\lnot G_4(x_1, y_1, z_1) \lor A_6(x_1^\prime) \lor A_7(x_1^\prime) \lor y_1 \not \approx c \lor z_1 \not \approx y_1
\end{array}
\right\rbrack.
\end{align*}
\item For flat \textsf{LGQ}$_\textsf{nucl}$ clausal sets, the \textbf{UnSkF} rule is applied to them. Applying the \textbf{UnSkF} rule to $N_2^{\prime\prime\prime}$ unskolemise it into
\begin{align*}
F_3 = \forall x_6 x_7 x_8
\left\lbrack
\begin{array}{l}
\lnot B_1(x_8, x_6) \lor \lnot B_2(x_6, x_7) \lor \lnot B_3(x_7, x_8)
\end{array}
\right\rbrack.
\end{align*}
\item Conjunctively connect the outputting formulas of 1.~and 2. The running sample~$N$ is hence back-translated to a Skolem-symbol-free first-order formula $F_1 \land F_2 \land F_3$.
\end{enumerate}

\begin{lem}
\label{lem:btrans}
The back-translation defined by applying the \textbf{Q-Unsko} procedure to an \textsf{LGQ}$_\textsf{nucl}$ clausal set is a Skolem-symbol-free first-order formula (with equality).
\end{lem}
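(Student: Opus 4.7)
The plan is to reduce the claim to Theorem \ref{thm:unsko}, which guarantees that any normal, unique, globally linear and globally compatible first-order clausal set can be back-translated into a Skolem-symbol-free first-order formula. First I would take the given \textsf{LGQ}$_\textsf{nucl}$ clausal set $N$ and partition it into its closed sub-clausal sets $N_1, \ldots, N_k$, where each $N_i$ is either an inter-connected \textsf{LGQ}$_\textsf{nucl}$ clausal set or a flat \textsf{LGQ}$_\textsf{nucl}$ clausal set, and where every pair $N_i, N_j$ with $i \neq j$ is unconnected. This partition exists by the definition of closed clausal sets and is uniquely determined by the function-symbol sharing relation.

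Next, I would check that the \textbf{UnskoIn} and \textbf{UnskoFt} rules are licensed instances of the generic \textbf{Unsko} rule on each $N_i$. By Lemma \ref{lem:nucl}, $N$ is normal, unique, globally linear and globally compatible; these are all ``hereditary'' properties in the sense that they transfer to any subset, so each $N_i$ satisfies them as well. For an inter-connected $N_i$, Lemma \ref{lem:interc} ensures that every compound term has the same arity $n$, so a uniform prefix of the form $\exists y \forall x_1 \ldots x_n \exists y_1 \ldots y_m \forall z_1 \ldots z_t$ captures exactly the quantifier dependencies of the Skolem constants, Skolem functions and remaining variables; hence \textbf{UnskoIn} is a correct specialisation of \textbf{Unsko}. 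For a flat $N_i$, the only Skolem symbols are constants, so \textbf{UnskoFt} is simply the degenerate form of \textbf{Unsko} with no $\forall x_1 \ldots x_n$ prefix. The equality literals introduced by the \textbf{Q-Abs} procedure do not affect these checks, since they contain only variables and (Skolem or non-Skolem) constants, and so respect normality, uniqueness, global linearity and global compatibility.

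Finally, I would combine the resulting Skolem-symbol-free formulas $F_1, \ldots, F_k$ by conjunction, yielding $F_1 \land \ldots \land F_k$. The crucial observation is that any pair of distinct closed sub-clausal sets is unconnected, so the Skolem function symbols appearing in $N_i$ and $N_j$ are disjoint for $i \neq j$. Consequently, the existential quantifiers introduced independently by each \textbf{UnskoIn}/\textbf{UnskoFt} application bind disjoint families of unskolemised symbols and do not interact. By Lemma \ref{lem:sound_unsko}, each local application preserves logical equivalence, and the conjunction of unconnected components is logically equivalent to the union of the components. Hence $F_1 \land \ldots \land F_k$ is a Skolem-symbol-free first-order formula (with equality) logically equivalent to $N$.

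The main obstacle will be verifying that the quantifier-prefix structure emitted by \textbf{UnskoIn} is genuinely compatible with \textbf{Unsko} in all cases — in particular, handling Skolem constants separately from Skolem functions within the same inter-connected block, and confirming that the ``shared argument sequence'' $x_1, \ldots, x_n$ produced by the preceding \textbf{Q-Rena} procedure is exactly what is needed to satisfy global compatibility and global linearity simultaneously across the block. Once this alignment is checked, the remainder of the proof is a direct invocation of Theorem~\ref{thm:unsko} and Lemma~\ref{lem:sound_unsko}.
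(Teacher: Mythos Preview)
Your proposal is correct and follows essentially the same route as the paper, which simply cites Lemma~\ref{lem:nucl}, Theorem~\ref{thm:unsko}, and the definition of the \textbf{Q-Unsko} procedure; you are just unpacking what those citations mean. One minor point: the lemma as stated only claims that the output \emph{is} a Skolem-symbol-free first-order formula, not that it is logically equivalent to the input---the equivalence argument you give via Lemma~\ref{lem:sound_unsko} is not needed here (the paper defers that to Lemma~\ref{lem:qrew_sound}).
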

\begin{proof}
By Lemma \ref{lem:nucl}, Theorem \ref{thm:unsko} and the definition of the \textbf{Q-Unsko} procedure.	
\end{proof}

The result of our back-translation procedure is summarised as follows.
\begin{lem}
\label{lem:qrew_sound}
Let $N$ be an \textsf{LGQ} clausal set. Then, i) successively applying the \textbf{Q-Abs}, the \textbf{Q-Rena} and the \textbf{Q-Unsko} procedures to $N$ back-translates it into a Skolem-symbol-free first-order formula $F$, and ii) $F$ is logically equivalent to $N$.
\end{lem}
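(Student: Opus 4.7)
The plan is to decompose the pipeline into its three stages and to show that each stage both produces an output in the required class and preserves the appropriate notion of equivalence, then chain the three results together. Concretely, starting from an \textsf{LGQ} clausal set $N$, I would first justify that \textbf{Q-Ans} yields a saturated \textsf{LGQ} clausal set $N'$ (up to redundancy) by combining the termination result of Theorem \ref{thm:ter} with the soundness/refutational completeness of the \textbf{T-Res}$^+$ system (Theorem \ref{thm:tresp}). Although \textbf{Q-Ans} is described as a decision procedure, what I really need is the saturation it computes, and I would point out that Lemmas \ref{lem:sep_conclusion} and \ref{lem:q-co} guarantee this saturation remains inside the \textsf{LGQ} class while Lemmas \ref{lem:qsep_sound} and \ref{lem:sep_sound} together with the soundness of every \textbf{Deduce} step ensure equivalence (with the caveat that fresh predicate symbols introduced by the separation and formula renaming rules are handled in the standard definitional way: they are existentially quantified at the outermost level, and up to that quantification $N'$ is logically equivalent to $N$; unskolemisation later reabsorbs them cleanly because they occur with fixed arities and polarities).

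Second, I would argue that applying \textbf{Q-Abs} to $N'$ yields an \textsf{LGQ}$_\textsf{nu}$ clausal set $N''$ that is logically equivalent to $N'$. The class membership follows directly from Lemma \ref{lem:un}. Logical equivalence follows from Lemma \ref{lem:sound_unsko}, since both \textbf{ConAbs} and \textbf{VarAbs} are particular instances of the \textbf{Abs} rule (the constant or duplicate variable is extracted through a fresh equational literal $y \not\approx t$, the standard term-abstraction schema), so each step is logic-preserving. Because \textbf{Q-Abs} is a finite sequence of such abstractions, termination is trivial and composition preserves equivalence.

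Third, I would handle the \textbf{Q-Unsko} stage; here it is convenient to interpose the \textbf{Q-Rena} procedure, which the earlier running example and Lemmas \ref{lem:interc}--\ref{lem:nucl} show is actually part of bringing an \textsf{LGQ}$_\textsf{nu}$ set into the form required by \textbf{Q-Unsko}. By Lemma \ref{lem:sound_unsko} the \textbf{VarRe} renaming step is an instance of \textbf{Rena}, hence logically equivalence-preserving, and by Lemma \ref{lem:nucl} the outcome $N'''$ is \textsf{LGQ}$_\textsf{nucl}$: normal, unique, globally compatible and globally linear. At that point Theorem \ref{thm:unsko} licenses the application of the \textbf{Unsko} rule (in its two customised forms \textbf{UnskoIn} and \textbf{UnskoFt}, partitioned over the closed subsets of $N'''$ determined by the inter-connectedness relation of Definition \ref{def:connect}). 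Lemma \ref{lem:btrans} already identifies the output as a Skolem-symbol-free first-order formula $F$, and Lemma \ref{lem:sound_unsko} again delivers logical equivalence of each unskolemisation step. Composing the three equivalences then gives $F \equiv N$.

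The main obstacle, in my view, is not any individual step but the bookkeeping of the fresh predicate symbols introduced during \textbf{Q-Ans} (through \textbf{SepDeQ}, \textbf{SepIndeQ} and \textbf{T-Trans}). Strictly, these rules preserve only satisfiability equivalence, so I would have to argue carefully that \emph{logical} equivalence holds in the intended sense, namely after existentially quantifying over the fresh predicates. The cleanest way I would do this is to note that each such introduction is a formula renaming $\forall \overline{x}\,(P(\overline{x}) \leftrightarrow \varphi(\overline{x}))$ added conservatively to the theory, so every model of $N$ extends uniquely to a model of $N'$ over the enlarged signature, and conversely every model of $N'$ restricts to a model of $N$; this is exactly the semantics captured by the outer existential quantification that \textbf{Q-Unsko} restores. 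The remaining checks (e.g.\ that the saturation of $N$ is in fact an \textsf{LGQ} set ready for \textbf{Q-Abs}, not just a decidable consequence thereof) are immediate from the closure lemmas already established in Sections \ref{sec:lgc} and \ref{sec:qc}.
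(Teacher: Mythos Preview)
Your treatment of the \textbf{Q-Abs} and \textbf{Q-Unsko} stages (with \textbf{Q-Rena} correctly interposed) is essentially identical to the paper's proof: well-formedness of the output is obtained from Lemma~\ref{lem:un}(ii), Lemma~\ref{lem:nucl}, and Lemma~\ref{lem:btrans}, and logical equivalence is obtained from Lemma~\ref{lem:sound_unsko} once one observes that \textbf{ConAbs}/\textbf{VarAbs} are instances of \textbf{Abs}, \textbf{VarRe} is an instance of \textbf{Rena}, and \textbf{UnskoIn}/\textbf{UnskoFt} are instances of \textbf{Unsko}. That is the entirety of the paper's argument.

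Where you diverge is in devoting a whole stage to \textbf{Q-Ans}. The paper's proof does not mention \textbf{Q-Ans} at all: the surrounding text defines \textbf{Q-Rew} as the composition of \textbf{Q-Abs}, \textbf{Q-Rena} and \textbf{Q-Unsko}, and Theorem~\ref{thm:rew-gf} applies \textbf{Q-Rew} to an \emph{already} saturated set. The occurrence of ``\textbf{Q-Ans}'' in the lemma statement is almost certainly a typo for ``\textbf{Q-Rena}'' (note also that the listed order would otherwise be wrong). Your worry that the separation and \textbf{T-Trans} steps in \textbf{Q-Ans} introduce fresh predicate symbols and hence only preserve satisfiability equivalence, not logical equivalence in the original signature, is entirely legitimate; the paper simply never confronts it because, as actually proved, the lemma concerns only the back-translation of a fixed \textsf{LGQ} set, with no saturation step inside it. So the extra machinery you propose for \textbf{Q-Ans} is unnecessary for what the paper proves, and the obstacle you identified is a real one for the statement read literally rather than something the paper resolves.
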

\begin{proof}
By ii) of Lemma~\ref{lem:un}, Lemmas \ref{lem:nucl} and \ref{lem:btrans}, $N$ is ensured to be back-translated to a Skolem-symbol-free first-order formula. That the \textbf{ConAbs} and the \textbf{VarAbs} rules are special cases of the \textbf{Abs} rule, the \textbf{VarRe} rule is a special case of the \textbf{Rena} rule, the \textbf{UnSkI} and the \textbf{UnSkF} rules are special cases of the \textbf{Unsko} rule and Lemma \ref{lem:sound_unsko} imply that $F$ and $N$ are logically equivalent.  
\end{proof}
\textbf{Figure~\ref{fig:qar}} summarises our back-translation procedure for the \textsf{LGQ} clausal class.

\begin{figure}[t]
\center
\includegraphics[width=.9\textwidth]{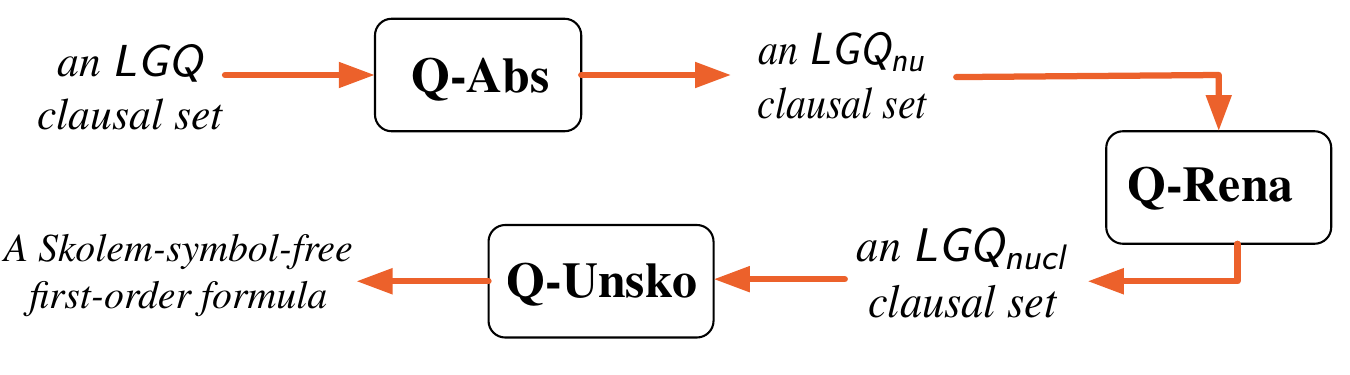}
\caption{The back-translation process for \textsf{LGQ} clausal sets}
\label{fig:qar}
\end{figure}

Returning to Question \ref{ques:qrew}, let a first-order formula $\Sigma_q$ be computed such that $D \models \Sigma_q$ if and only if $\Sigma \cup D \models q$. The final step in our procedure is to negate the first-order formula form of the saturation of $\Sigma \cup \{\lnot q\}$. In our example, we need negate $F_1 \land F_2 \land F_3$ to obtain as $\Sigma_q$:
\begin{align*}
&\forall z^\prime \exists x y \forall x^\prime y^\prime 
\left\lbrack
\begin{array}{ll}
(G_1(x,y) \land \lnot A_1(x^\prime,x) \land \lnot   A_2(y^\prime,x) \land y \approx a) & \lor \\
(G_2(x,y) \land \lnot A_3(x^\prime,x) \land \lnot A_4(y^\prime,x)) & \lor\\
(G_3(x,y) \land \lnot A_5(y^\prime,x) \land x \approx z^\prime) &
\end{array}
\right\rbrack \bigvee \\
&\exists y_1 z_1 x_1 \forall x_1^\prime
\left\lbrack
\begin{array}{l}
G_4(x_1, y_1, z_1) \land \lnot A_6(x_1^\prime) \land \lnot A_7(x_1^\prime) \land y_1 \approx c \land z_1 \approx y_1
\end{array}
\right\rbrack \lor \\
&\exists x_6x_7x_8 [
B_1(x_8, x_6) \land B_2(x_6, x_7) \land B_3(x_7, x_8)].
\end{align*}

Let $N$ be an \textsf{LGQ} clausal set. We use \textbf{Q-Rew} to denote the procedure of successively applying the \textbf{Q-Abs}, the \textbf{Q-Rena} and the \textbf{Q-Unsko} procedures to $N$, deriving a first-order formula $F$, and then negating $F$.  

Finally, we positively answer Question \ref{ques:qrew}.
\begin{thm}
\label{thm:rew-gf}
Suppose $\Sigma$ is a set of guarded quantification formulas, $D$ is a set of ground atoms and $q$ is a union of \textsf{BCQ}s. Further, suppose $N$ is a saturation obtained by applying \textbf{Q-Ans} to $\{\lnot q\} \cup \Sigma$. Then, applying the \textbf{Q-Rew} procedure to $N$ produces a Skolem-symbol-free first-order formula $\Sigma_q$ such that $\Sigma \cup D \models q$ if and only if $D \models \Sigma_q$.
\end{thm}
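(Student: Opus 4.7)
The plan is to chain together the previously established correctness results for clausification, saturation, and back-translation, and then reduce the claim to a statement about satisfiability of clausal sets extended by the ground data $D$. Concretely, I would proceed as follows.

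First I would fix the pipeline. By Theorem \ref{thm:trans}, the \textbf{Trans} process maps $\{\lnot q\} \cup \Sigma$ to a satisfiability-equivalent \textsf{LGQ} clausal set $N_0$. By Theorem \ref{thm:ans}, the \textbf{Q-Ans} procedure saturates $N_0$ and produces a finite \textsf{LGQ} clausal set $N$; by soundness of all deduce/simplify/separate rules involved (Theorems \ref{thm:tres}, \ref{thm:tresp} and Lemma \ref{lem:ttrans}), $N_0$ and $N$ are satisfiability-equivalent over any interpretation that also interprets the fresh predicate symbols and Skolem symbols introduced along the way. Applying Lemma \ref{lem:qrew_sound} (via the successive \textbf{Q-Abs}, \textbf{Q-Rena}, and \textbf{Q-Unsko} procedures) then produces a Skolem-symbol-free first-order formula $F$ that is logically equivalent to $N$ in the standard unskolemisation sense: a Skolem-symbol-free interpretation $I$ satisfies $F$ iff it extends to a model of $N$. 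Finally, \textbf{Q-Rew} sets $\Sigma_q := \lnot F$, which by Lemma \ref{lem:btrans} is Skolem-symbol-free.

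Next I would prove the equivalence by the chain
\begin{align*}
\Sigma \cup D \models q \ & \Longleftrightarrow \ \Sigma \cup D \cup \{\lnot q\} \text{ is unsatisfiable} \\
& \Longleftrightarrow \ N_0 \cup D \text{ is unsatisfiable} \\
& \Longleftrightarrow \ N \cup D \text{ is unsatisfiable} \\
& \Longleftrightarrow \ \{F\} \cup D \text{ is unsatisfiable} \\
& \Longleftrightarrow \ D \models \lnot F \ \Longleftrightarrow \ D \models \Sigma_q.
\end{align*}
The first step is the definition of entailment. The second uses that $D$ is a set of ground atoms in the original signature, so $\mathbf{Trans}$ leaves $D$ untouched and only clausifies $\Sigma \cup \{\lnot q\}$, preserving satisfiability after Skolemisation (Theorem \ref{thm:trans}). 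The third step requires showing that saturation preserves satisfiability when extended by $D$; I would argue this from the fact that every rule in \textbf{Q-Ans} either derives a logical consequence of existing clauses or replaces a clause by a satisfiability-equivalent set via fresh symbols that do not occur in $D$, so models extend/contract in $D$-preserving ways. The fourth step invokes the unskolemisation equivalence: because $D$ is built from predicate and constant symbols of the original signature and does not mention any Skolem symbol or separation-introduced predicate, every Skolem-symbol-free $D$-model of $F$ lifts to a model of $N \cup D$ (by interpreting the existentials witnessed by $F$) and conversely every model of $N \cup D$ restricts to a $D$-model of $F$. The last two steps are definitional.

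The main obstacle will be the third and fourth equivalences, and in particular justifying that introducing fresh predicate symbols via the \textbf{SepDeQ}, \textbf{SepIndeQ}, and \textbf{T-Trans} rules during saturation is harmless once $D$ is added. I would handle this by a careful model-extension argument: since these fresh symbols do not occur in $D$ (they are created only during saturation of $\{\lnot q\}\cup\Sigma$ and by construction are $\succ_{lpo}$-smaller but distinct from all signature symbols appearing in $\Sigma$, $q$, and hence $D$), any $D$-model of the ``pre-separation'' set uniquely extends to the fresh symbols so as to satisfy their defining clauses, and vice versa. The analogous extension argument handles the Skolem functions in the unskolemisation step, exactly as in the classical correctness proof of unskolemisation (cf.\ Lemma \ref{lem:sound_unsko} and Theorem \ref{thm:unsko}). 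Assembling these observations gives the biconditional, which is the statement of the theorem.
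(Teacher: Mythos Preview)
Your proposal is correct and follows the same route as the paper, which simply cites Lemma~\ref{lem:qrew_sound}. In fact your argument is considerably more explicit than the paper's one-line proof: you spell out the full satisfiability chain and carefully justify why adding the ground data $D$ (which shares no symbols with the freshly introduced predicates or Skolem functions) does not disturb any of the equisatisfiability steps, a point the paper leaves implicit.
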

\begin{proof}
By Lemma \ref{lem:qrew_sound}.
\end{proof}
Comparing the signature in~$\Sigma_q$ and that in $\Sigma$ and $q$, $\Sigma_q$ may contain predicate and equality symbols not occurring in $q$ and $\Sigma$, since these symbols may have been introduced by the \textbf{Q-Ans} and the \textbf{Q-Abs} procedures, respectively.

\section{Related work}
\label{sec:relate}

\subsection*{\textbf{Resolution-based decision procedures}}
The basis of our \textsf{BCQ} answering and rewriting approaches is saturation-based resolution, which provides a practical and powerful method for developing decision procedures, as is evidenced in~\cite{dNdR03,GHMS98,K06,BGW93,H99,FATZ93,SH00,HS99}. 

The \textbf{P-Res} rule is inspired by the `partial replacement' strategy in \cite{BG97,BG01} and the `partial conclusion' of the `Ordered Hyper-Resolution with Selection' rule in~\cite{GdN99}. Even though \cite{GdN99} claims that the idea of `partial conclusion' can be easily generalised in the framework of~\cite{BG01}, it does not show how and no proof is provided. In~\cite{BG97} and \cite{BG01}, the `partial replacement' strategy seems to be what is behind `partial conclusions', and it is proved that for ground clauses the `partial replacement' strategy makes the application of a selection-based resolution rule, viz., the \textbf{S-Res} rule, redundant. In this paper, we formalise `partial replacement' in the \textbf{P-Res} system with the \textbf{P-Res} rule as the core rule. We have proved the system is generally sound and refutationally complete for full first-order clausal logic.


The \textbf{P-Res} rule adds high-level flexibility to the approach of an \textbf{S-Res} inference step, as one can choose any sub-multiset of the \textbf{S-Res} side premises as the \textbf{P-Res} side premises. This means that the \textbf{P-Res} rule gives us the option to choose a desirable resolvent from the possible `partial resolvents'. This technique is critical in our methods to querying for the guarded quantification fragments, allowing a choice of the `partial resolvent' that can be expressed in the same clausal class as the \textbf{P-Res} premises.

Motivated by the `MAXVAR' technique in \cite{dNdR03}, we devised the top-variable technique. The `MAXVAR' technique and the top-variable technique are also used in~\cite{GdN99} and~\cite{ZS20a}, respectively. A detailed example to demonstrate how the `MAXVAR' technique works is given in \cite{GdN99}, and the reader is referred to the manuscript \cite{dNdR03} for the formal definitions and proofs. \cite{dNdR03} uses the `MAXVAR' technique to avoid term depth increase in the resolvents of the loosely guarded clauses with nested compound terms. The presentation of the `MAXVAR' technique in \cite{dNdR03} is complicated: one needs to identify the depth of a sequence of variables, and then apply a specially devised unification algorithm to find `MAXVAR'. Moreover, the `MAXVAR' technique requires the use of non-liftable orderings, which are not compatible with the framework of~\cite{BG01}. 

We introduce the top-variable technique as a variation and simplification of the `MAXVAR' technique in the conference paper \cite{ZS20a}, which considers the \textsf{LG} clausal class \emph{with no nested compound terms}. The top-variable technique is generalised to apply to query clauses and already uses liftable orderings, so that it fits into the framework of \cite{BG01}. However, in \cite{ZS20a}, the pre-conditions of the top-variable technique, so-called query pairs, cannot be immediately applied in our general querying setting. 

Improving on \cite{ZS20a,GdN99,dNdR03}, in the present paper, we first give a clean approach to compute top variables, viz., the $\ComT$ function, and we then encode the top-variable technique in the $\TRes$ function, as given in Algorithm~\ref{algorithm:top}. We formally prove that the \textbf{T-Res} rule can be used in \emph{any} saturation-based resolution inference system following principles of the framework of~\cite{BG01}. We further generalise the premises of the \textbf{T-Res} rule to non-ground flat clauses and \textsf{LG} clauses, with detailed formal proofs given in Lemma \ref{lem:pres_gnd_pairing}, Corollary \ref{col:tres_uni} and Lemma~\ref{lem:tres_cons}. 

The \textbf{T-Res} system extends the resolution systems for the guarded fragment in~\cite{dNdR03,K06,GdN99} and the loosely guarded fragment in \cite{dNdR03,GdN99,ZS20a}. Although \cite{K06} is not interested in the loosely guarded fragment, it points out that the guarded clauses have the property that all compound terms have the same sequence of variables, i.e., the \emph{strongly compatible property}, which is an essential observation for our saturation-based rewriting procedure. Nonetheless, in \cite{K06}, this property is only used in analysing the complexity of its resolution decision procedure for the guarded fragment. \cite{GdN99} includes a discussion of refinement for the loosely guarded fragment, but does not give a formal description of the refinement or relevant proofs. A detailed refinement for the loosely guarded fragment is given in~\cite{dNdR03} with proofs, but \cite{dNdR03} uses non-liftable orderings, which are not compatible with the framework of \cite{BG01}. The resolution framework in \cite{BG01} provides a powerful system unifying many different resolution refinement strategies that exist in different forms, such as standard resolution, ordered resolution, hyper-resolution and selection-based resolution, and it provides vigorous simplification rules and redundancy elimination techniques, and forms the basis of the most state-of-the-art first-order theorem provers, such as SPASS \cite{WDFKSW09}, Vampire~\cite{RV01b}, E~\cite{S13}, and Zipperposition \cite{C15}. Our initial work in \cite{ZS20a} gives a resolution-based procedure in line with the resolution framework of \cite{BG01} for deciding satisfiability of \textsf{LGF} and querying for \textsf{LGF}, but only solves the \textsf{BCQ} answering problem for the Horn fragment of \textsf{LGF}.

In this paper, we formally define and thoroughly investigate partial resolution and the top-variable resolution techniques and develop detailed proofs. We then show that these techniques can be used and extended to decide satisfiability, \textsf{BCQ} answering and saturation-based \textsf{BCQ} rewriting for the guarded quantification fragments.

These are significant improvements and extensions over \cite{dNdR03,K06,GdN99,ZS20a}. Moreover, our methods provide the basis for \textsf{BCQ} answering and new saturation-based \textsf{BCQ} rewriting procedures for all the guarded quantification fragments.


\subsection*{\textbf{\textsf{BCQ} answering problem}}
The \emph{chase algorithms} \cite{CGP15}, which can be viewed as a form of \emph{forward chaining} \cite{RN20} or \emph{semantic tableau} \cite{H01}, is the state-of-the-art methods in solving \textsf{BCQ} answering problems in database and knowledge representation. These methods are applied on the ground data and $\Sigma$-rules in implication normal form. Unlike chase, our saturation-based query answering procedure does not require the grounding of clauses, which significantly reduces the number of clauses that need to be generated and handled. In our procedures, the inferences are performed differently, in particular, we are not limited to forward chaining and instead the $\Sigma$-clauses can be saturated first and then data can be added. Not only do our procedures avoid grounding, but they can simulate grounding by performing inferences on data first. 

The following \emph{ontology-based data access} \cite{ADDGMR08,CGL07,DFKMSSS08,HMAMSPMFKKSSFHWK08} scenario further motivates the \emph{saturation-based methods} to address query answering problems: given a set~$\Sigma$ of guarded quantification formulas, a \textsf{BCQ} $q$ and datasets~$D$, checking whether $\Sigma \cup D \models q$ is equivalent to checking unsatisfiability of $\{\lnot q\} \cup \Sigma \cup D$.
\begin{quote}
Suppose both $q$ and $\Sigma$ are fixed. We pre-saturate $\{\lnot q\} \cup \Sigma$ and use~$N$ to denote this pre-saturation. Then, independent of the datasets~$D$, the saturation~$N$ can be reused in checking satisfiability of $N \cup D$. This prevents having to recompute numerous  inferences of $\{\lnot q\} \cup \Sigma$ unnecessarily.
\end{quote}

Previous works investigate the \textsf{BCQ} answering problem for Datalog$^\pm$ \cite{CGL09} and description logics, such as guarded Datalog$^\pm$ rules \cite{CGP15,CGL12,CGF13} and fragments of the description logic~$\mathcal{ALCHOI}$ \cite{KKZ12,CGL07,MRC14,RA10}. Constraints in relational databases and ontological languages in knowledge bases are widely formalised in rules of Datalog$^\pm$, therefore devising automated querying procedures for Datalog$^\pm$ is important. 

A Datalog$^\pm$ rule is a first-order formula in the form 
\begin{align*}
F = \forall \overline x \overline y (\varphi(\overline x, \overline y) \to \exists \overline z \phi(\overline x, \overline z)),	
\end{align*}
where $\varphi(\overline x, \overline y)$ and $\phi(\overline x, \overline z)$ are conjunctions of atoms. Although answering \textsf{BCQ}s for Datalog$^\pm$ rules is undecidable \cite{BV81}, answering \textsf{BCQ}s for the guarded fragment of Datalog$^\pm$, viz., guarded Datalog$^\pm$ rules, is \textsc{2ExpTime}-complete \cite{CGF13}. The above Datalog$^\pm$ rule~$F$ is a guarded Datalog$^\pm$ rule if there exists an atom in $\varphi(\overline x, \overline y)$ that contains all free variables of $\exists \overline z \phi(\overline x, \overline z)$. Guarded Datalog$^\pm$ can be extended to the so-called loosely guarded and clique-guarded Datalog$^\pm$ by adopting the definition of the loosely guarded and the clique-guarded fragments, respectively. For example, 
\begin{align*}
\forall xyz (\textsf{Siblings}(x,y) \land \textsf{Siblings}(y,z) \land \textsf{Siblings}(z,x) \to \exists u (\textsf{Mother}(u,x,y,z)))	
\end{align*}
is a loosely guarded Datalog$^\pm$ rule. Guarded, loosely guarded and clique-guarded Datalog$^\pm$ rules can be seen as belonging to the Horn fragments of \textsf{GF}, \textsf{LGF} and \textsf{CGF}, respectively. Therefore our methods apply and lay the theoretical foundation for the first practical decision procedure of answering \textsf{BCQ}s for guarded, loosely guarded and clique-guarded Datalog$^\pm$ rules. Note that there are guarded Datalog$^\pm$ rules that are not expressible in \textsf{GF} \cite[page 103]{BBtC13}, however, the \textbf{Trans} process transforms these Datalog$^\pm$ rules into Horn guarded clauses.

The fragments of expressive description logic $\mathcal{ALCHOI}$ \cite{BHCS17} are prominent ontological languages in semantic web \cite{H08}. Query answering approaches for fragments of $\mathcal{ALCHOI}$ have been extensively studied in the literature \cite{KKZ12,CGL07,MRC14,RA10,B07}. A key technique in this area is transforming \textsf{BCQ}s into knowledge bases; see the rolling-up technique in \cite{T01} and the tuple graph technique in \cite{CGL98}. Interestingly, our \textbf{Q-Sep} procedure also achieves encoding of a query clause into the knowledge base of \textsf{LG} clauses. By the standard translation \cite[chapter 2]{BRV01}, axioms in the description logic $\mathcal{ALCHOI}$ can be translated into guarded formulas needing only unary and binary predicate symbols. Hence, our \textbf{Q-Ans} procedure can also be used as a practical decision procedure for \textsf{BCQ} answering for the expressive description logic $\mathcal{ALCHOI}$.

%

The \emph{squid decomposition technique} analyses the complexity for answering \textsf{BCQ}s over weakly guarded Datalog$^\pm$ \cite{CGF13}. In squid decompositions, a \textsf{BCQ} is regarded as a squid-like graph in which branches are `tentacles' and variable cycles are `heads'. Squid decomposition finds ground atoms that are complementary in the squid head, and then uses ground unit resolution to eliminate the heads. In contrast, our approach uses the separation rules to first cut `tentacles' and then uses the \textbf{T-Res} rule to resolve cycles in `heads'. Our approach produces compact saturations of \textsf{BCQ}s and the guarded quantification formulas, thus avoiding the significant overhead of grounding. 

\subsection*{\textbf{\textsf{BCQ} rewriting problem}}
Standard \textsf{BCQ} rewriting settings consider the following problem: given a union $q$ of \textsf{BCQ}s, a set $\Sigma$ of first-order formulas and a dataset $D$, can we produce (function-free) first-order formulas $\Sigma_q$, so that the entailment checking problem of $D \cup \Sigma \models q$ is reduced to the model checking problem of $D \models \Sigma_q$. If there exists such a $\Sigma_q$, $\Sigma$ and~$q$ are said to be \emph{first-order rewritable} \cite{CGL07}. Problems on the first-order rewritability property have been extensively studied in~\cite{CGL07,HLPW18,BBFSSTW10,TW20,TSCS15} for different description logics, and in \cite{GOP14,CGL12,HLPW18,BBLP18} for fragments of Datalog$^\pm$ rules. However, it is known that \textsf{BCQ} answering for none of the guarded quantification fragments are first-order rewritable. Another interesting saturation-based rewriting approach is~\cite{UBU07}, in which one first saturates axioms of the description logic $\mathcal{SHIQ}$, presenting the saturation as a set of disjunctive Datalog rules, and then  deductive databases are used to check entailment of \textsf{BCQ}s over the disjunctive Datalog rules. 

Unlike the idea of the first-order rewritability, saturation-based \textsf{BCQ} rewriting regards $D \models \Sigma_q$ as an entailment checking problem. Unlike \cite{UBU07}, in our query rewriting, queries are included in the reasoning process to obtain a saturation. Our saturation-based query rewriting is advantageous in ontology-based data access scenarios: Having a function-free first-order formula $\Sigma_q$ such that $D \cup \Sigma \models q$ if and only if $D \models \Sigma_q$, we can check~$\Sigma_q$ over different datasets $D_1, \ldots, D_n$. More importantly, to check whether $D_i \models \Sigma_q$, we can use reasoning methods other than resolution, e.g., the chase algorithm, as~$\Sigma_q$ is free of Skolem symbols. This combines different reasoning tools can potentially accelerate query answering processes. Moreover, devising this rewriting procedure is interesting and challenging in its own right, as it required a new investigation and new techniques to back-translate a first-order clausal set into a function-free first-order formula, which in general is an undecidable problem.
 

\section{Conclusion and Discussion}
\label{sec:conclu}
Considering the problem of query answering for the guarded quantification fragments, we present three sound and refutationally complete saturation-based resolution inference systems for general first-order clausal logic. Based on the top-variable inference system and customised separation rules, we establish the theoretical foundation for the first practical decision procedures of \textsf{BCQ} answering for the guarded, the loosely guarded, and the clique-guarded fragments. By extending the \textsf{BCQ} answering procedures with the back-translation techniques, we have devised a novel saturation-based \textsf{BCQ} rewriting procedure for these fragments.  


We are confident that our procedures provide a solid foundation for practical implementations. We claim the procedures can be implemented in any saturation-based theorem prover, as they are devised in line with the resolution framework in \cite{BG01}. Compared to the framework in \cite{BG01}, novel techniques are i) the \textbf{SepDeQ} and the \textbf{SepIndeQ} rules, ii) the \textbf{P-Res} and the \textbf{T-Res} rules and iii) the rules in the \textbf{Q-Rew} procedure.

i) Given a query clause $Q$, the application of the \textbf{SepDeQ} or the \textbf{SepIndeQ} rules to $Q$ can be implemented by the following steps.
\begin{enumerate}
	\item Find the surface literals in $Q$. By regarding each literal $L$ in $Q$ as a multiset in which the elements are the variable arguments of $L$, one can implement a multiset ordering $\succ_m$ for the literals in $Q$. The $\succ_m$-maximal literals in $Q$ are the surface literals in $Q$.
	\item Identify the separable surface literals in $Q$. Check whether two surface literals in~$Q$ have overlapping variables.
	\item Identify the separable subclauses in $Q$. Suppose $L_1$ and $L_2$ are two separable surface literals in $Q$. To separate $L_1$ from $Q$, one needs to find the literals in $Q$ that are  $\succ_m$-smaller than $L_1$, namely the literals guarded by $L_1$. The literals guarded by $L_1$ are a separable subclause in $Q$.
	\item Separate the subclause guarded by $L_1$ from $Q$. Following the conditions defined in the \textbf{SepDeQ} or the \textbf{SepIndeQ} rule, apply \emph{formula renaming with negative literals} to replace the literals guarded by $L_1$ by a fresh predicate symbol containing the only overlapping variables of $L_1$ and $L_2$. 
\end{enumerate}

ii) A possible implementation of the \textbf{P-Res} or the \textbf{T-Res} rule is: Suppose in a selection-based resolution (\textbf{S-Res}) inference, $C_1, \ldots, C_n$ are the side premises, and $C$ is the main premise with the negative literals $\lnot A_1, \ldots, \lnot A_n$ selected. Then, one can use the selection-based resolution (\textbf{S-Res}) to implement a \textbf{P-Res} or a \textbf{T-Res} resolvent of $C$ and $C_1, \ldots, C_n$ as follows.
\begin{enumerate}
\item Without deriving any resolvent, compute an mgu $\sigma^\prime$ between $C$ and $C_1, \ldots, C_n$. 
\item Unselect the literals $\lnot A_1, \ldots, \lnot A_n$ in $C$, and then select a sub-multiset $\lnot A_1, \ldots, \lnot A_m$ of $\lnot A_1, \ldots, \lnot A_n$ where $1 \leq m \leq n$, performing the \textbf{P-Res} rule on $C_1, \ldots, C_m$ and $C$ with $\lnot A_1, \ldots, \lnot A_m$ selected. For the case of the \textbf{T-Res} inference, $\lnot A_1, \ldots, \lnot A_m$ are the top-variable literals computed using the variable ordering $\succ_v$ and $\sigma^\prime$.
\item When the \textbf{P-Res} or the \textbf{T-Res} resolvent is derived, unselect $\lnot A_1, \ldots, \lnot A_m$.
\end{enumerate}

iii) The \textbf{Abs}, the \textbf{Rena} and the \textbf{Unsko} rules have been used in eliminating second-order quantifiers tasks, as implemented in the SCAN system \cite{O96}.

One next step is implementing the \textbf{Q-Ans} and the \textbf{Q-Rew} procedures and evaluating them on real-world ontologies. For example, we could focus on ontologies that are composed by the fragments of the description logic $\mathcal{ALCHOI}$ and guarded, loosely guarded and clique-guarded Datalog$^\pm$, since the number of \textsf{GF} problems in the TPTP first-order theorem proving benchmark \cite{S16} is rather small. 

Two other interesting questions for future work are: 1) Extend our saturation-based procedures to support the tasks of \textsf{BCQ} answering and saturation-based \textsf{BCQ} rewriting for the \emph{guarded negation} and the \emph{clique-guarded negation fragments} \cite{BtCS15}. This will require equality reasoning which we conjecture can be handled by extensions of the procedures presented in this paper with paramodulation or superposition. Whether our saturation-based methods can be refined to decide satisfiability of other variations of the guarded fragment such as the \emph{guarded fragment with transitive guards} \cite{ST04}, the \emph{triguarded fragment} \cite{RS18,KR21}, the \emph{two-variable guarded fragment with counting quantifiers} \cite{P07} and the \emph{forward guarded fragment} \cite{B21}, and querying for other guard-related fragments such as the \emph{monadic fragment of the two-variable guarded fragment with transitive guards} \cite{GPT13} and the \emph{forward guarded fragment} \cite{B21} remains to be investigated.

2) In our \textbf{Q-Rew} procedure, the rewritten queries are expressible in \textsf{LGF} and \textsf{BCQ}s, but with equality. It would be interesting to know whether in the setting of the saturation-based \textsf{BCQ} rewriting problem for the guarded quantification fragments with equality, one can translate the saturated clausal set back into \textsf{BCQ}s and formulas in these guarded quantification fragments with equality. The answer is not straightforward, as we first need to develop a decision procedure for the problem of the \textsf{BCQ} answering for these equality-occurring fragments.

\begin{acknowledgements}
We would like to thank the editor and the reviewers for the useful comments. Sen Zheng’s work is partially sponsored by the Great Britain-China Educational Trust.
\end{acknowledgements}

%
%

\bibliographystyle{spmpsci}      
\bibliography{reference}   


\end{document}